\numberwithin{equation}{section}
\title{On the bifurcation of a Dirac point in a photonic waveguide without band gap opening}
\author{Jiayu Qiu\thanks{Department of Mathematics, 
 HKUST,  Clear Water Bay, Kowloon, Hong Kong SAR, China.
    \tt jqiuaj@connect.ust.hk.} \,\,\, 
 Hai Zhang
  \thanks{Department of Mathematics, 
 HKUST,  Clear Water Bay, Kowloon, Hong Kong SAR, China.
    {\tt{haizhang@ust.hk}}. H. Zhang was partially supported by Hong Kong RGC grant GRF 16307024.}}
\date{}
\newtheorem{theorem}{Theorem}[section]
\newtheorem{lemma}[theorem]{Lemma}
\newtheorem{corollary}[theorem]{Corollary}
\newtheorem{remark}[theorem]{Remark}
\newtheorem{proposition}[theorem]{Proposition}
\newtheorem{assumption}[theorem]{Assumption}
\begin{document}

\maketitle

\begin{abstract}
Recent progress in topological insulators and topological phases of matter has motivated new methods for the localization of waves in photonic structures.  Especially, it is established that a Dirac point of a periodic structure can bifurcate into in-gap eigenvalues if the periodic structure is perturbed differently on the two sides of an interface and if a common band gap can be opened for the two perturbed periodic structures near the Dirac point.  The associated eigenmodes are localized near the interface and decay exponentially away from it. This paper addresses the less-known situation when the perturbation only lifts the degeneracy of the Dirac point without opening a band gap. Using a two-dimensional waveguide model, we constructed a wave mode bifurcated from a Dirac point of a periodic waveguide. We show that when the constructed mode couples to an analytically continued Floquet-Bloch mode near the Dirac energy, its eigenvalue acquires a strictly negative imaginary part, making the mode resonant. On the other hand, when the coupling vanishes, the imaginary part of the eigenvalue turns to zero, and the constructed mode becomes an interface mode that decays exponentially away from the interface.  The developed method can be extended to other settings, thus providing a clear answer to the problem concerning the bifurcation of Dirac points. 

\medskip

Keywords: Dirac point, bandgap structure, eigenvalue, wave localization

\end{abstract}

\section{Introduction}
The study of localized waves in photonic systems has gained significant interest due to its potential applications in designing new optical devices \cite{sheng1990,qiu2011,soukoulis2012,Ozawa_2019}. One specific type of localized wave is the interface mode whose energy is concentrated near the interface between two media. The classic way to create a localized wave mode or interface mode is to introduce a defect or random defects in a band-gapped photonic crystal; see \cite{figotin1996, figotin1997, figotin1997_jsp, figotin1998, figotin1998-1, Ammari2018-SIAMAP, Ammari2019linedefect}. The recent developments in the field of topological insulators and topological photonics/phononics\cite{Hasan10, Ozawa_2019, Khanikaev2013} have opened new avenues for creating localized modes and interface modes. One such approach is to start with a periodic structure that has a Dirac point in its spectral band where two dispersion curves/surfaces intersect linearly or conically \cite{fefferman12, Berkolaiko14, Ammari-F-H-L-Y-2020, LZ2021, Li23}, and then perturb it differently on the two sides of an interface to open up a common band gap and produce a \textit{band inversion}; that is, the two perturbed media switch their Bloch eigenspaces near the Dirac point, as specified in Remark \ref{rmk_gap_open_band_inversion}. As a consequence of the band inversion, an in-gap eigenvalue associated with the interface mode can be bifurcated from the Dirac point. It should be noted that a necessary condition for this approach of constructing interface modes is \textit{the opening of a band gap}. The condition that ensures this is referred to as the spectral no-fold condition, which is defined and discussed in \cite{fefferman2017topologically, fefferman2016honeycomb_edge}. Roughly speaking, the no-fold condition assumes the energy level of the Dirac point intersects with the two adjacent bands only at the Dirac point. It ensures the ``local'' band gap opened by the lifting of degeneracy at the Dirac point is indeed a ``global'' band gap; see the discussion under Figure \ref{fig_perturbed_band_structure}. Under the spectral no-fold condition, the in-gap eigenvalue bifurcated from Dirac points has been rigorously analyzed in various settings, including one-dimensional Schr\"{o}dinger operators \cite{fefferman2017topologically}, one-dimensional photonic structure \cite{LZ2021}, two-dimensional Schr\"{o}dinger operators \cite{fefferman2016honeycomb_edge,Drouot2019TheBC,drouot2020edge,cao2024edge}, and two-dimensional photonic structures \cite{lee2019elliptic,qiu2023mathematical,li2024interface}.

An intriguing question is whether an interface mode still can be created if the perturbations only lift the degeneracy of the Dirac point locally without opening a global band gap. In this scenario, it is conjectured that the mode bifurcated from the Dirac point will couple with other Floquet-Bloch modes of energy near the Dirac point energy level but with quasi-momentum away from the Dirac point, and then become a resonant mode \cite{fefferman2016honeycomb_edge,drouot2020edge}. To date, a definitive answer to this conjecture is not yet available. In this paper, we provide a solution to this conjecture by explicitly constructing a mode bifurcated from a Dirac point in a two-dimensional waveguide without the band gap opening condition. More precisely, we proved that even though the gap is not opened, there is indeed an eigenvalue $\lambda^{\star}$ bifurcated from the Dirac energy level with $\Im(\lambda^{\star})\leq 0$ (Theorem \ref{thm_main result}). We provide a clear characterization of the eigenmode $u^{\star}$ associated with $\lambda^{\star}$, especially on its coupling with the Bloch modes of quasi-momentum away from the Dirac point (Remark \ref{rmk_radiation_cond}). When the coupling is present, the mode $u^{\star}$ is indeed a resonant mode with unbounded $L^2$ norm. On the other hand, when the coupling vanishes, $u^{\star}$ is an interface mode which decays exponentially at infinity and $\lambda^{\star}$ is an embedded eigenvalue in the continuous spectrum (in such cases, $u^{\star}$ is also called a bound state in the continuum (BIC) \cite{hsu2016bic}).

The rest of the paper is organized in the following way. 
In Section 1.1, we provide a detailed setup of the problem and present our main results. A sketch of proof to our main result is exhibited in Section 1.2, which covers the main idea of this paper. Section 2 is a list of some preliminaries, where we briefly review the Floquet-Bloch theory for periodic differential operators and introduce the Green functions for periodic waveguide structures. In Section 3, we present the asymptotic expansions of Bloch eigenvalues and eigenfunctions near the energy level of the Dirac point; see Theorem \ref{thm_local_gap_open} and \ref{thm_perturbed_fold_dispersion}. These results demonstrate that a ``local'' band gap is opened near the Dirac point upon applying appropriate perturbations (but not a ``global'' band gap due to the failure of no-fold condition ), and the band inversion emerges for different perturbations. This provides the basis of bifurcation near the Dirac point. Finally, in Section 4, we construct a mode $u^{\star}$ bifurcated from the Dirac point using an analytic continuation of the Green function and the layer potential technique. Its associated eigenvalue $\lambda^{\star}$ has a non-positive imaginary part. Thanks to the construction of the analytic continuation of the Green function, the asymptotic behaviour of $u^{\star}$ is clearly analyzed, which enables us to characterize its bounded or resonant nature.

\subsection{Problem setup and main results}
We consider the propagation of a time-harmonic scalar wave in a two-dimensional periodic photonic waveguide $\Omega\subset \mathbf{R}^2$ (see Figure \ref{fig_waveguide}) at frequency $\sqrt{\lambda}$
\begin{equation} \label{eq_waveeq}
\left\{
\begin{aligned}
&-\frac{1}{n^2}\Delta u-\lambda u=0,\quad x\in \Omega ,\\
&\nabla u (x)\cdot \bm{n}_x=0 ,\quad x\in \partial \Omega ,
\end{aligned}
\right.
\end{equation}
where $\bm{n}_x$ denotes the outward normal at $x\in\partial \Omega$, and $n=n(x)$ is the refractive index. We assume that
\begin{enumerate}
    \item The domain $\Omega$ is connected and open in $\mathbf{R}^2$ with the boundary $\partial \Omega$ being $C^2$. Moreover, it's strip-like in the sense that there exists a compact set $S\subset \mathbf{R}$ such that $\Omega\subset \mathbf{R}\times S$;
    \item $\Omega$ is periodic in the sense that $x+\bm{e}_1\in \Omega$ for all $x\in \Omega$ ($\bm{e}_1=(1,0)$);
    \item $n\in L^{\infty}(\Omega)$, $n(x+\bm{e}_1)=n(x)$, and $n(x)\geq c>0$ for some constant $c$.
\end{enumerate}
The primitive cell of $\Omega$ is denoted by $Y:=\Omega\cap ((0,1)\times \mathbf{R})$. To simplify the analysis, we assume that system \eqref{eq_waveeq} is reflection symmetric:

\begin{assumption} \label{assump_reflection}
(1) For any $(x_1,x_2)\in\Omega$,  $(-x_1,x_2)\in\Omega$;

(2) $n(x)=(\mathcal{P}n)(x)$ for all $x\in\Omega$, where $\mathcal{P}$ is the reflection operator defined as $(\mathcal{P}u)(x_1,x_2):=u(-x_1,x_2)$.
\end{assumption}


Note that \eqref{eq_waveeq} can be viewed as the eigenvalue problem of the Laplacian operator:
\begin{figure}
\centering
\begin{tikzpicture}
\draw (-0.2,1)--(8.2,1);
\draw (-0.2,-1)--(8.2,-1);

\node[right] at (8.4,1) {$\cdots$};
\node[right] at (8.4,-1) {$\cdots$};
\node[left] at (-0.4,1) {$\cdots$};
\node[left] at (-0.4,-1) {$\cdots$};
\draw[dashed] (0,-1)--(0,1);
\draw[dashed] (2,-1)--(2,1);
\draw[dashed] (4,-1)--(4,1);
\node[right] at (4.1,0.6) {$\Gamma$};
\draw[dashed] (6,-1)--(6,1);
\draw[dashed] (8,-1)--(8,1);

\draw (1,0) ellipse(0.5 and 0.3);
\draw (3,0) ellipse(0.5 and 0.3);
\draw (5,0) ellipse(0.5 and 0.3);
\draw (7,0) ellipse(0.5 and 0.3);

\draw[fill=blue,opacity=0.1] (4,-1) rectangle(6,1);
\draw[fill=white,opacity=1] (5,0) ellipse(0.5 and 0.3);

\draw[decorate,decoration={brace,mirror}] (0,-1.1) -- (2,-1.1);
\node[below,font=\fontsize{10}{10}\selectfont] at (1,-1.1) {$1$};
\end{tikzpicture}
\caption{A waveguide $\Omega$ with periodically arranged obstacles \cite{qiu2023mathematical}. Here $\Omega=\mathbf{R}\times [0,1]\backslash \cup_{n\in\mathbf{N}}D_n$, where $D_n$ is an array of periodically arranged obstacles that are centered at $(\frac{1}{2}+n,\frac{1}{2})$. $\Omega$ is periodic with its minimal period equal to $1$. The primitive cell $Y$ is filled with blue in the figure. The interface $\Gamma:=\Omega\cap (\{0\}\times \mathbf{R})$ is also marked in the figure.}
\label{fig_waveguide}
\end{figure}

\begin{equation*}
\mathcal{L}=-\frac{1}{n^2}\Delta: H_{b}^1(\Delta, \Omega)\subset L^2(\Omega)\to L^2(\Omega),
\end{equation*}
with
\begin{equation*}
H_{b}^1(\Delta, \Omega):=\{u\in H^1(\Omega):\Delta u\in L^2(\Omega),\, \nabla u(x)\cdot \bm{n_x}\big|_{\partial\Omega}=0\}.
\end{equation*}
By the Floquet-Bloch theory \cite{kuchment2016overview}, the spectrum of $\mathcal{L}$ admits the decomposition 
\[
\sigma(\mathcal{L})=\cup_{p\in [-\pi,\pi]}\sigma(\mathcal{L}(p)),
\]
where $\mathcal{L}(p)$ is the Floquet-Bloch component of $\mathcal{L}$ at the quasi-momentum $p\in [-\pi,\pi]$. In particular, $\mathcal{L}(p)$ can be analytically extended to $p\in \mathbf{C}$; then $\{\mathcal{L}(p)\}$ forms an analytic family of self-adjoint operators \cite{joly2016solutions}. It's known from the analytic perturbation theory that there exist analytic functions $\{\mu_n(p)\}_{n=1}^{\infty}$ such that $\sigma(\mathcal{L}(p))=\{\mu_n(p):n\geq 1\}$ for $p\in [-\pi,\pi]$. We call $\{\mu_n(p)\}_{n=1}^{\infty}$ the analytical labeling of the Floquet-Bloch eigenvalues. By the analytic perturbation theory, the dimension of the eigenspace associated with each $\mu_n(p)$ is constant for almost every $p$ \cite{kato2013perturbation}. In this paper, we assume without loss of generality they are one-dimensional a.e.:
\begin{assumption} \label{assump_general spectrum}
For each $n\geq 1$, the eigenspace associated with $\mu_n(p)$ is one-dimensional except for finitely many $p\in [-\pi,\pi]$.
\end{assumption}

With Assumption \ref{assump_general spectrum}, we denote the Floquet-Bloch eigenspace corresponding to $\mu_n(p)$ ($n\geq 1$, $p\in [-\pi,\pi]$) by $\text{span}\{v_n(x;p)\}$. Each of the Bloch modes $v_n(x;p)$ is analytically continued as a function of $p$ to a neighborhood of the real axis \cite{joly2016solutions}; the analytic domain of $v_n(x;p)$ will be specified in this paper if necessary. For ease of notation, the continued Bloch modes are still denoted as $v_n(x;p)$. The possible exceptional $p$'s in Assumption \ref{assump_general spectrum} usually occur at the intersection of two graphs of $\mu_n(p)$'s. Dirac points are among such intersection points.  We assume the existence of a Dirac point at $p=0$ and $\lambda=\lambda_*>0$. In particular, we assume that the following holds. 


\begin{assumption} \label{assump_dirac_points}
There exist $n_{*},m_{*}\in \mathbf{N}$, $q_{*}\in (0,\pi)$ and $\lambda_*>0$ such that

(1)the dispersion curves of $\lambda=\mu_{n_*}(\cdot)$ and $\lambda=\mu_{m_*}(\cdot)$ intersect with the energy level $\lambda=\lambda_*$ at $p=-q_*$ and $p=q_*$, and at the Dirac point $(p,\lambda)=(0,\lambda_*)$, i.e., $\lambda_*=\mu_{n_{*}}(0)=\mu_{m_{*}}(0)=\mu_{n_{*}}(-q_{*})=\mu_{m_{*}}(q_{*})$;


(2) the dispersion curves of $\mu_{n_{*}}(\cdot)$ and $\mu_{m_{*}}(\cdot)$ do not intersect with those of other $\mu_n(\cdot)$'s. More precisely, for any $p\in (-\pi,\pi]$ and $n\in \mathbf{N}\backslash \{n_*,m_*\}$, $\mu_{n}(p)\neq \mu_{n_{*}}(p)$, $\mu_{n}(p)\neq \mu_{m_{*}}(p)$; for any $p\neq (-\pi,\pi]\backslash \{0\}$, $\mu_{n_{*}}(p)\neq \mu_{m_{*}}(p)$;

(3)$\mu_{n_*}^\prime(0)>0$, $\mu_{n_*}^\prime(-q_*)<0$.
\end{assumption}

A band structure as described in Assumption \ref{assump_general spectrum} and \ref{assump_dirac_points} is depicted in Figure \ref{fig_unperturbed_band_structure_a}.

\begin{figure}
\centering
\subfigure[Analytic labeling]{
\label{fig_unperturbed_band_structure_a}     
\begin{tikzpicture}[scale=0.3]
\draw[->] (-11,0)--(11,0);
\draw[->] (0,0)--(0,18);
\node[right] at (11.2,0) {$p$};
\node[below] at (0,-0.2) {$0$};
\draw[thick] (-10,-0.1)--(-10,0.1);
\node[below] at (-10,-0.2) {$-\pi$};
\draw[thick] (10,-0.1)--(10,0.1);
\node[below] at (10,-0.2) {$\pi$};

\draw[thick,blue] plot [smooth] coordinates {(-10,10.2) (-7,9) (-4,1) (0,7) (7,14) (10,14.95)};
\node[above] at (-9.5,10.4) {$\lambda=\mu_{n_*}(p)$};

\draw[thick,red] plot [smooth] coordinates {(10,10.2) (7,9) (4,1) (0,7) (-7,14) (-10,14.95)};
\node[above] at (9.5,10.4) {$\lambda=\mu_{m_*}(p)$};

\draw[dashed] (-8,7)--(8,7);
\node[right] at (8.5,7) {$\lambda=\lambda_*$};
\draw[dashed] (-6.25,7)--(-6.25,0);
\node[below] at (-6.25,-0.2) {$-q_*$};
\draw[dashed] (6.25,7)--(6.25,0);
\node[below] at (6.25,-0.2) {$q_*$};
\end{tikzpicture}
}     
\subfigure[Ascending labeling] { 
\label{fig_unperturbed_band_structure_b}
\begin{tikzpicture}[scale=0.3]
\draw[->] (-11,0)--(11,0);
\draw[->] (0,0)--(0,18);
\node[right] at (11.2,0) {$p$};
\node[below] at (0,-0.2) {$0$};
\draw[thick] (-10,-0.1)--(-10,0.1);
\node[below] at (-10,-0.2) {$-\pi$};
\draw[thick] (10,-0.1)--(10,0.1);
\node[below] at (10,-0.2) {$\pi$};

\draw[thick,blue] plot [smooth] coordinates {(-10,10.2) (-7,9) (-4,1) (-0.8,6) (0,7)};
\draw[thick,red] plot [smooth] coordinates {(0,7) (7,14) (10,14.95)};
\node[above] at (9.5,15.15) {$\lambda=\lambda_{\mathfrak{n}_*+1}(p)$};

\draw[thick,blue] plot [smooth] coordinates {(10,10.2) (7,9) (4,1) (0.8,6) (0,7)};
\draw[thick,red] plot [smooth] coordinates {(0,7) (-7,14) (-10,14.95)};
\node[above] at (9.5,10.4) {$\lambda=\lambda_{\mathfrak{n}_*}(p)$};

\draw[dashed] (-8,7)--(8,7);
\end{tikzpicture}
}    
\caption{Band structure of $\mathcal{L}$ near the Dirac point $(0,\lambda_*)$. (a) $\lambda=\mu_{n_*}(p)$ is plotted in blue while $\lambda=\mu_{m_*}(p)$ is plotted in red; they are smooth on $[-\pi,\pi]$ and constitute the band structure of $\mathcal{L}$ near the Dirac point. Moreover, $\mu_{n_*}(p)$ and $\mu_{m_*}(p)$ intersect with the energy level $\lambda=\lambda_*$ at $p=-q_*$ and $p=q_*$, respectively. The existence of those extra intersection points breaks the spectral no-fold condition. (b) $\lambda=\lambda_{\mathfrak{n}_*}(p)$ is plotted in blue while $\lambda=\lambda_{\mathfrak{n}_*+1}(p)$ is plotted in red; they are not differentiable at $p=0$.}
\label{fig_unperturbed_band_structure}
\end{figure}

\begin{remark}
The conditions in Assumption \ref{assump_dirac_points} (1) break the so-called spectral no-fold condition \cite{fefferman2017topologically, fefferman2016honeycomb_edge}. 
As a result, a band gap cannot be opened under small perturbations (see Figure \ref{fig_perturbed_band_structure}). The main focus of this paper is the bifurcation of the Dirac point in such a ``no-gap'' case.
\end{remark}

\begin{remark}
The condition in Assumption \ref{assump_dirac_points} (2) is imposed to simplify the proof of Lemma \ref{lem_analytic_domain}. It is not essential for the validity of the main result (Theorem \ref{thm_main result}). Assumption \ref{assump_dirac_points} (3) ensures linear dispersions near the Dirac energy $\lambda_*$.
\end{remark}

Now we introduce perturbations to the system \eqref{eq_waveeq}. Consider a family of operators $\{\mathcal{L}_{\epsilon}=-\frac{1}{n_{\epsilon}^2}\Delta \}$ $(|\epsilon|\ll 1)$, where $n_\epsilon(x)$ satisfies the following properties. 

\begin{assumption} \label{assump_perturbation}
(1)\,The function $\epsilon\mapsto n_{\epsilon}(x)$ is $C^2$ for each fixed $x$; $n_{0}(x)=n(x)$;

(2)\,$n_{\epsilon}(x+\bm{e}_1)=n_{\epsilon}(x)$, $n_{\epsilon}(x)=(\mathcal{P}n_{\epsilon})(x)$ for all $|\epsilon|\ll 1$;

(3)\,Let $A(p):=-\frac{2}{n}\frac{\partial \left(n_{\epsilon}\right)}{\partial \epsilon}\big|_{\epsilon=0}\cdot \mathcal{L}(p)$. Then
\begin{equation*}
t_{*}:=\int_{Y}A(0)v_{m_{*}}(x;0)\cdot\overline{v_{n_{*}}(x;0)}n^2(x)dx\neq 0,
\end{equation*}
\begin{equation*}
\int_{Y}A(0)v_{m_{*}}(x;0)\cdot\overline{v_{m_{*}}(x;0)}n^2(x)dx=0,\quad
\int_{Y}A(0)v_{n_{*}}(x;0)\cdot\overline{v_{n_{*}}(x;0)}n^2(x)dx
=0.
\end{equation*}
\end{assumption}

\begin{remark}
Assumption \ref{assump_perturbation} (3) can be relaxed to the following one
\begin{equation} \label{eq_non_degenerate}
2|t_*|>\left|\int_{Y}A(0)v_{n_{*}}(x;0)\cdot\overline{v_{n_{*}}(x;0)}n^2(x)dx
+\int_{Y}A(0)v_{m_{*}}(x;0)\cdot\overline{v_{m_{*}}(x;0)}n^2(x)dx\right|,
\end{equation}
without affecting the main result of this paper. In fact, \eqref{eq_non_degenerate} ensures that $\mathcal{L}_{\epsilon}$ and $\mathcal{L}_{-\epsilon}$ attain a common ``local'' band gap near the Dirac point $(0, \lambda_*)$ for small $\epsilon$, and that they switch their Bloch eigenspaces therein (see the proof of Theorem \ref{thm_local_gap_open} and the remark followed), which are essential for the bifurcation of eigenvalue from the Dirac point. Nonetheless, we do not use Condition \eqref{eq_non_degenerate} for ease of presentation.
\end{remark}

We are concerned with the bifurcation of the Dirac point in the following joint structure
\begin{equation} \label{eq_joint_system}
\left\{
\begin{aligned}
&\mathcal{L}^{\star}_\epsilon u-\lambda u=0,\quad x\in \Omega ,\\
&\nabla u (x)\cdot \bm{n}_x=0 ,\quad x\in \partial \Omega ,
\end{aligned}
\right.
\end{equation}
where
\begin{equation*}
(\mathcal{L}^{\star}_\epsilon u)(x_1,x_2):=
\left\{
\begin{aligned}
&(\mathcal{L}_{\epsilon}u)(x_1,x_2),\quad x_1>0, \\
&(\mathcal{L}_{-\epsilon}u)(x_1,x_2),\quad x_1<0.
\end{aligned}
\right.
\end{equation*}


Let $\mathcal{I}_{\epsilon}:=\{\lambda\in\mathbf{C}:|\lambda-\lambda_*|<c_0|t_*|\epsilon\}$, where $c_0$ is any positive number such that $c_0<1$ (fixed throughout this paper). Our main result is stated below:
\begin{theorem} \label{thm_main result}
Under Assumptions \ref{assump_reflection},\ref{assump_general spectrum}, \ref{assump_dirac_points} and \ref{assump_perturbation}, there exists $\epsilon_0>0$ such that for any $|\epsilon|<\epsilon_0$, \eqref{eq_joint_system} has a solution $u^{\star}$ with $\lambda^{\star}\in \mathcal{I}_{\epsilon}$ and $\Im(\lambda^{\star})\leq 0$. In particular, when $\Im(\lambda^{\star})=0$, $\lambda^{\star}$ an embedded eigenvalue and $u^{\star}$ is an interface mode that decays exponentially away from the interface $x_1=0$;  
When $\Im(\lambda^{\star})<0$, $u^{\star}$ is a resonant mode with infinite $L^2$-norm. 
\end{theorem}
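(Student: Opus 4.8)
The plan is to turn \eqref{eq_joint_system} into a boundary integral equation on the interface $\Gamma$, reduce it by a Grushin (Lyapunov--Schmidt) argument to a scalar characteristic equation, and then locate a root of that equation inside $\mathcal I_\epsilon$ by an argument-principle count, the sign of $\mathrm{Im}(\lambda^{\star})$ being forced by the outgoing radiation condition. Concretely, I would let $\Lambda^{+}_{\pm\epsilon}(\lambda)$ denote the Dirichlet-to-Neumann maps on $\Gamma$ of the outgoing half-waveguide problems for $\mathcal L_{\pm\epsilon}-\lambda$ on $\{x_1>0\}$ (with the Neumann condition on $\partial\Omega$), built from the periodic Green's functions of Section 2. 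Writing $f:=u^{\star}|_{\Gamma}$ and using the reflection symmetry of $n_{\pm\epsilon}$ (Assumption~\ref{assump_perturbation}(2)) to transplant the left half-problem onto $\{x_1>0\}$, which flips the sign of $\partial_{x_1}$, the matching of Cauchy data across $\Gamma$ (continuity of $u^{\star}$ and of $\partial_{x_1}u^{\star}$) becomes
\begin{equation*}
\mathcal A(\lambda,\epsilon)f:=\bigl(\Lambda^{+}_{\epsilon}(\lambda)+\Lambda^{+}_{-\epsilon}(\lambda)\bigr)f=0 ,
\end{equation*}
and any nontrivial $f\in\ker\mathcal A(\lambda^{\star},\epsilon)$ produces a solution $u^{\star}$ of \eqref{eq_joint_system}. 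I would record the mapping properties of $\Lambda^{+}_{\pm\epsilon}(\lambda)$ on the appropriate trace spaces on $\Gamma$ and the analyticity of $\lambda\mapsto\mathcal A(\lambda,\epsilon)$ on the domain supplied by Lemma~\ref{lem_analytic_domain}; since for $\mathrm{Im}\,\lambda>0$ the outgoing extensions are square integrable, a nontrivial kernel there would give an $L^{2}$-eigenfunction of the self-adjoint $\mathcal L^{\star}_{\epsilon}$, which is impossible, so $\mathcal A(\lambda,\epsilon)$ has trivial kernel for $\lambda$ in the open upper half-plane near $\lambda_{*}$.

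Next I would insert the Floquet--Bloch spectral representation of the half-waveguide Green's functions and split $\Lambda^{+}_{\pm\epsilon}(\lambda)$, for $\lambda\in\mathcal I_\epsilon$, into three pieces: (a) the contribution of the bands $\mu_{n_*},\mu_{m_*}$ near $p=0$, which by Theorems~\ref{thm_local_gap_open} and \ref{thm_perturbed_fold_dispersion} have a local gap of half-width $\sim|t_*|\epsilon$, so on $\mathcal I_\epsilon$ this part is carried by exponentially decaying modes and its principal action on the near-resonant subspace $V_{*}:=\mathrm{span}\{v_{n_*}(\cdot;0),v_{m_*}(\cdot;0)\}|_{\Gamma}$ has an explicit square-root dependence on $\lambda-\lambda_{*}$ and on $\epsilon$ carrying the opposite sign of the effective Dirac mass on the two sides of $\Gamma$; (b) the contribution of the transversal crossings at $p=\pm q_{*}$, where the group velocities are nonzero by Assumption~\ref{assump_dirac_points}(3), giving finitely many propagating channels whose contribution has, on the real $\lambda$-axis, an imaginary part of the sign fixed by the outgoing condition and of magnitude governed by an overlap $\kappa$ of the Dirac eigenfunctions against the $p=\pm q_{*}$ Bloch modes along $\Gamma$; and (c) an analytic, uniformly bounded remainder. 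Projecting $\mathcal A(\lambda,\epsilon)f=0$ onto $V_{*}$ and its complement — the complementary block being boundedly invertible near $(\lambda_{*},0)$ by (c) — a Schur-complement computation, organized along the $\mathcal P$-parities of $v_{n_*}(\cdot;0)$ and $v_{m_*}(\cdot;0)$, reduces the problem to a scalar characteristic equation
\begin{equation*}
\mathcal F(\lambda,\epsilon)=\mathcal F_{\mathrm{loc}}(\lambda,\epsilon)+\mathcal F_{\mathrm{rad}}(\lambda,\epsilon)+\mathcal F_{\mathrm{reg}}(\lambda,\epsilon)=0 ,
\end{equation*}
in which $t_{*}$ and $\kappa$ are the leading constants of $\mathcal F_{\mathrm{loc}}$ and $\mathcal F_{\mathrm{rad}}$, while $\mathcal F_{\mathrm{rad}},\mathcal F_{\mathrm{reg}}$ are of lower order than $\mathcal F_{\mathrm{loc}}$ on $\partial\mathcal I_\epsilon$.

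I would then rescale $\lambda=\lambda_{*}+|t_*|\epsilon\,\zeta$ with $|\zeta|<c_{0}<1$. As $\epsilon\to0$, $\mathcal F_{\mathrm{loc}}$ converges (after division by a nonvanishing factor) to an explicit function of $\zeta$ with branch points at $\zeta=\pm1$ and, on the sheet selected by the decay requirement, a simple zero near the gap centre $\zeta=0$ — the opposite-mass-sign mechanism being exactly what makes $\Lambda^{+}_{\epsilon}+\Lambda^{+}_{-\epsilon}$ degenerate there. Since $\mathcal F_{\mathrm{rad}}+\mathcal F_{\mathrm{reg}}$ is dominated by $\mathcal F_{\mathrm{loc}}$ on $|\zeta|=c_{0}$, Rouch\'e's theorem yields a zero $\zeta^{\star}$ with $|\zeta^{\star}|<c_{0}$, hence $\lambda^{\star}:=\lambda_{*}+|t_*|\epsilon\,\zeta^{\star}\in\mathcal I_\epsilon$; combined with the exclusion of zeros in the upper half-plane from Step~1 this gives $\mathrm{Im}\,\lambda^{\star}\le0$, and unwinding the Grushin reduction produces the corresponding $u^{\star}$. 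If $\kappa=0$, then $\mathcal F_{\mathrm{rad}}\equiv0$, $\mathcal F$ is real for real $\lambda$, a sign change of $\mathcal F_{\mathrm{loc}}+\mathcal F_{\mathrm{reg}}$ provides a real root $\zeta^{\star}$, and $u^{\star}$ is a superposition of purely evanescent near-Dirac modes, hence decays exponentially away from $\Gamma$ and lies in $L^{2}(\Omega)$; since there is no global band gap, $\lambda^{\star}$ is then an embedded eigenvalue of $\mathcal L^{\star}_{\epsilon}$ and $u^{\star}$ an interface mode. If $\kappa\neq0$, the propagating channel at $p=\pm q_{*}$ is excited, and when $\mathrm{Im}\,\lambda^{\star}<0$ the outgoing Bloch component grows as $|x_{1}|\to\infty$, so $\|u^{\star}\|_{L^{2}(\Omega)}=\infty$ and $u^{\star}$ is a resonant mode in the sense of Definition~\ref{def_resonant_mode} (equivalently, self-adjointness of $\mathcal L^{\star}_{\epsilon}$ already precludes an $L^{2}$ solution at $\mathrm{Im}\,\lambda^{\star}<0$). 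This proves Theorem~\ref{thm_main result}.

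The technical heart is the modal analysis (a)--(c): extracting the exact leading-order form of the DtN maps near the Dirac point — the correct square-root branch in $\mathcal F_{\mathrm{loc}}$, the definite sign of $\mathrm{Im}\,\mathcal F_{\mathrm{rad}}$, and the identification of $\kappa$ — and controlling all remainders uniformly in the singular limit in which the window $\mathcal I_\epsilon$, the local gap width and the perturbation all shrink like $\epsilon$. The sign conclusion $\mathrm{Im}(\lambda^{\star})\le0$ rests essentially on the outgoing radiation condition, which both selects the analytic sheet on which $\mathcal F$ is continued and fixes the sign of the Fermi-golden-rule contribution $\mathcal F_{\mathrm{rad}}$ — precisely the feature that is absent in the band-gap-opening setting of earlier works.
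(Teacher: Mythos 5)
Your proposal follows essentially the same strategy as the paper: reduce the interface problem to a boundary integral equation on $\Gamma$, analytically continue the relevant boundary operator across the real axis via the contour deformation of Lemma~\ref{lem_analytic_domain}, extract the singular near-Dirac contribution and the propagating contribution at $p=\pm q_*$, locate the characteristic value in $\mathcal I_\epsilon$ by a Rouch\'e-type count, and force $Im(\lambda^{\star})\le 0$ by a Green's-identity / self-adjointness argument. The differences are organizational. You act with the Dirichlet-to-Neumann maps $\Lambda^{+}_{\pm\epsilon}(\lambda)$ on the Dirichlet trace $f=u|_{\Gamma}$ and impose continuity of the Neumann data; the paper acts with the analytically continued single-layer potential $\tilde{\mathbb{G}}_{\pm\epsilon}(\lambda)$ on the Neumann trace $\varphi=\partial_{x_1}u|_{\Gamma}$ and imposes continuity of the Dirichlet data --- dual formulations of the same matching. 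You then Schur-reduce to a scalar characteristic function $\mathcal F(\lambda,\epsilon)$ and count zeros with the ordinary argument principle; the paper applies the operator-valued Gohberg--Sigal generalized Rouch\'e theorem to $\mathbb{G}^{\Gamma}_{\epsilon}(\lambda_*+\epsilon h)$ against its $\epsilon\to 0$ limit $2\mathbb{T}+\beta(h)\mathbb{P}^{Dirac}$, for which it computes the kernel and null multiplicity directly (Proposition~\ref{prop_T_fred_kernel}). The scalar reduction yields a more explicit characteristic function at the cost of having to establish invertibility of the complementary block; the operator-valued route avoids the explicit Schur complement but requires a careful description of $\ker\mathbb{T}$. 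One point to tighten in your final dichotomy: the interface-versus-resonant distinction is not governed by a fixed a priori overlap constant $\kappa$ of the Dirac modes with the $\pm q_*$ Bloch modes, but by the a posteriori condition $\langle\varphi^{\star},\overline{u_{\mathfrak{n}_*,\epsilon}(\cdot;q_{\pm,\epsilon}(\lambda^{\star}))}\rangle=0$ on the actual characteristic vector $\varphi^{\star}$; the paper's Proposition~\ref{prop_charac_to_resonance} shows this vanishing is automatic precisely when $Im(\lambda^{\star})=0$ and fails (so the mode radiates and blows up) when $Im(\lambda^{\star})<0$, which is the clean way to close the case distinction.
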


\begin{remark} \label{rmk_radiation_cond}
In the case of 
$\Im(\lambda^{\star})=0$ in Theorem \ref{thm_main result}, we have (by Proposition \ref{corollary_G_eps_decay_blow})
\[
\langle \phi^{\star},\overline{u_{\mathfrak{n}_*,\epsilon}(\cdot ;\overline{q_{+,\epsilon}(\lambda^{\star})})} \rangle = \langle \phi^{\star},\overline{u_{\mathfrak{n}_*,-\epsilon}(\cdot ;\overline{q_{-,-\epsilon}(\lambda^{\star})})} \rangle= 0,
\]
where $\phi^{\star}=\big(\frac{\partial u^{\star}}{\partial x_1}\big)\big|_{\Gamma}$ is the Neumann trace of $u^{\star}$ on the interface, $\langle\cdot,\cdot\rangle$ is the dual product between functions on the interface (see the notation in Section 1.4), and
$u_{\mathfrak{n}_*,\pm\epsilon}(\cdot ;\overline{q_{\pm, \pm \epsilon}(\lambda^{\star})})$ is the Bloch mode at the energy level $\lambda^{\star}$ with quasi-momentum $\overline{q_{\pm,\pm \epsilon}}$ (which is exactly $q_{\pm,\pm \epsilon}$ since we assume $\Im(\lambda^{\star})=0$) in the media corresponding to the operator $\mathcal{L}_{\pm\epsilon}$. Physically, it means that $u^{\star}$ is decoupled to the two outgoing Bloch modes of the joint structure.

On the other hand, in the case of $\Im(\lambda^{\star})<0$, the resonant mode $u^{\star}$ found in Theorem \ref{thm_main result} admits the following outgoing radiation condition (see Proposition \ref{prop_G_eps_cont_jump+radiation}):
\begin{equation} \label{eq_radiation_resonant_mode_1}
\lim_{x_1\to \infty}\Big|u^{\star}(x)
-\frac{-i\langle \phi^*(\cdot),\overline{u_{\mathfrak{n}_*,\epsilon}(\cdot ;\overline{q_{+,\epsilon}(\lambda^{\star})})} \rangle}{\lambda^{\prime}_{\mathfrak{n}_*,\epsilon}(q_{+,\epsilon}(\lambda^{\star}))}u_{\mathfrak{n}_*,\epsilon}(x ;q_{+,\epsilon}(\lambda^{\star}))\Big|=0,
\end{equation}
and
\begin{equation} \label{eq_radiation_resonant_mode_2}
\lim_{x_1\to -\infty}\Big|u^{\star}(x)
-\frac{-i\langle \phi^*(\cdot),\overline{u_{\mathfrak{n}_*,-\epsilon}(\cdot ;\overline{q_{-,-\epsilon}(\lambda^{\star})})} \rangle}{\lambda^{\prime}_{\mathfrak{n}_*,-\epsilon}(q_{-,-\epsilon}(\lambda^{\star}))}u_{\mathfrak{n}_*,-\epsilon}(x ;q_{-,-\epsilon}(\lambda^{\star}))\Big|=0,
\end{equation}
where $u_{\mathfrak{n}_*, \pm \epsilon}(\cdot ;\overline{q_{\pm, \pm \epsilon}(\lambda^{\star})})$ is the continued outgoing Bloch mode at the complex energy level $\lambda^{\star}$ with complex quasi-momentum $q_{\pm, \pm\epsilon}$ in the media corresponding to the operator $\mathcal{L}_{\pm\epsilon}$. Physically, it means that $u^{\star}$ is asymptotically proportional to the continued outgoing Bloch mode of $\mathcal{L}_{\epsilon}$ as $x_1 \to \infty$ and the continued Bloch mode of $\mathcal{L}_{-\epsilon}$ as $x_1 \to -\infty$. 
We remark that the continued outgoing Bloch modes are the analytical continuation of the conventional outgoing Bloch modes for real energies that are specified by the limiting absorption principle. Those continued outgoing Bloch modes specify the outgoing condition for the resonant mode. 


The radiation conditions \eqref{eq_radiation_resonant_mode_1} and \eqref{eq_radiation_resonant_mode_2} also lead to the following condition that ensures $u^{\star}$ in Theorem \ref{thm_main result} is a resonant mode: 
\begin{center}
either $\langle \phi^{\star},\overline{u_{\mathfrak{n}_*,\epsilon}(\cdot ;\overline{q_{+,\epsilon}(\lambda^{\star})})} \rangle\neq 0$ or $\langle \phi^{\star},\overline{u_{\mathfrak{n}_*,-\epsilon}(\cdot ;\overline{q_{-,-\epsilon}(\lambda^{\star})})} \rangle\neq 0$.
\end{center}
In other words, it must couple with one of the two continued outgoing Bloch modes in the joint structure at the complex energy level $\lambda^{\star}$.
\end{remark}



\subsection{Main idea of proof}

For the reader's convenience, we sketch the proof of Theorem \ref{thm_main result} here. 
Roughly speaking, the study of eigenvalues and or resonances of an operator $\mathcal{L}$ in an unbounded domain is most conveniently studied via its resolvent
$\mathcal{R}(z) \;=\; (\mathcal{L}-z)^{-1}$ (or its Green function),
which is holomorphic for \(\Im z>0\).  One then analytically continues \(\mathcal{R}(z)\) across the real axis. By the limiting absorbing principle, the limit $\mathcal{R}(\lambda + i0)$ on the real axis encodes the outgoing radiation condition. Poles with \(\Im z<0\) are called \emph{resonances}, while poles on the real axis coincide with eigenvalues (henceforth viewed simply as resonances of zero imaginary part).  
Expanding $\mathcal{R}(z)$ in a Laurent series about a pole $z_{*}$, the coefficient of $(z-z_{*})^{-1}$ gives the projection onto the space of resonant modes at $z_{*}$.

This framework applies readily to operators that are globally ``homogeneous'', such as a periodic operator with a compact defect \cite{gerard1990resonance,bentosela1976scattering}, since in that setting one can explicitly construct the analytic continuation of its resolvent. 
However, for the interface operator $\mathcal{L}_{\epsilon}^{\star}$ considered in Theorem \ref{thm_main result}, the corresponding media consists of two different periodic structures. This lack of global homogeneity prevents a direct implementation of the above approach.

Instead, we construct resonant modes near a chosen energy via layer potentials, a strategy akin to resolvent continuation. Specifically, we begin with the physical Green functions $G_{\pm\epsilon}(x,y;\lambda)$ \eqref{eq_G_eps_def} of the perturbed operator $\mathcal{L}_{\pm \varepsilon}$, obtained by the limiting-absorption principle. These Green functions satisfy the outgoing radiation condition, are analytic for \(\Im\lambda>0\), and continuous up to \(\Im\lambda\ge0\).  Localizing at the Dirac energy \(\lambda_*\), we analytically continue each \(G_{\pm\epsilon}\) in \(\lambda\) and denote the result by \(\tilde G_{\pm\epsilon}(x,y;\lambda)\). Using those continued Green functions, we seek a resonant mode which satisfies equation (1.3) in the following form
\begin{equation} \label{eq_sketch_1}
u(x;\lambda)=
\left\{
\begin{aligned}
&\int_{\Gamma}\tilde{G}_{\epsilon}(x,y;\lambda)\varphi(y)dy,\quad x_1>0, \\
&-\int_{\Gamma}\tilde{G}_{-\epsilon}(x,y;\lambda)\varphi(y)dy,\quad x_1<0  ,
\end{aligned}
\right.
\end{equation}
which by construction satisfies the outgoing radiation condition. Continuity across the interface \(\Gamma\) then reads as 
\begin{equation} \label{eq_sketch_2}
\mathbb{G}_{\epsilon}^{\Gamma}(\lambda)\varphi
:=\big(\tilde{\mathbb{G}}_{\epsilon}(\lambda)\varphi+\tilde{\mathbb{G}}_{-\epsilon}(\lambda)\varphi\big)\big|_{\Gamma}=0,
\end{equation}
where each $\tilde{\mathbb{G}}_{\pm\epsilon}(\lambda)$ is the integral operator on the interface $\Gamma$ associated with the Green function $\tilde{G}_{\pm\epsilon}(x,y;\lambda)$. 
In this way, resonances of the operator 
$\mathcal{L}_{\epsilon}^{\star}$ near $\lambda_*$
reduce to finding characteristic values of the operator-valued function $\mathbb{G}_{\epsilon}^{\Gamma}$ near $\lambda_*$. 
Solving \(\mathbb G_\epsilon^\Gamma(\lambda)\varphi=0\) with \(\Im\lambda<0\) (resp.\ \(\Im\lambda=0\)) yields a resonance (resp.\ an embedded interface mode), and the corresponding \(u\) has unbounded (resp. bounded) $L^2$ norm. See Section 4.2 for details.


Before proceeding to analyze \eqref{eq_sketch_2}, we summarize the main idea behind the analytic continuation of the Green function $G_{\pm\epsilon}(x,y;\lambda)$, which is the key to the above construction of resonant or interface modes:
\begin{itemize}
    \item We decompose 
\[
  G_{\pm\epsilon}(x,y;\lambda)
  = G_{\pm\epsilon}^{I}(x,y;\lambda)
  + G_{\pm\epsilon}^{II}(x,y;\lambda),
\]
where \(G_{\pm\epsilon}^{I}(x,y;\lambda)\) contains all contributions from energy bands that do not intersect the Dirac level \(\lambda_*\) (i.e.\ those with \(n\neq\mathfrak n_*\) in the Floquet expansion; see \eqref{eq_G_eps_I_def}), \(G_{\pm\epsilon}^{II}(x,y;\lambda)\) is the single‐band contribution corresponding to \(n=\mathfrak n_*\) (see \eqref{eq_G_eps_II_def}).

 \item $G_{\pm\epsilon}^{I}(x,y;\lambda)$ is analytic in a neighborhood of $\lambda=\lambda_*$, denoted as $\mathcal{I}_{\epsilon}$. Indeed, it is the integral kernel of the resolvent of the self‐adjoint operator obtained by projecting the Laplacian $\mathcal{L}_{\pm \epsilon}$ onto all bands with \(n\neq\mathfrak n_*\), and this operator has a spectral gap at \(\lambda_*\) (see Lemma 4.1(3)). 
    \item $G_{\pm\epsilon}^{II}(x,y;\lambda)$ has a nonzero jump as $\lambda$ sweeps across the interval $\mathcal{I}_{\epsilon}\cap \mathbf{R}$ because the band $n=\mathfrak{n}_{*}$ intersects with the Dirac energy level. Such singularity can be avoided by deforming the integral path $[-\pi,\pi]+0i$ in (4.2) of the momentum variable $p$. The key to this deformation is to avoid those $p$'s such that $\lambda_{\mathfrak{n}_{*}}(p)=\lambda$ ($\lambda\in \mathcal{I}_{\epsilon}$), which are called \textit{the pinching singularities of deformation} \cite{gerard1990resonance}. Based on a careful estimate of the location of pinching singularities, we design a continuous family of integral contours $\{C_{\epsilon}\}_{\epsilon}$ such that $C_{\epsilon}$ avoids all the pinching singularities for each $\epsilon$. As a consequence, by deforming the integral path in $G_{\pm\epsilon}^{II}(x,y;\lambda)$ to the new contour $C_{\pm \epsilon}$, we obtain an analytic function $\tilde{G}_{\pm\epsilon}(x,y;\lambda)$ of $\lambda$; see (4.3). Moreover, our construction of contour ensures that $\tilde{G}_{\pm\epsilon}(x,y;\lambda)=G_{\pm\epsilon}(x,y;\lambda)$ for $\lambda\in\mathbf{R}$. Hence, we obtain an analytic continuation of the physical Green function $G_{\pm\epsilon}(x,y;\lambda)$.
\end{itemize}
Next, we solve the characteristic value problem \eqref{eq_sketch_2} near the Dirac energy $\lambda_*$. By the generalized Rouché theorem, it's sufficient to study the characteristic value of the scaled limit $\mathbb{G}_{0}^{\Gamma}(h):=\lim_{\epsilon\to 0^+}\mathbb{G}_{\epsilon}^{\Gamma}(\lambda_*+\epsilon\cdot h)$. The convergence of $\lim_{\epsilon\to 0^+}\mathbb{G}_{\epsilon}^{\Gamma}(\lambda_*+\epsilon\cdot h)$ (Proposition 4.7) is analyzed again by decomposing \(\mathbb{G}_{\epsilon}^{\Gamma}(\lambda_{*}+\epsilon h)\)
into three band‐wise parts and treating each separately: 
\begin{itemize}

\item Far-energy bands  
    with \(n\neq\mathfrak n_*,\mathfrak n_*+1\) that are bounded away from the Dirac energy \(\lambda_*\) uniformly for $\epsilon$: Their total contribution
   \(\mathbb{G}_{\epsilon}^{\Gamma,\mathrm{evan}}(\lambda_*+\epsilon\cdot h) \) (see \eqref{eq_B_decom_1}) converges by standard regular perturbation theory (cf. \cite{kato2013perturbation})  to a limit \(\mathbb{T}^{\mathrm{evan}}\); see Lemma \ref{lem_app_B_1}.

\item Near-energy bands \(n\in\{\mathfrak n_*,\mathfrak n_*+1\}\) with momentum away from the Dirac point ($|p|>\epsilon^{1/3}$): 
Their contribution is set as
\(\mathbb{G}_{\epsilon}^{\Gamma,\mathrm{prop},1}(\lambda_*+\epsilon\cdot h)\) (see \eqref{eq_B_decom_2}).  Linear dispersion at the Dirac point of the unperturbed structure (Assumption \ref{assump_dirac_points} (3)) ensures that the real‐axis contour tends to a principal‐value integral, while small semicircles around \(p=0,\pm q_*\) produce purely imaginary limits (cf. \cite{joly2016solutions}); see Lemma \ref{lem_app_B_2}.

\item  Near-energy bands \(n\in\{\mathfrak n_*,\mathfrak n_*+1\}\) with momentum near the Dirac point ($|p|<\epsilon^{1/3}$):    
Their contribution is set as 
   \(\mathbb{G}_{\epsilon}^{\Gamma,\mathrm{prop},2}(\lambda_*+\epsilon\cdot h)\) (see \eqref{eq_B_decom_3}).  The perturbed dispersion
   \(\sqrt{p^2+\epsilon^2}\) near the Dirac point from Theorem \ref{thm_local_gap_open}, a consequence of the linear dispersion of the unperturbed operator $\mathcal{L}$ at the Dirac point, yields uniform convergence of this contribution; see Lemma \ref{lem_app_B_3}.
\end{itemize}

Together, these three estimates establish that
\begin{equation} \label{eq_sketch_3}
    \lim_{\epsilon\to 0^+}\mathbb{G}_{\epsilon}^{\Gamma}(\lambda_*+\epsilon\cdot h)=2\mathbb{T}+\beta(h)\mathbb{P}^{Dirac} .
\end{equation}
We shortly illustrate the meaning of this limiting operator, which is critical for studying the characteristic value problem. Using the limiting absorption principle, every solution at the Dirac energy level in the unperturbed structure can be decomposed into three parts: (i) two propagating modes away from the Dirac point: $v_{n_*}(x;-q_*)$ and $v_{m_*}(x;q_*)$ (see Figure \ref{fig_unperturbed_band_structure_a}), (ii) two propagating modes at the Dirac point: $v_{n_*}(x;0)$ and $v_{m_*}(x;0)$, and (iii) an evanescent mode that decays exponentially at infinity. 
Correspondingly, the operator $\mathbb{T}$ in \eqref{eq_sketch_3} encodes the contributions of parts (i) and (iii), while the projector $\mathbb{P}^{Dirac}$ encodes that of part (ii). Moreover, the function $\beta(h)$ encodes information of the band inversion near the Dirac point due to the perturbation.  
We prove the limiting operator \eqref{eq_sketch_3} has a unique characteristic value $h=0$. The generalized Rouché theorem then guarantees that the original characteristic‐value problem \eqref{eq_sketch_2} likewise has a unique solution in a neighborhood of the Dirac energy \(\lambda_*\). 


To conclude the proof of Theorem \ref{thm_main result}, we need to show that the solution $\lambda^{\star}$ to \eqref{eq_sketch_2} corresponds to an embedded eigenvalue or resonance as claimed in Theorem \ref{thm_main result}. The key is to analyze the asymptotic of the associated mode constructed in \eqref{eq_sketch_1} at infinity and its coupling with the propagating modes or extended propagating modes near $p=\pm q_*$. See Proposition \ref{prop_G_eps_cont_jump+radiation} and \ref{corollary_G_eps_decay_blow}.


\subsection{Notations}
Here we list notations that are used in the paper.

\subsubsection{Geometries}

Upper/lower half complex plane \noindent $\mathbf{C}_+=\{z \in \mathbf{C}: \Im(z) >0\}$, $\mathbf{C}_-=\{z \in \mathbf{C}: \Im(z) <0\}$;

\noindent $\Omega$: the domain of the waveguide (introduced in Section 1.1);

\noindent $\Omega^{+}:=\Omega\cap (\mathbf{R}^{+} \times \mathbf{R})$, \quad $\Omega^{-}:=\Omega\cap (\mathbf{R}^{-} \times \mathbf{R})$;

\noindent Interface $\Gamma:=\Omega\cap (\{0\}\times \mathbf{R})$;

\noindent $\Gamma^{+}:=\partial (\Omega^{+})$, \quad $\Gamma^{-}:=\partial (\Omega^{-})$;

\noindent Primitive cell $Y:=\Omega\cap ((0,1)\times \mathbf{R})$. 

\subsubsection{Function spaces}
\noindent $L^2(\Omega):=\{u(x):\|u\|_{L^2(\Omega)}<\infty\}$, where $\|\cdot\|_{L^2(\Omega)}$ is induced by the inner product $(u,v)_{L^2(\Omega)}:=\int_{\Omega}u\cdot\overline{v}$;

\noindent $H^m(\Omega):=\{u(x):\partial_\alpha u\in L^2(\Omega),\, |\alpha|\leq m\}$;

\noindent $L^2(Y):=\{u(x):\|u\|_{L^2(Y)}<\infty\}$, where $\|\cdot\|_{L^2(Y)}$ is induced by the inner product $(u,v)_{L^2(Y)}:=\int_{Y}u\cdot\overline{v}$;

\noindent $L^2(Y;n_{\epsilon}(x)):=\{u(x):\|u\|_{L^2(Y;n_{\epsilon}(x))}<\infty\}$, where $\|\cdot\|_{L^2(Y;n_{\epsilon}(x))}$ is induced by the inner product $(u,v)_{L^2(Y;n_{\epsilon}(x))}:=\int_{Y}n^2_{\epsilon}(x)u(x)\cdot\overline{v(x)}$ (the refractive index $n_{\epsilon}$ is introduced in Assumption \ref{assump_perturbation});

\noindent $H^m(Y):=\{u(x):\partial_\alpha u \in L^2(Y),\, |\alpha|\leq m\}$;

\noindent $H_{b}^1(\Delta, \Omega):=\{u\in H^1(\Omega):\Delta u\in L^2(\Omega),\, \nabla u(x)\cdot \bm{n_x}\big|_{\partial\Omega}=0\}$;

\noindent $L_{p}^2(\Omega):=\{u\in L_{loc}^2(\Omega):\, u(x+\bm{e}_1)=e^{ip}u(x)\}$ ($p\in\mathbf{C}$), which is equipped with $L^2(Y)-$norm;

\noindent $H^m_{p}(\Omega):=\{u\in H_{loc}^m(\Omega):\, (\partial_\alpha u)(x+\bm{e}_1)=e^{ip}(\partial_\alpha u)(x),\, 
|\alpha|\leq m\}$ ($p\in\mathbf{C}$), which is equipped with $H^m(Y)-$norm;

\noindent $H^m_{p,b}(\Omega):=\{u\in H_{p}^m(\Omega):\nabla u(x)\cdot \bm{n_x}\big|_{\partial\Omega}=0\}$;

\noindent $H_{p,b}^1(\Delta, \Omega):=\{u\in H^1_{p,b}(\Omega):\Delta u\in L^2_{loc}(\Omega)\}$;

\noindent $H^{\frac{1}{2}}(\Gamma):=\{u=U|_{\Gamma}:U\in H^{\frac{1}{2}}(\Gamma^{+})\}$, where $H^{\frac{1}{2}}(\Gamma^{+})$ is defined in the standard way;

\noindent $\tilde{H}^{-\frac{1}{2}}(\Gamma):=\{u=U|_{\Gamma}:U\in H^{-\frac{1}{2}}(\Gamma^{+})\text{ and }supp (U)\subset \overline{\Gamma}\}$. 

\subsection{Operators and others}

\noindent Equivalence $\sim$ between two functions: $u\sim v$ $\Longleftrightarrow$ $\exists \tau\in\mathbf{C}$ such that $|\tau|=1$ and $u=\tau\cdot v$;

\noindent Dual product $\langle\cdot,\cdot \rangle$ (between $\tilde{H}^{-\frac{1}{2}}(\Gamma)$ and $H^{\frac{1}{2}}(\Gamma)$): $\langle \varphi,\phi\rangle=\int_{\Gamma}\varphi\cdot \phi$, for $\varphi\in \tilde{H}^{-\frac{1}{2}}(\Gamma)$, $\phi\in H^{\frac{1}{2}}(\Gamma)$;

\noindent Reflection operator $\mathcal{P}:u(x_1,x_2)\mapsto u(-x_1,x_2)$;

\noindent Trace operator $\text{Tr}:H^1(Y)\to H^{\frac{1}{2}}(\Gamma)$, $u\mapsto u\big|_{\Gamma}$;

\noindent Extension operator $\mathcal{M}:=\text{Tr}^{*}$.

\section{Preliminaries}

\subsection{Floquet-Bloch theory and band structure near the Dirac points}
In this section, we briefly recall the Floquet-Bloch theory, which is used to characterize the spectrum of the operators $\{\mathcal{L}_{\epsilon}\}$. Here we only exhibit the result for $\mathcal{L}=\mathcal{L}_{0}$ for the ease of notation; the discussion of $\mathcal{L}_{\epsilon}$ ($\epsilon\neq 0$) is the same.

To study the spectrum $\sigma(\mathcal{L})$, it's equivalent to consider the following family of operators for $p\in [-\pi, \pi]$, 
$$
\mathcal{L}(p):H_{p,b}^1(\Delta,\Omega)\subset L^2_{p}(\Omega)\to L^{2}_{p}(\Omega),\quad \phi\to -\frac{1}{n^2}\Delta.
$$
The Floquet-Bloch theory indicates that
$$
\sigma(\mathcal{L})=\cup_{p\in [-\pi,\pi]}\sigma(\mathcal{L}(p)).
$$
For each $p\in [-\pi,\pi]$, $\sigma(\mathcal{L}(p))$ consists of a discrete set of real eigenvalues
\begin{equation*} 
0\leq \lambda_1(p)\leq\lambda_2(p)\leq\cdots\leq\lambda_n(p)\leq\cdots.
\end{equation*}
Since $\{\lambda_{n}(p)\}$ are labeled in the ascending order, we call it the \textit{ascending labeling} of the Floquet-Bloch eigenvalues.  It is clear that
\begin{equation*}
\forall p\in [-\pi,\pi],\quad \{\mu_n(p)\}_{n=1}^{\infty}=\{\lambda_n(p)\}_{n=1}^{\infty},
\end{equation*}
where $\{\mu_n(p)\}_{n=1}^{\infty}$ is the analytical labeling of the Floquet-Bloch eigenvalues introduced in Section 1.1. Each $\lambda_n(p)$ is piecewise smooth for $p\in [-\pi,\pi]$ \cite{kuchment2016overview}. The graph of $\lambda_n(p)$ is called the $n$-th dispersion curve. 
We denote by $u_{n}(x;p)$ the $L^2$-normalized eigenfunction associated with the eigenvalue $\lambda_n(p)$. $u_{n}(x;p)$ is called the $n$-th Floquet-Bloch mode at quasi-momentum $p$. 
The Bloch modes $\{u_{n}(x;p): n \geq 1\}$ form a basis of $L_p^2(\Omega)$.
\begin{proposition}
Assume \ref{assump_general spectrum} and \ref{assump_dirac_points} hold.  There exists an integer $\mathfrak{n}_*>0$ such that 
\begin{enumerate}
    \item $\lambda_*=\lambda_{\mathfrak{n}_*}(0)=\lambda_{\mathfrak{n}_*+1}(0)$;
    \item $\lambda_{\mathfrak{n}_*}(p)<\lambda_{\mathfrak{n}_*+1}(p)$ for all $p\in [-\pi,\pi]\backslash \{0\}$;
    \item $\lambda_{\mathfrak{n}_*-1}(p)<\lambda_{\mathfrak{n}_*}(p)$, $\lambda_{\mathfrak{n}_*+1}(p)<\lambda_{\mathfrak{n}_*+2}(p)$ for all $p\in [-\pi,\pi]$.
\end{enumerate}
\end{proposition}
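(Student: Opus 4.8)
The goal is to translate Assumptions \ref{assump_general spectrum} and \ref{assump_dirac_points}, which are phrased in terms of the analytic labeling $\{\mu_n(p)\}$, into statements about the ascending labeling $\{\lambda_n(p)\}$. First I would fix $p=0$ and locate the Dirac energy: since $\lambda_*=\mu_{n_*}(0)=\mu_{m_*}(0)$ and the two analytic branches are distinct functions meeting only at $p=0$ (Assumption \ref{assump_dirac_points}(2)), the value $\lambda_*$ is attained by exactly two indices in the multiset $\{\mu_n(0)\}=\{\lambda_n(0)\}$; call the position of the lower one $\mathfrak{n}_*$, so that $\lambda_*=\lambda_{\mathfrak{n}_*}(0)=\lambda_{\mathfrak{n}_*+1}(0)$, which is item (1). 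Here I would use Assumption \ref{assump_dirac_points}(2) again to rule out any \emph{third} index $n\neq n_*,m_*$ with $\mu_n(0)=\lambda_*$, since $\mu_n(0)=\mu_{n_*}(0)$ is forbidden; this is what pins the multiplicity at the Dirac point to exactly two.

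Next, for item (2), I would argue on the two components of $[-\pi,\pi]\setminus\{0\}$. On each such $p$, Assumption \ref{assump_dirac_points}(2) gives $\mu_{n_*}(p)\neq\mu_{m_*}(p)$ and $\mu_n(p)\notin\{\mu_{n_*}(p),\mu_{m_*}(p)\}$ for all other $n$, so the values $\mu_{n_*}(p)$ and $\mu_{m_*}(p)$ occupy two consecutive slots in the ascending order — they are consecutive because no other dispersion curve lies between them, which follows since any $\mu_n$ strictly separating them near some $p$ would, by continuity and the connectedness of the interval together with the fact that $\mu_{n_*},\mu_{m_*}$ only coincide at $0$, have to cross one of them, contradicting (2). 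I would then invoke Assumption \ref{assump_dirac_points}(3), $\mu_{n_*}'(0)>0$ and $\mu_{n_*}'(-q_*)<0$, to fix which of $\mu_{n_*},\mu_{m_*}$ is the lower branch on $(0,\pi]$ versus $[-\pi,0)$; by the reflection symmetry (Assumption \ref{assump_reflection}) the band structure is even in $p$, so $\mu_{m_*}(p)=\mu_{n_*}(-p)$, and this identifies the lower slot as $\lambda_{\mathfrak{n}_*}$ consistently across the removed point. Combined with continuity at $p=0$ and item (1), this yields $\lambda_{\mathfrak{n}_*}(p)<\lambda_{\mathfrak{n}_*+1}(p)$ on $[-\pi,\pi]\setminus\{0\}$.

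For item (3), the separation from the neighbouring bands $\lambda_{\mathfrak{n}_*-1}$ and $\lambda_{\mathfrak{n}_*+2}$, I would again use Assumption \ref{assump_dirac_points}(2): every $\mu_n$ with $n\neq n_*,m_*$ is, at every $p$, either strictly below both $\mu_{n_*}(p),\mu_{m_*}(p)$ or strictly above both (it cannot be sandwiched, by the same crossing argument as above, and it cannot equal either), and which side it is on is locally constant in $p$ hence globally constant on $[-\pi,\pi]$ by connectedness. Since at generic $p$ the indices $\mathfrak{n}_*,\mathfrak{n}_*+1$ are exactly the two slots carrying $\mu_{n_*},\mu_{m_*}$, the band $\lambda_{\mathfrak{n}_*-1}$ is carried by some $\mu_n$ lying strictly below, and $\lambda_{\mathfrak{n}_*+2}$ by some $\mu_n$ lying strictly above, giving the strict inequalities for all $p$ (including $p=0$, where $\lambda_{\mathfrak{n}_*}(0)=\lambda_{\mathfrak{n}_*+1}(0)=\lambda_*$ is still strictly separated from the third-lowest and the fourth-highest relevant values by (2)).

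\textbf{Main obstacle.} The routine parts are the index bookkeeping; the genuine difficulty is the ``no sandwiching'' claim — that no dispersion curve $\mu_n$ ($n\neq n_*,m_*$) can pass strictly between $\mu_{n_*}$ and $\mu_{m_*}$ on part of $[-\pi,\pi]$ — which needs the non-intersection hypothesis in Assumption \ref{assump_dirac_points}(2) together with a connectedness/continuity argument (on $[-\pi,\pi]\setminus\{0\}$, and separately a limiting argument as $p\to 0$ using item (1) to handle the coalescence at the Dirac point). One must also be slightly careful that the analytic branches $\mu_{n_*},\mu_{m_*}$ are globally smooth on $[-\pi,\pi]$ (stated in the caption of Figure \ref{fig_unperturbed_band_structure}) so that their graphs are genuine curves to which these topological arguments apply, whereas the ascending-labeling curves $\lambda_{\mathfrak{n}_*},\lambda_{\mathfrak{n}_*+1}$ are only Lipschitz and fail to be differentiable at $p=0$.
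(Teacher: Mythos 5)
The paper states this proposition without proof (it follows the phrase ``By Assumption \ref{assump_general spectrum} and \ref{assump_dirac_points}, we have the following proposition''), so there is no argument in the source to compare against; your proposal supplies that missing argument, and it is correct. The ``no sandwiching'' step, which you identify as the crux, can be packaged more cleanly than the crossing/connectedness heuristic you sketch: for $n\neq n_*,m_*$, both $\mu_n-\mu_{n_*}$ and $\mu_n-\mu_{m_*}$ are continuous on $[-\pi,\pi]$ and nowhere zero by Assumption \ref{assump_dirac_points}(2), so each has a constant sign on the whole interval; the mixed case ($\mu_{n_*}<\mu_n<\mu_{m_*}$ everywhere, or its mirror) is then immediately excluded by evaluating at $p=0$, where $\mu_{n_*}(0)=\mu_{m_*}(0)=\lambda_*$. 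This gives directly that every other branch is globally below both or globally above both, that the number $k$ below is finite and constant, and hence that $\lambda_{\mathfrak{n}_*}(p)=\min\{\mu_{n_*}(p),\mu_{m_*}(p)\}$ and $\lambda_{\mathfrak{n}_*+1}(p)=\max\{\mu_{n_*}(p),\mu_{m_*}(p)\}$ with $\mathfrak{n}_*=k+1$, from which items (1)--(3) all follow in one stroke. Two hypotheses you invoke are not actually needed: the reflection symmetry (Assumption \ref{assump_reflection}) and the slope condition $\mu_{n_*}'(0)>0$ of Assumption \ref{assump_dirac_points}(3) play no role in establishing item (2), because the ascending labeling by definition assigns $\lambda_{\mathfrak{n}_*}$ to whichever of the two analytic branches is smaller at each $p$; you do not need to decide which analytic branch that is on which side of $0$. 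Apart from these overstatements of what is used, the index bookkeeping is sound.
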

The two dispersion curves $\lambda_*=\lambda_{\mathfrak{n}_*}(p)$ and $\lambda_*=\lambda_{\mathfrak{n}_*+1}(p)$ are depicted in Figure \ref{fig_unperturbed_band_structure_b}.

\subsection{The physical Green function and representation of solutions for the unperturbed structure}
We introduce the physical Green function $G(x, y; \lambda)$ for the unperturbed waveguide at $\lambda=\lambda_*$. It's the solution to the following equations:
\begin{equation*} 
    \left\{
    \begin{aligned}
        &(\frac{1}{n^2(x)}\Delta_x +\lambda_*)G(x,y;\lambda_*)=\frac{1}{n^2(x)}\delta(x-y),\quad x,y \in \Omega,\\
        &\nabla_x G(x,y;\lambda_*) (x)\cdot \bm{n}_x=0 ,\quad x\in \partial \Omega ,
    \end{aligned}
    \right.
\end{equation*}
obtained by the limiting absorption principle \cite{joly2016solutions},
\begin{equation*}
\begin{aligned}
G(x,y;\lambda_*)
&= \lim_{\eta \to 0^+}G(x,y;\lambda_* + i \eta)= \frac{1}{2\pi}\lim_{\eta \to 0^+}\int_0^{2\pi}\sum_{n\geq 1}\frac{v_{n}(x;p)\overline{v_{n}(y;p)}}{\lambda_*+i\eta-\mu_{n}(p)}dp.
\end{aligned}
\end{equation*}
For ease of notation, we denote
\begin{equation} \label{eq_K_n_def}
K_{n}(x,y;p,\lambda):=\frac{v_{n}(x;p)\overline{v_{n}(y;p)}}{\lambda-\mu_{n}(p)}.
\end{equation}
The radiation condition of $G(x,y;\lambda_*)$ is established in \cite{joly2016solutions}. We recall the following properties of $G(x,y;\lambda_*)$: first, $G(x,y;\lambda_*)$ admits the following decomposition
as a manifestation of the limiting absorption principle (cf. Remark 8 in \cite{joly2016solutions}):
\begin{equation} \label{eq_G_pv_decomposition}
\begin{aligned}
G(x,&y;\lambda_* )
=\frac{1}{2\pi}\int_{-\pi}^{\pi}\sum_{n\neq n_*,m_*}K_{n}(x,y;p,\lambda)dp \\
&+\Bigg(\frac{1}{2\pi}p.v.\int_{-\pi}^{\pi}K_{n_{*}}(x,y;p,\lambda)dp
+\frac{1}{2\pi}p.v.\int_{-\pi}^{\pi}K_{m_{*}}(x,y;p,\lambda)dp \\
&\qquad -\frac{i}{2}\frac{v_{n_{*}}(x;0)\overline{v_{n_{*}}(y;0)}}{|\mu_{n_{*}}^{\prime}(0)|}
-\frac{i}{2}\frac{v_{m_{*}}(x;0)\overline{v_{m_{*}}(y;0)}}{|\mu_{m_{*}}^{\prime}(0)|}\\
&\qquad -\frac{i}{2}\frac{v_{n_{*}}(x;-q_{*})\overline{v_{n_{*}}(y;-q_{*})}}{|\mu_{n_{*}}^{\prime}(-q_{*})|}
-\frac{i}{2}\frac{v_{m_{*}}(x;q_{*})\overline{v_{m_{*}}(y;q_{*})}}{|\mu_{m_{*}}^{\prime}(q_{*})|} \Bigg),
\end{aligned}
\end{equation}
where $p.v.$ means Cauchy's principal value. The emergence of the principal-value integral above is a consequence of the fact that the dispersion curves $\lambda=\mu_{n_{*}}(p)$ and $\lambda=\mu_{n_{*}}(p)$ intersect with $\lambda=\lambda_*$ linearly at $p=0,\pm q_*$.

Next, $G(x,y;\lambda_*)$ attains the following asymptotic at infinity (cf. Remark 9 in \cite{joly2016solutions}):
\begin{equation} \label{eq_G_right_decomposition}
G(x,y;\lambda_*)=G_0^+(x,y;\lambda_*)-i\cdot\frac{v_{n_{*}}(x;0)\overline{v_{n_{*}}(y;0)}}{|\mu_{n_{*}}^{\prime}(0)|}
-i\cdot\frac{v_{m_{*}}(x;q_*)\overline{v_{m_{*}}(y;q_*)}}{|\mu_{m_{*}}^{\prime}(q_*)|},
\end{equation}
and
\begin{equation} \label{eq_G_left_decomposition}
G(x,y;\lambda_*)=G_0^-(x,y;\lambda_*)-i\cdot\frac{v_{m_{*}}(x;0)\overline{v_{m_{*}}(y;0)}}{|\mu_{m_{*}}^{\prime}(0)|}
-i\cdot\frac{v_{n_{*}}(x;-q_*)\overline{v_{n_{*}}(y;-q_*)}}{|\mu_{n_{*}}^{\prime}(-q_*)|},
\end{equation}
where $G_0^{+}(x,y;\lambda)$ ($G_0^{-}(x,y;\lambda)$) decays exponentially as $x_1\to +\infty$ (for $x_1\to -\infty$) for fixed $y$. 

\medskip

Physically, $v_{n_{*}}(x;0)$ and $v_{m_*}(x;q_{*})$ in \eqref{eq_G_right_decomposition} are the right-propagating modes with frequency $w=\sqrt{\lambda_*}$ in the waveguide. Analogously, $v_{m_{*}}(x;0)$ and $v_{n_*}(x;-q_{*})$ in \eqref{eq_G_left_decomposition} are the left-propagating modes. The following properties hold for these propagating modes. Their proofs are similar to \cite[Lemma 2.2 and 2.3]{qiu2023mathematical}.
\begin{lemma}[Reflection relations between Bloch modes] \label{lemma_equivalence}
The following relations hold
\begin{equation*}
\begin{aligned}
&v_{n_{*}}(x;0)\sim \mathcal{P}v_{m_{*}}(x;0),\quad
v_{n_{*}}(x;0)\sim \overline{v_{m_{*}}(x;0)}, \\
&v_{n_{*}}(x;-q_*)\sim \mathcal{P}v_{m_{*}}(x;q_*),\quad
v_{n_{*}}(x;-q_*)\sim \overline{v_{m_{*}}(x;q_*)}.
\end{aligned}
\end{equation*}
\end{lemma}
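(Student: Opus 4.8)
\textbf{Proof plan for Lemma \ref{lemma_equivalence}.}

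The strategy is to exploit the two symmetries present in the problem: the reflection symmetry $\mathcal P$ from Assumption \ref{assump_reflection} and the complex-conjugation (time-reversal) symmetry of the real operator $\mathcal L=-\tfrac1{n^2}\Delta$. First I would record how each symmetry acts on the Floquet-Bloch operators. Since $n$ is even, $\mathcal P$ intertwines $\mathcal L(p)$ with $\mathcal L(-p)$: if $\phi\in H^1_{p,b}(\Delta,\Omega)$ then $\mathcal P\phi\in H^1_{-p,b}(\Delta,\Omega)$ and $\mathcal L(-p)(\mathcal P\phi)=\mathcal P(\mathcal L(p)\phi)$. Likewise, because the coefficients of $\mathcal L$ are real, complex conjugation sends $H^1_{p,b}(\Delta,\Omega)$ to $H^1_{-p,b}(\Delta,\Omega)$ and satisfies $\mathcal L(-p)\overline{\phi}=\overline{\mathcal L(p)\phi}$. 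Consequently both $\mathcal P v_{n_*}(\cdot;0)$ and $\overline{v_{n_*}(\cdot;0)}$ are eigenfunctions of $\mathcal L(0)$ with eigenvalue $\mu_{n_*}(0)=\lambda_*$ (using $-0\equiv 0$), and similarly $\mathcal P v_{n_*}(\cdot;-q_*)$ and $\overline{v_{n_*}(\cdot;-q_*)}$ are eigenfunctions of $\mathcal L(q_*)$ with eigenvalue $\mu_{n_*}(-q_*)=\lambda_*=\mu_{m_*}(q_*)$.

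Next I would pin down which eigenspace these belong to. By Assumption \ref{assump_dirac_points}(2), at $p=q_*$ the only Floquet-Bloch eigenvalue equal to $\lambda_*$ among the two relevant bands is $\mu_{m_*}(q_*)$, and by Assumption \ref{assump_general spectrum} the corresponding eigenspace is one-dimensional, spanned by $v_{m_*}(\cdot;q_*)$; the analogous statement holds at $p=0$ after one checks that the $\lambda_*$-eigenspace of $\mathcal L(0)$, although two-dimensional (spanned by $v_{n_*}(\cdot;0)$ and $v_{m_*}(\cdot;0)$), is further decomposed by the parity operator $\mathcal P$ (which commutes with $\mathcal L(0)$) into two one-dimensional $\pm1$-eigenspaces. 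Here one must invoke the fact that a genuine Dirac point at $p=0$ forces $v_{n_*}(\cdot;0)$ and $v_{m_*}(\cdot;0)$ to have opposite parities (otherwise the two bands would not cross transversally, contradicting $\mu_{n_*}'(0)>0$ versus $\mu_{m_*}'(0)<0$); thus $\mathcal P v_{n_*}(\cdot;0)=\pm v_{n_*}(\cdot;0)$ is impossible unless the bands are tangent, so in fact $\mathcal P v_{n_*}(\cdot;0)$ must be the eigenfunction of the \emph{other} parity, i.e. proportional to $v_{m_*}(\cdot;0)$. Since $\mathcal P$ and complex conjugation are isometries on $L^2(Y)$ and the $v$'s are $L^2$-normalized, each proportionality constant has modulus $1$, which is exactly the relation $\sim$. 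The same argument at $p=q_*$, combined with the unitarity of $\mathcal P\colon L^2_{-q_*}\to L^2_{q_*}$ and of conjugation, yields $v_{n_*}(\cdot;-q_*)\sim \mathcal P v_{m_*}(\cdot;q_*)$ and $v_{n_*}(\cdot;-q_*)\sim\overline{v_{m_*}(\cdot;q_*)}$.

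The main obstacle is the parity bookkeeping at $p=0$: one needs to justify rigorously that the two crossing Bloch modes at the Dirac point carry opposite $\mathcal P$-eigenvalues, rather than simply asserting it. I would handle this by the standard reduced $2\times2$ argument — restrict $\mathcal L(p)$ near $p=0$ to the rank-two spectral subspace, write the effective matrix in the parity eigenbasis of $\mathcal L(0)$, and observe that off-diagonal terms linear in $p$ are forced by parity to vanish between states of equal parity but are allowed between states of opposite parity; transversal crossing with slopes of opposite sign (Assumption \ref{assump_dirac_points}(3)) therefore requires the two modes to sit in different parity sectors. Once this is in place, everything else is a direct consequence of the intertwining identities and normalization, and the proof closes. (I would also note that the cited Lemmas 2.2–2.3 of \cite{qiu2023mathematical} carry out precisely this computation in a closely related setting, so the write-up can be kept brief.)
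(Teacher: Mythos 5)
Your overall strategy is sound: use the intertwining relations $\mathcal{L}(-p)\mathcal{P}=\mathcal{P}\mathcal{L}(p)$ and $\mathcal{L}(-p)\overline{\phi}=\overline{\mathcal{L}(p)\phi}$, combined with the unitarity of $\mathcal{P}$ and of conjugation, to identify the eigendirections; and your treatment of the $p=\pm q_*$ case, where the $\lambda_*$-eigenspace of $\mathcal{L}(q_*)$ is one-dimensional, is correct. The $p=0$ case, however, contains a genuine error. You assert that $v_{n_*}(\cdot;0)$ and $v_{m_*}(\cdot;0)$ ``have opposite parities,'' i.e.\ are $\pm1$-eigenfunctions of $\mathcal{P}$, and that the $2\times2$ reduced-matrix argument forces this. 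That is backwards. Since $\tilde{B}(0)=-\tfrac{2i}{n^2}\partial_{x_1}$ anticommutes with $\mathcal{P}$ (differentiation in $x_1$ flips parity), a parity eigenstate $w$ has $\big(\tilde{B}(0)w,w\big)_{L^2(Y;n(x))}=0$; by Lemma \ref{lemma_derivative_wrt_p}, if $v_{n_*}(\cdot;0)$ were a parity eigenstate then $\mu_{n_*}'(0)=0$, contradicting Assumption \ref{assump_dirac_points}(3). Equivalently, the nonvanishing flux $\mathfrak{q}(v_{n_*}(\cdot;0),v_{n_*}(\cdot;0))=\tfrac{i}{2}\mu_{n_*}'(0)\neq 0$ of Lemma \ref{lemma_energy_flux} is impossible for an even or an odd function. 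So $v_{n_*}(\cdot;0)$ and $v_{m_*}(\cdot;0)$ are \emph{not} parity eigenstates; the parity eigenstates are (up to phase) the combinations $v_{n_*}(\cdot;0)\pm v_{m_*}(\cdot;0)$, and the $2\times2$ computation shows precisely that the slopes come from the off-diagonal coupling between the two parity sectors. You then draw the correct conclusion $\mathcal{P}v_{n_*}(\cdot;0)\sim v_{m_*}(\cdot;0)$, but it contradicts, rather than follows from, the ``opposite parities'' claim: if $v_{n_*}(\cdot;0)$ had definite parity you would get $\mathcal{P}v_{n_*}(\cdot;0)\sim v_{n_*}(\cdot;0)$, not $\sim v_{m_*}(\cdot;0)$.

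The clean way to close the $p=0$ case, consistent with what you already do at $q_*$, is to avoid working directly in the two-dimensional eigenspace at $p=0$. Work at small $p\neq 0$, where Assumption \ref{assump_general spectrum} and Assumption \ref{assump_dirac_points}(2) make $\mu_{n_*}(p)$ a simple eigenvalue of $\mathcal{L}(p)$. The reflection symmetry, the analytic labeling, and $\mu_{n_*}'(0)\neq 0$ force the permutation relation $\mu_{n_*}(-p)=\mu_{m_*}(p)$ (the only alternative, $\mu_{n_*}(-p)=\mu_{n_*}(p)$, makes $\mu_{n_*}$ even, hence $\mu_{n_*}'(0)=0$). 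Then $\mathcal{P}v_{n_*}(\cdot;p)$ is a unit eigenfunction of $\mathcal{L}(-p)$ at $\mu_{m_*}(-p)$ with one-dimensional eigenspace, so $\mathcal{P}v_{n_*}(\cdot;p)\sim v_{m_*}(\cdot;-p)$, and passing to the limit $p\to 0$ along the analytic families of eigenprojections gives $\mathcal{P}v_{n_*}(\cdot;0)\sim v_{m_*}(\cdot;0)$. The identical argument with conjugation in place of $\mathcal{P}$ gives the second relation, and $\mathcal{P}^2=I$ converts these into the statements of the lemma.
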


\begin{lemma}[Energy flux between Bloch modes] \label{lemma_energy_flux}
For each $u,v\in H^1_{loc}(\Omega)$, we define the sesquilinear form 
$$
\mathfrak{q}(u,v):=\int_{\Gamma}\frac{\partial u}{\partial x_1}\overline{v}dx_2.
$$ 
Then 
\begin{equation} \label{eq_energy_flux_1}
\begin{aligned}
&\mathfrak{q}(v_{n_{*}}(x;p),v_{n_{*}}(x;p))=\frac{i}{2}\mu_{n_{*}}^{\prime}(p),
\quad p=0\text{ or }-q_*, \\
&\mathfrak{q}(v_{m_{*}}(x;p),v_{m_{*}}(x;p))=\frac{i}{2}\mu_{m_{*}}^{\prime}(p),
\quad p=0\text{ or }q_*,
\end{aligned}
\end{equation}
and
\begin{equation*} 
\begin{aligned}
&\mathfrak{q}(v_{n_{*}}(x;0),v_{n_{*}}(x;-q_*))=\mathfrak{q}(v_{n_{*}}(x;0),v_{m_{*}}(x;q_*))=0, \\
&\mathfrak{q}(v_{m_{*}}(x;0),v_{m_{*}}(x;q_*))=\mathfrak{q}(v_{m_{*}}(x;0),v_{n_{*}}(x;-q_*))=0.
\end{aligned}
\end{equation*}

\end{lemma}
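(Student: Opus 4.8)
\textbf{Proof proposal for Lemma \ref{lemma_energy_flux}.}

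The plan is to exploit the quasi-periodicity of the Bloch modes together with the Neumann boundary condition on $\partial\Omega$ in order to turn the sesquilinear form $\mathfrak{q}$ into a boundary-flux quantity that can be evaluated by Green's identity over the primitive cell. First I would recall that $v_{n_*}(\cdot;p)\in H^1_{p,b}(\Omega)$ solves $\mathcal{L}(p)v_{n_*}(\cdot;p)=\mu_{n_*}(p)v_{n_*}(\cdot;p)$, i.e. $-\Delta v_{n_*}(\cdot;p)=n^2\mu_{n_*}(p)v_{n_*}(\cdot;p)$, and similarly for $v_{m_*}$. For the first set of identities, differentiate the eigenvalue equation for $v_{n_*}(\cdot;p)$ with respect to $p$: writing $\dot v:=\partial_p v_{n_*}(\cdot;p)$, one gets $-\Delta \dot v - n^2\mu_{n_*}(p)\dot v = n^2 \mu_{n_*}'(p) v_{n_*}(\cdot;p)$, where $\dot v$ has the shifted quasi-periodicity $(\partial_p v)(x+\bm e_1)=e^{ip}(\dot v(x)+i v(x))$. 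Pairing this with $\overline{v_{n_*}(\cdot;p)}$ over the primitive cell $Y$, integrating by parts twice, and using that the contributions on $\partial\Omega$ vanish by the Neumann condition, the only surviving terms are the flux through the two vertical faces of $\partial Y$ at $x_1=0$ and $x_1=1$; the quasi-periodicity phases $e^{\pm ip}$ and the extra $iv$ term in $\dot v$ combine so that the net boundary term equals (a constant multiple of) $\int_{\Gamma}\frac{\partial v_{n_*}}{\partial x_1}\overline{v_{n_*}}\,dx_2=\mathfrak{q}(v_{n_*},v_{n_*})$, while the right-hand side gives $\mu_{n_*}'(p)\|v_{n_*}(\cdot;p)\|^2_{L^2(Y;n(x))}=\mu_{n_*}'(p)$ after normalization. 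Bookkeeping the factors of $i$ and $1/2$ yields \eqref{eq_energy_flux_1}. This is exactly the Hellmann–Feynman / group-velocity identity, and the computation is parallel to Lemma 2.2 in \cite{qiu2023mathematical}, so I would only sketch it.

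For the orthogonality relations, the key observation is that $v_{n_*}(\cdot;0)$ and $v_{n_*}(\cdot;-q_*)$ (resp.\ $v_{m_*}(\cdot;q_*)$) both solve $-\Delta v = n^2\lambda_* v$ at the \emph{same} energy $\lambda_*$ but with \emph{different} quasi-momenta $0$ and $-q_*$ (resp.\ $q_*$). Apply Green's second identity to this pair on the truncated waveguide $\Omega\cap((0,N)\times\mathbf R)$: the bulk terms cancel because both share the eigenvalue $\lambda_*$, the $\partial\Omega$-terms drop by the Neumann condition, and we are left with the difference of the fluxes through the faces $x_1=0$ and $x_1=N$. Using the quasi-periodicity one checks that the flux through $x_1=N$ equals $e^{i(0-(-q_*))N}$ (resp.\ a comparable phase) times the flux through $x_1=0$; since $0-(-q_*)=q_*\neq 0$ is not a multiple of $2\pi$, letting $N$ range over integers (or averaging over $N=1,\dots,K$ and sending $K\to\infty$) forces the common flux value to vanish, i.e. $\mathfrak{q}(v_{n_*}(\cdot;0),v_{n_*}(\cdot;-q_*))=0$. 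The same argument with the pair $(v_{n_*}(\cdot;0),v_{m_*}(\cdot;q_*))$ and with $(v_{m_*}(\cdot;0),v_{m_*}(\cdot;q_*))$, $(v_{m_*}(\cdot;0),v_{n_*}(\cdot;-q_*))$ gives the remaining four identities; here one additionally invokes Lemma \ref{lemma_equivalence} to identify, e.g., $v_{n_*}(\cdot;0)$ with $\overline{v_{m_*}(\cdot;0)}$ up to a unimodular constant so that all four statements reduce to the same flux-orthogonality between distinct quasi-momenta at the same energy.

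I expect the main obstacle to be the careful handling of the boundary terms: one must verify that the flux integrals over the horizontal portions of $\partial\Omega$ genuinely vanish (they do, by $\nabla v\cdot\bm n_x=0$ and the strip-like geometry, so the boundary of the truncated cell consists only of two vertical segments plus the Neumann boundary), and one must track the quasi-periodicity phase factors through the integration by parts without sign errors — in particular the extra $iv$ term arising when differentiating the quasi-periodicity constraint in $p$. A secondary subtlety is justifying the truncation-and-limit (or the choice of $Y$ as fundamental domain) rigorously given only $v\in H^1_{loc}$; this is routine since the relevant traces on $\Gamma$ lie in $H^{1/2}(\Gamma)$ and $\tilde H^{-1/2}(\Gamma)$ respectively, making $\mathfrak{q}$ well defined, and elliptic regularity near $\Gamma$ (away from $\partial\Omega$ corners, which are absent since $\partial\Omega$ is $C^2$) gives enough smoothness. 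Since the statement explicitly says the proof mirrors Lemmas 2.2–2.3 of \cite{qiu2023mathematical}, I would present the argument compactly, emphasizing the two mechanisms — $p$-differentiation of the eigenequation for the diagonal terms, and Green's identity plus phase mismatch for the off-diagonal terms — and refer to the cited reference for the routine estimates.
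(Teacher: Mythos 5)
Your Hellmann--Feynman computation is the right mechanism for the diagonal identities, but it does not close on its own. Pairing the $p$-differentiated eigenequation with $\overline{v}$ over the primitive cell $Y$ and carrying the quasi-periodicity phases (together with the extra $iv$ term in $\dot v$) through the two vertical faces of $\partial Y$ produces the boundary term $-i\,\mathfrak{q}(v,v)+i\,\overline{\mathfrak{q}(v,v)}=2\,\mathrm{Im}\,\mathfrak{q}(v,v)$, \emph{not} a fixed scalar multiple of $\mathfrak{q}(v,v)$. This pins down $\mathrm{Im}\,\mathfrak{q}(v,v)=\mu'/2$ but leaves $\mathrm{Re}\,\mathfrak{q}(v,v)=\tfrac12\int_\Gamma \partial_{x_1}|v|^2\,dx_2$ completely undetermined, and this is \emph{not} automatically zero for a generic quasi-periodic Bloch mode. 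It vanishes here only because $\Gamma$ sits on the reflection axis and, by Assumption~\ref{assump_reflection} together with Lemma~\ref{lemma_equivalence}, each of the four modes in play satisfies $v(-x_1,x_2)\sim\overline{v(x_1,x_2)}$, making $|v|^2$ even in $x_1$. Without invoking this you cannot pass from $\mathrm{Im}\,\mathfrak{q}=\mu'/2$ to $\mathfrak{q}=\tfrac{i}{2}\mu'$, so the claim that ``bookkeeping the factors of $i$ and $1/2$ yields \eqref{eq_energy_flux_1}'' has a genuine hole.

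The off-diagonal argument has the same gap. What your phase-mismatch/Green's-identity argument actually proves is the vanishing of the conserved Wronskian $\Phi(0)=\int_\Gamma\bigl(\partial_{x_1}u\,\overline{w}-u\,\overline{\partial_{x_1}w}\bigr)\,dx_2=\mathfrak{q}(u,w)-\overline{\mathfrak{q}(w,u)}$, not of $\mathfrak{q}(u,w)$ itself; swapping $u,w$ or using the non-conjugated Wronskian yields complementary relations but still never $\mathfrak{q}(u,w)=0$ by phase mismatch alone. To finish, run the same argument on the non-conjugated conserved quantity $\int_\Gamma\bigl(\partial_{x_1}u\,w-u\,\partial_{x_1}w\bigr)\,dx_2=0$ and then use the reflection relations on $\Gamma$ from Lemma~\ref{lemma_equivalence} (e.g.\ $w(0,\cdot)=c\,\overline{w(0,\cdot)}$, $\partial_{x_1}w(0,\cdot)=-c\,\overline{\partial_{x_1}w(0,\cdot)}$ with $|c|=1$) to rewrite it as $\mathfrak{q}(u,w)+\overline{\mathfrak{q}(w,u)}=0$; adding this to $\Phi(0)=0$ forces $\mathfrak{q}(u,w)=0$. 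You do cite Lemma~\ref{lemma_equivalence}, but only to reduce the four statements to one another; the place it is genuinely indispensable is in converting the conserved flux into $\mathfrak{q}$ itself, which your sketch omits.
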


As a consequence of the reflection symmetry and time-reversal symmetry of the system, the physical Green function $G(x,y;\lambda_*)$ is symmetric in the following sense (the proof is similar to \cite[Lemma 2.4]{qiu2023mathematical}):
\begin{lemma}
For $x,y\in\Omega$ and $x\neq y$, 
$G(x,y;\lambda_*)=G(y,x;\lambda_*)$.
On the other hand, when $y\in \Gamma$, there holds
\begin{equation*}
    G(x,y;\lambda_*)=(\mathcal{P}G)(x,y;\lambda_*),\quad \forall x\in \Omega.
\end{equation*}
\end{lemma}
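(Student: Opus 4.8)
The statement to prove asserts two symmetry properties of the Green's function $G(x,y;\lambda_*)$: reciprocity $G(x,y;\lambda_*)=G(y,x;\lambda_*)$ for $x\neq y$, and the reflection identity $G(x,y;\lambda_*)=(\mathcal{P}G)(x,y;\lambda_*)$ when $y\in\Gamma$.

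\medskip

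\textbf{Proof plan.} For the reciprocity part, the plan is to work from the spectral (Floquet--Bloch) representation of $G$ rather than from a direct Green's identity argument, because the limiting absorption principle makes the naive integration-by-parts argument delicate (boundary terms at infinity need care from the radiation behavior). First I would use the representation
$$
G(x,y;\lambda_*)=\frac{1}{2\pi}\lim_{\eta\to 0^+}\int_0^{2\pi}\sum_{n\geq 1}\frac{v_n(x;p)\overline{v_n(y;p)}}{\lambda_*+i\eta-\mu_n(p)}\,dp.
$$
The key observation is that because $\mathcal{L}$ has real coefficients, the eigenfunctions can be chosen so that $\overline{v_n(x;p)}=v_n(x;-p)$ (complex conjugation maps $L^2_p$ to $L^2_{-p}$ and intertwines $\mathcal{L}(p)$ with $\mathcal{L}(-p)$, and $\mu_n(-p)=\mu_n(p)$). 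Substituting this and changing variables $p\mapsto -p$ in the integral turns $v_n(x;p)\overline{v_n(y;p)}$ into $v_n(y;p)\overline{v_n(x;p)}$, which is exactly $G(y,x;\lambda_*)$ after taking the limit. I should be careful to phrase this using the analytic labeling $\mu_n$ (for which the sum is well-defined termwise) and to note that the only singular contributions — collected in \eqref{eq_G_pv_decomposition} — are manifestly symmetric under $x\leftrightarrow y$ as well (the residue terms have the form $v(x;p)\overline{v(y;p)}$ which, combined with $\overline{v_n(\cdot;p)}=v_n(\cdot;-p)$ and the pairing of $\pm q_*$, is symmetric; the p.v. integrals are handled by the same change of variables).

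\medskip

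For the reflection identity when $y\in\Gamma$, i.e. $y=(0,y_2)$, I would use the reflection symmetry of the structure granted by Assumption \ref{assump_reflection}: the operator $\mathcal{L}$ commutes with $\mathcal{P}$, and $\Omega$ is invariant under $x_1\mapsto -x_1$. Apply $\mathcal{P}$ (in the $x$ variable) to the defining PDE for $G(\cdot,y;\lambda_*)$. Since $\mathcal{P}\Delta_x\mathcal{P}=\Delta_x$, $(\mathcal{P}n)=n$, and the Neumann boundary condition is preserved under $\mathcal{P}$, the function $x\mapsto (\mathcal{P}G)(x,y;\lambda_*)=G(-x_1,x_2,\,y;\lambda_*)$ satisfies
$$
\Big(\tfrac{1}{n^2(x)}\Delta_x+\lambda_*\Big)(\mathcal{P}G)(x,y;\lambda_*)=\tfrac{1}{n^2(x)}\delta\big((-x_1,x_2)-y\big)=\tfrac{1}{n^2(x)}\delta\big(x-(0,y_2)\big)=\tfrac{1}{n^2(x)}\delta(x-y),
$$
where the middle equality uses precisely $y_1=0$ so that reflecting the source location leaves it fixed. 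Thus $(\mathcal{P}G)(\cdot,y;\lambda_*)$ solves the same equations as $G(\cdot,y;\lambda_*)$. It remains to check that $\mathcal{P}G$ is also the \emph{physical} (limiting-absorption) solution, so that uniqueness of that solution forces $G=\mathcal{P}G$; this follows because $\mathcal{P}$ maps the outgoing/physical solution of the absorptive problem at $\lambda_*+i\eta$ to itself (the absorptive resolvent is unique in $L^2$, $\mathcal{P}$ is an $L^2$-isometry commuting with $\mathcal{L}$), and then one passes to the limit $\eta\to 0^+$.

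\medskip

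The main obstacle is the rigorous justification of the limit-interchange and the change of variables in the spectral representation for the reciprocity claim — the integrand is only conditionally convergent near the Dirac point and the band-edge contributions (the $p.v.$ integrals and the $-\tfrac{i}{2}$ residue terms) must be tracked through the substitution $p\mapsto -p$; here the relations $v_{n_*}(x;-q_*)\sim\overline{v_{m_*}(x;q_*)}$ from Lemma \ref{lemma_equivalence} and $\mu_{n_*}'(-q_*)=-\mu_{m_*}'(q_*)$ (forced by the reflection symmetry) are what make the singular terms pair up symmetrically. Once the symmetry is established for the absorptive Green's function $G(x,y;\lambda_*+i\eta)$ — where it is elementary since that operator is a genuine resolvent of a self-adjoint operator composed with real coefficient multiplication — passing to the limit is routine, so an alternative and cleaner route is: prove both identities first for $\eta>0$ using the self-adjointness/reality of $\mathcal{L}$ and $\mathcal{P}\mathcal{L}=\mathcal{L}\mathcal{P}$, then invoke the limiting absorption principle of \cite{joly2016solutions} to take $\eta\to 0^+$. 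I would adopt this $\eta>0$-first strategy as the backbone of the write-up.
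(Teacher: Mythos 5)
The proposal is correct, and the backbone strategy you settle on — establish both identities for the absorptive Green's function $G(\cdot,\cdot;\lambda_*+i\eta)$ ($\eta>0$), where they follow from elementary resolvent identities ($\mathcal{P}$ commuting with $\mathcal{L}$ for the reflection identity, and real coefficients plus self-adjointness for reciprocity), then pass to $\eta\to 0^+$ via the limiting absorption principle — is the standard route and matches what the paper implicitly adopts by referring to Lemma 2.4 of \cite{qiu2023mathematical}. The one point worth emphasizing is that you correctly isolate where the hypothesis $y\in\Gamma$ enters: $y_1=0$ is exactly what makes $\delta(\mathcal{P}x-y)=\delta(x-y)$, and without it the reflection identity would fail.
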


Finally, using the physical Green function, we have a representation formula for the solution to the Helmholtz equation in the semi-infinite domain $\Omega^{+}$; the proof is similar to \cite[Proposition 2.5]{qiu2023mathematical}.

\begin{proposition}
Suppose $u\in H^1_{loc}(\Omega^{+})$ satisfies that
\begin{equation*} 
    (\frac{1}{n^2}\Delta+\lambda_*)u(x)=0\; \mbox{in} \; \Omega^{+},
    \nabla u\cdot \bm{n}_x=0\, (x\in \partial \Omega^{+} \backslash \Gamma). 
\end{equation*}
Assume that $u$ satisfies the same radiation condition as the physical Green function in the sense that
\begin{equation*}
u(x)-a\cdot v_{n_{*}}(x;0)-b\cdot v_{m_{*}}(x;q_*)\text{ decays exponentially as $x_1\to +\infty$ for some constants $a$ and $b$}.
\end{equation*}
Then
\begin{equation}  \label{eq-repre-u}
u(x)=2\int_{\Gamma}G(x,y;\lambda_*)\frac{\partial u}{\partial x_1}(0^+,y_2)dy_2,\quad x\in\Omega^{+} .
\end{equation}
\end{proposition}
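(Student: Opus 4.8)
The plan is to derive \eqref{eq-repre-u} from Green's identity applied in a truncated domain and then send the truncation to infinity, using the radiation-type behaviour of both $u$ and the Green's function to control the boundary terms. Concretely, for $R>0$ let $\Omega_R^{\text{right}}:=\Omega^{\text{right}}\cap(\{x_1<R\})$, whose boundary consists of the interface $\Gamma$ (at $x_1=0$), the portion of $\partial\Omega$ lying in $0<x_1<R$ (where both $u$ and $G$ satisfy the Neumann condition), and the ``cut'' $\Gamma_R:=\Omega\cap(\{x_1=R\})$. Fix $x\in\Omega^{\text{right}}$; for $R$ large the point $x$ lies inside $\Omega_R^{\text{right}}$. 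I would apply the second Green's identity to the pair $u(\cdot)$ and $G(x,\cdot;\lambda_*)$ on $\Omega_R^{\text{right}}$ with the weight $n^2$, i.e. integrate $u\,(\tfrac{1}{n^2}\Delta+\lambda_*)G(x,\cdot)\,n^2 - G(x,\cdot)\,(\tfrac{1}{n^2}\Delta+\lambda_*)u\,n^2$. The first term produces $u(x)$ via the delta source in the defining equation for $G$ (the $n^2$ weights cancel the $1/n^2$ in front of $\delta$), the second term vanishes since $u$ solves the homogeneous equation, and the Neumann conditions kill the contribution of $\partial\Omega$. What remains is
\begin{equation*}
u(x)=\int_{\Gamma}\Bigl(G(x,y;\lambda_*)\tfrac{\partial u}{\partial x_1}(0^+,y_2)-u(y)\tfrac{\partial G}{\partial x_1}(x,y;\lambda_*)\Bigr)dy_2 + (\text{terms on }\Gamma_R).
\end{equation*}

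Next I would simplify the $\Gamma$-integral using the symmetry lemma: since $y\in\Gamma$, $G(x,y;\lambda_*)=(\mathcal{P}G)(x,y;\lambda_*)$, so $\partial_{y_1}G(x,y;\lambda_*)$ vanishes on $\Gamma$ (the normal derivative of a function even in $x_1$ is odd, hence zero at $x_1=0$); this eliminates the $u\,\partial_{x_1}G$ term on $\Gamma$. The factor $2$ in \eqref{eq-repre-u} must then come from the $\Gamma_R$ terms contributing, in the limit, an amount equal to the surviving $\Gamma$ term. This is the step requiring care: on $\Gamma_R$ I decompose $G(x,y;\lambda_*)$ using \eqref{eq_G_right_decomposition}, so $G=G_0^+ - i\,\frac{v_{n_*}(x;0)\overline{v_{n_*}(y;0)}}{|\mu_{n_*}'(0)|} - i\,\frac{v_{m_*}(x;q_*)\overline{v_{m_*}(y;q_*)}}{|\mu_{m_*}'(q_*)|}$, and decompose $u(y)=a\,v_{n_*}(y;0)+b\,v_{m_*}(y;q_*)+(\text{exp. decaying})$. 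The $G_0^+$ part and the exponentially decaying part of $u$ give contributions on $\Gamma_R$ that tend to $0$ as $R\to\infty$. The cross terms between the propagating parts survive; using the energy-flux identities of Lemma \ref{lemma_energy_flux} (the quantity $\mathfrak{q}$ is exactly the $x_1$-flux through a vertical slice, and by the homogeneous equation plus Neumann conditions the flux of $v_{n_*}(\cdot;0)$ or $v_{m_*}(\cdot;q_*)$ against itself is $R$-independent, while cross-fluxes vanish) one evaluates the $\Gamma_R$ limit. The orthogonality relations $\mathfrak{q}(v_{n_*}(x;0),v_{m_*}(x;q_*))=0$ ensure no mixed $a\overline{b}$ or $b\overline{a}$ terms appear, and the normalization $\mathfrak{q}(v_{n_*}(x;p),v_{n_*}(x;p))=\tfrac{i}{2}\mu_{n_*}'(p)$ together with $\mu_{n_*}'(0)>0$, $\mu_{m_*}'(q_*)<0$ (the latter from $\mu_{m_*}'(q_*)=-\mu_{n_*}'(-q_*)>0$... careful: by reflection symmetry $\mu_{m_*}(p)=\mu_{n_*}(-p)$, so $\mu_{m_*}'(q_*)=-\mu_{n_*}'(-q_*)>0$) produces exactly the factor that converts $\int_\Gamma G\,\partial_{x_1}u$ into $2\int_\Gamma G\,\partial_{x_1}u$.

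The cleanest way to organize the $\Gamma_R$ computation is to observe that the combined integrand on $\Gamma_R$, after inserting the decompositions, is a sum of products of the form $\mathfrak{q}$-type pairings evaluated on the slice $\{x_1=R\}$ rather than on $\Gamma$; since each such pairing of two Floquet-Bloch modes solving the homogeneous equation is independent of the slice location (Green's identity on the strip between two slices, Neumann data on $\partial\Omega$), I can evaluate them on $\Gamma$ itself and invoke Lemma \ref{lemma_energy_flux} directly. Thus the $\Gamma_R$ limit equals $-i\,\overline{a}\cdot\overline{(\text{something})}$... more precisely it equals, term by term, minus the self-flux contributions which, by the sign of $\mu'$, reinforce rather than cancel the $\Gamma$ term. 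The main obstacle, then, is bookkeeping: correctly tracking complex conjugates and the orientation of the normal on $\Gamma_R$ versus $\Gamma$, and confirming that every cross term indeed vanishes by Lemma \ref{lemma_energy_flux} so that only the doubling survives. Since the statement asserts this proof is ``similar to the one of Proposition 2.5 in \cite{qiu2023mathematical}'', I would follow that template, and the only genuinely new input is that here there are \emph{two} pairs of propagating modes (at $p=0$ and at $p=\pm q_*$) rather than one, so the flux orthogonality relations in Lemma \ref{lemma_energy_flux} between the $p=0$ and $p=q_*$ modes are what make the argument go through unchanged.
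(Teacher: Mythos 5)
Your overall skeleton (Green's second identity on the truncated domain $\Omega_R^{\text{right}}$, then $R\to\infty$) is the right starting point, but the two steps that actually carry the content both break. First, the symmetry lemma for $G$ does \emph{not} give $\partial_{y_1}G(x,y;\lambda_*)=0$ on $\Gamma$: it says that for $y\in\Gamma$ the map $x\mapsto G(x,y)$ is even in $x_1$, which together with $G(x,y)=G(y,x)$ makes $y\mapsto G(x,y)$ even in $y_1$ only when the observation point $x$ itself lies on $\Gamma$. For a fixed $x\in\Omega^{\text{right}}$ off the interface, $\partial_{y_1}G(x,y)\big|_{y\in\Gamma}$ is generically nonzero (already for the free-space kernel $H_0^{(1)}(k|x-y|)$ one has $\partial_{y_1}\big|_{y_1=0}\propto x_1$), so the double-layer term $\int_\Gamma u\,\partial_{y_1}G$ does not drop out of the identity. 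Second, the truncation terms on $\Gamma_R$ do not supply the factor $2$; they vanish. The boundary density on $\Gamma_R$ is the \emph{bilinear} Wronskian $u\,\partial_{y_1}G-G\,\partial_{y_1}u$, not the sesquilinear flux $\mathfrak{q}$ of Lemma~\ref{lemma_energy_flux}, so the $\mathfrak{q}$-orthogonality relations are not the right tool. Inserting the outgoing asymptotics of $u$ and of $G(x,\cdot)$ (via \eqref{eq_G_right_decomposition} and $G(x,y)=G(y,x)$), the diagonal Wronskians are identically zero, and the cross Wronskian of $v_{n_*}(\cdot;0)$ and $v_{m_*}(\cdot;q_*)$ vanishes because it is independent of the slice (Green's identity on a strip with Neumann data) yet multiplies by $e^{iq_*}\neq 1$ under $R\mapsto R+1$; the exponentially decaying parts contribute nothing.

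The factor $2$ in \eqref{eq-repre-u} comes precisely from the double-layer term you discarded. The clean route is to pass to the even extension $\tilde u(x_1,x_2):=u(|x_1|,x_2)$, which by Assumption~\ref{assump_reflection} solves the same equation in $\Omega\setminus\Gamma$ with Neumann data on $\partial\Omega$, is outgoing on both ends (by Lemma~\ref{lemma_equivalence}), and whose normal derivative jumps by $2\,\partial_{x_1}u(0^+,\cdot)$ across $\Gamma$. Applying Green's identity to $\tilde u$ and $G(x,\cdot)$ on each half of $\Omega\cap\{|x_1|<R\}$ and adding, the double-layer contributions on $\Gamma$ from the two sides cancel (because $\tilde u$ and $\partial_{y_1}G(x,\cdot)$ are continuous across $\Gamma$ when $x\notin\Gamma$), the single-layer contributions combine through the jump into $\int_\Gamma G(x,y)\cdot 2\,\partial_{x_1}u(0^+,y_2)\,dy_2$, and the $\Gamma_{\pm R}$ terms vanish as above. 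Equivalently, replace $G$ by the half-waveguide Neumann Green's function $G_N(x,y):=G(x,y)+G(x,\mathcal{P}y)$, which does satisfy $\partial_{y_1}G_N\big|_{y\in\Gamma}=0$ and equals $2G(x,y)$ on $\Gamma$; this is where the $2$ belongs, not in the $\Gamma_R$ limit.
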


\section{Asymptotic expansions for the perturbed structure}
\subsection{Asymptotic expansions of Floquet-Bloch eigenvalues and eigenfunctions}
In this section, we derive asymptotic expansions for the Bloch eigenpairs of the perturbed operator $\mathcal{L}_{\epsilon}$. The main result of this section, Theorem \ref{thm_local_gap_open}, shows that: 1) a ``local'' band gap is opened near the Dirac point $(0, \lambda_*)$, and 2) a band inversion occurs between the Bloch eigenspaces of $\mathcal{L}_{\epsilon}$ and $\mathcal{L}_{-\epsilon}$ near the Dirac point $(0, \lambda_*)$; see Remark \ref{rmk_gap_open_band_inversion}. These two phenomena, especially the band inversion, are essential for the subsequent bifurcation of eigenvalues at the Dirac point $(0, \lambda_*)$. In contrast, away from the Dirac point, near the momenta \(\pm q_*\), the perturbation produces only a smooth deformation of the dispersion relation; see Theorem \ref{thm_perturbed_fold_dispersion}.


\begin{theorem} \label{thm_local_gap_open}
Under Assumption \ref{assump_reflection}, \ref{assump_general spectrum}, \ref{assump_dirac_points} and \ref{assump_perturbation}, the following asymptotic expansions hold uniformly for $|\epsilon|\ll 1, |p|\ll 1$:
\begin{equation} \label{eq_perturbed_dirac_dispersion}
\begin{aligned}
&\lambda_{\mathfrak{n}_*,\epsilon}(p)=\lambda_*-\sqrt{\alpha_{n_{*}}^2 p^2+|t_{*}|^2\epsilon^2}\left(1+\mathcal{O}(|p|+|\epsilon|)\right), \\
&\lambda_{\mathfrak{n}_*+1,\epsilon}(p)=\lambda_*+\sqrt{\alpha_{n_{*}}^2 p^2+|t_{*}|^2\epsilon^2}\left(1+\mathcal{O}(|p|+|\epsilon|)\right),
\end{aligned}
\end{equation}
where $t_{*}$ is defined in Assumption \ref{assump_perturbation} and $\alpha_{n_{*}}:=\mu_{n_{*}}'(0)$. 
Moreover, for $\epsilon\neq 0$, the Floquet-Bloch eigenfunctions admit the follows asymptotic expansions in $H^1(Y)$:
\begin{equation} \label{eq_perturbed_dirac_eigenfunction_1}
u_{\mathfrak{n}_*,\epsilon}(x;p)=
\left\{
\begin{aligned}
&\frac{t_{*}\cdot\epsilon}{\alpha_{n_{*}}p+\sqrt{\alpha_{n_{*}}^2 p^2+|t_{*}|^2\epsilon^2}}v_{n_{*}}(x;0)+v_{m_{*}}(x;0)
+\mathcal{O}(|p|+|\epsilon|),\quad p>0, \\
&\frac{t_{*}\cdot\epsilon}{-\alpha_{n_{*}}p+\sqrt{\alpha_{n_{*}}^2 p^2+|t_{*}|^2\epsilon^2}}(\mathcal{P}v_{n_{*}})(x;0)+(\mathcal{P}v_{m_{*}})(x;0)
+\mathcal{O}(|p|+|\epsilon|),\quad p<0,
\end{aligned}
\right.
\end{equation}
and
\begin{equation} \label{eq_perturbed_dirac_eigenfunction_2}
u_{\mathfrak{n}_*+1,\epsilon}(x;p)=
\left\{
\begin{aligned}
&v_{n_{*}}(x;0)-\frac{\overline{t_{*}}\cdot \epsilon}{\alpha_{n_{*}}p+\sqrt{\alpha_{n_{*}}^2 p^2+|t_{*}|^2\epsilon^2}}v_{m_{*}}(x;0)
+\mathcal{O}(|p|+|\epsilon|),\quad p>0, \\
&(\mathcal{P}v_{n_{*}})(x;0)-
\frac{\overline{t_{*}}\cdot \epsilon}{-\alpha_{n_{*}}p+\sqrt{\alpha_{n_{*}}^2 p^2+|t_{*}|^2\epsilon^2}}(\mathcal{P}v_{m_{*}})(x;0)
+\mathcal{O}(|p|+|\epsilon|),\quad p<0,
\end{aligned}
\right.
\end{equation}
where $v_{n_{*}}(x;0)$ ($v_{m_{*}}(x;0)$, resp.) is the right- (left-, resp.) propagating mode at the Dirac point $(p,\lambda)=(0,\lambda_*)$. 
\end{theorem}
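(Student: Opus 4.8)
\textbf{Proof proposal for Theorem \ref{thm_local_gap_open}.}

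The plan is to carry out a two-dimensional Lyapunov–Schmidt reduction on the analytic family $\mathcal{L}_\epsilon(p)$ near the eigenvalue $\lambda_*$, treating both $p$ and $\epsilon$ as small parameters. First I would note that by Assumption \ref{assump_dirac_points}(2) the eigenvalue $\lambda_*$ of $\mathcal{L}(0)=\mathcal{L}_0(0)$ has a two-dimensional eigenspace $E_*:=\mathrm{span}\{v_{n_*}(\cdot;0),\,v_{m_*}(\cdot;0)\}$, and all other eigenvalues of $\mathcal{L}(0)$ are bounded away from $\lambda_*$. Since $\{\mathcal{L}_\epsilon(p)\}$ is an analytic family of self-adjoint operators with compact resolvent, standard Riesz-projection/analytic perturbation theory (Kato) produces, for $(|p|,|\epsilon|)$ small, a two-dimensional spectral subspace depending analytically on $(p,\epsilon)$, and the two eigenvalues near $\lambda_*$ are exactly the eigenvalues of the $2\times2$ matrix $M(p,\epsilon)$ obtained by compressing $\mathcal{L}_\epsilon(p)$ to that subspace (expressed in a basis that reduces to $\{v_{n_*}(\cdot;0),v_{m_*}(\cdot;0)\}$ at $(0,0)$). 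The heart of the argument is to compute $M(p,\epsilon)=\lambda_* I + p\,M_{10}+\epsilon\,M_{01}+\mathcal{O}(|p|^2+|\epsilon|^2+|p\epsilon|)$ to leading order.

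Next I would identify the linear coefficients. Differentiating the eigenvalue relation in $p$ at $(0,0)$ and using Assumption \ref{assump_dirac_points}(3) together with the Feynman–Hellmann formula gives $\partial_p \mathcal{L}(0)$ restricted to $E_*$; the diagonal entries are $\mu_{n_*}'(0)=\alpha_{n_*}$ and $\mu_{m_*}'(0)$, and by Lemma \ref{lemma_equivalence} (the reflection/conjugation symmetry forcing $v_{n_*}(\cdot;0)\sim\mathcal{P}v_{m_*}(\cdot;0)$) together with the action of $\mathcal{P}$ on $\partial_p$ one gets $\mu_{m_*}'(0)=-\alpha_{n_*}$ and vanishing off-diagonal entries, so $M_{10}=\mathrm{diag}(\alpha_{n_*},-\alpha_{n_*})$. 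Differentiating in $\epsilon$ at $(0,0)$ produces the operator $A(0)=-\tfrac{2}{n}\partial_\epsilon n_\epsilon|_{\epsilon=0}\cdot\mathcal{L}(0)$ from Assumption \ref{assump_perturbation}; its matrix in the basis $\{v_{n_*},v_{m_*}\}$ with respect to the weighted inner product $(\cdot,\cdot)_{L^2(Y;n)}$ is precisely the $2\times2$ array whose entries appear in Assumption \ref{assump_perturbation}(3): the two diagonal entries vanish and the off-diagonal entry is $t_*$ (its conjugate $\overline{t_*}$ on the other side, by self-adjointness). Hence $M(p,\epsilon)=\lambda_* I+\begin{pmatrix}\alpha_{n_*}p & \overline{t_*}\epsilon\\ t_*\epsilon & -\alpha_{n_*}p\end{pmatrix}+\text{h.o.t.}$, whose eigenvalues are $\lambda_*\pm\sqrt{\alpha_{n_*}^2p^2+|t_*|^2\epsilon^2}(1+\mathcal{O}(|p|+|\epsilon|))$, giving \eqref{eq_perturbed_dirac_dispersion}; the eigenvectors of this $2\times2$ matrix, written out for $p>0$, yield the leading terms of \eqref{eq_perturbed_dirac_eigenfunction_1}–\eqref{eq_perturbed_dirac_eigenfunction_2}, and the $p<0$ case follows by applying $\mathcal{P}$ (which intertwines $\mathcal{L}_\epsilon(p)$ and $\mathcal{L}_\epsilon(-p)$ and swaps $v_{n_*}(\cdot;0)\leftrightarrow\mathcal{P}v_{n_*}(\cdot;0)$ etc.). The eigenfunctions $u_{\mathfrak{n}_*,\epsilon}(\cdot;p)$ equal the Riesz-projection images of these $2\times2$ eigenvectors, so they agree with the stated expressions up to $\mathcal{O}(|p|+|\epsilon|)$ in $H^1(Y)$ after $L^2$-normalization.

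The main obstacle is twofold. The first is bookkeeping the symmetry constraints carefully enough to pin down \emph{all} leading coefficients: one must verify that the reflection symmetry $\mathcal{P}$ and the complex-conjugation symmetry (time-reversal) of the self-adjoint family force the specific pattern $\mathrm{diag}(\alpha_{n_*},-\alpha_{n_*})$ for the $p$-linear part and off-diagonal-only for the $\epsilon$-linear part; this is where Lemma \ref{lemma_equivalence} and the second and third displays in Assumption \ref{assump_perturbation}(3) are essential, and where the relaxed condition \eqref{eq_non_degenerate} would introduce a nonzero trace that merely shifts both eigenvalues by $\mathcal{O}(\epsilon)$ without closing the local gap. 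The second obstacle is uniformity: the square root $\sqrt{\alpha_{n_*}^2p^2+|t_*|^2\epsilon^2}$ is not analytic at $(0,0)$, so one cannot simply Taylor-expand the eigenvalues; instead one should factor $M(p,\epsilon)-\lambda_* I = r\,N(\theta) + \mathcal{O}(r^2)$ in polar-type coordinates $r=\sqrt{\alpha_{n_*}^2p^2+|t_*|^2\epsilon^2}$ with $N(\theta)$ a unit-trace-free Hermitian matrix with eigenvalues $\pm1$, and control the $\mathcal{O}(r^2)$ remainder uniformly in the angular variable — this gives the $(1+\mathcal{O}(|p|+|\epsilon|))$ factors. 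The normalization and $H^1(Y)$-convergence of the eigenfunctions is then routine from the analytic dependence of the Riesz projection, since the $2\times2$ eigenvectors stay bounded away from $0$ thanks to $t_*\neq0$.
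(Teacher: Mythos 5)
Your proposal matches the paper's proof in essentially every respect: both perform a two-parameter Lyapunov--Schmidt (Riesz-projection) reduction of $\mathcal{L}_\epsilon(p)$ near $\lambda_*$ onto the two-dimensional Dirac eigenspace, identify the $p$-linear diagonal block $\mathrm{diag}(\alpha_{n_*},-\alpha_{n_*})$ via Feynman--Hellmann and the reflection/time-reversal symmetry of Lemma~\ref{lemma_equivalence}, identify the $\epsilon$-linear off-diagonal block from Assumption~\ref{assump_perturbation}(3), and read off eigenvalues and eigenvectors of the resulting $2\times2$ matrix. The only cosmetic difference is how uniformity in the non-analytic corner is handled — the paper scales out $\lambda^{(1)}=x\sqrt{\alpha_{n_*}^2p^2+|t_*|^2\epsilon^2}$ and solves for $x$ near $1$, while you propose an equivalent polar/rescaling factorization — and the paper carries out the reduction on the conjugated family $\tilde{\mathcal{L}}_\epsilon(p)=e^{-ipx_1}\mathcal{L}_\epsilon(p)e^{ipx_1}$ acting on the fixed space $L^2_0(\Omega)$, a standard step you would need to make explicit.
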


\begin{remark} \label{rmk_gap_open_band_inversion}
By Theorem \ref{thm_local_gap_open}, a ``local'' band gap $(\lambda_*-c_0 |t_*\epsilon|,\lambda_*-c_0 |t_*\epsilon|)$ is opened near the Dirac point $(0,\lambda_*)$, for both operators $\mathcal{L}_{\epsilon}$ and $\mathcal{L}_{-\epsilon}$ if $t_*\neq 0$; see Figure \ref{fig_perturbed_band_structure}. Moreover, for $\epsilon>0$, \eqref{eq_perturbed_dirac_eigenfunction_1} and \eqref{eq_perturbed_dirac_eigenfunction_2} indicate that 
\begin{equation*}
\begin{aligned}
&u_{\mathfrak{n}_*,-\epsilon}(x;0)
\approx -e^{i\text{arg}(t_*)}v_{n_{*}}(x;0)+v_{m_{*}}(x;0)
\approx -e^{i\text{arg}(t_*)}u_{\mathfrak{n}_*+1,\epsilon}(x;0), \\
&u_{\mathfrak{n}_*+1,-\epsilon}(x;0)
\approx v_{n_{*}}(x;0)+e^{-i\text{arg}(t_*)}v_{m_{*}}(x;0)
\approx e^{-i\text{arg}(t_*)}u_{\mathfrak{n}_*,\epsilon}(x;0).
\end{aligned}
\end{equation*}
where $\text{arg}\, (t_*)$ denotes the argument of the complex number $t_*$.  Roughly speaking, this means that the eigenspace of $\mathcal{L}_{-\epsilon}$ at the lower vertex $(0,-|t_*|\epsilon)$ (the upper vertex $(0,|t_*|\epsilon)$, resp.) transitions to the eigenspace of $\mathcal{L}_{\epsilon}$ at the upper vertex (the lower vertex, resp.) when one tunes the perturbation parameter $\epsilon$ across the origin. This switch of eigenspaces, or the so-called band inversion, plays a key role in the bifurcation of the Dirac point.
\end{remark}

\begin{figure}
\centering
\begin{tikzpicture}[scale=0.25]
\draw[->] (-11,0)--(11,0);
\draw[->] (0,0)--(0,18);
\node[right] at (11.2,0) {$p$};
\node[below] at (0,-0.2) {$0$};
\draw[thick] (-10,-0.1)--(-10,0.1);
\node[below] at (-10,-0.2) {$-\pi$};
\draw[thick] (10,-0.1)--(10,0.1);
\node[below] at (10,-0.2) {$\pi$};

\draw[thick,blue] plot [smooth] coordinates {(-10,9) (-7,7.8) (-4,1.8) (0,5.8) (4,1.8) (7,7.8) (10,9)};
\draw[thick,red] plot [smooth] coordinates {(-10,16.15) (-7,15.2) (-4,13.2) (0,8.2) (4,13.2) (7,15.2) (10,16.15)};
\node[above] at (9.5,16.15) {$\lambda=\lambda_{\mathfrak{n}_*+1,\epsilon}(p)$};
\node[above] at (9.5,9) {$\lambda=\lambda_{\mathfrak{n}_*,\epsilon}(p)$};
\draw[dashed] (-8,7)--(8,7);
\node[right] at (8.5,7) {$\lambda=\lambda_*$};
\end{tikzpicture}
\caption{Perturbed band structure: $\lambda=\lambda_{\mathfrak{n}_*,\epsilon}(p)$ is plotted in blue while $\lambda=\lambda_{\mathfrak{n}_*+1,\epsilon}(p)$ is plotted in red; they are smooth near $p=0$. Compared with Figure \ref{fig_unperturbed_band_structure}, a ``local'' band gap is opened near $(0,\lambda_*)$ when we apply the perturbation. However, no ``global'' band gap appears, due to the failure of spectral no-fold condition ($\lambda=\lambda_{\mathfrak{n}_*+1,\epsilon}(p)$ always intersects with $\lambda=\lambda_*$ for $|\epsilon|\ll 1$).}
\label{fig_perturbed_band_structure}
\end{figure}

In the next theorem, we present asymptotic expansions for the Floquet eigenvalues and eigenfunctions near $p=\pm q_*$. The proof follows the standard perturbation theory for simple eigenvalue problems of self-adjoint operators (See Chapter VII in \cite{kato2013perturbation}) and hence is omitted here.  

\begin{theorem} \label{thm_perturbed_fold_dispersion}
Under Assumption \ref{assump_general spectrum}, \ref{assump_dirac_points} and \ref{assump_perturbation}, the following asymptotic expansions hold near $p=q_*$:
\begin{equation} \label{eq_perturbed_fold_eigenvalue_eigenfunction}
\begin{aligned}
&\lambda_{\mathfrak{n}_*,\epsilon}(p)=\lambda_*+\mathcal{O}(|p-q_*|+|\epsilon|),\quad
u_{\mathfrak{n}_*,\epsilon}(x;p)=u_{\mathfrak{n}_*}(x;q_{*})+\mathcal{O}(|p-q_*|+|\epsilon|) \,\,\, \mbox{in}\,\,\, H^1(Y).
\end{aligned}
\end{equation}
Moreover,
\begin{equation} \label{eq_perturbed_fold_remainder_2}
\lambda_{\mathfrak{n}_*,\epsilon}^\prime(p)-\lambda_{\mathfrak{n}_*}^{\prime}(q_*)=\mathcal{O}(|p-q_{*}|+|\epsilon|),\quad
\|\left(\partial_p  u_{\mathfrak{n}_*,\epsilon}\right)(x;p)-
\left(\partial_p  u_{\mathfrak{n}_*}\right)(x;q_*)\|_{H^1(Y)}=\mathcal{O}(|p-q_{*}|+|\epsilon|).
\end{equation}
Similarly, near $p=-q_*$:
\begin{equation} \label{eq_perturbed_fold_eigenvalue_eigenfunction_opposite}
\begin{aligned}
&\lambda_{\mathfrak{n}_*,\epsilon}(p)=\lambda_*+\mathcal{O}(|p+q_*|+|\epsilon|),\quad
u_{\mathfrak{n}_*,\epsilon}(x;p)=u_{\mathfrak{n}_*}(x;-q_{*})+\mathcal{O}(|p+q_*|+|\epsilon|) \,\,\, \mbox{in}\,\,\, H^1(Y).
\end{aligned}
\end{equation}
Moreover,
\begin{equation} \label{eq_perturbed_fold_remainder_2_opposite}
\lambda_{\mathfrak{n}_*,\epsilon}^{\prime}(p)-\lambda_{\mathfrak{n}_*}^{\prime}(-q_*)=\mathcal{O}(|p+q_*|+|\epsilon|),\quad
\|\left(\partial_p  u_{\mathfrak{n}_*,\epsilon}\right)(x;p)-
\left(\partial_p  u_{\mathfrak{n}_*}\right)(x;-q_*)\|_{H^1(Y)}=\mathcal{O}(|p+q_*|+|\epsilon|).
\end{equation}
\end{theorem}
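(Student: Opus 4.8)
\textbf{Proof proposal for Theorem \ref{thm_perturbed_fold_dispersion}.}

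The plan is to treat $p=q_*$ as a point where, for the unperturbed operator $\mathcal{L}=\mathcal{L}_0$, the eigenvalue $\lambda_*$ of $\mathcal{L}_0(q_*)$ is simple — this is guaranteed by Assumption \ref{assump_dirac_points}(2), which ensures that at $p=q_*$ the curve $\mu_{m_*}(\cdot)$ does not cross any other dispersion curve, so in the ascending labeling the eigenvalue $\lambda_{\mathfrak{n}_*}(q_*)=\lambda_*=\mu_{m_*}(q_*)$ is isolated in $\sigma(\mathcal{L}_0(q_*))$ with one-dimensional eigenspace. Then I would consider the two-parameter analytic family $(p,\epsilon)\mapsto \mathcal{L}_\epsilon(p)$ acting on $L^2_p(\Omega)$ (using the analytic continuation of $\mathcal{L}(p)$ in $p$ and the $C^2$-dependence in $\epsilon$ from Assumption \ref{assump_perturbation}(1)). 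Since a simple isolated eigenvalue depends analytically (here, as smoothly as the family allows, i.e. $C^2$ in $\epsilon$ and real-analytic in $p$) on the parameters, with the associated spectral projection varying the same way, standard Kato–Rellich perturbation theory for simple eigenvalues (Chapter VII of \cite{kato2013perturbation}) immediately yields that $\lambda_{\mathfrak{n}_*,\epsilon}(p)$ and a choice of normalized eigenfunction $u_{\mathfrak{n}_*,\epsilon}(x;p)$ are jointly smooth in $(p,\epsilon)$ near $(q_*,0)$.

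From this joint regularity, each asserted expansion is a first-order Taylor estimate. For \eqref{eq_perturbed_fold_eigenvalue_eigenfunction}: evaluating at $(p,\epsilon)=(q_*,0)$ gives $\lambda_{\mathfrak{n}_*,0}(q_*)=\lambda_*$ and $u_{\mathfrak{n}_*,0}(x;q_*)=u_{\mathfrak{n}_*}(x;q_*)$, so the mean value theorem applied to the $C^1$ maps $(p,\epsilon)\mapsto\lambda_{\mathfrak{n}_*,\epsilon}(p)\in\mathbf{R}$ and $(p,\epsilon)\mapsto u_{\mathfrak{n}_*,\epsilon}(\cdot;p)\in H^1(Y)$ produces the $\mathcal{O}(|p-q_*|+|\epsilon|)$ remainders. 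For \eqref{eq_perturbed_fold_remainder_2}, I would use the \emph{second}-order smoothness: the derivative maps $(p,\epsilon)\mapsto\partial_p\lambda_{\mathfrak{n}_*,\epsilon}(p)$ and $(p,\epsilon)\mapsto\partial_p u_{\mathfrak{n}_*,\epsilon}(\cdot;p)\in H^1(Y)$ are $C^1$ (since the eigendata are $C^2$), so again a mean-value / Taylor argument around $(q_*,0)$, where they reduce to $\lambda_{\mathfrak{n}_*}'(q_*)$ and $\partial_p u_{\mathfrak{n}_*}(\cdot;q_*)$ respectively, gives the stated first-order bounds. The estimates near $p=-q_*$ in \eqref{eq_perturbed_fold_eigenvalue_eigenfunction_opposite} and \eqref{eq_perturbed_fold_remainder_2_opposite} follow verbatim with $q_*$ replaced by $-q_*$, again using Assumption \ref{assump_dirac_points}(2) to guarantee simplicity of $\lambda_*$ in $\sigma(\mathcal{L}_0(-q_*))$; alternatively they can be deduced from the $p=q_*$ case via the reflection symmetry $\mathcal{P}$ of Assumption \ref{assump_reflection} together with Lemma \ref{lemma_equivalence}.

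The only genuine point requiring care — and hence the main obstacle — is verifying that $\lambda_*$ is a \emph{simple} eigenvalue of $\mathcal{L}_0(q_*)$ so that the simple-eigenvalue perturbation machinery applies: one must rule out that $\lambda_*=\lambda_{\mathfrak{n}_*}(q_*)$ coincides with $\lambda_{\mathfrak{n}_*+1}(q_*)$ or with a neighbouring band value at $p=q_*$. This is exactly what Assumption \ref{assump_dirac_points}(2) provides ($\mu_n(q_*)\neq\mu_{m_*}(q_*)$ for $n\neq n_*,m_*$, and $\mu_{n_*}(q_*)\neq\mu_{m_*}(q_*)$ since $q_*\neq 0$), combined with Assumption \ref{assump_general spectrum} giving one-dimensionality of the eigenspace away from the finitely many exceptional quasi-momenta. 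Once simplicity is in hand, there is no small-divisor or degeneracy issue and the rest is routine; consequently the detailed computation is omitted, as stated in the theorem's preamble.
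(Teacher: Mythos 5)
Your proposal is correct and follows the same route the paper invokes. The paper itself omits the proof, stating that it ``follows from the standard perturbation theory for simple eigenvalue problems of self-adjoint operators (Chapter VII of Kato),'' and your argument is a faithful elaboration of exactly that: Assumptions \ref{assump_general spectrum} and \ref{assump_dirac_points}(2) give simplicity and isolation of $\lambda_*=\mu_{m_*}(q_*)$ in $\sigma(\mathcal{L}_0(q_*))$, the family $\mathcal{L}_\epsilon(p)$ is analytic in $p$ and $C^2$ in $\epsilon$ by Assumption \ref{assump_perturbation}(1), and simple-eigenvalue perturbation theory then yields the joint regularity from which the first-order Taylor/mean-value estimates \eqref{eq_perturbed_fold_eigenvalue_eigenfunction}--\eqref{eq_perturbed_fold_remainder_2_opposite} follow; your correct identification of the simplicity check as the only non-routine point, and your alternative derivation of the $-q_*$ case by the reflection $\mathcal{P}$, are both in the spirit of the paper.
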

\begin{remark} \label{rmk_thm_fold_complex}
By the analytic perturbation theory, the Bloch eigenvalue $\lambda_{\mathfrak{n}_*,\epsilon}(p)$ and eigenfunction $u_{\mathfrak{n}_*,\epsilon}(x;p)$ is analytic in $p$ for $|p-q_*|\ll 1$ and $p\in\mathbf{C}$.
\end{remark}
\begin{remark}
A consequence of Theorem \ref{thm_perturbed_fold_dispersion} is that the dispersion curve $\lambda=\lambda_{\mathfrak{n}_*+1,\epsilon}(p)$ intersects with $\lambda=\lambda_*$ for $|\epsilon|\ll 1$; 
Thus, a ``global'' band gap is not opened when we apply a small perturbation to the system \eqref{eq_waveeq}. See Figure \ref{fig_perturbed_band_structure}. 
\end{remark}

\subsection{Proof of Theorem \ref{thm_local_gap_open}}
We apply perturbation theory to solve the eigenvalue problem of $\mathcal{L}_{\epsilon}(p)$. As a preparation, We write
\begin{equation} \label{eq_sec32_1}
    \mathcal{L}_{\epsilon}(p)=e^{ip x_1}\circ \tilde{\mathcal{L}}_{\epsilon}(p)\circ e^{-ip x_1},
\end{equation}
where
\begin{equation*}
\tilde{\mathcal{L}}_{\epsilon}(p): H_{0,b}^1(\Delta,\Omega)\subset L^2_{0}(\Omega)\to L^{2}_{0}(\Omega),\quad 
u\mapsto -\frac{1}{n_{\epsilon}^2}\Big(\Delta+2ip\cdot\frac{\partial}{\partial x_1}-p^2\Big)u .
\end{equation*}
Since for \(p\in(-\pi,\pi]\) the operators \(\tilde{\mathcal{L}}_{\epsilon}(p)\) and \(\mathcal{L}_{\epsilon}(p)\) are unitarily equivalent, it suffices to solve the eigenvalue problem for \(\tilde{\mathcal{L}}_{\epsilon}(p)\).  Moreover, by construction
\[
\mathrm{Dom}\bigl(\tilde{\mathcal{L}}_{\epsilon}(p)\bigr)
=H^1_{0,b}(\Delta,\Omega),
\]
which is independent of \(p\), making \(\tilde{\mathcal{L}}_{\epsilon}(p)\) particularly well‐suited for perturbation analysis.
For $\epsilon=0$, we have 
\begin{equation} \label{eq_sec32_3}
\tilde{\mathcal{L}}(p)\tilde{v}_{n}(x;p)=\mu_{n} (p)\tilde{v}_{n}(x;p),
\end{equation}
where
\begin{equation} \label{eq_sec32_2}
\tilde{v}_{n}(x;p)=e^{-ip x_1}v_{n}(x;p)\in H_{0,b}^1(\Delta,\Omega).
\end{equation}
We claim the following identities, which is the key ingredients for calculating the perturbed band structure when we transform the perturbation analysis of $\tilde{\mathcal{L}}(p)$ to a finite-dimensional problem (i.e. \eqref{eq_sec32_11}).
\begin{lemma} \label{lemma_derivative_wrt_p}
Let $\tilde{B}(p):=\frac{-2i}{n^2} \frac{\partial}{\partial x_1}+\frac{2p}{n^2}$. Then the following identities hold:
\begin{equation} \label{eq_derivative_wrt_p_1}
\left(\tilde{B}(0)\tilde{v}_{n_{*}}(x;0), \tilde{v}_{n_{*}}(x;0)\right)_{L^2(Y;n(x))}=\alpha_{n_{*}},
\end{equation}
\begin{equation} \label{eq_derivative_wrt_p_2}
\left(\tilde{B}(0)\tilde{v}_{m_{*}}(x;0), \tilde{v}_{m_{*}}(x;0)\right)_{L^2(Y;n(x))}=-\alpha_{n_{*}},
\end{equation}
and
\begin{equation} \label{eq_derivative_wrt_p_3}
\begin{aligned}
\left(\tilde{B}(0)\tilde{v}_{n_{*}}(x;0), \tilde{v}_{m_{*}}(x;0)\right)_{L^2(Y;n(x))}
=
\left(\tilde{B}(0)\tilde{v}_{m_{*}}(x;0), \tilde{v}_{n_{*}}(x;0)\right)_{L^2(Y;n(x))}
=0.
\end{aligned}
\end{equation}
\end{lemma}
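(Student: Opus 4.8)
The plan is to compute each inner product by exploiting the relation between $\tilde{B}(p)$ and the derivative of the Bloch family in $p$, together with the reflection symmetry of the setup. First I would observe that differentiating the eigenvalue equation \eqref{eq_sec32_3} in $p$ and pairing with $\tilde{v}_{n_*}(x;0)$ in the weighted space $L^2(Y;n(x))$, the term involving $\partial_p \tilde{v}_{n_*}$ drops out because $\tilde{\mathcal{L}}(0)$ is self-adjoint on $L^2_0(\Omega)$ with weight $n^2$ and $\tilde{v}_{n_*}(\cdot;0)$ is its eigenfunction; what remains is
\begin{equation*}
\bigl(\partial_p \tilde{\mathcal{L}}(p)\big|_{p=0}\,\tilde{v}_{n_*}(x;0),\,\tilde{v}_{n_*}(x;0)\bigr)_{L^2(Y;n(x))}=\mu_{n_*}'(0)=\alpha_{n_*},
\end{equation*}
assuming $\|\tilde{v}_{n_*}(\cdot;0)\|_{L^2(Y;n(x))}=1$ (the normalization should be checked from the definition of $v_n$; if $v_n$ is normalized in $L^2(Y)$ rather than $L^2(Y;n(x))$ one inserts the appropriate constant, which is harmless). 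A direct differentiation of the formula for $\tilde{\mathcal{L}}(p)$ gives $\partial_p \tilde{\mathcal{L}}(p) = -\frac{1}{n^2}(2i\,\partial_{x_1} - 2p) = \tilde{B}(p)$, so this establishes \eqref{eq_derivative_wrt_p_1}.

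For \eqref{eq_derivative_wrt_p_2}, I would use the reflection symmetry. By Lemma \ref{lemma_equivalence}, $v_{m_*}(x;0)\sim \mathcal{P}v_{n_*}(x;0)$, hence $\tilde{v}_{m_*}(x;0)$ is (up to a unimodular constant, which cancels in the sesquilinear pairing) equal to $\mathcal{P}\tilde{v}_{n_*}(x;0)$ — here one uses that $e^{-ipx_1}$ at $p=0$ is trivial, so conjugation by the exponential commutes with $\mathcal{P}$. Since $\mathcal{P}$ anticommutes with $\partial_{x_1}$ and commutes with multiplication by $n^2(x)$ (by Assumption \ref{assump_reflection}), one gets $\mathcal{P}\tilde{B}(0)\mathcal{P} = -\tilde{B}(0)$ at $p=0$, and $\mathcal{P}$ is unitary on $L^2(Y;n(x))$; therefore
\begin{equation*}
\bigl(\tilde{B}(0)\tilde{v}_{m_*}(x;0),\tilde{v}_{m_*}(x;0)\bigr)_{L^2(Y;n(x))}
=\bigl(\tilde{B}(0)\mathcal{P}\tilde{v}_{n_*},\mathcal{P}\tilde{v}_{n_*}\bigr)
=\bigl(\mathcal{P}\tilde{B}(0)\mathcal{P}\tilde{v}_{n_*},\tilde{v}_{n_*}\bigr)=-\alpha_{n_*}.
\end{equation*}
Alternatively, and perhaps more cleanly, one can note $\mu_{m_*}'(0) = -\mu_{n_*}'(0)$ because the reflection symmetry forces $\mu_{m_*}(p) = \mu_{n_*}(-p)$, and then repeat the differentiation argument with $\tilde{v}_{m_*}$ in place of $\tilde{v}_{n_*}$; I would present whichever is shorter.

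For the off-diagonal identity \eqref{eq_derivative_wrt_p_3}, the cleanest route is again differentiation of \eqref{eq_sec32_3}, now pairing with $\tilde{v}_{m_*}(x;0)$ instead: since $\mu_{n_*}(0)=\mu_{m_*}(0)=\lambda_*$ but $\tilde{v}_{n_*}(\cdot;0)$ and $\tilde{v}_{m_*}(\cdot;0)$ are orthogonal eigenfunctions of the self-adjoint $\tilde{\mathcal{L}}(0)$ at this common eigenvalue, pairing $\tilde{\mathcal{L}}(0)\partial_p\tilde{v}_{n_*} + \tilde{B}(0)\tilde{v}_{n_*} = \mu_{n_*}'(0)\tilde{v}_{n_*} + \lambda_*\partial_p\tilde{v}_{n_*}$ with $\tilde{v}_{m_*}(\cdot;0)$ kills the $\partial_p\tilde{v}_{n_*}$ terms (move $\tilde{\mathcal{L}}(0)$ onto $\tilde{v}_{m_*}$, it yields $\lambda_*$) and the $\mu_{n_*}'(0)\tilde{v}_{n_*}$ term by orthogonality, leaving $(\tilde{B}(0)\tilde{v}_{n_*},\tilde{v}_{m_*})=0$; the other off-diagonal follows by swapping roles or by noting $\tilde{B}(0)$ is symmetric. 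The one subtlety worth flagging — and I expect this to be the only real obstacle — is the eigenfunction normalization convention and the precise weighted inner product in which $\tilde{\mathcal{L}}(p)$ is self-adjoint: one must verify that $\tilde{\mathcal{L}}(0)$ is symmetric with respect to $(\cdot,\cdot)_{L^2(Y;n(x))}$ (it is, since $-\frac{1}{n^2}\Delta$ is symmetric for $\int n^2 u\bar v$, after the integration by parts that uses the Neumann boundary condition on $\partial\Omega$ and periodicity across the cell) and that the $v_n$ are normalized so that $\|\tilde v_{n}(\cdot;p)\|_{L^2(Y;n(x))}=1$; once these are pinned down, every step above is a one-line computation.
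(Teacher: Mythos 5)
Your proposal is correct and follows essentially the same route as the paper: for \eqref{eq_derivative_wrt_p_1} the paper likewise differentiates the Rayleigh quotient identity $\mu_{n_*}(p)=(\tilde{\mathcal{L}}(p)\tilde{v}_{n_*},\tilde{v}_{n_*})_{L^2(Y;n(x))}$ at $p=0$, uses self-adjointness of $\tilde{\mathcal{L}}(0)$ in the weighted inner product and the normalization $\|\tilde v_{n_*}(\cdot;p)\|_{L^2(Y;n(x))}=1$ (which the paper does indeed adopt, exactly the point you flagged), and concludes; the paper dismisses \eqref{eq_derivative_wrt_p_2} and \eqref{eq_derivative_wrt_p_3} as ``similar,'' and your arguments for these (either the $\mathcal{P}\tilde{B}(0)\mathcal{P}=-\tilde{B}(0)$ reflection computation together with $v_{m_*}(\cdot;0)\sim\mathcal{P}v_{n_*}(\cdot;0)$, or the shorter observation $\mu_{m_*}'(0)=-\mu_{n_*}'(0)$, and orthogonality of the two eigenfunctions at the degenerate level $\lambda_*$ for the cross terms) are precisely the natural fleshing-out of that remark and match the symmetry facts the paper itself invokes elsewhere (e.g.\ in Appendix C).
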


\begin{proof}[Proof of Lemma \ref{lemma_derivative_wrt_p}]
We prove only \eqref{eq_derivative_wrt_p_1}; analogous arguments yield \eqref{eq_derivative_wrt_p_2} and \eqref{eq_derivative_wrt_p_3}. By letting $n=n_*$ in \eqref{eq_sec32_3} and taking inner product with $\tilde{v}_{n_*}$, we have
\begin{equation} \label{eq_sec32_4}
\mu_{n_{*}}(p)=\left(\tilde{\mathcal{L}}(p)\tilde{v}_{n_{*}}(x;p),\tilde{v}_{n_{*}}(x;p)\right)_{L^2(Y;n(x))}.
\end{equation}
Differentiating \eqref{eq_sec32_4} with respect to $p$ and evaluating at $p=0$ yields
\begin{equation*}
\begin{aligned}
\mu_{n_{*}}^{\prime}(0)&=\left(\tilde{B}(0)\tilde{v}_{n_{*}}(x;0),\tilde{v}_{n_{*}}(x;0)\right)_{L^2(Y;n(x))}
+\left(\tilde{\mathcal{L}}(0)(\partial_p\tilde{v}_{n_{*}})(x;0),\tilde{v}_{n_{*}}(x;0)\right) _{L^2(Y;n(x))} \\
&\quad +\left(\tilde{\mathcal{L}}(0)\tilde{v}_{n_{*}}(x;0),(\partial_p\tilde{v}_{n_{*}})(x;0)\right)_{L^2(Y;n(x))}.
\end{aligned}
\end{equation*}
Since $\tilde{\mathcal{L}}(0)$ is self-adjoint, we have
\begin{equation*}
\begin{aligned}
\mu_{n_{*}}^{\prime}(0)
=&\left(\tilde{B}(0)\tilde{v}_{n_{*}}(x;0),\tilde{v}_{n_{*}}(x;0)\right)_{L^2(Y;n(x))} \\
&\quad +\lambda_*
\Big(\left(\partial_p\tilde{v}_{n_{*}}(x;0),\tilde{v}_{n_{*}}(x;0)\right)_{L^2(Y;n(x))}
+\left(\tilde{v}_{n_{*}}(x;0),\partial_p\tilde{v}_{n_{*}}(x;0)\right)_{L^2(Y;n(x))}\Big) \\
=&\left(\tilde{B}(0)\tilde{v}_{n_{*}}(x;0),\tilde{v}_{n_{*}}(x;0)\right)_{L^2(Y;n(x))}
+\lambda_*\frac{d}{dp}\Big|_{p=0}\Big(\tilde{v}_{n_{*}}(x;p),\tilde{v}_{n_{*}}(x;p)\Big)_{L^2(Y;n(x))}.
\end{aligned}
\end{equation*}
The normalization condition $\Big(\tilde{v}_{n_{*}}(x;p),\tilde{v}_{n_{*}}(x;p)\Big)_{L^2(Y;n(x))}= 1$ implies that
$$
\alpha_{n_{*}}=\mu_{n_{*}}^{\prime}(0)=\left(\tilde{B}(0)\tilde{v}_{n_{*}}(x;0), \tilde{v}_{n_{*}}(x;0)\right)_{L^2(Y;n(x))}.
$$
This completes the proof of \eqref{eq_derivative_wrt_p_1}.
\end{proof}

Now we are ready to prove Theorem \ref{thm_local_gap_open}.
\begin{proof}[Proof of Theorem \ref{thm_local_gap_open}]
We derive \eqref{eq_perturbed_dirac_dispersion}, \eqref{eq_perturbed_dirac_eigenfunction_1} and \eqref{eq_perturbed_dirac_eigenfunction_2} by solving the following eigenvalue problem for $|\epsilon|,|p|,|\lambda_{\epsilon}-\lambda_*|\ll 1$:
\begin{equation} \label{eq_sec32_6}
\tilde{\mathcal{L}}_{\epsilon}(p)\tilde{u}_{\epsilon}=\lambda_{\epsilon}\tilde{u}_{\epsilon}.
\end{equation}
Note that for each $u\in H^{1}_{0,b}(\Delta,\Omega)$, the following estimate holds in $L^2_{0}(\Omega)$:
\begin{equation*}
\|\tilde{\mathcal{L}}_{\epsilon}(p)u-\tilde{\mathcal{L}}(p)u\|
=\Big\|\frac{n^2_{\epsilon}-n^2}{n^2_{\epsilon}}\tilde{\mathcal{L}}(p)u\Big\|
\lesssim
\|n_{\epsilon}-n\|_{L^{\infty}(\Omega)}\cdot \|\tilde{\mathcal{L}}(p)u\|.
\end{equation*}
Hence, by \cite[Chapter IV, Theorem 2.24]{kato2013perturbation}, $\tilde{\mathcal{L}}_{\epsilon}(p)$ converges to $\tilde{\mathcal{L}}(p)$ in the generalized sense. Consequently, for $|\epsilon|\ll 1$, \eqref{eq_sec32_6} has the same number of solutions near $\lambda_*$ as in the unperturbed case $\epsilon=0$, which is known to be exactly two (namely, the two Bloch eigenvalues $\lambda_{\mathfrak{n}_*}(p)$ and $\lambda_{\mathfrak{n}_*+1}(p)$). 

Next, we explicitly determine these two eigenpairs using a perturbation argument.
For $|p|\ll 1$, we write
\begin{equation} \label{eq_sec32_7}
\lambda_{\epsilon}=\lambda_*+\lambda^{(1)}, \quad
\tilde{u}_{\epsilon}=\tilde{u}^{(0)}_{\epsilon}+\tilde{u}^{(1)}_{\epsilon}
\end{equation}
with
\begin{equation*}
|\lambda^{(1)}|\ll 1 ,\quad
\tilde{u}^{(0)}_{\epsilon}=a\cdot\tilde{v}_{n_*}(x;0)+b\cdot\tilde{v}_{m_*}(x;0)\in \ker (\tilde{\mathcal{L}}(0)-\lambda_*),\quad
\tilde{u}^{(1)}_{\epsilon}\in (\ker (\tilde{\mathcal{L}}(0)-\lambda_*))^{\perp},
\end{equation*}
where $(\ker (\tilde{\mathcal{L}}(0)-\lambda_*))^{\perp}$ denotes the orthogonal complement of $\ker (\tilde{\mathcal{L}}(0)-\lambda_*)$ in $L_{0}^2(\Omega)$. 

Note that the following expansion holds in $\mathcal{B}(H_{0,b}^2(\Omega),L_{0}^2(\Omega))$:
\begin{equation} \label{eq_sec32_5}
\tilde{\mathcal{L}}_{\epsilon}(p)=\tilde{\mathcal{L}}(0)+\epsilon\cdot \tilde{A}(0)+p\cdot \tilde{B}(0)
+\mathcal{O}(p^2+\epsilon^2),
\end{equation}
where $\tilde{A}(p)=-\frac{2}{n}\frac{\partial \left(n_{\epsilon}\right)}{\partial \epsilon}\big|_{\epsilon=0}\cdot \tilde{\mathcal{L}}(p)$ and $\tilde{B}(p)=\frac{-2i}{n^2} \frac{\partial}{\partial x_1}+\frac{2p}{n^2}$. Plugging \eqref{eq_sec32_5} and \eqref{eq_sec32_7} into \eqref{eq_sec32_6} leads to
\begin{equation} \label{eq_sec32_8}
\begin{aligned}
(\tilde{\mathcal{L}}(0)-\lambda_*)\tilde{u}^{(1)}_{\epsilon}
=&(\lambda^{(1)}-\epsilon\cdot \tilde{A}(0)-p\cdot \tilde{B}(0)
+\mathcal{O}(p^2+\epsilon^2))\tilde{u}^{(0)}_{\epsilon} \\
&+(\lambda^{(1)}-\epsilon\cdot \tilde{A}(0)-p\cdot \tilde{B}(0)
+\mathcal{O}(p^2+\epsilon^2))\tilde{u}^{(1)}_{\epsilon}.
\end{aligned}
\end{equation}

We then solve \eqref{eq_sec32_8} following a Lyapunov-Schmidt reduction argument. To do so, we introduce the orthogonal projection $Q_{\perp}:L_{0}^2(\Omega)\to (\ker (\tilde{\mathcal{L}}(0)-\lambda_*))^{\perp}$. Applying $Q_{\perp}$ to \eqref{eq_sec32_8} gives
\begin{equation*}
\begin{aligned}
(\tilde{\mathcal{L}}(0)-\lambda_*)\tilde{u}^{(1)}_{\epsilon}
=&Q_{\perp}(\lambda^{(1)}-\epsilon\cdot \tilde{A}(0)-p\cdot \tilde{B}(0)
+\mathcal{O}(p^2+\epsilon^2))\tilde{u}^{(0)}_{\epsilon} \\
&+Q_{\perp}(\lambda^{(1)}-\epsilon\cdot \tilde{A}(0)-p\cdot \tilde{B}(0)
+\mathcal{O}(p^2+\epsilon^2))\tilde{u}^{(1)}_{\epsilon}.
\end{aligned}
\end{equation*} 
The above equation can be rewritten as
\begin{equation*}
\begin{aligned}
(I-T)\tilde{u}^{(1)}_{\epsilon}
=T\tilde{u}^{(0)}_{\epsilon},
\end{aligned}
\end{equation*}
where
\begin{equation*}
T=T(\epsilon,p,\lambda^{(1)}):=(\tilde{\mathcal{L}}(0)-\lambda_*)^{-1}Q_{\perp}(\lambda^{(1)}-\epsilon\cdot \tilde{A}(0)-p\cdot \tilde{B}(0)
+\mathcal{O}(p^2+\epsilon^2)).
\end{equation*}
For $\epsilon,p,\lambda^{(1)}$ sufficiently small, $(I-T)^{-1}\in \mathcal{B}(Q_{\perp}H_{0,b}^2(\Omega))$. It holds that
\begin{equation} \label{eq_sec32_10}
\tilde{u}^{(1)}_{\epsilon}
=(I-T)^{-1}T\tilde{u}^{(0)}_{\epsilon}
=a\cdot (I-T)^{-1}T\tilde{v}_{n_*}(x;0)+b\cdot (I-T)^{-1}T\tilde{v}_{m_*}(x;0).
\end{equation}
Note that the map $(\epsilon,p,\lambda^{(1)})\mapsto (I-T)^{-1}T\tilde{v}_{n}(x;0)$ ($n=n_*,m_*$) is smooth from a neighborhood of $(0,0,0)$ to $H_{0,b}^2(\Omega)$ with the following estimate
\begin{equation*}
\|(I-T)^{-1}T\tilde{v}_{n}(x;0)\|\lesssim |\epsilon|+|p|+|\lambda^{(1)}|.
\end{equation*}
By taking $L^2(Y;n(x))-$inner product with $\tilde{v}_{n_*}(x;0)$ and $\tilde{v}_{m_*}(x;0)$ respectively on both sides of \eqref{eq_sec32_8}, we obtain a two-dimensional linear problem
\begin{equation} \label{eq_sec32_11}
\mathcal{M}(\epsilon,p,\lambda^{(1)})
\begin{pmatrix}
a \\ b
\end{pmatrix}
=0,
\end{equation}
where the components of $\mathcal{M}(\epsilon,p,\lambda^{(1)})$ are given by
\begin{equation*}
\begin{aligned}
&M_{11}=\lambda^{(1)}-\epsilon \big(\tilde{A}(0)\tilde{v}_{n_*}(x;0),\tilde{v}_{n_*}(x;0)\big)_{L^2(Y;n(x))}
-p \big(\tilde{B}(0)\tilde{v}_{n_*}(x;0),\tilde{v}_{n_*}(x;0)\big)_{L^2(Y;n(x))} \\
&\quad\qquad +\mathcal{O}((\lambda^{(1)})^2+p^2+\epsilon^2),\\
&M_{22}=\lambda^{(1)}-\epsilon \big(\tilde{A}(0)\tilde{v}_{m_*}(x;0),\tilde{v}_{m_*}(x;0)\big)_{L^2(Y;n(x))}
-p \big(\tilde{B}(0)\tilde{v}_{m_*}(x;0),\tilde{v}_{m_*}(x;0)\big)_{L^2(Y;n(x))} \\
&\quad\qquad +\mathcal{O}((\lambda^{(1)})^2+p^2+\epsilon^2),\\
&M_{12}=-\epsilon \big(\tilde{A}(0)\tilde{v}_{m_*}(x;0),\tilde{v}_{n_*}(x;0)\big)_{L^2(Y;n(x))}
-p \big(\tilde{B}(0)\tilde{v}_{m_*}(x;0),\tilde{v}_{n_*}(x;0)\big)_{L^2(Y;n(x))} \\
&\quad\qquad +\mathcal{O}((\lambda^{(1)})^2+p^2+\epsilon^2),\\
&M_{21}=-\epsilon \big(\tilde{A}(0)\tilde{v}_{n_*}(x;0),\tilde{v}_{m_*}(x;0)\big)_{L^2(Y;n(x))}
-p \big(\tilde{B}(0)\tilde{v}_{n_*}(x;0),\tilde{v}_{m_*}(x;0)\big)_{L^2(Y;n(x))} \\
&\quad\qquad +\mathcal{O}((\lambda^{(1)})^2+p^2+\epsilon^2).\\
\end{aligned}
\end{equation*}

\noindent We now simplify the expression of $\mathcal{M}(\epsilon,p,\lambda^{(1)})$. 
By \eqref{eq_sec32_1} and \eqref{eq_sec32_2},  
$$
\big(\mathcal{L}_{\epsilon}(0)v_{i}(x;0),v_{j}(x;0)\big)_{L^2(Y;n(x))}=\big(\tilde{\mathcal{L}}_{\epsilon}(0)\tilde{v}_{i}(x;0),\tilde{v}_{j}(x;0)\big)_{L^2(Y;n(x))}\quad \text{for $i,j\in \{n_*,m_*\}$}. 
$$
Taking derivative with respect to $\epsilon$ yields $$\big(A(0)v_{i}(x;0),v_{j}(x;0)\big)_{L^2(Y;n(x))}=\big(\tilde{A}(0)\tilde{v}_{i}(x;0),\tilde{v}_{j}(x;0)\big)_{L^2(Y;n(x))},$$
where $A(0)$ and $\tilde{A}(0)$ are introduced in Assumption \ref{assump_perturbation} and equation \eqref{eq_sec32_5}, respectively. Then Assumption \ref{assump_perturbation} and Lemma \ref{lemma_derivative_wrt_p} yield:
\begin{equation*}
\mathcal{M}(\epsilon,p,\lambda^{(1)})=
\begin{pmatrix}
\lambda^{(1)}-\alpha_{n_*}p & t_*\epsilon \\
\overline{t_*}\epsilon & \lambda^{(1)}+\alpha_{n_*}p
\end{pmatrix}
+\mathcal{O}((\lambda^{(1)})^2+p^2+\epsilon^2).
\end{equation*}
Thus, for each $p$, $\lambda_{\epsilon}=\lambda_*+\lambda^{(1)}$ solves the eigenvalue problem \eqref{eq_sec32_6} if and only if $\lambda^{(1)}$ solves 
\begin{equation} \label{eq_sec32_13}
F(\epsilon,p,\lambda^{(1)}):=\det \mathcal{M}(\epsilon,p,\lambda^{(1)})
=(\lambda^{(1)})^2-\alpha_{n_*}^2 p^2-|t_*|^2\epsilon^2+\rho(\epsilon,p,\lambda^{(1)})=0,
\end{equation}
where
\begin{equation} \label{eq_sec32_14}
\rho(\epsilon,p,\lambda^{(1)})=\mathcal{O}(|\epsilon|^3+|p|^3+|\lambda^{(1)}|^3).
\end{equation}

We then solve $\lambda^{(1)}=\lambda^{(1)}(\epsilon,p)$ from \eqref{eq_sec32_13} for each $p$ and $\epsilon$. First, note that $\pm \sqrt{\alpha_{n_*}^2 p^2+|t_*|^2\epsilon^2}$ give two branches of solutions if we drop the remainder $\rho$ from \eqref{eq_sec32_13}. Thus, we seek a solution to \eqref{eq_sec32_13} in the following form
\begin{equation} \label{eq_sec32_15}
\lambda^{(1)}(\epsilon,p)=x\cdot \sqrt{\alpha_{n_*}^2 p^2+|t_*|^2\epsilon^2}
\end{equation}
with $|x|$ close to $1$. By substituting \eqref{eq_sec32_15} into \eqref{eq_sec32_13}, we obtain the following equation of $x$ (with $p$ and $\epsilon$ being regarded as parameters): 
\begin{equation} \label{eq_sec32_16}
\begin{aligned}
H(x;\epsilon,p)&:=\frac{1}{\alpha_{n_*}^2 p^2+|t_*|^2\epsilon^2}F(\epsilon,p,x\cdot \sqrt{\alpha_{n_*}^2 p^2+|t_*|^2\epsilon^2})
=x^2-1+\rho_1(x;\epsilon,p)=0,
\end{aligned}
\end{equation}
where $\rho_1(x;\epsilon,p):=\frac{\rho(\epsilon,p,x\cdot \sqrt{\alpha_{n_*}^2 p^2+|t_*|^2\epsilon^2})}{\alpha_{n_*}^2 p^2+|t_*|^2\epsilon^2}$. 
Now we consider the solution to \eqref{eq_sec32_16} with $|x-1|\ll 1$. Note that, by \eqref{eq_sec32_14}, the following estimate holds uniformly in $x$ when $|x-1|\ll 1$
\begin{equation*}
\rho_1(x;\epsilon,p)=\mathcal{O}(|\epsilon|+|p|).
\end{equation*}
We conclude that there exists a unique solution $x_s(\epsilon,p)$ to \eqref{eq_sec32_16} with $x_s(\epsilon,p)=1+\mathcal{O}(|\epsilon|+|p|)$ for $|\epsilon|,|p|\ll 1$. It follows from \eqref{eq_sec32_15} that there exists a unique solution $\lambda^{(1)}_{+}(\epsilon,p)$ to the equation \eqref{eq_sec32_13} near $\sqrt{\alpha_{n_*}^2 p^2+|t_*|^2\epsilon^2}$. Moreover,
\begin{equation*}
\begin{aligned}
\lambda^{(1)}_{+}(\epsilon,p)=x_s(\epsilon,p)\cdot \sqrt{\alpha_{n_*}^2 p^2+|t_*|^2\epsilon^2}
=\sqrt{\alpha_{n_*}^2 p^2+|t_*|^2\epsilon^2}\cdot \left(1+\mathcal{O}(|\epsilon|+|p|)\right).
\end{aligned}
\end{equation*}
Similarly, the other solution to \eqref{eq_sec32_13} is given by
\begin{equation*}
\begin{aligned}
\lambda^{(1)}_{-}(\epsilon,p)=-\sqrt{\alpha_{n_*}^2 p^2+|t_*|^2\epsilon^2}\cdot\left(1+\mathcal{O}(|\epsilon|+|p|)\right).
\end{aligned}
\end{equation*}
Note that $\lambda_{\mathfrak{n}_*+1,\epsilon}(p)=\lambda_*+\lambda^{(1)}_{+}(\epsilon,p)$ and $\lambda_{\mathfrak{n}_*,\epsilon}(p)=\lambda_*+\lambda^{(1)}_{-}(\epsilon,p)$. This proves \eqref{eq_perturbed_dirac_dispersion}. 

\medskip

Finally, by substituting $\lambda^{(1)}=\lambda^{(1)}_{\pm}(p)$ inside \eqref{eq_sec32_11}, we obtain the following two solutions $(a_{+},b_{+})^T$ and $(a_{-},b_{-})^T$ for $p>0$:
\begin{equation*}
\begin{pmatrix}
a_+ \\ b_+
\end{pmatrix}
=
\begin{pmatrix}
1 \\ -\frac{\overline{t_{*}}\cdot \epsilon}{\alpha_{n_{*}}p+\sqrt{\alpha_{n_{*}}^2 p^2+|t_{*}|^2\epsilon^2}}+\mathcal{O}(|\epsilon|+|p|)
\end{pmatrix}
,\quad
\begin{pmatrix}
a_- \\ b_- 
\end{pmatrix}
=
\begin{pmatrix}
\frac{t_*\cdot \epsilon}{\alpha_{n_{*}}p+\sqrt{\alpha_{n_{*}}^2 p^2+|t_{*}|^2\epsilon^2}}+\mathcal{O}(|\epsilon|+|p|)
\\ 1
\end{pmatrix}.
\end{equation*}
Then \eqref{eq_sec32_7} and \eqref{eq_sec32_10} show that the two eigenfunctions $\tilde{u}_{\mathfrak{n}_*,\epsilon}(x;p)$ and $\tilde{u}_{\mathfrak{n}_*+1,\epsilon}(x;p)$ (corresponding to $\lambda_{\mathfrak{n}_*,\epsilon}(p)$ and $\lambda_{\mathfrak{n}_*+1,\epsilon}(p)$, respectively) that solve \eqref{eq_sec32_6} for $p>0$ are given by
\begin{equation*}
\begin{aligned}
&\tilde{u}_{\mathfrak{n}_*,\epsilon}(x;p)=\frac{t_{*}\cdot\epsilon}{\alpha_{n_{*}}p+\sqrt{\alpha_{n_{*}}^2 p^2+|t_{*}|^2\epsilon^2}}\tilde{v}_{n_{*}}(x;0)+\tilde{v}_{m_{*}}(x;0)
+\mathcal{O}(|\epsilon|+|p|), \\
&\tilde{u}_{\mathfrak{n}_*+1,\epsilon}(x;p)=\tilde{v}_{n_{*}}(x;0)-\frac{\overline{t_{*}}\cdot \epsilon}{\alpha_{n_{*}}p+\sqrt{\alpha_{n_{*}}^2 p^2+|t_{*}|^2\epsilon^2}}\tilde{v}_{m_{*}}(x;0)
+\mathcal{O}(|\epsilon|+|p|).
\end{aligned}
\end{equation*}
Thus, by the relation $u_{n,\epsilon}(x;p)=e^{ipx_1}\tilde{u}_{n,\epsilon}(x;p)$ ($n=\mathfrak{n}_*,\mathfrak{n}_*+1$) and \eqref{eq_sec32_2}, we conclude the proof of \eqref{eq_perturbed_dirac_eigenfunction_1} and \eqref{eq_perturbed_dirac_eigenfunction_2} for $p>0$. For $p<0$, the reflectional symmetry of the system implies the equivalence $u_{n,\epsilon}(x;p)\sim (\mathcal{P}u_{n,\epsilon})(x;-p)$ (similar to Lemma \ref{lemma_equivalence}); this proves \eqref{eq_perturbed_dirac_eigenfunction_1} and \eqref{eq_perturbed_dirac_eigenfunction_2}.
\end{proof}

\section{Bifurcation of the Dirac point under perturbation}
In this section, we prove Theorem \ref{thm_main result} by constructing a solution $u(x;\lambda^{\star})$ to \eqref{eq_joint_system} with $\lambda^{\star}\in \mathcal{I}_{\epsilon}$. The construction is based on the following single-layer potential operator for the perturbed structure:
\begin{equation} \label{eq_formal_single_layer}
    \varphi\in\tilde{H}^{-\frac{1}{2}}(\Gamma) \mapsto u(x;\lambda)=\int_{\Gamma} G_{\epsilon}(x,y;\lambda) \varphi(y)d\sigma(y),
\end{equation}
where $G_{\epsilon}(x,y;\lambda)$ is the physical Green function of the perturbed operator $\mathcal{L}_{\epsilon}$. $G_{\epsilon}(x,y;\lambda)$ admits similar properties as those introduced in Section 2.2 for the unperturbed operator $\mathcal{L}$. Specifically, 
\begin{equation} \label{eq_G_eps_def}
\begin{aligned}
G_{\epsilon}(x,y;\lambda)
&= \lim_{\eta \to 0^+}G_{\epsilon}(x,y;\lambda_* + i \eta)= \frac{1}{2\pi}\lim_{\eta \to 0^+}\int_{-\pi}^{\pi}\sum_{n\geq 1}K_{n,\epsilon}(x,y;p,\lambda+i\eta)dp
\end{aligned}
\end{equation}
with
\begin{equation} \label{eq_K_n_eps_def}
K_{n,\epsilon}(x,y;p,\lambda):=\frac{u_{n,\epsilon}(x;p)\overline{u_{n,\epsilon}(y;\overline{p})}}{\lambda-\lambda_{n,\epsilon}(p)}. 
\end{equation}
Here $(\lambda_{n,\epsilon}(p),u_{n,\epsilon}(x;p))$ denotes the Floquet-Bloch eigenpair of the perturbed operator $\mathcal{L}_{\epsilon}(p)$. We note that 
$$G_{\epsilon}(x,y;\lambda)=G_{\epsilon}^{I}(x,y;\lambda)+ G_{\epsilon}^{II}(x,y;\lambda),$$
where
\begin{equation} \label{eq_G_eps_I_def}
\begin{aligned}
G_{\epsilon}^{I}(x,y;\lambda)
:= \frac{1}{2\pi}\lim_{\eta \to 0^+}\int_{-\pi}^{\pi}\sum_{n\neq \mathfrak{n}_{*}}K_{n,\epsilon}(x,y;p,\lambda+i\eta)dp
=\int_{-\pi}^{\pi}\sum_{n\neq \mathfrak{n}_{*}}K_{n,\epsilon}(x,y;p,\lambda)dp
\end{aligned}
\end{equation}
contains the contribution from energy bands that do not intersect the Dirac energy level $\lambda_*$, and 
\begin{equation} \label{eq_G_eps_II_def}
\begin{aligned}
G_{\epsilon}^{II}(x,y;\lambda)
:= \frac{1}{2\pi}\lim_{\eta \to 0^+}\int_{-\pi}^{\pi}K_{\mathfrak{n}_{*},\epsilon}(x,y;p,\lambda+i\eta)dp
\end{aligned}
\end{equation}
contains contribution from the single band $n=\mathfrak{n}_{*}$ that intersects the Dirac energy level. A detailed expression of $G_{\epsilon}^{II}$ is shown later in the proof of Proposition \ref{prop_G_eps_cont_real}; see \eqref{eq_G_eps_I_II_detail}. Observe that $G_{\epsilon}^{I}(x,y;\lambda)$ is analytic near $\lambda=\lambda_*$,
while $G_{\epsilon}^{II}(x,y;\lambda)$ is not. Thus the single-layer potential operator (\ref{eq_formal_single_layer}) is not analytic near $\lambda_*$. To study the bifurcation of resonance or embedded eigenvalue from the Dirac point, we extend $G_{\epsilon}(x,y;\lambda)$ (more precisely, $G_{\epsilon}^{II}(x,y;\lambda)$) analytically using an appropriate contour integral; see Section 4.1. The continued operator is denoted as $\tilde{\mathbb{G}}_{\epsilon}(\lambda)$ and defined explicitly in \eqref{eq_G_eps_continued}. The detailed properties of $\tilde{\mathbb{G}}_{\epsilon}(\lambda)$ are summarized in Proposition \ref{prop_G_eps_cont_analyticity} and \ref{prop_G_eps_cont_real}. In Section 4.2, we use $\tilde{\mathbb{G}}_{\epsilon}(\lambda)$ to construct a solution $u(x;\lambda)$ to \eqref{eq_joint_system} with $\lambda\in\mathcal{I}_{\epsilon}$; see \eqref{eq_u_left_right_decomposition}. In particular, we prove that $u(x;\lambda)$ indeed solves \eqref{eq_joint_system} if and only if the boundary integral equation \eqref{eq_interface_match} has a solution; see Proposition \ref{prop_charac_to_resonance}. In Sections 4.3 and 4.4, we use the Gohberg-Sigal theory to solve the characteristic value problem \eqref{eq_interface_match} and conclude the proof of Theorem \ref{thm_main result}.

\subsection{Analytic continuation of the single-layer potential operator}
We first present a lemma characterizing the analytic domain of the $n_*$-th Floquet-Bloch eigenpair of the perturbed structure, and the location of pinching singularities (see Section 1.2 for its definition). These results are fundamental for constructing the analytic continuation of the single-layer potential operator. In the sequel, we denote the integral operator associated with the kernel $K_{n,\epsilon}$ by $\mathbb{K}_{n,\epsilon}$:
\begin{equation} \label{eq_K_n_eps_operator_def}
\big(\mathbb{K}_{n,\epsilon}(p,\lambda)\varphi\big)(x)
:=\langle\varphi(\cdot),K_{n,\epsilon}(x,\cdot;p,\lambda)  \rangle = \frac{u_{n,\epsilon}(x;p)\langle \varphi(\cdot),\overline{u_{n,\epsilon}(\cdot ;\overline{p})} \rangle}{\lambda-\lambda_{n,\epsilon}(p)}. 
\end{equation}
We denote the projection to the $n$-th Bloch mode of the perturbed structure and its orthogonal complement by $\mathbb{P}_{n,\epsilon}$ and $\mathbb{Q}_{n,\epsilon}$, respectively: 
\[
\mathbb{P}_{n,\epsilon}(p):=(\cdot,u_{n,\epsilon}(x;\overline{p}))_{L^2(Y;n_{\epsilon}(x))}u_{n,\epsilon}(x;p), \quad \mathbb{Q}_{n,\epsilon}(p):=1-\mathbb{P}_{n,\epsilon}(p).
\]

\begin{lemma} \label{lem_analytic_domain}
Define the following complex domain (see Figure \ref{fig_analytic_domain}):
\begin{equation*}
D_{\epsilon,\nu}:=B(q_*,|\epsilon|^{\frac{1}{3}})\cup B(-q_*,|\epsilon|^{\frac{1}{3}})\cup \{p\in\mathbf{C}:-(1+\nu)\pi<\text{Re}(p)< (1+\nu)\pi, |\Im(p)|<\nu \,(\nu>0)\}.
\end{equation*}
There exists $\epsilon_0>0$ such that for any $\epsilon$ with $0<|\epsilon|<\epsilon_0$, there exists $\nu=\nu(\epsilon)>0$ such that the following statements hold: 
\begin{enumerate}
    \item $p\mapsto\lambda_{\mathfrak{n}_*,\epsilon}(p)$ and $p\mapsto u_{\mathfrak{n}_*,\epsilon}(x;p)$ are analytic in $D_{\epsilon,\nu}$;
    \item for each $\lambda\in \mathcal{I}_{\epsilon}$, the equation $\lambda_{\mathfrak{n}_*,\epsilon}(p)=\lambda$ has exactly two roots $q_{+,\epsilon}(\lambda)$ and $q_{-,\epsilon}(\lambda)=-q_{+}(\lambda,\epsilon)$ in $D_{\epsilon,\nu}$ with $|q_{\pm,\epsilon}(\lambda)\mp q_*|=\mathcal{O}(|\epsilon|)$. Moreover,
    \begin{equation*}
    \begin{aligned}
    &\Im(q_{+,\epsilon}(\lambda))>0,\, \Im(q_{-,\epsilon}(\lambda))<0 \,\text{ when }\, \Im(\lambda)>0,\\
    &\Im(q_{+,\epsilon}(\lambda))<0,\, \Im(q_{-,\epsilon}(\lambda))>0 \,\text{ when }\, \Im(\lambda)<0;
    \end{aligned}
    \end{equation*}
    \item  $p\mapsto \mathbb{P}_{\mathfrak{n}_*,\epsilon}(p)$ is analytic in $D_{\epsilon,\nu}$. Moreover, $p\mapsto (\mathcal{L}_{\epsilon}(p)\mathbb{Q}_{\mathfrak{n}_*,\epsilon}(p)-\lambda)^{-1}\in \mathcal{B}((H^1_{p,b}(\Omega))^{*},H^1_{p,b}(\Omega))$ is analytic in $\{p\in\mathbf{C}:-(1+\nu)\pi<\text{Re}(p)< (1+\nu)\pi, |\Im(p)|<\nu\}$ for any $\lambda \in \mathcal{I}_{\epsilon}$.
\end{enumerate}
\end{lemma}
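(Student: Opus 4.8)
The plan is to prove all three assertions by analytic perturbation theory for the family $p\mapsto \mathcal{L}_\epsilon(p)$, transported to the fixed Hilbert space $L^2_0(\Omega)$ via the conjugation $\mathcal{L}_\epsilon(p)=e^{ipx_1}\circ\tilde{\mathcal{L}}_\epsilon(p)\circ e^{-ipx_1}$ of \eqref{eq_sec32_1}, so that $p\mapsto\tilde{\mathcal{L}}_\epsilon(p)$ is an analytic family of type (A) and $p$-analyticity of every object below is meant literally. The recurring point is that $\nu=\nu(\epsilon)$ will be chosen comparable to $|t_*\epsilon|$, with a small constant depending on $c_0$ and on the fixed spectral gaps furnished by Assumption \ref{assump_dirac_points}(2): indeed, near $p=0$ the bands $\lambda_{\mathfrak{n}_*,\epsilon}(p)$ and $\lambda_{\mathfrak{n}_*+1,\epsilon}(p)$ are the two branches of $\lambda_*\pm\sqrt{\alpha_{n_*}^2p^2+|t_*|^2\epsilon^2}\,(1+\mathcal{O}(|p|+|\epsilon|))$ from Theorem \ref{thm_local_gap_open}, whose square-root branch points lie at distance $\sim|t_*\epsilon|/\alpha_{n_*}$ from the real axis, forcing $\nu$ below this distance. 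For assertion (1) I would treat the three pieces of $D_{\epsilon,\nu}$ and glue. On $B(\pm q_*,|\epsilon|^{1/3})$: at $\epsilon=0$, $\lambda_*$ is a simple eigenvalue of $\mathcal{L}(\pm q_*)$ isolated by a fixed gap $\delta_0>0$ (Assumption \ref{assump_dirac_points}(2): $\mu_{n_*},\mu_{m_*}$ meet no other $\mu_n$, and do not meet each other off $p=0$), so $\mu_{n_*}(\cdot)$ (resp.\ $\mu_{m_*}(\cdot)$) stays simple, isolated and analytic on a fixed ball $B(\pm q_*,r_0)$; by the generalized convergence $\tilde{\mathcal{L}}_\epsilon(p)\to\tilde{\mathcal{L}}(p)$ used in the proof of Theorem \ref{thm_local_gap_open}, this persists for $\mathcal{L}_\epsilon(p)$ on $B(\pm q_*,r_0/2)$ once $|\epsilon|$ is small, and $|\epsilon|^{1/3}<r_0/2$. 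On the near-$0$ strip $\{|Re\,p|<r_1,|Im\,p|<\nu\}$: Theorem \ref{thm_local_gap_open} gives $\lambda_{\mathfrak{n}_*+1,\epsilon}(p)-\lambda_{\mathfrak{n}_*,\epsilon}(p)\geq|t_*\epsilon|$ for real $|p|<r_1$ while $\lambda_{\mathfrak{n}_*-1,\epsilon},\lambda_{\mathfrak{n}_*+2,\epsilon}$ stay a fixed distance away, so $\lambda_{\mathfrak{n}_*,\epsilon}(p)$ is isolated with gap $\gtrsim|t_*\epsilon|$, and analyticity propagates to $|Im\,p|<\nu$ for $\nu\leq c|t_*\epsilon|$ since shifting $p$ by $\zeta$ perturbs $\mathcal{L}_\epsilon(p)$ (relatively) by $\mathcal{O}(|\zeta|)$. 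Elsewhere on the strip, where $Re\,p$ is bounded away from $0$ and $\pm q_*$, $\lambda_{\mathfrak{n}_*,\epsilon}(p)=\mu_{\mathfrak{n}_*}(Re\,p)+\mathcal{O}(\nu+\epsilon)$ is isolated by a fixed gap, giving a fixed complex width. Taking $\nu(\epsilon)$ to be the minimum of these widths (which is $\sim|t_*\epsilon|$) yields a connected $D_{\epsilon,\nu}$ on which $\lambda_{\mathfrak{n}_*,\epsilon}$ is simple and isolated, hence $p$-analytic, together with its Riesz eigenprojection and (after a fixed normalization) $u_{\mathfrak{n}_*,\epsilon}(x;p)$; the three pieces overlap in neighborhoods of the real points $0,\pm q_*$, so the continuations agree.

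For assertion (2) I would first exclude roots of $\lambda_{\mathfrak{n}_*,\epsilon}(p)=\lambda$, $\lambda\in\mathcal{I}_\epsilon$, outside the two balls: on the near-$0$ strip, using $c_0<1$ and Theorem \ref{thm_local_gap_open}, $|\lambda_{\mathfrak{n}_*,\epsilon}(p)-\lambda_*|>c_0|t_*\epsilon|$ once $\nu$ is small relative to $(1-c_0)|t_*\epsilon|$; where $Re\,p$ is bounded away from $0,\pm q_*$, $|\lambda_{\mathfrak{n}_*,\epsilon}(p)-\lambda_*|$ is bounded below by a fixed constant; and on the annulus $|\epsilon|^{1/3}<|Re\,p\mp q_*|<r_1$, $|\lambda_{\mathfrak{n}_*,\epsilon}(p)-\lambda_*|\gtrsim|\epsilon|^{1/3}\gg|t_*\epsilon|$. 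Inside $B(q_*,|\epsilon|^{1/3})$, $\lambda_{\mathfrak{n}_*,\epsilon}$ is an $\mathcal{O}(|\epsilon|)$ analytic perturbation of $p\mapsto\lambda_*+\lambda_{\mathfrak{n}_*}'(q_*)(p-q_*)$, with $\lambda_{\mathfrak{n}_*}'(q_*)=\mu_{m_*}'(q_*)=-\mu_{n_*}'(-q_*)>0$ by Assumption \ref{assump_dirac_points}(3) and reflection; Theorem \ref{thm_perturbed_fold_dispersion} controls $\lambda_{\mathfrak{n}_*,\epsilon}(q_*)$, $\lambda_{\mathfrak{n}_*,\epsilon}'(q_*)$ and, with Cauchy estimates on the fixed ball, the higher derivatives, so Rouch\'e's theorem (comparing with the affine model on $|p-q_*|=|\epsilon|^{1/3}$, where the model is $\sim|\epsilon|^{1/3}$ and the remainder $\mathcal{O}(|\epsilon|^{2/3})$) gives exactly one root $q_{+,\epsilon}(\lambda)$ with $q_{+,\epsilon}(\lambda)-q_*=(\lambda-\lambda_{\mathfrak{n}_*,\epsilon}(q_*))/\lambda_{\mathfrak{n}_*,\epsilon}'(q_*)+\cdots=\mathcal{O}(|\epsilon|)$, and likewise one root in $B(-q_*,|\epsilon|^{1/3})$. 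Since $n_\epsilon$ is $\mathcal{P}$-symmetric (Assumption \ref{assump_perturbation}(2)), $\mathcal{L}_\epsilon(-p)=\mathcal{P}\mathcal{L}_\epsilon(p)\mathcal{P}^{-1}$, hence $\lambda_{\mathfrak{n}_*,\epsilon}(-p)=\lambda_{\mathfrak{n}_*,\epsilon}(p)$ on the $p\mapsto-p$-symmetric domain $D_{\epsilon,\nu}$, so $q_{-,\epsilon}(\lambda)=-q_{+,\epsilon}(\lambda)$. Finally, because $\lambda_{\mathfrak{n}_*,\epsilon}$ is real on $\mathbf{R}$, its local inverse $\psi_\epsilon$ near $q_*$ satisfies $\overline{\psi_\epsilon(\bar\lambda)}=\psi_\epsilon(\lambda)$ with real derivative $\psi_\epsilon'(\lambda_{\mathfrak{n}_*,\epsilon}(q_*))=1/\lambda_{\mathfrak{n}_*}'(q_*)+\mathcal{O}(\epsilon)>0$; writing $\psi_\epsilon(\lambda_{\mathfrak{n}_*,\epsilon}(q_*)+\zeta)=q_*+\sum_{k\geq1}c_k\zeta^k$ with $c_k$ real, $c_1>0$, one gets $Im\,\psi_\epsilon=Im(\zeta)\,h(\zeta)$ with $h(0)=c_1>0$, and $h>0$ on a neighborhood of $0$ whose radius is uniform in $\epsilon$ by the fixed Cauchy bounds; since $\mathrm{diam}\,\mathcal{I}_\epsilon=\mathcal{O}(|t_*\epsilon|)$ and $\lambda_{\mathfrak{n}_*,\epsilon}(q_*)-\lambda_*=\mathcal{O}(\epsilon)$, $\mathcal{I}_\epsilon$ sits in that neighborhood for small $\epsilon$, whence $Im\,q_{+,\epsilon}(\lambda)=Im(\lambda)\,h(\lambda-\lambda_{\mathfrak{n}_*,\epsilon}(q_*))$ has the sign of $Im(\lambda)$, and $Im\,q_{-,\epsilon}(\lambda)=-Im\,q_{+,\epsilon}(\lambda)$ the opposite sign, as claimed.

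For assertion (3), analyticity of $\mathbb{P}_{\mathfrak{n}_*,\epsilon}(p)$ on $D_{\epsilon,\nu}$ is immediate from (1): there $\lambda_{\mathfrak{n}_*,\epsilon}(p)$ is simple and isolated, so the Riesz projection $\frac{1}{2\pi i}\oint(\zeta-\mathcal{L}_\epsilon(p))^{-1}d\zeta$ about it is $p$-analytic, and it coincides with $(\cdot,u_{\mathfrak{n}_*,\epsilon}(x;\bar p))_{L^2(Y;n_\epsilon(x))}u_{\mathfrak{n}_*,\epsilon}(x;p)$ because $\mathcal{L}_\epsilon(p)^*=\mathcal{L}_\epsilon(\bar p)$ in the $L^2(Y;n_\epsilon)$ pairing, so $\overline{u_{\mathfrak{n}_*,\epsilon}(\cdot;\bar p)}$ is the left eigenvector (with the consistent normalization). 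On the strip $\{|Re\,p|<(1+\nu)\pi,|Im\,p|<\nu\}$ with $\lambda\in\mathcal{I}_\epsilon$, the operator $\mathcal{L}_\epsilon(p)\mathbb{Q}_{\mathfrak{n}_*,\epsilon}(p)$ is the part of $\mathcal{L}_\epsilon(p)$ on the complementary invariant subspace, with spectrum $\sigma(\mathcal{L}_\epsilon(p))\setminus\{\lambda_{\mathfrak{n}_*,\epsilon}(p)\}$; near $\lambda_*$ this reduces to $\{\lambda_{\mathfrak{n}_*+1,\epsilon}(p)\}$, and the estimates of (1)–(2) (Theorem \ref{thm_local_gap_open} with $c_0<1$ near $0$, a fixed gap away from $0$) give $|\lambda_{\mathfrak{n}_*+1,\epsilon}(p)-\lambda|$ bounded below by a positive quantity for all such $p,\lambda$, while the remaining spectrum stays at a fixed distance, so $(\mathcal{L}_\epsilon(p)\mathbb{Q}_{\mathfrak{n}_*,\epsilon}(p)-\lambda)^{-1}$ exists; it is jointly $p$-analytic on the strip by the standard reduced-resolvent formula combined with analyticity of $\mathbb{P}_{\mathfrak{n}_*,\epsilon}(p)$ and of $(\zeta-\mathcal{L}_\epsilon(p))^{-1}$, and the mapping property $(H^1_{p,b}(\Omega))^{*}\to H^1_{p,b}(\Omega)$ follows from elliptic regularity for $\mathcal{L}_\epsilon(p)$.

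The step I expect to be the main obstacle is the uniform-in-$\epsilon$ sign statement in (2): a priori the $\mathcal{O}(\epsilon)$ errors in $\lambda_{\mathfrak{n}_*,\epsilon}(q_*)$ and $\lambda_{\mathfrak{n}_*,\epsilon}'(q_*)$ are complex and could flip the sign of $Im\,q_{+,\epsilon}(\lambda)$ when $Im(\lambda)=\mathcal{O}(\epsilon)$ is comparably small. What rescues it is that $\lambda_{\mathfrak{n}_*,\epsilon}$ is a \emph{real}-analytic function of $p$ (real eigenvalue at real quasimomentum), so these errors are real, the local inverse is real-analytic with a real positive derivative at the real base point, and the half-plane-preserving property holds on a neighborhood controlled by fixed, non-vanishing data while $\mathcal{I}_\epsilon$ collapses onto $\lambda_*$. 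The only other genuine work is the bookkeeping required to produce a single $\nu(\epsilon)\sim|t_*\epsilon|$ that simultaneously avoids the branch points near $p=0$, keeps both $\lambda_{\mathfrak{n}_*,\epsilon}(p)$ and $\lambda_{\mathfrak{n}_*+1,\epsilon}(p)$ isolated throughout $D_{\epsilon,\nu}$, and keeps both bands off $\mathcal{I}_\epsilon$ on the strip outside the two balls; this is possible precisely because every relevant gap is either a fixed positive constant (by Assumption \ref{assump_dirac_points}(2)) or of size $\gtrsim|t_*\epsilon|$ (by Theorem \ref{thm_local_gap_open}).
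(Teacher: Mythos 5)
Your proof is correct and follows essentially the same strategy as the paper's: establish analyticity via the Kato-type gap estimates (fixed gap near $\pm q_*$ from Assumption \ref{assump_dirac_points}(2), gap $\gtrsim |t_*\epsilon|$ near $p=0$ from Theorem \ref{thm_local_gap_open}), locate the roots near $\pm q_*$ and use real-analyticity of $\lambda_{\mathfrak{n}_*,\epsilon}$ on the real axis to deduce that the local inverse preserves half-planes, and use evenness and the reduced-resolvent gap for the remaining items. The only cosmetic differences are that you invoke Rouch\'e where the paper uses the inverse function theorem to produce a biholomorphism, and you take $\nu\sim|t_*\epsilon|$ where the paper conservatively takes $\nu=\min\{\nu_1(\epsilon),\epsilon^2\}$; both choices are compatible with the required estimates.
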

\begin{proof}
See Appendix A.
\end{proof}

\begin{figure}
\centering
\begin{tikzpicture}[scale=0.4]
\tikzset{->-/.style=
{decoration={markings,mark=at position #1 with 
{\arrow{latex}}},postaction={decorate}}};

\draw[->] (-11,0)--(11,0);
\draw[->] (0,-5)--(0,5);
\node[right] at (11.2,0) {$\text{Re}(p)$};
\node[above] at (0,5.2) {$\Im(p)$};
\node[right] at (0,-0.2) {$0$};
\draw[thick] (-10,-0.1)--(-10,0.1);
\node[below] at (-10,-0.29) {$-\pi$};
\draw[thick] (10,-0.1)--(10,0.1);
\node[below] at (10,-0.29) {$\pi$};
\draw[thick] (-6.25,-0.1)--(-6.25,0.1);
\node[below] at (-6.25,-0.2) {$-q_*$};
\draw[thick] (6.25,-0.1)--(6.25,0.1);
\node[below] at (6.25,-0.2) {$q_*$};

\draw[thick] plot [smooth] coordinates {(-11,4) (-4,4) (-1,1) (1,1) (4,4) (11,4)};
\draw[thick] plot [smooth] coordinates {(-11,-4) (-4,-4) (-1,-1) (1,-1) (4,-4) (11,-4)};
\draw[dashed] (-11,-4)--(-11,4);
\draw[dashed] (11,-4)--(11,4);
\node[right,scale=1] at (-10,3) {$\tilde{D}_{\epsilon}$};

\draw[thick,red,domain=10:170] plot ({-6.25+2*cos(\x)},{2*sin(\x)});
\draw[thick,red,domain=190:350] plot ({-6.25+2*cos(\x)},{2*sin(\x)});
\draw[thick,red,domain=10:170] plot ({6.25+2*cos(\x)},{2*sin(\x)});
\draw[thick,red,domain=190:350] plot ({6.25+2*cos(\x)},{2*sin(\x)});
\draw[thick,red] (-11,0.347)--(-8.219,0.347);
\draw[thick,red] (-11,-0.347)--(-8.219,-0.347);
\draw[thick,red] (-4.28,0.347)--(4.28,0.347);
\draw[thick,red] (-4.28,-0.347)--(4.28,-0.347);
\draw[thick,red] (11,0.347)--(8.219,0.347);
\draw[thick,red] (11,-0.347)--(8.219,-0.347);
\draw[fill=red,opacity=0.2] (-6.25,0) circle(2);
\draw[fill=red,opacity=0.2] (6.25,0) circle(2);
\draw[fill=blue,opacity=0.1] (-11,-0.347) rectangle(11,0.347);
\node[above,scale=1] at (-6.25,0.7) {$D_{\epsilon,\nu}$};
\end{tikzpicture}
\caption{The domain $D_{\epsilon,\nu}$ consists of two disks of radius $\epsilon^{\frac{1}{3}}$ (filled with red) and a strip of width $\nu=\nu(\epsilon)$ (filled with blue). A dumbbell-shaped domain $\tilde{D}_{\epsilon}$ that contains $D_{\epsilon,\nu}$ is also drawn here. $\tilde{D}_{\epsilon}$ is used in the proof of Lemma \ref{lem_analytic_domain} to construct $D_{\epsilon,\nu}$.}
\label{fig_analytic_domain}
\end{figure}

Based on Lemma \ref{lem_analytic_domain}, we construct the analytic continuation of the single-layer operator \eqref{eq_formal_single_layer} for $\lambda\in\mathcal{I}_{\epsilon}$. Let $C_{\epsilon}$ (See Figure \ref{fig_integral_contour}(a)) be the complex contour defined as
\begin{equation*}
\begin{aligned}
C_{\epsilon}:=&
[-\pi,-q_*-|\epsilon|^{\frac{1}{3}}]\cup[-q_*+|\epsilon|^{\frac{1}{3}},q_*-|\epsilon|^{\frac{1}{3}}]\cup [q_*+|\epsilon|^{\frac{1}{3}},\pi] \\
&\cup \{-q_*+|\epsilon|^{\frac{1}{3}}e^{i\theta}:\pi\geq \theta\geq 0\}
\cup \{q_*+|\epsilon|^{\frac{1}{3}}e^{i\theta}:\pi\leq \theta\leq 2\pi\}.
\end{aligned}
\end{equation*}
Then we define the analytic continuation of the layer potential operator \eqref{eq_formal_single_layer}, namely $\tilde{\mathbb{G}}_{\epsilon}(\lambda)\in\mathcal{B}(\tilde{H}^{-\frac{1}{2}}(\Gamma),H^{\frac{1}{2}}(\Gamma))$, as follows:
\begin{equation} \label{eq_G_eps_continued}
\begin{aligned}
\tilde{\mathbb{G}}_{\epsilon}(\lambda)\varphi &:=
\frac{1}{2\pi}\int_{C_{\epsilon}}\big(\mathbb{K}_{\mathfrak{n}_{*},\epsilon}(p,\lambda)\varphi\big)dp 
+
\frac{1}{2\pi}\int_{-\pi}^{\pi}
\sum_{n\neq \mathfrak{n_*}}\big(\mathbb{K}_{n,\epsilon}(p,\lambda)\varphi\big) dp.
\end{aligned}
\end{equation}
Note that \eqref{eq_G_eps_continued} differs from \eqref{eq_formal_single_layer} only by the integral contour in the kernel $G_{\epsilon}^{II}(x,y;\lambda)$. By Lemma \ref{lem_analytic_domain}, we see $\lambda\neq \lambda_{\mathfrak{n}_*,\epsilon}(p)$ for $\lambda\in \mathcal{I}_{\epsilon}$ and $p\in C_{\epsilon}$. This ensures that the new contour $C_{\epsilon}$ in \eqref{eq_G_eps_continued} does not pass any pinching singularities. As a consequence, the denominator in $K_{\mathfrak{n}_{*},\epsilon}(x,y;p,\lambda)$ is non-singular for all $\lambda\in \mathcal{I}_{\epsilon}$ and $p\in C_{\epsilon}$, resulting in the analyticity of \eqref{eq_G_eps_continued}:
\begin{figure}
\centering
\begin{tikzpicture}[scale=0.4]
\tikzset{->-/.style={decoration={markings,mark=at position #1 with 
{\arrow{latex}}},postaction={decorate}}};

\draw[->] (-11,0)--(11,0);
\draw[->] (0,-2.5)--(0,2.5);
\node[right] at (11.2,0) {$\text{Re}(p)$};
\node[above] at (0,2.7) {$\Im(p)$};
\draw[thick] (-10,-0.1)--(-10,0.1);
\node[below] at (-10,-0.29) {$-\pi$};
\draw[thick] (10,-0.1)--(10,0.1);
\node[below] at (10,-0.29) {$\pi$};
\draw[thick] (-5.25,-0.1)--(-5.25,0.1);
\node[below] at (-5.25,-0.2) {$-q_*$};
\draw[thick] (5.25,-0.1)--(5.25,0.1);
\node[above] at (5.25,0.2) {$q_*$};
\draw[very thick] (5.1,0.15)--(5.4,-0.15);
\draw[very thick] (5.1,-0.15)--(5.4,0.15);
\draw[very thick] (-5.4,0.15)--(-5.1,-0.15);
\draw[very thick] (-5.4,-0.15)--(-5.1,0.15);

\draw[very thick,blue] (-10,0)--(-9.5,0);
\draw[very thick,red] (-9.5,0)--(-9,0);
\draw[very thick,blue] (-9,0)--(-8.5,0);
\draw[very thick,red] (-8.5,0)--(-7.75,0);
\draw[very thick,blue] (-2.75,0)--(-2.25,0);
\draw[very thick,red] (-2.25,0)--(-1.75,0);
\draw[very thick,blue] (-1.75,0)--(-1.25,0);
\draw[very thick,red] (-1.25,0)--(-0.75,0);
\draw[very thick,blue] (-0.75,0)--(-0.25,0);
\draw[very thick,blue] (-0.25,0)--(0.25,0);
\draw[very thick,red] (0.25,0)--(0.75,0);
\draw[very thick,blue] (0.75,0)--(1.25,0);
\draw[very thick,red] (1.25,0)--(1.75,0);
\draw[very thick,blue] (1.75,0)--(2.25,0);
\draw[very thick,red] (2.25,0)--(2.75,0);
\draw[very thick,blue] (10,0)--(9.5,0);
\draw[very thick,red] (9.5,0)--(9,0);
\draw[very thick,blue] (9,0)--(8.5,0);
\draw[very thick,red] (8.5,0)--(7.75,0);
\draw[->-=0.5,very thick,red,domain=180:0] plot ({-5.25+2.5*cos(\x)},{2.5*sin(\x)});
\draw[->-=0.5,very thick,red,domain=180:360] plot ({5.25+2.5*cos(\x)},{2.5*sin(\x)});
\node[above,scale=2] at (-7.25,2.1) {$C_{\epsilon}$};

\draw[->-=0.5,very thick,blue] (-7.75,0)--(-5.75,0);
\draw[->-=0.5,very thick,blue] (-4.75,0)--(-2.75,0);
\draw[->-=0.5,very thick,blue] (2.75,0)--(4.75,0);
\draw[->-=0.5,very thick,blue] (5.75,0)--(7.75,0);
\draw[->-=0.5,very thick,blue,domain=180:0] plot ({-5.25+0.5*cos(\x)},{0.5*sin(\x)});
\draw[->-=0.5,very thick,blue,domain=180:360] plot ({5.25+0.5*cos(\x)},{0.5*sin(\x)});
\node[above,scale=0.8] at (-5.25,0.5) {$\tilde{C}^{(-)}_{\tau,\lambda}$};
\node[below,scale=0.8] at (5.25,-0.5) {$\tilde{C}^{(+)}_{\tau,\lambda}$};
\node[below,scale=0.8] at (-1.25,-0.1) {$\tilde{C}^{(0)}_{\tau,\lambda}$};

\draw[->] (-11,-7)--(11,-7);
\draw[->] (0,-9.5)--(0,-4.5);
\node[right] at (11.2,-7) {$\text{Re}(p)$};
\node[above] at (0,-4.3) {$\Im(p)$};
\draw[thick] (-10,-7.1)--(-10,-6.9);
\node[below] at (-10,-7.29) {$-\pi$};
\draw[thick] (10,-7.1)--(10,-6.9);
\node[below] at (10,-7.29) {$\pi$};
\draw[thick] (-5.25,-7.1)--(-5.25,-6.9);
\node[below] at (-5.25,-7.2) {$-q_*$};
\draw[thick] (5.25,-7.1)--(5.25,-6.9);
\node[right] at (5.35,-7.4) {$q_*$};
\draw[very thick] (5.1,-6.85)--(5.4,-7.15);
\draw[very thick] (5.1,-7.15)--(5.4,-6.85);
\draw[blue,very thick] (5.1,-6.25)--(5.4,-6.55);
\draw[blue,very thick] (5.1,-6.55)--(5.4,-6.25);
\draw[red,very thick] (5.1,-7.45)--(5.4,-7.75);
\draw[red,very thick] (5.1,-7.75)--(5.4,-7.45);

\draw[->-=0.5,very thick,red] (-10,-7)--(-10,-5.75);
\draw[->-=0.5,very thick,red] (-10,-5.75)--(-7.415,-5.75);
\draw[->-=0.5,very thick,red] (-3.085,-5.75)--(3.085,-5.75);
\draw[->-=0.5,very thick,red] (7.415,-5.75)--(10,-5.75);
\draw[->-=0.5,very thick,red] (10,-5.75)--(10,-7);
\draw[->-=0.5,very thick,red,domain=150:30] plot ({-5.25+2.5*cos(\x)},{-7+2.5*sin(\x)});
\draw[->-=0.5,very thick,red,domain=150:30] plot ({5.25+2.5*cos(\x)},{-7+2.5*sin(\x)});
\node[above,scale=1] at (-5.25,-4.3) {$C^{evan}_{\epsilon,\nu}$};

\node[left,scale=1] at (-11.2,0) {(a)};
\node[left,scale=1] at (-11.2,-7) {(b)};
\end{tikzpicture}
\caption{Several integral contours: (a) the contour $C_{\epsilon}$ is drawn in red lines while $\tilde{C}_{\tau,\lambda}$ is drawn in blue. The equation $\lambda_{\mathfrak{n}_*,\epsilon}(p)=\lambda$ has two roots for $
\lambda\in \mathcal{I}_{\epsilon}\cap \mathbf{R}$, which are marked by the black crosses in the Figure. (b) the contour $C^{evan}_{\epsilon,\nu}$ is drawn in red lines. In the figure, the red, black and blue crosses denote the root of $\lambda_{\mathfrak{n}_*,\epsilon}(p)=\lambda$ ($\lambda\in \mathcal{I}_{\epsilon}$, $p$ near $q_*$) for $\Im(\lambda)<0$, $\Im(\lambda)=0$ and $\Im(\lambda)>0$, respectively.}
\label{fig_integral_contour}
\end{figure}

\begin{proposition} \label{prop_G_eps_cont_analyticity}
$\lambda\mapsto\tilde{\mathbb{G}}_{\epsilon}(\lambda)\in\mathcal{B}(\tilde{H}^{-\frac{1}{2}}(\Gamma),H^{\frac{1}{2}}(\Gamma))$ is an operator-valued analytical function in $\mathcal{I}_{\epsilon}$.
\end{proposition}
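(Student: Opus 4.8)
The plan is to split the right-hand side of \eqref{eq_G_eps_continued} as $\tilde{\mathbb{G}}_{\epsilon}(\lambda)=\mathbb{K}_{\epsilon}(\lambda)+\mathbb{R}_{\epsilon}(\lambda)$, where $\mathbb{K}_{\epsilon}(\lambda)$ denotes the $C_{\epsilon}$-contour integral of the $\mathfrak{n}_*$-term and $\mathbb{R}_{\epsilon}(\lambda)$ the $[-\pi,\pi]$-integral of the sum over $n\neq\mathfrak{n}_*$, and to prove that each is an $\mathcal{B}(\tilde{H}^{-\frac{1}{2}}(\Gamma),H^{\frac{1}{2}}(\Gamma))$-valued analytic function on $\mathcal{I}_{\epsilon}$. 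Throughout I fix $\epsilon$ with $0<|\epsilon|<\epsilon_{0}$ and $\nu=\nu(\epsilon)$ as in Lemma \ref{lem_analytic_domain}, and I will use the standard principle that an operator-valued integral $\lambda\mapsto\int_{X}f(\lambda,s)\,ds$ over a compact set $X$ is analytic on an open $U\subset\mathbf{C}$ provided $f(\cdot,s)$ is analytic on $U$ for each $s$, $f$ is jointly continuous, and the integral converges in operator norm uniformly on compact subsets of $U$; this follows from Morera's theorem applied to scalar-valued functionals together with the uniform boundedness principle (equivalently, by interchanging $\oint_{\gamma}$ with $\int_{X}$).

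For $\mathbb{K}_{\epsilon}$: the contour $C_{\epsilon}$ lies in $D_{\epsilon,\nu}$, since its two semicircular arcs have radius $|\epsilon|^{\frac{1}{3}}$ and therefore sit inside $B(\pm q_*,|\epsilon|^{\frac{1}{3}})$, while the straight segments lie in the real slice of the strip. Hence, by Lemma \ref{lem_analytic_domain}(1), $p\mapsto\lambda_{\mathfrak{n}_*,\epsilon}(p)$ and $p\mapsto u_{\mathfrak{n}_*,\epsilon}(\cdot;p)\in H^{1}(Y)$ are analytic — in particular continuous and bounded — on the compact set $C_{\epsilon}$, and post-composing with the bounded trace $\mathrm{Tr}\colon H^{1}(Y)\to H^{\frac{1}{2}}(\Gamma)$ makes the rank-one numerator $\varphi\mapsto\big(u_{\mathfrak{n}_*,\epsilon}(\cdot;p)|_{\Gamma}\big)\big\langle\varphi,\overline{u_{\mathfrak{n}_*,\epsilon}(\cdot;\overline{p})}\big\rangle$ a continuous, bounded $\mathcal{B}(\tilde{H}^{-\frac{1}{2}}(\Gamma),H^{\frac{1}{2}}(\Gamma))$-valued function of $p\in C_{\epsilon}$ (the functional $\varphi\mapsto\langle\varphi,\overline{u_{\mathfrak{n}_*,\epsilon}(\cdot;\overline{p})}\rangle$ being bounded on $\tilde{H}^{-\frac{1}{2}}(\Gamma)$ since its representer lies in $H^{\frac{1}{2}}(\Gamma)$). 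It remains to bound the denominator away from zero: by Lemma \ref{lem_analytic_domain}(2) the only zeros of $\lambda_{\mathfrak{n}_*,\epsilon}(\cdot)-\lambda$ in $D_{\epsilon,\nu}$ are $q_{\pm,\epsilon}(\lambda)\in B(\pm q_*,\mathcal{O}(|\epsilon|))$, which for $|\epsilon|$ small lie strictly inside the disks $B(\pm q_*,|\epsilon|^{\frac{1}{3}})$ removed from $C_{\epsilon}$, so $\lambda_{\mathfrak{n}_*,\epsilon}(p)\neq\lambda$ on $C_{\epsilon}$; moreover $\lambda_{\mathfrak{n}_*,\epsilon}(C_{\epsilon})$ lies at distance $\gtrsim|\epsilon|^{\frac{1}{3}}$ from $\lambda_{*}$ whereas $\overline{\mathcal{I}_{\epsilon}}\subset B(\lambda_{*},c_{0}|t_{*}||\epsilon|)$ with $c_{0}|t_{*}||\epsilon|\ll|\epsilon|^{\frac{1}{3}}$, so a compactness argument gives $\inf\{|\lambda-\lambda_{\mathfrak{n}_*,\epsilon}(p)|\colon p\in C_{\epsilon},\ \lambda\in\overline{\mathcal{I}_{\epsilon}}\}>0$. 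Therefore the integrand of $\mathbb{K}_{\epsilon}$ is, for each $p$, analytic in $\lambda\in\mathcal{I}_{\epsilon}$, jointly continuous and uniformly bounded on $C_{\epsilon}\times\overline{\mathcal{I}_{\epsilon}}$, and integrating over the compact contour $C_{\epsilon}$ gives the analyticity of $\mathbb{K}_{\epsilon}$ on $\mathcal{I}_{\epsilon}$.

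For $\mathbb{R}_{\epsilon}$: since $\mathcal{L}_{\epsilon}(p)$ commutes with the spectral projections $\mathbb{P}_{n,\epsilon}(p)$, for real $p$ and $\lambda\in\mathcal{I}_{\epsilon}$ one has
\[\sum_{n\neq\mathfrak{n}_*}\frac{\mathbb{P}_{n,\epsilon}(p)}{\lambda-\lambda_{n,\epsilon}(p)}=-\big(\mathcal{L}_{\epsilon}(p)\mathbb{Q}_{\mathfrak{n}_*,\epsilon}(p)-\lambda\big)^{-1}\mathbb{Q}_{\mathfrak{n}_*,\epsilon}(p),\]
and by Lemma \ref{lem_analytic_domain}(3) the right-hand side is a bounded operator $(H^{1}_{p,b}(\Omega))^{*}\to H^{1}_{p,b}(\Omega)$; being a resolvent with $\mathcal{I}_{\epsilon}$ disjoint from $\sigma(\mathcal{L}_{\epsilon}(p)\mathbb{Q}_{\mathfrak{n}_*,\epsilon}(p))$ it is analytic in $\lambda\in\mathcal{I}_{\epsilon}$, and being continuous in the real variable $p$ it is uniformly bounded for $p\in[-\pi,\pi]$. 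After the Floquet--Bloch decomposition of the single-layer potential, $\mathbb{R}_{\epsilon}(\lambda)$ is the left composition of $\mathrm{Tr}\colon H^{1}_{p,b}(\Omega)\to H^{\frac{1}{2}}(\Gamma)$ and the right composition of the extension $\mathcal{M}=\mathrm{Tr}^{*}\colon\tilde{H}^{-\frac{1}{2}}(\Gamma)\to(H^{1}_{p,b}(\Omega))^{*}$ (up to the standard conjugation and weight conventions) with the above reduced resolvent, integrated in $p$ over $[-\pi,\pi]$; since $\mathrm{Tr}$ and $\mathcal{M}$ are bounded uniformly in $p\in[-\pi,\pi]$, the integrand is a $\mathcal{B}(\tilde{H}^{-\frac{1}{2}}(\Gamma),H^{\frac{1}{2}}(\Gamma))$-valued function analytic in $\lambda\in\mathcal{I}_{\epsilon}$ and continuous, bounded in $p$ over the compact interval $[-\pi,\pi]$. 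Integrating yields analyticity of $\mathbb{R}_{\epsilon}$ on $\mathcal{I}_{\epsilon}$, and $\tilde{\mathbb{G}}_{\epsilon}=\mathbb{K}_{\epsilon}+\mathbb{R}_{\epsilon}$ completes the proof.

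The step I expect to need the most care is the treatment of $\mathbb{R}_{\epsilon}$. A naive term-by-term bound fails: elliptic regularity plus the trace theorem give only $\|u_{n,\epsilon}(\cdot;p)\|_{H^{\frac{1}{2}}(\Gamma)}\lesssim 1+\lambda_{n,\epsilon}(p)$, which grows like $n^{2}$ by Weyl's law, so the $n$-th term of the sum has operator norm $\sim n^{2}$ and the series diverges if estimated crudely; the point is that the $n\neq\mathfrak{n}_*$ part of the fiberwise Green's function is the reduced resolvent $-\big(\mathcal{L}_{\epsilon}(p)\mathbb{Q}_{\mathfrak{n}_*,\epsilon}(p)-\lambda\big)^{-1}\mathbb{Q}_{\mathfrak{n}_*,\epsilon}(p)$, whose boundedness with the one-derivative smoothing $(H^{1}_{p,b})^{*}\to H^{1}_{p,b}$ — uniformly in $p\in[-\pi,\pi]$ and $\lambda\in\overline{\mathcal{I}_{\epsilon}}$ — is supplied precisely by Lemma \ref{lem_analytic_domain}(3) (this is also where the fact that no band other than the $\mathfrak{n}_*$-th meets $\mathcal{I}_{\epsilon}$ for real $p$ enters), after which the mapping-property bookkeeping through $\mathcal{M}$ and $\mathrm{Tr}$ must be checked with the correct conjugation and weight conventions. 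The other delicate point, the scale separation $|\epsilon|\ll|\epsilon|^{\frac{1}{3}}$ used in the $\mathbb{K}_{\epsilon}$ bound, is exactly why the contour $C_{\epsilon}$ was built with arcs of radius $|\epsilon|^{\frac{1}{3}}$: it keeps the two moving poles $q_{\pm,\epsilon}(\lambda)$ safely off the contour for every $\lambda\in\mathcal{I}_{\epsilon}$.
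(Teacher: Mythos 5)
Your proof is correct and fills in the details that the paper itself glosses over (the paper's entire justification is the sentence ``By Lemma \ref{lem_analytic_domain}, we see that $\lambda\neq \lambda_{\mathfrak{n}_*,\epsilon}(p)$ for $\lambda\in \mathcal{I}_{\epsilon}$ and $p\in C_{\epsilon}$. Thus, it's clear that \ldots''). Your two-piece decomposition and the ingredients you invoke match the paper's machinery: the nonvanishing of $\lambda-\lambda_{\mathfrak{n}_*,\epsilon}(p)$ on $C_{\epsilon}$ is exactly the observation the paper makes before stating the proposition, and your repackaging of the $n\neq\mathfrak{n}_*$ tail as the reduced resolvent $-(\mathcal{L}_{\epsilon}(p)\mathbb{Q}_{\mathfrak{n}_*,\epsilon}(p)-\lambda)^{-1}\mathbb{Q}_{\mathfrak{n}_*,\epsilon}(p)$, sandwiched between $\mathrm{Tr}$ and the weighted extension $\frac{1}{n_\epsilon^2}\mathcal{M}$, is precisely the representation the paper uses in Appendix B (cf.\ the formula \eqref{eq_app_B_5} for $\tilde{\mathbb{G}}^{evan}_{+,\epsilon}$) and the reason Lemma \ref{lem_analytic_domain}(3) is stated at all. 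You also correctly identify the genuine danger in the tail: the naive term-by-term estimate $\|u_{n,\epsilon}\|_{H^{1/2}(\Gamma)}\sim n^2$ is not summable, so the resolvent viewpoint is essential, not cosmetic. Your sign convention for the reduced resolvent is also correct, since $(\mathcal{L}_{\epsilon}(p)\mathbb{Q}_{\mathfrak{n}_*,\epsilon}(p)-\lambda)^{-1}\mathbb{Q}_{\mathfrak{n}_*,\epsilon}(p)=\sum_{n\neq\mathfrak{n}_*}(\lambda_{n,\epsilon}(p)-\lambda)^{-1}\mathbb{P}_{n,\epsilon}(p)$.

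One small imprecision: you assert that $\lambda_{\mathfrak{n}_*,\epsilon}(C_{\epsilon})$ lies at distance $\gtrsim|\epsilon|^{1/3}$ from $\lambda_*$, but on the portion of $C_{\epsilon}$ near $p=0$ (the straight segment through the origin) the distance is only $\gtrsim|\epsilon|$, by the local gap asymptotics \eqref{eq_perturbed_dirac_dispersion}; the $|\epsilon|^{1/3}$ scale governs the semicircular arcs $\gamma_{\pm,\epsilon}$ only. This does not damage the argument, because the separation you actually need is that $|\lambda-\lambda_{\mathfrak{n}_*,\epsilon}(p)|$ is bounded below on $C_{\epsilon}\times K$ for compact $K\subset\mathcal{I}_{\epsilon}$, and that follows already from the nonvanishing you established via Lemma \ref{lem_analytic_domain}(2) together with continuity and compactness; near $p=0$ the explicit lower bound $(1-c_0)|t_*||\epsilon|+\mathcal{O}(\epsilon^2)>0$ suffices. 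You should simply drop or correct the $|\epsilon|^{1/3}$ clause.
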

The operator $\tilde{\mathbb{G}}_{\epsilon}(\lambda)$ indeed continues the single-layer potential operator \eqref{eq_formal_single_layer} as they coincide on the real line:
\begin{proposition} \label{prop_G_eps_cont_real}
For $\lambda\in\mathcal{I}_{\epsilon}\cap \mathbf{R}$, we have $\tilde{\mathbb{G}}_{\epsilon}(\lambda)\varphi
=\int_{\Gamma}G_{\epsilon}(x,y;\lambda)\varphi(y)dy_2$. Moreover, they both admit the following expansion: 
\begin{equation} \label{eq_G_eps_cont_real}
\begin{aligned}
\tilde{\mathbb{G}}_{\epsilon}(\lambda)\varphi
&=\int_{\Gamma}G_{\epsilon}(x,y;\lambda)\varphi(y)dy_2 \\
&=\frac{1}{2\pi}\int_{-\pi}^{\pi}\sum_{n\neq \mathfrak{n_*}}\big(\mathbb{K}_{n,\epsilon}(p,\lambda)\varphi\big)dp
+\frac{1}{2\pi}p.v.\int_{-\pi}^{\pi}\big(\mathbb{K}_{\mathfrak{n}_{*},\epsilon}(p,\lambda)\varphi\big)dp \\
&\quad -
\Bigg(\frac{i\langle \varphi(\cdot),\overline{u_{\mathfrak{n}_*,\epsilon}(\cdot ;q_{+,\epsilon}(\lambda))} \rangle}{2|\lambda_{\mathfrak{n}_*,\epsilon}^{\prime}(q_{+,\epsilon}(\lambda))|}u_{\mathfrak{n}_*,\epsilon}(x ;q_{+,\epsilon}(\lambda))
+
\frac{i\langle \varphi(\cdot),\overline{u_{\mathfrak{n}_*,\epsilon}(\cdot ;q_{-,\epsilon}(\lambda))} \rangle}{2|\lambda_{\mathfrak{n}_*,\epsilon}^{\prime}(q_{-,\epsilon}(\lambda))|}u_{\mathfrak{n}_*,\epsilon}(x ;q_{-,\epsilon}(\lambda))
\Bigg),
\end{aligned}
\end{equation}
where $G_{\epsilon}(x,y;\lambda)$ is the Green function defined in \eqref{eq_G_eps_def}.
\end{proposition}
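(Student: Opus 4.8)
The plan is to prove the two displayed equalities one at a time, both by a limiting-absorption (Sokhotski--Plemelj) computation, after using Lemma~\ref{lem_analytic_domain} to isolate the only band that contributes a singular part near $\lambda=\lambda_*$. Throughout fix $\lambda_0\in\mathcal{I}_\epsilon\cap\mathbf{R}$. First I would dispose of the $n\neq\mathfrak{n}_*$ terms, which appear identically on both sides: by Theorem~\ref{thm_local_gap_open} one has $\lambda_{\mathfrak{n}_*+1,\epsilon}(p)-\lambda_*\ge |t_*\epsilon|\,(1+o(1))$ for $p$ near $0$, and by Assumption~\ref{assump_dirac_points}(2) the bands $\lambda_{\mathfrak{n}_*\pm1,\epsilon}$ and all others stay bounded away from $\lambda_*$ uniformly on the rest of $[-\pi,\pi]$; since $c_0<1$ this gives $\inf_{n\neq\mathfrak{n}_*,\,p\in[-\pi,\pi]}|\lambda_0-\lambda_{n,\epsilon}(p)|>0$, so that part is an absolutely convergent integral that equals its own value at $\lambda_0+i\eta$ in the limit $\eta\to0^+$, and only routine $H^{1/2}(\Gamma)$ bounds on the Bloch traces $u_{n,\epsilon}(\cdot;p)|_\Gamma$ and summability of the series are needed to exchange $\int_\Gamma(\cdot)\,\varphi$ with the $p$-integral. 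It thus suffices to analyse the $\mathfrak{n}_*$-term.

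To show $\tilde{\mathbb{G}}_\epsilon(\lambda_0)\varphi$ equals the third expression on the right-hand side of \eqref{eq_G_eps_cont_real}, I would evaluate the contour integral over $C_\epsilon$ directly. Write $C_\epsilon$ as the three real segments (with excisions of radius $|\epsilon|^{1/3}$) together with the lower arc $\gamma_+$ around $q_*$ and the upper arc $\gamma_-$ around $-q_*$. By Lemma~\ref{lem_analytic_domain}(1),(3) the integrand $p\mapsto u_{\mathfrak{n}_*,\epsilon}(x;p)\,\langle\varphi,\overline{u_{\mathfrak{n}_*,\epsilon}(\cdot;\overline{p})}\rangle/(\lambda_0-\lambda_{\mathfrak{n}_*,\epsilon}(p))$ is meromorphic inside each disk $B(\pm q_*,|\epsilon|^{1/3})$ with exactly one simple pole, at $p=q_{\pm,\epsilon}(\lambda_0)$, which is real (by continuity of the analytic root from the sign statements of Lemma~\ref{lem_analytic_domain}(2)), of residue $-\,u_{\mathfrak{n}_*,\epsilon}(x;q_{\pm,\epsilon})\,\langle\varphi,\overline{u_{\mathfrak{n}_*,\epsilon}(\cdot;q_{\pm,\epsilon})}\rangle/\lambda_{\mathfrak{n}_*,\epsilon}'(q_{\pm,\epsilon})$. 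Splitting the integrand as this simple-pole term plus a function holomorphic on the disk, the holomorphic piece contributes the same over an arc as over the corresponding diameter (Cauchy), while the pole piece over an arc equals the principal value over the excised interval plus $\pm i\pi$ times the residue, with sign $+$ for $\gamma_+$ and $-$ for $\gamma_-$. Reassembling, the real segments plus the diameter pieces give $\mathrm{p.v.}\int_{-\pi}^{\pi}$; and since $\lambda_{\mathfrak{n}_*,\epsilon}'(q_{+,\epsilon})>0>\lambda_{\mathfrak{n}_*,\epsilon}'(q_{-,\epsilon})$ — a consequence of Assumption~\ref{assump_dirac_points}(3) and reflection symmetry, via the identification (valid at $\epsilon=0$ by Lemma~\ref{lem_analytic_domain}(2)) of $\lambda_{\mathfrak{n}_*}$ as the branch meeting $\lambda_*$ at $\pm q_*$ — the two $\pm i\pi(\text{residue})$ terms become exactly $-\tfrac{i}{2}\,\langle\varphi,\overline{u_{\mathfrak{n}_*,\epsilon}(\cdot;q_{\pm,\epsilon})}\rangle\,u_{\mathfrak{n}_*,\epsilon}(x;q_{\pm,\epsilon})/|\lambda_{\mathfrak{n}_*,\epsilon}'(q_{\pm,\epsilon})|$, as required.

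For the remaining equality, that this third expression equals $\int_\Gamma G_\epsilon(x,y;\lambda_0)\varphi(y)\,dy_2$, I would return to the spectral representation \eqref{eq_G_eps_def}: substitute $\lambda_0+i\eta$, pair with $\varphi$, and let $\eta\to0^+$. The $n\neq\mathfrak{n}_*$ terms converge to the corresponding sum by the gap estimate above; for the $\mathfrak{n}_*$-band, Sokhotski--Plemelj at the two roots $q_{\pm,\epsilon}(\lambda_0)$ of $\lambda_{\mathfrak{n}_*,\epsilon}(p)=\lambda_0$ produces the principal-value integral plus the two $-\tfrac{i}{2}(\cdots)/|\lambda_{\mathfrak{n}_*,\epsilon}'|$ terms — with the same sign, because $\lambda_{\mathfrak{n}_*,\epsilon}'(q_{\pm,\epsilon})$ has the signs recorded above. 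This reproduces the third expression exactly, closing the chain. Equivalently, one may avoid Sokhotski--Plemelj: for $\mathrm{Im}\,\lambda>0$ the poles $q_{+,\epsilon}(\lambda)\in\mathbf{C}_+$ and $q_{-,\epsilon}(\lambda)\in\mathbf{C}_-$ sit, by Lemma~\ref{lem_analytic_domain}(2), strictly on the far side of $\gamma_+$ and $\gamma_-$, so the $\mathfrak{n}_*$-band integrand — holomorphic on $D_{\epsilon,\nu}$ by Lemma~\ref{lem_analytic_domain}(1) — can be deformed from $[-\pi,\pi]$ onto $C_\epsilon$ without crossing a pole; this gives $\tilde{\mathbb{G}}_\epsilon(\lambda)\varphi=\int_\Gamma G_\epsilon(x,y;\lambda)\varphi(y)\,dy_2$ for $\mathrm{Im}\,\lambda>0$, and letting $\mathrm{Im}\,\lambda\to0^+$ while invoking the continuity of $\tilde{\mathbb{G}}_\epsilon$ up to $\mathbf{R}$ from Proposition~\ref{prop_G_eps_cont_analyticity} and the limiting-absorption definition of $G_\epsilon(\cdot,\cdot;\lambda_0)$ closes it again.

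The delicate point — where the whole statement really lives — is the orientation bookkeeping on the two arcs of $C_\epsilon$: one must verify that taking $\gamma_+$ \emph{below} $q_*$ and $\gamma_-$ \emph{above} $-q_*$ is exactly what forces both residue contributions to carry the physical sign $-\tfrac{i}{2}$, i.e. that this choice of $C_\epsilon$ restricts on the real axis to the \emph{outgoing} single-layer potential. This consistency is precisely the compatibility of Lemma~\ref{lem_analytic_domain}(2) (which tells on which side of each arc $q_{\pm,\epsilon}(\lambda+i0^+)$ lies) with the group-velocity signs $\lambda_{\mathfrak{n}_*,\epsilon}'(q_{+,\epsilon})>0>\lambda_{\mathfrak{n}_*,\epsilon}'(q_{-,\epsilon})$. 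Everything else — the convergence of the $n\neq\mathfrak{n}_*$ series, the interchange of $\int_\Gamma(\cdot)\,\varphi$ with the $p$-integral and with $\eta\to0^+$, and the holomorphic-piece-equals-chord step — is routine given the analyticity and gap statements already established.
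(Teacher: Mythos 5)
Your argument is correct and follows essentially the same route as the paper: both reduce to a contour-deformation computation showing that the integral over $C_\epsilon$ equals the Cauchy principal value on $[-\pi,\pi]$ plus the two half-residue terms at $q_{\pm,\epsilon}(\lambda)$, and both use the sign of $\lambda'_{\mathfrak{n}_*,\epsilon}(q_{\pm,\epsilon})$ to convert the residues into the $-\tfrac{i}{2}(\cdots)/|\lambda'_{\mathfrak{n}_*,\epsilon}|$ form. The only differences are presentational. First, the paper does not reprove the Sokhotski--Plemelj decomposition of $G_\epsilon(x,y;\lambda)$ for $\lambda$ real; it simply cites Theorem~6 of \cite{joly2016solutions} for that identity (your final paragraph essentially re-derives what is quoted there, and your alternative route --- deforming $[-\pi,\pi]$ onto $C_\epsilon$ for $\mathrm{Im}\,\lambda>0$ and taking $\mathrm{Im}\,\lambda\to0^+$ --- is a clean and valid substitute). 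Second, where you compute $\int_{\gamma_\pm}$ by splitting the integrand into a simple-pole part plus a holomorphic remainder (each handled separately over the fixed arcs of radius $|\epsilon|^{1/3}$ at $\pm q_*$), the paper instead deforms the \emph{whole} integrand from $C_\epsilon$ onto an auxiliary contour $\tilde C_{\tau,\lambda}$ whose semicircular indentations are centred at the actual roots $q_{\pm,\epsilon}(\lambda)$ and have radius $\tau\to0$, so that the residue emerges directly in the limit. The two bookkeeping schemes are equivalent, and your identification of the orientation--group-velocity compatibility (lower arc at $q_*$, upper arc at $-q_*$, both yielding $-\tfrac{i}{2}$) correctly captures the delicate point.
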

\begin{proof}
We first note that for $\lambda\in\mathcal{I}_{\epsilon}\cap\mathbf{R}$, the second equality in \eqref{eq_G_eps_cont_real} is already proved in \cite[Theorem 6]{joly2016solutions}:
\begin{equation} \label{eq_G_eps_I_II_detail}
\begin{aligned}
&\int_{\Gamma}G_{\epsilon}(x,y;\lambda)\varphi(y)dy_2
=\frac{1}{2\pi}\int_{-\pi}^{\pi}\sum_{n\neq \mathfrak{n_*}}\big(\mathbb{K}_{n,\epsilon}(p,\lambda)\varphi\big)dp
+\frac{1}{2\pi}p.v.\int_{-\pi}^{\pi}\big(\mathbb{K}_{\mathfrak{n}_{*},\epsilon}(p,\lambda)\varphi\big)dp \\
&\quad -
\Bigg(\frac{i\langle \varphi(\cdot),\overline{u_{\mathfrak{n}_*,\epsilon}(\cdot ;q_{+,\epsilon}(\lambda))} \rangle}{2|\lambda_{\mathfrak{n}_*,\epsilon}^{\prime}(q_{+,\epsilon}(\lambda))|}u_{\mathfrak{n}_*,\epsilon}(x ;q_{+,\epsilon}(\lambda))
+
\frac{i\langle \varphi(\cdot),\overline{u_{\mathfrak{n}_*,\epsilon}(\cdot ;q_{-,\epsilon}(\lambda))} \rangle}{2|\lambda_{\mathfrak{n}_*,\epsilon}^{\prime}(q_{-,\epsilon}(\lambda))|}u_{\mathfrak{n}_*,\epsilon}(x ;q_{-,\epsilon}(\lambda))
\Bigg) .
\end{aligned}
\end{equation}
Hence, comparing \eqref{eq_G_eps_continued} with \eqref{eq_G_eps_cont_real}, it's sufficient to prove the following identity for $\lambda\in\mathcal{I}_{\epsilon}\cap\mathbf{R}$:
\begin{equation} \label{eq_sec4_3}
\begin{aligned}
&\frac{1}{2\pi}\int_{C_{\epsilon}}
\big(\mathbb{K}_{\mathfrak{n}_{*},\epsilon}(p,\lambda)\varphi\big)dp 
=\frac{1}{2\pi}p.v.\int_{-\pi}^{\pi}\big(\mathbb{K}_{\mathfrak{n}_{*},\epsilon}(p,\lambda)\varphi\big)dp \\
& -
\Bigg(\frac{i\langle \varphi(\cdot),\overline{u_{\mathfrak{n}_*,\epsilon}(\cdot ;q_{+,\epsilon}(\lambda))} \rangle}{2|\lambda_{\mathfrak{n}_*,\epsilon}^{\prime}(q_{+,\epsilon}(\lambda))|}u_{\mathfrak{n}_*,\epsilon}(x ;q_{+,\epsilon}(\lambda))
+
\frac{i\langle \varphi(\cdot),\overline{u_{\mathfrak{n}_*,\epsilon}(\cdot ;q_{-,\epsilon}(\lambda))} \rangle}{2|\lambda_{\mathfrak{n}_*,\epsilon}^{\prime}(q_{-,\epsilon}(\lambda))|}u_{\mathfrak{n}_*,\epsilon}(x ;q_{-,\epsilon}(\lambda))
\Bigg).
\end{aligned}
\end{equation}
This follows from a standard analysis of complex integral as follows. We first introduce the following auxiliary contour (See Figure \ref{fig_integral_contour}(a))
\begin{equation*}
\begin{aligned}
\tilde{C}_{\tau,\lambda}:=&\big([-\pi,q_{-,\epsilon}(\lambda)-\tau]\cup[q_{-,\epsilon}(\lambda)+\tau,q_{+,\epsilon}(\lambda)-\tau]\cup [q_{+,\epsilon}(\lambda)+\tau,\pi] \big)\\
&\cup \{q_{-,\epsilon}(\lambda)+\tau e^{i\theta}:\pi\geq \theta\geq 0\}
\cup \{q_{+,\epsilon}(\lambda)+\tau e^{i\theta}:\pi\leq \theta\leq 2\pi\} \\
&:=\tilde{C}_{\tau,\lambda}^{(0)}\cup \tilde{C}_{\tau,\lambda}^{(-)}
\cup \tilde{C}_{\tau,\lambda}^{(+)} \quad \text{with}\, \tau\in(0,|\epsilon|^{\frac{1}{3}}).
\end{aligned}
\end{equation*}
Since $q_{\pm,\epsilon}(\lambda)$ are real for $\lambda\in\mathbf{R}$, the denominator of $\mathbb{K}_{\mathfrak{n}_{*},\epsilon}(p,\lambda)\varphi
=\frac{u_{\mathfrak{n}_*,\epsilon}(x;p)\langle \varphi(\cdot),\overline{u_{\mathfrak{n}_*,\epsilon}(\cdot ;\overline{p})} \rangle}{\lambda-\lambda_{\mathfrak{n}_*,\epsilon}(p)}$ is non-singular, and hence $p\mapsto \mathbb{K}_{\mathfrak{n}_{*},\epsilon}(p,\lambda)\varphi$ is analytic in the closed region bounded by $C_{\epsilon}$ and $\tilde{C}_{\tau,\lambda}$. The Cauchy theorem indicates that the integral is unchanged when we deform the contour within this region
\begin{equation*}
\begin{aligned}
\int_{C_{\epsilon}}
\big(\mathbb{K}_{\mathfrak{n}_{*},\epsilon}(p,\lambda)\varphi\big)dp 
=
\int_{\tilde{C}_{\tau,\lambda}}
\big(\mathbb{K}_{\mathfrak{n}_{*},\epsilon}(p,\lambda)\varphi\big)dp =\int_{\tilde{C}_{\tau,\lambda}^{(0)}\cup \tilde{C}_{\tau,\lambda}^{(-)}
\cup \tilde{C}_{\tau,\lambda}^{(+)}}
\big(\mathbb{K}_{\mathfrak{n}_{*},\epsilon}(p,\lambda)\varphi\big)dp.
\end{aligned}
\end{equation*}
Thus, to obtain \eqref{eq_sec4_3}, it's sufficient to prove
\begin{equation} \label{eq_sec4_4}
\begin{aligned}
&\lim_{\tau\to 0}\int_{\tilde{C}_{\tau,\lambda}^{(0)}\cup \tilde{C}_{\tau,\lambda}^{(-)}
\cup \tilde{C}_{\tau,\lambda}^{(+)}}
\big(\mathbb{K}_{\mathfrak{n}_{*},\epsilon}(p,\lambda)\varphi\big) dp
=p.v.\int_{-\pi}^{\pi}\big(\mathbb{K}_{\mathfrak{n}_{*},\epsilon}(p,\lambda)\varphi\big) dp \\
& -
\Bigg(i\pi\frac{\langle \varphi(\cdot),\overline{u_{\mathfrak{n}_*,\epsilon}(\cdot ;q_{+,\epsilon}(\lambda))} \rangle}{|\lambda_{\mathfrak{n}_*,\epsilon}^{\prime}(q_{+,\epsilon}(\lambda))|}u_{\mathfrak{n}_*,\epsilon}(x ;q_{+,\epsilon}(\lambda))
+i\pi
\frac{\langle \varphi(\cdot),\overline{u_{\mathfrak{n}_*,\epsilon}(\cdot ;q_{-,\epsilon}(\lambda))} \rangle}{|\lambda_{\mathfrak{n}_*,\epsilon}^{\prime}(q_{-,\epsilon}(\lambda))|}u_{\mathfrak{n}_*,\epsilon}(x ;q_{-,\epsilon}(\lambda))
\Bigg).
\end{aligned}
\end{equation}
Note that the definition of Cauchy principal value integral implies that
\begin{equation} \label{eq_sec4_5}
\lim_{\tau\to 0}\int_{\tilde{C}_{\tau,\lambda}^{(0)}}
\big(\mathbb{K}_{\mathfrak{n}_{*},\epsilon}(p,\lambda)\varphi\big) dp
=p.v.\int_{-\pi}^{\pi}\big(\mathbb{K}_{\mathfrak{n}_{*},\epsilon}(p,\lambda)\varphi\big) dp.
\end{equation}
On the other hand, since $\lambda^{\prime}_{\mathfrak{n}_*,\epsilon}(q_{+,\epsilon}(\lambda))\neq 0$ by Theorem \ref{thm_perturbed_fold_dispersion}, $q_{+,\epsilon}(\lambda)$ is a simple pole of the map $p\mapsto \mathbb{K}_{\mathfrak{n}_{*},\epsilon}(p,\lambda)\varphi=\frac{u_{\mathfrak{n}_*,\epsilon}(x;p)\langle \varphi(\cdot),\overline{u_{\mathfrak{n}_*,\epsilon}(\cdot ;\overline{p})} \rangle}{\lambda-\lambda_{\mathfrak{n}_*,\epsilon}(p)}$ (marked as the black cross in Figure \ref{fig_integral_contour}). Therefore, the residue formula gives
\begin{equation*}
\lim_{\tau\to 0}\int_{\tilde{C}_{\tau,\lambda}^{(+)}}
\big(\mathbb{K}_{\mathfrak{n}_{*},\epsilon}(p,\lambda)\varphi\big) dp
=-i\pi\frac{\langle \varphi(\cdot),\overline{u_{\mathfrak{n}_*,\epsilon}(\cdot ;q_{+,\epsilon}(\lambda))} \rangle}{|\lambda_{\mathfrak{n}_*,\epsilon}^{\prime}(q_{+,\epsilon}(\lambda))|}u_{\mathfrak{n}_*,\epsilon}(x ;q_{+,\epsilon}(\lambda)).
\end{equation*}
Similarly,
\begin{equation} \label{eq_sec4_7}
\lim_{\tau\to 0}\int_{\tilde{C}_{\tau,\lambda}^{(-)}}
\big(\mathbb{K}_{\mathfrak{n}_{*},\epsilon}(p,\lambda)\varphi\big) dp 
=-i\pi\frac{\langle \varphi(\cdot),\overline{u_{\mathfrak{n}_*,\epsilon}(\cdot ;q_{-,\epsilon}(\lambda))} \rangle}{|\lambda_{\mathfrak{n}_*,\epsilon}^{\prime}(q_{-,\epsilon}(\lambda))|}u_{\mathfrak{n}_*,\epsilon}(x ;q_{-,\epsilon}(\lambda)).
\end{equation}
Combining \eqref{eq_sec4_5}-\eqref{eq_sec4_7}, we obtain \eqref{eq_sec4_4} and complete the proof. 
\end{proof}

\begin{proposition} \label{prop_G_eps_cont_jump+radiation}
For $\lambda\in \mathcal{I}_{\epsilon}$ and $\varphi\in \tilde{H}^{-\frac{1}{2}}(\Gamma)$, we define $u(x;\lambda)=(\tilde{\mathbb{G}}_{\epsilon}(\lambda)\varphi)(x)$ ($x\in\Omega$). Then 
\begin{equation} \label{eq_G_eps_cont_helmholtz}
(\mathcal{L}_{\epsilon}-\lambda)u(x)=0,\quad x\in\Omega^{+}\cup \Omega^{-},
\end{equation}
and $u(x;\lambda)$ is reflection symmetric with a nonzero jump of its Neumann trace across the interface {
\begin{equation} \label{eq_G_eps_cont_even}
    u(x;\lambda)=u(\mathcal{P}x;\lambda),
\end{equation}
\begin{equation} \label{eq_G_eps_cont_jump}
\Big(\frac{\partial u}{\partial x_1}\Big)\Big|_{\Gamma_+}:=\lim_{x_1\to 0^+}\frac{\partial u}{\partial x_1}=\frac{\varphi}{2},\quad
\Big(\frac{\partial u}{\partial x_1}\Big)\Big|_{\Gamma_-}:=\lim_{x_1\to 0^-}\frac{\partial u}{\partial x_1}=-\frac{\varphi}{2}.
\end{equation}
Moreover, $u(x;\lambda)$ admits the following asymptotics at infinity
\begin{equation} \label{eq_G_eps_cont_asymptotic}
\lim_{x_1\to \pm\infty}\Big|u(x;\lambda)
-\frac{-i\langle \varphi(\cdot),\overline{u_{\mathfrak{n}_*,\epsilon}(\cdot ;\overline{q_{\pm,\epsilon}(\lambda)})} \rangle}{\lambda^{\prime}_{\mathfrak{n}_*,\epsilon}(q_{\pm,\epsilon}(\lambda))}u_{\mathfrak{n}_*,\epsilon}(x ;q_{\pm,\epsilon}(\lambda))\Big|=0,
\end{equation}
where the convergence rate is exponential.}
\end{proposition}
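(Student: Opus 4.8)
The plan is to deduce all four assertions from the explicit structure of the kernel of $\tilde{\mathbb{G}}_\epsilon(\lambda)$ viewed as a single-layer potential. Write $\mathcal{G}_\epsilon(x,y;\lambda)$ for the function on $\Omega\times\Omega$ obtained from \eqref{eq_G_eps_continued} by replacing $\langle\varphi(\cdot),\overline{u_{\mathfrak{n}_*,\epsilon}(\cdot;\overline p)}\rangle$ with $\overline{u_{\mathfrak{n}_*,\epsilon}(y;\overline p)}$ and $\langle\varphi(\cdot),\overline{u_{n,\epsilon}(\cdot;p)}\rangle$ with $\overline{u_{n,\epsilon}(y;p)}$, so that $u(x;\lambda)=\int_\Gamma \mathcal{G}_\epsilon(x,y;\lambda)\varphi(y)\,dy_2$. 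The observation that organizes everything is that, for fixed $x,y\in\Omega$, the map $p\mapsto u_{\mathfrak{n}_*,\epsilon}(x;p)\,\overline{u_{\mathfrak{n}_*,\epsilon}(y;\overline p)}$ carries no pole: it is holomorphic on all of the domain $D_{\epsilon,\nu}$ of Lemma \ref{lem_analytic_domain} (which is invariant under conjugation). Hence, once the factor $(\lambda-\lambda_{\mathfrak{n}_*,\epsilon}(p))^{-1}$ is removed, the contour $C_\epsilon$ may be deformed back to $[-\pi,\pi]$ at no cost, and the only place where $C_\epsilon$ really matters is in the immediate vicinity of the two roots $q_{\pm,\epsilon}(\lambda)$ of $\lambda_{\mathfrak{n}_*,\epsilon}(p)=\lambda$.

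With this in hand, \eqref{eq_G_eps_cont_helmholtz} is obtained by differentiating under the integral sign (legitimate since $x\in\Omega^{+}$ and $y\in\Gamma$ are separated), applying $(\mathcal{L}_\epsilon-\lambda)u_{n,\epsilon}(\cdot;p)=(\lambda_{n,\epsilon}(p)-\lambda)u_{n,\epsilon}(\cdot;p)$ termwise so that $(\lambda-\lambda_{n,\epsilon}(p))^{-1}$ cancels, deforming the $\mathfrak{n}_*$-term's contour $C_\epsilon$ back to $[-\pi,\pi]$, and recognizing that the result equals $-\tfrac1{2\pi}\int_{-\pi}^{\pi}\sum_{n\ge1}u_{n,\epsilon}(x;p)\overline{u_{n,\epsilon}(y;p)}\,dp=-\tfrac1{n_\epsilon^2(x)}\delta(x-y)$ by the completeness of the Bloch modes. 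One concludes $(\mathcal{L}_\epsilon-\lambda)_x\mathcal{G}_\epsilon(x,y;\lambda)=-\tfrac1{n_\epsilon^2(x)}\delta(x-y)$ for every $\lambda\in\mathcal{I}_\epsilon$ — hence \eqref{eq_G_eps_cont_helmholtz} off the diagonal, together with the Neumann condition inherited from $u_{n,\epsilon}(\cdot;p)\in H^1_{p,b}(\Omega)$ — and that $\mathcal{G}_\epsilon(\cdot,y;\lambda)$ is a fundamental solution of $\tfrac1{n_\epsilon^2}\Delta+\lambda$ up to a smooth remainder. Next, \eqref{eq_G_eps_cont_even} follows from the reflection symmetry $n_\epsilon=\mathcal{P}n_\epsilon$ of Assumption \ref{assump_perturbation}(2): it gives $\lambda_{n,\epsilon}(-p)=\lambda_{n,\epsilon}(p)$ and $u_{n,\epsilon}(\mathcal{P}x;p)\sim u_{n,\epsilon}(x;-p)$, so that in $\mathcal{G}_\epsilon(\mathcal{P}x,y;\lambda)$ — using $\mathcal{P}y=y$ for $y\in\Gamma$ and that $C_\epsilon$ is carried to itself by $p\mapsto-p$ — the unimodular factors cancel and the change of variables $p\mapsto-p$ returns $\mathcal{G}_\epsilon(x,y;\lambda)$. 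Finally \eqref{eq_G_eps_cont_jump} is the classical jump relation of the single-layer potential of a fundamental solution of a second-order elliptic operator ($\partial_{x_1}u$ has jump $\varphi$ across $\Gamma$), combined with $\partial_{x_1}u(0^{+},\cdot)=-\partial_{x_1}u(0^{-},\cdot)$ forced by \eqref{eq_G_eps_cont_even}; this pins the normalization $\partial_{x_1}u|_\Gamma=\varphi/2$, consistent with \eqref{eq-repre-u}.

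The core of the proof is \eqref{eq_G_eps_cont_asymptotic}. I would split $u(x;\lambda)=u^{(\mathfrak{n}_*)}(x;\lambda)+u^{(\perp)}(x;\lambda)$ into the $C_\epsilon$-integral and the $\sum_{n\neq\mathfrak{n}_*}$ part of \eqref{eq_G_eps_continued}. For $u^{(\perp)}$, Lemma \ref{lem_analytic_domain}(3) makes its integrand holomorphic and $2\pi$-periodic in $p$ on the strip $\{|Im\, p|<\nu\}$; deforming to $Im\, p=\delta$ for a fixed $\delta\in(0,\nu)$ and using $|u_{n,\epsilon}(x;p)|\lesssim e^{-\delta x_1}\|u_{n,\epsilon}(\cdot;p)\|_{H^2(Y)}$ for $x\in\Omega^{+}$ (quasi-periodicity plus $H^2(Y)\hookrightarrow C(\overline{Y})$ in two dimensions, the norms being uniformly bounded on compacta of the strip) gives $u^{(\perp)}(x;\lambda)=O(e^{-\delta x_1})$. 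For $u^{(\mathfrak{n}_*)}$, I would deform $C_\epsilon$ upward inside $D_{\epsilon,\nu}$ to a contour of height of order $\min(\nu,|\epsilon|^{1/3})$; the location statements of Lemma \ref{lem_analytic_domain}(2) ensure that the only pole of the $\mathfrak{n}_*$-integrand swept over is the simple pole at $q_{+,\epsilon}(\lambda)$ — for either sign of $Im\,\lambda$ the root $q_{-,\epsilon}(\lambda)$ near $-q_*$ stays on the side of the upper indentation of $C_\epsilon$ that the deformation does not cross — and, since $\lambda_{\mathfrak{n}_*,\epsilon}'(q_{+,\epsilon}(\lambda))\neq0$ by Theorem \ref{thm_perturbed_fold_dispersion}, the residue is exactly the Bloch-mode term appearing in \eqref{eq_G_eps_cont_asymptotic}, the remaining contour integral being again $O(e^{-hx_1})$ for some $h>0$. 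Adding the two pieces yields \eqref{eq_G_eps_cont_asymptotic} with exponential rate. As a consistency check, for $\lambda\in\mathcal{I}_\epsilon\cap\mathbf{R}$ this also drops out of Proposition \ref{prop_G_eps_cont_real} and the far-field decomposition of $G_\epsilon(x,y;\lambda)$ analogous to \eqref{eq_G_right_decomposition} — on the right only the right-going mode $u_{\mathfrak{n}_*,\epsilon}(\cdot;q_{+,\epsilon}(\lambda))$ persists, because $\lambda$ lies in the local gap opened at $p=0$ — together with $|\lambda_{\mathfrak{n}_*,\epsilon}'(q_{+,\epsilon}(\lambda))|=\lambda_{\mathfrak{n}_*,\epsilon}'(q_{+,\epsilon}(\lambda))>0$.

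The hard part is this last deformation: one must keep the deformed contour inside the thin analyticity region $D_{\epsilon,\nu}$ of Lemma \ref{lem_analytic_domain} (whose width $\nu(\epsilon)$ degenerates as $\epsilon\to0$), collect the residue at $q_{+,\epsilon}(\lambda)$ and no other, and make the exponential tail estimates uniform in $x_1$ and locally uniform in $\lambda\in\mathcal{I}_\epsilon$. The point requiring the most care is the bookkeeping of which of $q_{\pm,\epsilon}(\lambda)$ lies on which side of $C_\epsilon$, which flips with the sign of $Im\,\lambda$ through Lemma \ref{lem_analytic_domain}(2); everything else is a matter of routine but careful estimates.
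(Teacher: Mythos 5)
Your proposal is correct, and for the central assertion \eqref{eq_G_eps_cont_asymptotic} it follows essentially the paper's argument: split $\tilde{\mathbb{G}}_\epsilon(\lambda)\varphi$ into the $C_\epsilon$-term and the $n\neq\mathfrak{n}_*$-sum, deform $C_\epsilon$ upward to an evanescent contour (the paper's $C_{\epsilon,\nu}^{evan}$, which you correctly describe as having height constrained by $\min(\nu,|\epsilon|^{1/3})$ and riding along the $|\epsilon|^{1/3}$-radius arcs near $\pm q_*$), collect the single residue at $q_{+,\epsilon}(\lambda)$ (with $q_{-,\epsilon}(\lambda)$ staying inside $B(-q_*,|\epsilon|^{1/3})$ and hence outside the swept region for either sign of $\mathrm{Im}\,\lambda$), and verify exponential decay of the remaining contour integral via quasi-periodicity on the shifted contour. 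That part matches the paper's proof, which invokes Theorem~7 of \cite{joly2016solutions} for the decay estimate.

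For the first three assertions \eqref{eq_G_eps_cont_helmholtz}--\eqref{eq_G_eps_cont_jump} your route is genuinely different from the paper's. The paper proves each of these only for $\lambda\in\mathcal{I}_\epsilon\cap\mathbf{R}$, where Proposition~\ref{prop_G_eps_cont_real} identifies $\tilde{\mathbb{G}}_\epsilon(\lambda)\varphi$ with the honest layer potential $\int_\Gamma G_\epsilon(\cdot,y;\lambda)\varphi(y)\,dy_2$ (so classical arguments — or their analogues in Lemmas~2.4 and~2.5 of \cite{qiu2023mathematical} — apply), and then extends to all of $\mathcal{I}_\epsilon$ by observing that each side of the identity is analytic in $\lambda$. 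You instead work directly on the deformed integral: apply $(\mathcal{L}_\epsilon-\lambda)$ under the integral sign, let the factor $\lambda-\lambda_{n,\epsilon}(p)$ cancel, push $C_\epsilon$ back to $[-\pi,\pi]$ by holomorphy of the residue-free integrand, and invoke Bloch completeness to recover the delta kernel; obtain \eqref{eq_G_eps_cont_even} by the substitution $p\mapsto-p$ (using that $C_\epsilon$ is symmetric under $p\mapsto-p$ and that the unimodular phases in $u_{n,\epsilon}(\mathcal{P}x;p)\sim u_{n,\epsilon}(x;-p)$ cancel in the product $u\bar u$); and deduce \eqref{eq_G_eps_cont_jump} from the distributional identity $\Delta u=\varphi(x_2)\delta(x_1)-\lambda n_\epsilon^2 u$ plus the oddness of $\partial_{x_1}u$ in $x_1$. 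Both routes are valid. The paper's route is shorter because it reuses Proposition~\ref{prop_G_eps_cont_real} and pushes all the technical work to real $\lambda$; yours is more self-contained and makes it transparent that the deformed contour is harmless for the short-range (singular) behavior of the kernel, at the cost of handling the distributional convergence of the Bloch completeness relation and the deformed-contour bookkeeping directly. One small caution: your jump argument relies on first knowing $(\mathcal{L}_\epsilon-\lambda)u=-\tfrac{1}{n_\epsilon^2}\varphi(x_2)\delta(x_1)$ across $\Gamma$ (not just in $\Omega^+$), which is what your Step~1 actually establishes once one also interprets the formula in the distributional sense up to $\Gamma$; it is worth stating explicitly that this is the order of deductions.
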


\begin{proof}
We first prove \eqref{eq_G_eps_cont_helmholtz}. Note that the map $\lambda\mapsto (\mathcal{L}_{\epsilon}-\lambda)(\tilde{\mathbb{G}}_{\epsilon}(\lambda)\varphi)$ is analytical in $\mathcal{I}_{\epsilon}$ for each fixed $\varphi \in \tilde{H}^{-\frac{1}{2}}(\Gamma)$. It is sufficient to prove
\eqref{eq_G_eps_cont_helmholtz} for all $\lambda \in \mathcal{I}_{\epsilon}\cap \mathbf{R}$. Indeed, for $\lambda \in \mathcal{I}_{\epsilon}\cap \mathbf{R}$, Proposition \ref{prop_G_eps_cont_real} shows that
\begin{equation*}
(\mathcal{L}_{\epsilon}-\lambda)(\tilde{\mathbb{G}}_{\epsilon}(\lambda)\varphi)(x)
=\int_{\Gamma}(\mathcal{L}_{\epsilon}-\lambda)G_{\epsilon}(x,y;\lambda)\varphi(y)dy_2
=\frac{1}{n^2_{\epsilon}(x)}\varphi(x_2)\delta(x_1),
\end{equation*}
where $\delta(x_1)$ denotes the Dirac-delta function. Therefore, \eqref{eq_G_eps_cont_helmholtz} follows. 

Second, \eqref{eq_G_eps_cont_even} and \eqref{eq_G_eps_cont_jump} follow the same strategy. Note that the case for $\lambda \in \mathcal{I}_{\epsilon}\cap \mathbf{R}$ can be proved using the same approach as that of Lemmas 2.4 and 2.5 in \cite{qiu2023mathematical}, respectively. By analytical extension, we obtain \eqref{eq_G_eps_cont_even} and \eqref{eq_G_eps_cont_jump} for $\lambda \in \mathcal{I}_{\epsilon}$.

Finally, we prove \eqref{eq_G_eps_cont_asymptotic} for the case $x_1\to\infty$; the proof for $x_1\to -\infty$ is similar.  We introduce the following auxiliary contour (See Figure \ref{fig_integral_contour}(b))
\begin{equation*}
\begin{aligned}
C_{\epsilon,\nu}^{evan}:=&
\big([-\pi,-q_{*}-|\epsilon|^{\frac{1}{3}}\cos(\theta_{\epsilon}) ]+i\nu\big)
\cup \big([-q_{*}+|\epsilon|^{\frac{1}{3}}\cos(\theta_{\epsilon}),q_{*}-|\epsilon|^{\frac{1}{3}}\cos(\theta_{\epsilon}) ]+i\nu \big) \\
&\cup \big([q_{*}+|\epsilon|^{\frac{1}{3}}\cos(\theta_{\epsilon}),\pi]+i\nu \big) \\
&\cup \{-q_*+|\epsilon|^{\frac{1}{3}} e^{i\theta}:\pi-\theta_{\epsilon}\geq \theta\geq \theta_{\epsilon}\}
\cup \{q_*+|\epsilon|^{\frac{1}{3}} e^{i\theta}:\pi-\theta_{\epsilon}\geq \theta\geq \theta_{\epsilon}\} \\
&\cup \{-\pi+it:0\leq t\leq \nu\}\cup \{\pi+it:\nu\geq t\geq 0\},
\end{aligned}
\end{equation*}
where $\theta_{\epsilon}:=\sin^{-1}(\frac{\nu}{|\epsilon|^{\frac{1}{3}}})$. By following the proof of \cite[Theorem 7]{joly2016solutions}, we can show that the following function decays exponentially as $x_1 \to +\infty$:
\begin{equation} \label{eq_sec4_9}
w^{evan,+}(x):=\frac{1}{2\pi}\int_{C_{\epsilon,\nu}^{evan}}
\big(\mathbb{K}_{\mathfrak{n}_{*},\epsilon}(p,\lambda)\varphi\big)(x)dp
+
\frac{1}{2\pi}\int_{0}^{2\pi}
\sum_{n\neq \mathfrak{n_*}}\big(\mathbb{K}_{n,\epsilon}(p,\lambda)\varphi\big)(x)dp.
\end{equation}
On the other hand, when we shift the integral of $\mathbb{K}_{\mathfrak{n}_{*},\epsilon}(p,\lambda)\varphi$ from $C_{\epsilon}$ to $C_{\epsilon,\nu}^{evan}$, it passes through a simple pole of the integrand, which lies at $p=q_{+,\epsilon}(\lambda)$ (marked as the cross in Figure \ref{fig_integral_contour}(b)). The residue formula yields that
\begin{equation}  \label{eq_sec4_10}
\begin{aligned}
\frac{1}{2\pi}\int_{C_{\epsilon}}
\big(\mathbb{K}_{\mathfrak{n}_{*},\epsilon}(p,\lambda)\varphi\big)dp
=\frac{1}{2\pi}\int_{C_{\epsilon,\nu}^{evan}}
\big(\mathbb{K}_{\mathfrak{n}_{*},\epsilon}(p,\lambda)\varphi\big)dp
-\frac{i\langle \varphi(\cdot),\overline{u_{\mathfrak{n}_*,\epsilon}(\cdot ;\overline{q_{+,\epsilon}(\lambda)})} \rangle}{\lambda^{\prime}_{\mathfrak{n}_*,\epsilon}(q_{+,\epsilon}(\lambda))}u_{\mathfrak{n}_*,\epsilon}(x ;q_{+,\epsilon}(\lambda)).
\end{aligned}
\end{equation}
By \eqref{eq_sec4_9}, \eqref{eq_sec4_10} and \eqref{eq_G_eps_continued}, we have
\begin{equation*}
u(x;\lambda)=w^{evan,+}(x)-\frac{i\langle \varphi(\cdot),\overline{u_{\mathfrak{n}_*,\epsilon}(\cdot ;\overline{q_{+,\epsilon}(\lambda)})} \rangle}{\lambda^{\prime}_{\mathfrak{n}_*,\epsilon}(q_{+,\epsilon}(\lambda))}u_{\mathfrak{n}_*,\epsilon}(x ;q_{+,\epsilon}(\lambda)).
\end{equation*}
This proves \eqref{eq_G_eps_cont_asymptotic} by noting that $w^{evan,+}(x)$ decays exponentially as $x_1\to +\infty$.
\end{proof}

\begin{proposition}\label{corollary_G_eps_decay_blow}
Let $u(x;\lambda)$ be defined as in Proposition \ref{prop_G_eps_cont_jump+radiation}. If $\Im(\lambda)>0$, then $u(x;\lambda)$ decays exponentially as $x_1\to \pm\infty$. If $\Im(\lambda)\leq 0$, then $u(x;\lambda)$ decays exponentially as $x_1\to \pm\infty$ if and only if
\begin{equation} \label{eq_criteria_interface_mode}
    \langle \varphi(\cdot),\overline{u_{\mathfrak{n}_*,\epsilon}(\cdot ;\overline{q_{\pm,\epsilon}(\lambda)})} \rangle=0.
\end{equation}
\end{proposition} 
\begin{proof}
We only prove that if $\Im(\lambda)<0$, then
$u(x;\lambda)$ decays exponentially as $x_1\to +\infty$ if and only if $\langle \varphi(\cdot),\overline{u_{\mathfrak{n}_*,\epsilon}(\cdot ;\overline{q_{+,\epsilon}(\lambda)})} \rangle = 0$. The other cases can be treated similarly. By Proposition \ref{prop_G_eps_cont_jump+radiation}, the asymptotic limit of $u(x;\lambda)$ as $x_1\to +\infty$ is given by
\begin{equation*}
-i\frac{\langle \varphi(\cdot),\overline{u_{\mathfrak{n}_*,\epsilon}(\cdot ;\overline{q_{+,\epsilon}(\lambda)})} \rangle}{\lambda^{\prime}_{\mathfrak{n}_*,\epsilon}(q_{+,\epsilon}(\lambda))}u_{\mathfrak{n}_*,\epsilon}(x ;q_{+,\epsilon}(\lambda)).
\end{equation*}
Note that $u_{\mathfrak{n}_*,\epsilon}(x ;q_{+}(\lambda))$ satisfies the following quasi-periodic boundary condition
\begin{equation} \label{eq_sec4_14}
u_{\mathfrak{n}_*,\epsilon}(x+\bm{e}_1 ;q_{+,\epsilon}(\lambda))=e^{iq_{+,\epsilon}(\lambda)}u_{\mathfrak{n}_*,\epsilon}(x ;q_{+,\epsilon}(\lambda)).
\end{equation}
Since $q_{+,\epsilon}(\lambda)$ has a negative imaginary part for $\Im(\lambda)<0$ by Lemma \ref{lem_analytic_domain}, \eqref{eq_sec4_14} implies that $u_{\mathfrak{n}_*,\epsilon}(x ;q_{+,\epsilon}(\lambda))$ blows up exponentially as $x_1\to +\infty$. Consequently, $u(x;\lambda)$ decays exponentially as $x_1\to +\infty$ if and only if $\langle \varphi(\cdot),\overline{u_{\mathfrak{n}_*,\epsilon}(\cdot ;\overline{q_{+,\epsilon}(\lambda)})} \rangle = 0$.
\end{proof}

\subsection{Boundary-integral formulation for the mode bifurcated from the Dirac point}

We construct a solution to \eqref{eq_joint_system} in the form
\begin{equation} \label{eq_u_left_right_decomposition}
    u(x_1,x_2;\lambda):=\left\{
    \begin{aligned}
        &(\tilde{\mathbb{G}}_{\epsilon}(\lambda)\varphi)(x), \,\, x_1>0, \\
        &-(\tilde{\mathbb{G}}_{-\epsilon}(\lambda)\varphi)(x), \,\, x_1<0, 
    \end{aligned}
    \right.
\end{equation}
for some $\varphi\in\tilde{H}^{-\frac{1}{2}}(\Gamma)$. By Proposition \ref{prop_G_eps_cont_jump+radiation},
\begin{equation*}
(\mathcal{L}_{\epsilon}-\lambda)u=0 \quad (x_1>0)\quad
\text{and} \quad
(\mathcal{L}_{-\epsilon}-\lambda)u=0 \quad (x_1<0).
\end{equation*}
Note that we require the Dirichlet and Neumann data of $u$ to be continuous across the interface:
\begin{equation} \label{eq_cont_dirichlet}
    u(0^-,x_2;\lambda)=u(0^+,x_2;\lambda),
\end{equation}
\begin{equation} \label{eq_cont_neumann}
    \frac{\partial u}{\partial x_1}(0^-,x_2;\lambda)=\frac{\partial u}{\partial x_1}(0^+,x_2;\lambda).
\end{equation}
By \eqref{eq_G_eps_cont_jump}, condition \eqref{eq_cont_neumann} holds for the constructed solution \eqref{eq_u_left_right_decomposition}. By substituting \eqref{eq_u_left_right_decomposition} into \eqref{eq_cont_dirichlet}, we obtain the following boundary integral equation
\begin{equation} \label{eq_interface_match}
\mathbb{G}_{\epsilon}^{\Gamma}(\lambda)\varphi
:=\big(\tilde{\mathbb{G}}_{\epsilon}(\lambda)\varphi+\tilde{\mathbb{G}}_{-\epsilon}(\lambda)\varphi\big)\big|_{\Gamma}=0.
\end{equation}
Thus, the function $u(x;\lambda)$ defined in \eqref{eq_u_left_right_decomposition} is a $H^1_{loc}(\Omega)-$solution of \eqref{eq_joint_system} only if $\lambda\in\mathcal{I}_{\epsilon}$ is a characteristic value of $\mathbb{G}_{\epsilon}^{\Gamma}(\lambda)\in \mathcal{B}(\tilde{H}^{-\frac{1}{2}}(\Gamma), H^{\frac{1}{2}}(\Gamma))$. Conversely, when \eqref{eq_interface_match} holds, $u(x;\lambda)$ in \eqref{eq_u_left_right_decomposition} solves \eqref{eq_joint_system}. Hence, the eigenvalue problem \eqref{eq_joint_system} is equivalent to the characteristic value problem \eqref{eq_interface_match}. In addition, the following holds:
\begin{proposition} \label{prop_charac_to_resonance}
Let $\lambda\in \mathcal{I}_{\epsilon}$ be a characteristic value of $\mathbb{G}_{\epsilon}^{\Gamma}(\lambda)$ and $\varphi\in \tilde{H}^{-\frac{1}{2}}(\Gamma)$ the associated eigenvector for \eqref{eq_interface_match}. Then the function $u(x;\lambda)$ defined by \eqref{eq_u_left_right_decomposition} satisfies the properties in Theorem \ref{thm_main result}.
\end{proposition}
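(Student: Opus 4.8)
The plan is to read the far-field form of $u$ off the radiation expansion in Proposition~\ref{prop_G_eps_cont_jump+radiation} and then play it against an energy identity obtained by Green's formula on the truncated waveguides $\Omega_R:=\Omega\cap((-R,R)\times\mathbf{R})$. First I would collect what is already proved: since $\varphi$ solves \eqref{eq_interface_match}, the Dirichlet traces of the two pieces of $u$ agree on $\Gamma$, and by \eqref{eq_G_eps_cont_jump} so do the Neumann traces, so $u\in H^1_{loc}(\Omega)$ solves \eqref{eq_joint_system}; moreover $(\partial_{x_1}u)|_{\Gamma}=\varphi/2\neq 0$, so $u\not\equiv 0$. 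By \eqref{eq_G_eps_cont_asymptotic} there is a constant $c_+$, given by $c_+=-i\langle\varphi,\overline{u_{\mathfrak{n}_*,\epsilon}(\cdot;\overline{q_{+,\epsilon}(\lambda)})}\rangle / \lambda_{\mathfrak{n}_*,\epsilon}^{\prime}(q_{+,\epsilon}(\lambda))$, such that $u(x;\lambda)-c_+\,u_{\mathfrak{n}_*,\epsilon}(x;q_{+,\epsilon}(\lambda))$ decays exponentially as $x_1\to+\infty$; applying the evenness \eqref{eq_G_eps_cont_even} together with \eqref{eq_G_eps_cont_asymptotic} to $\tilde{\mathbb{G}}_{-\epsilon}(\lambda)\varphi$ produces, by reflection, a constant $c_-$ and a Bloch mode $w_-(x)=(\mathcal{P}u_{\mathfrak{n}_*,-\epsilon})(x;q_{+,-\epsilon}(\lambda))$ such that $u(x;\lambda)-c_-w_-(x)$ decays exponentially as $x_1\to-\infty$; elliptic regularity upgrades this to exponential convergence of the relevant traces on $\Gamma_{\pm R}:=\Omega\cap(\{\pm R\}\times\mathbf{R})$.

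Next I would establish the energy identity. Applying Green's second identity to $u$ and $\overline{u}$ on $\Omega_R$, using the Neumann condition on $\partial\Omega$ (which annihilates the lateral boundary term) and $\Delta u=-\lambda n_\star^2 u$ (with $n_\star=n_{\epsilon}$ on $\Omega^{\text{right}}$ and $n_\star=n_{-\epsilon}$ on $\Omega^{\text{left}}$), one obtains
\[
Im(\lambda)\int_{\Omega_R}n_\star^2|u|^2\,dx=Im\Big(\int_{\Gamma_{-R}}\overline{u}\,\partial_{x_1}u\Big)-Im\Big(\int_{\Gamma_{R}}\overline{u}\,\partial_{x_1}u\Big).
\]
I then want the limit of the right-hand side as $R\to\infty$. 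By the perturbed analogue of the flux identity of Lemma~\ref{lemma_energy_flux} (proved as there, by differentiating the eigenvalue relation), $Im\,\mathfrak{q}(u_{\mathfrak{n}_*,\epsilon}(\cdot;p),u_{\mathfrak{n}_*,\epsilon}(\cdot;p))$ has the sign of $\lambda_{\mathfrak{n}_*,\epsilon}^{\prime}(p)$ for real $p$; and by Assumption~\ref{assump_dirac_points}(3), the reflection symmetry $\mu_{m_*}(p)=\mu_{n_*}(-p)$, and Theorem~\ref{thm_perturbed_fold_dispersion}, we have $\lambda_{\mathfrak{n}_*,\epsilon}^{\prime}(q_{+,\epsilon}(\lambda))>0$ and $\lambda_{\mathfrak{n}_*,-\epsilon}^{\prime}(q_{+,-\epsilon}(\lambda))>0$ for $|\epsilon|$ small, while a reflection shows the flux through $\Gamma_{-R}$ carries the opposite sign. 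Hence, when $Im(\lambda)=0$ (so $q_{\pm}$ are real, $u_{\mathfrak{n}_*,\pm\epsilon}(\cdot;q_{\pm})$ genuinely propagate and their fluxes are $R$-independent), the right-hand side converges to $-f_+|c_+|^2-f_-|c_-|^2$ with $f_{\pm}>0$.

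The conclusions then follow case by case. If $Im(\lambda)>0$, Lemma~\ref{lem_analytic_domain} gives $Im(q_{+,\epsilon}(\lambda))>0$ and $Im(q_{+,-\epsilon}(\lambda))>0$, so $u$ decays exponentially at both ends (cf. Proposition~\ref{corollary_G_eps_decay_blow}); then in the identity the right-hand side tends to $0$ while the left-hand side tends to $Im(\lambda)\int_{\Omega}n_\star^2|u|^2>0$, a contradiction, so $Im(\lambda)\le 0$. If $Im(\lambda)=0$, the identity forces $f_+|c_+|^2+f_-|c_-|^2=0$, hence $c_+=c_-=0$ and $u$ decays exponentially as $|x_1|\to\infty$; thus $u\in H^1(\Omega)$ is a nonzero eigenfunction of $\mathcal{L}^{\star}_{\epsilon}$, i.e. an interface mode, and since $\lambda\in\mathcal{I}_{\epsilon}\cap\mathbf{R}$ lies in $\sigma(\mathcal{L}_{\pm\epsilon})$ because no global band gap is opened near $\lambda_*$ (Theorem~\ref{thm_perturbed_fold_dispersion}), $\lambda$ is an embedded eigenvalue. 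Finally, if $Im(\lambda)<0$, then $c_+=c_-=0$ would, by the same argument, yield $u\in L^2(\Omega)$ and force $Im(\lambda)=0$, a contradiction; so $c_+\neq 0$ or $c_-\neq 0$, and since $Im(q_{+,\epsilon}(\lambda))<0$ (resp. $Im(q_{+,-\epsilon}(\lambda))<0$) the mode $u_{\mathfrak{n}_*,\epsilon}(\cdot;q_{+,\epsilon}(\lambda))$ (resp. $w_-$) grows exponentially, so $u$ grows exponentially along $\Omega^{\text{right}}$ or $\Omega^{\text{left}}$ and $\|u\|_{L^2(\Omega)}=\infty$; hence $u$ is a resonant mode. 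Together with the fact that $u$ solves \eqref{eq_joint_system}, this is exactly the content of Theorem~\ref{thm_main result}.

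The step I expect to be the crux is the far-field analysis underlying the energy balance: transporting the one-sided radiation estimate \eqref{eq_G_eps_cont_asymptotic} to $x_1\to-\infty$ through the evenness \eqref{eq_G_eps_cont_even} of $\tilde{\mathbb{G}}_{-\epsilon}$, establishing the perturbed flux identity with the correct sign, and pinning down the signs of the group velocities $\lambda_{\mathfrak{n}_*,\pm\epsilon}^{\prime}(q_{+,\pm\epsilon}(\lambda))$ near $p=q_*$ — all while integrating only over the truncated domains $\Omega_R$, since a priori $u$ need not belong to $L^2(\Omega)$.
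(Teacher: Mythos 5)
Your proposal is correct and reaches the same conclusions, but takes a genuinely different route in the crucial case $Im(\lambda)=0$. The paper proves the vanishing of the coupling coefficients $\langle\varphi,\overline{u_{\mathfrak{n}_*,\pm\epsilon}(\cdot;q_{\pm,\pm\epsilon}(\lambda))}\rangle$ purely algebraically: it substitutes the real-axis spectral representation of $\tilde{\mathbb{G}}_{\pm\epsilon}(\lambda)$ from Proposition~\ref{prop_G_eps_cont_real} into $\mathbb{G}^{\Gamma}_{\epsilon}(\lambda)\varphi=0$, pairs with $\overline{\varphi}$, and takes the imaginary part; the principal-value and evanescent pieces contribute only real numbers, and the remaining $-i/(2|\lambda'_{\mathfrak{n}_*,\pm\epsilon}(q_{\pm,\pm\epsilon})|)$ terms force each nonnegative modulus-squared to vanish. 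You instead set up a flux/energy balance $Im(\lambda)\int_{\Omega_R}n_\star^2|u|^2 = Im\,\mathfrak{q}_{-R}(u,u)-Im\,\mathfrak{q}_{R}(u,u)$ on truncated domains, extract the asymptotic amplitudes $c_\pm$ from \eqref{eq_G_eps_cont_asymptotic} and the evenness \eqref{eq_G_eps_cont_even}, and pass $R\to\infty$ (along integers) using a perturbed version of Lemma~\ref{lemma_energy_flux}. The two arguments are dual manifestations of the same physics — the residues defining the $-i/2|\lambda'|$ coefficients in \eqref{eq_G_eps_cont_real} are precisely the fluxes carried by the outgoing Bloch modes — but your route makes the radiation picture explicit at the cost of having to pin down the signs of the group velocities $\lambda'_{\mathfrak{n}_*,\pm\epsilon}(q_{+,\pm\epsilon})>0$ (which the paper avoids by writing $|\lambda'|$ from the outset) and to justify the flux identity for the perturbed Bloch modes, both of which you flag and sketch. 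The remaining parts — $Im(\lambda)\le 0$ via integration by parts when $u$ decays exponentially, and $Im(\lambda)<0$ forcing blow-up — coincide with the paper's argument, apart from your running them through $\Omega_R$ rather than directly on $\Omega^{\pm}$.
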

\begin{proof}
We first prove that $\Im(\lambda)\leq 0$. 
Assume the contrary that $\Im(\lambda)>0$. Then Proposition \ref{corollary_G_eps_decay_blow} indicates that $u(x;\lambda)$ exponentially decays as $|x_1|\to \infty$. An integral by parts shows that
\begin{equation} \label{eq_lem_ctr_1}
    \int_{\Omega^+}|\nabla u|^2 dx-\lambda \int_{\Omega^+}n^2_{\epsilon}|u|^2 dx=-\int_{\Gamma}\frac{\partial u}{\partial x_1}\overline{u} dx_2.
\end{equation}
Similarly,
\begin{equation} \label{eq_lem_ctr_2}
    \int_{\Omega^-}|\nabla u|^2 dx-\lambda \int_{\Omega^-}n^2_{-\epsilon}|u|^2 dx=\int_{\Gamma}\frac{\partial u}{\partial x_1}\overline{u} dx_2.
\end{equation}
Since the interface conditions \eqref{eq_cont_dirichlet} and \eqref{eq_cont_neumann} are satisfied when \eqref{eq_interface_match} holds, we can sum \eqref{eq_lem_ctr_1} and \eqref{eq_lem_ctr_2} to get
\begin{equation} \label{eq_lem_ctr_3}
    \int_{\Omega}|\nabla u|^2 dx-\lambda 
    (\int_{\Omega^+}n^2_{\epsilon}|u|^2 dx+\int_{\Omega^-}n^2_{-\epsilon}|u|^2 dx)=0.
\end{equation}
Since $\Im(\lambda)>0$ and $n_{\pm\epsilon}(x)>0$ for any $x\in\Omega$, \eqref{eq_lem_ctr_3} holds only if $u(x)\equiv 0$ for $x\in \Omega$. Then \eqref{eq_G_eps_cont_jump} indicates that
\begin{equation*}
    \frac{\varphi}{2}=\Big(\frac{\partial u}{\partial x_1}\Big)\Big|_{\Gamma_+}=0,
\end{equation*}
which contradicts the assumption that $\varphi\neq 0$. Therefore $\Im(\lambda)\leq 0$. 

Next, we consider the two cases $\Im(\lambda)= 0$ and $\Im(\lambda)<0$, respectively.

Case 1: $\Im(\lambda)= 0$. We aim to prove \eqref{eq_criteria_interface_mode}; then Proposition \ref{corollary_G_eps_decay_blow} indicates that $u(x;\lambda)$ decays exponentially as $|x_1|\to \infty$. By \eqref{eq_G_eps_cont_real}, we write \eqref{eq_interface_match} as
\begin{equation*} 
\begin{aligned}
&\frac{1}{2\pi}\int_{-\pi}^{\pi}\sum_{n\neq \mathfrak{n_*}}\big(\mathbb{K}_{n,\epsilon}(p,\lambda)\varphi\big)dp
+\frac{1}{2\pi}p.v.\int_{-\pi}^{\pi}\big(\mathbb{K}_{\mathfrak{n}_{*},\epsilon}(p,\lambda)\varphi\big)dp \\
&+\frac{1}{2\pi}\int_{-\pi}^{\pi}\sum_{n\neq \mathfrak{n_*}}\big(\mathbb{K}_{n,-\epsilon}(p,\lambda)\varphi\big)dp
+\frac{1}{2\pi}p.v.\int_{-\pi}^{\pi}\big(\mathbb{K}_{\mathfrak{n}_{*},-\epsilon}(p,\lambda)\varphi\big)dp \\
& -
\Bigg(\frac{i\langle \varphi(\cdot),\overline{u_{\mathfrak{n}_*,\epsilon}(\cdot ;q_{+,\epsilon}(\lambda))} \rangle}{2|\lambda_{\mathfrak{n}_*,\epsilon}^{\prime}(q_{+,\epsilon}(\lambda))|}u_{\mathfrak{n}_*,\epsilon}(x ;q_{+,\epsilon}(\lambda))
+
\frac{i\langle \varphi(\cdot),\overline{u_{\mathfrak{n}_*,\epsilon}(\cdot ;q_{-,\epsilon}(\lambda))} \rangle}{2|\lambda_{\mathfrak{n}_*,\epsilon}^{\prime}(q_{-,\epsilon}(\lambda))|}u_{\mathfrak{n}_*,\epsilon}(x ;q_{-,\epsilon}(\lambda))
\Bigg) \\
& -
\Bigg(\frac{i\langle \varphi(\cdot),\overline{u_{\mathfrak{n}_*,-\epsilon}(\cdot ;q_{+,-\epsilon}(\lambda))} \rangle}{2|\lambda_{\mathfrak{n}_*,-\epsilon}^{\prime}(q_{+,-\epsilon}(\lambda))|}u_{\mathfrak{n}_*,-\epsilon}(x ;q_{+,-\epsilon}(\lambda))
+
\frac{i\langle \varphi(\cdot),\overline{u_{\mathfrak{n}_*,-\epsilon}(\cdot ;q_{-,-\epsilon}(\lambda))} \rangle}{2|\lambda_{\mathfrak{n}_*,-\epsilon}^{\prime}(q_{-,-\epsilon}(\lambda))|}u_{\mathfrak{n}_*,-\epsilon}(x ;q_{-,-\epsilon}(\lambda))
\Bigg) \\
&=0.
\end{aligned}
\end{equation*}
Taking dual product with $\overline{\varphi}$ to both sides of the above equation, the imaginary part of the resulted equation yields
\begin{equation*}
\frac{|\langle \varphi(\cdot),\overline{u_{\mathfrak{n}_*,-\epsilon}(\cdot;q_{-,-\epsilon}(\lambda))}|^2}{|\lambda_{\mathfrak{n}_*,-\epsilon}^{'}(q_{-,-\epsilon}(\lambda))|}
=\frac{|\langle \varphi(\cdot),\overline{u_{\mathfrak{n}_*,\epsilon}(\cdot;q_{+,\epsilon}(\lambda))}|^2}{|\lambda_{\mathfrak{n}_*,\epsilon}^{'}(q_{+,\epsilon}(\lambda))|}=0,
\end{equation*}
which gives \eqref{eq_criteria_interface_mode}.  

Case 2: $\Im(\lambda)<0$. If $u(x;\lambda)$ decays exponentially as $|x_1|\to +\infty$, then we draw the same contradiction as in \eqref{eq_lem_ctr_3}. As a result, Proposition \ref{corollary_G_eps_decay_blow} implies that $u(x;\lambda)$ blows up at least in one direction. Thus, $\|u(x;\lambda)\|_{L^2(\Omega)}=\infty$ and $u$ is a resonant mode.
\end{proof}

\subsection{Properties of boundary integral operators}
In this section, we investigate the boundary integral operator $\mathbb{G}^{\Gamma}_{\epsilon}(\lambda_*+\epsilon\cdot h)$ for $h\in\mathcal{J}$, especially its limit as $\epsilon\to 0^{+}$.  The results obtained here pave the way for applying the Gohberg-Sigal theory (see, for instance, \cite{ammari2018mathematical}) to solve the characteristic value problem \eqref{eq_interface_match}.

Here and henceforth, for each $\lambda\in \mathcal{I}_{\epsilon}$, we parameterize $\lambda$ as $\lambda:=\lambda_*+\epsilon\cdot h$ for $h\in \mathcal{J}:=\{h\in \mathbf{C}:|h|<c_0 |t_*|\}$.

We first show the limit of $\mathbb{G}^{\Gamma}_{\epsilon}(\lambda_*+\epsilon\cdot h)$ in $\mathcal{B}(\tilde{H}^{-\frac{1}{2}}(\Gamma),H^{\frac{1}{2}}(\Gamma))$ as $\epsilon\to 0$:
\begin{proposition} \label{prop_limit_integral_operator}
$\mathbb{G}^{\Gamma}_{\epsilon}(\lambda_*+\epsilon\cdot h)$ converges uniformly for $h\in\overline{\mathcal{J}}$ as $\epsilon\to 0$:
\begin{equation*}
\lim_{\epsilon\to 0}\Bigg\|
\mathbb{G}^{\Gamma}_{\epsilon}(\lambda_*+\epsilon\cdot h)
-\Big(2\mathbb{T}+\beta(h)\mathbb{P}^{Dirac}
\Big)
\Bigg\|_{\mathcal{B}(\tilde{H}^{-\frac{1}{2}}(\Gamma),H^{\frac{1}{2}}(\Gamma))}=0, 
\end{equation*}
where
\begin{equation*}
\beta(h):=-\frac{1}{|t_*|\alpha_{n_*}}\frac{h}{\sqrt{1-\frac{h^2}{|t_*|^2}}},\quad
\mathbb{P}^{Dirac}:=\langle \cdot, \overline{v_{n_*}(x;0)}\rangle v_{n_*}(x;0),
\end{equation*}
and
\begin{equation} \label{eq-T}
\begin{aligned}
\mathbb{T}\varphi:=&
\frac{1}{2\pi}\int_{-\pi}^{\pi}\sum_{n\neq n_*,m_*}(\mathbb{K}_{n}(p,\lambda)\varphi) dp \\
&+\Bigg(-\frac{i}{2}\frac{v_{n_{*}}(x;-q_{*})
\langle \cdot,\overline{v_{n_{*}}(y;-q_{*})}\rangle}{|\mu_{n_{*}}^{'}(-q_{*})|}
-\frac{i}{2}\frac{v_{m_{*}}(x;q_{*})
\langle\cdot ,\overline{v_{m_{*}}(y;q_{*})}\rangle}{|\mu_{m_{*}}^{'}(q_{*})|}\\
&\qquad+\frac{1}{2\pi}p.v.\int_{-\pi}^{\pi}(\mathbb{K}_{n_{*}}(p,\lambda)\varphi)dp
+\frac{1}{2\pi}p.v.\int_{-\pi}^{\pi}(\mathbb{K}_{m_{*}}(p,\lambda)\varphi)dp
\Bigg).
\end{aligned}
\end{equation}
Here $\mathbb{K}_{n}(p,\lambda)$ is the integral operator associated with the kernel $K_{n}(x,y;p,\lambda)$ (introduced in Section 2.2).
\end{proposition}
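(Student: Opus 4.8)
The plan is to split the quasi-momentum domain into a fixed neighbourhood $U_0$ of $p=0$, fixed neighbourhoods $U_\pm$ of $p=\pm q_*$, and the complement $U_{\mathrm{far}}$, write $\mathbb{G}^\Gamma_\epsilon=\mathbb{G}^{\mathrm{far}}_\epsilon+\mathbb{G}^{q_*}_\epsilon+\mathbb{G}^{0}_\epsilon$ using the explicit formula \eqref{eq_G_eps_continued} for both $\tilde{\mathbb{G}}_\epsilon$ and $\tilde{\mathbb{G}}_{-\epsilon}$ (note $C_{-\epsilon}=C_\epsilon$, so in fact $\mathbb{G}^\Gamma_\epsilon$ is even in $\epsilon$ and we may take $\epsilon>0$), and show that the three pieces converge, respectively, to the ``bulk'', the ``$\pm q_*$'', and the ``$p=0$'' parts of $2\mathbb{T}$, with the single additional term $\beta(h)\mathbb{P}^{Dirac}$ produced by $\mathbb{G}^{0}_\epsilon$. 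All limits are taken in $\mathcal{B}(\tilde H^{-\frac12}(\Gamma),H^{\frac12}(\Gamma))$, and uniformity in $h\in\overline{\mathcal J}$ must be tracked throughout; since $|h|\le c_0|t_*|<|t_*|$, the branch point of $\sqrt{|t_*|^2-h^2}$ stays at distance $\ge(1-c_0)|t_*|$ from $\overline{\mathcal J}$, which makes every explicit $h$-dependent quantity below bounded and continuous.

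For $\mathbb{G}^{\mathrm{far}}_\epsilon$: on $U_{\mathrm{far}}$ the contour $C_\epsilon$ is just the real segment, and $|\lambda_*+\epsilon h-\lambda_{n,\epsilon}(p)|$ is bounded below uniformly in $\epsilon$ small, $h\in\overline{\mathcal J}$, $p\in U_{\mathrm{far}}$, because $\mu_n(p)\ne\lambda_*$ there (Assumption \ref{assump_dirac_points}, and the analyticity of $\sum_{n\ne n_*,m_*}$ built into \eqref{eq-T}) and $\lambda_{n,\epsilon}\to\lambda_n$ uniformly. By analytic perturbation theory together with Theorems \ref{thm_local_gap_open} and \ref{thm_perturbed_fold_dispersion} the Bloch pairs $(\lambda_{n,\epsilon},u_{n,\epsilon})$ converge to $(\lambda_n,u_n)$ uniformly in $p$ on these compact sets; a uniform Weyl-type estimate controls the tail of the sum over large $n$. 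Hence $\mathbb{G}^{\mathrm{far}}_\epsilon\to 2\mathbb{T}^{\mathrm{far}}$, where $\mathbb{T}^{\mathrm{far}}$ is the part of \eqref{eq-T} supported in $U_{\mathrm{far}}$; here one uses that $\{u_{\mathfrak n_*}(\cdot;p),u_{\mathfrak n_*+1}(\cdot;p)\}$ and $\{v_{n_*}(\cdot;p),v_{m_*}(\cdot;p)\}$ span the same lines for $p\ne 0,\pm q_*$, and that $\tilde{\mathbb{G}}_\epsilon$ and $\tilde{\mathbb{G}}_{-\epsilon}$ contribute the same limit. For $\mathbb{G}^{q_*}_\epsilon$: near $q_*$ (resp.\ $-q_*$) the contour $C_\epsilon$ is the semicircle of radius $|\epsilon|^{\frac13}$; deforming it onto the real segment exactly as in the proof of Proposition \ref{prop_G_eps_cont_real} produces the principal-value integral over $U_\pm$ plus the residue terms $-\tfrac i2\,\langle\varphi,\overline{u_{\mathfrak n_*,\epsilon}(\cdot;q_{\pm,\epsilon}(\lambda))}\rangle\,u_{\mathfrak n_*,\epsilon}(\cdot;q_{\pm,\epsilon}(\lambda))/|\lambda_{\mathfrak n_*,\epsilon}'(q_{\pm,\epsilon}(\lambda))|$. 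Letting $\epsilon\to0$ and invoking Theorem \ref{thm_perturbed_fold_dispersion} ($q_{\pm,\epsilon}(\lambda)\to\pm q_*$, $\lambda_{\mathfrak n_*,\epsilon}'(q_{\pm,\epsilon})\to\mu_{m_*}'(q_*)$ resp.\ $\mu_{n_*}'(-q_*)$, convergence of the eigenfunctions), these limits are precisely the principal-value pieces near $\pm q_*$ together with the two terms $-\tfrac i2\,v_{m_*}(x;q_*)\langle\cdot,\overline{v_{m_*}(y;q_*)}\rangle/|\mu_{m_*}'(q_*)|$ and $-\tfrac i2\,v_{n_*}(x;-q_*)\langle\cdot,\overline{v_{n_*}(y;-q_*)}\rangle/|\mu_{n_*}'(-q_*)|$ appearing in \eqref{eq-T}; summing $\tilde{\mathbb{G}}_{\pm\epsilon}$ gives $2\mathbb{T}^{q_*}$.

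The decisive step is $\mathbb{G}^{0}_\epsilon$, where the contour is just $U_0$ on the real axis. For $p\in U_0$ only the two bands $\lambda_{\mathfrak n_*,\pm\epsilon}(p),\lambda_{\mathfrak n_*+1,\pm\epsilon}(p)$ approach $\lambda_*$; the Lyapunov--Schmidt reduction of the proof of Theorem \ref{thm_local_gap_open} shows, after restriction to $\Gamma$ and since $e^{ipx_1}|_\Gamma=1$, that $\sum_{j\in\{\mathfrak n_*,\mathfrak n_*+1\}}\frac{u_{j,\pm\epsilon}(x;p)\langle\varphi,\overline{u_{j,\pm\epsilon}(\cdot;p)}\rangle}{\lambda_*+\epsilon h-\lambda_{j,\pm\epsilon}(p)}\big|_\Gamma = V(x)\,\mathcal M(\pm\epsilon,p,\epsilon h)^{-1}\,c+(\text{l.o.t.})$, where $\mathcal M$ is the $2\times2$ matrix of \eqref{eq_sec32_11} (leading part $\begin{pmatrix}\epsilon h-\alpha_{n_*}p & \pm t_*\epsilon\\ \pm\overline{t_*}\epsilon & \epsilon h+\alpha_{n_*}p\end{pmatrix}$), $V(x)=(v_{n_*}(x;0)|_\Gamma,\ v_{m_*}(x;0)|_\Gamma)$ and $c=(\langle\varphi,\overline{v_{n_*}(\cdot;0)}\rangle,\ \langle\varphi,\overline{v_{m_*}(\cdot;0)}\rangle)^{\mathrm T}$ (using that $v_{n_*}(\cdot;0),v_{m_*}(\cdot;0)$ are orthonormal, as limits of spectral projections at the crossing, so the model eigenvectors are orthonormal and the rank-one sum equals the matrix inverse). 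Adding the $\pm\epsilon$ contributions, the off-diagonal entries of $\mathcal M(\epsilon,p,\epsilon h)^{-1}+\mathcal M(-\epsilon,p,\epsilon h)^{-1}$ cancel and the integrand becomes $\frac{2}{(\epsilon h)^2-\alpha_{n_*}^2p^2-|t_*|^2\epsilon^2}\,\mathrm{diag}(\epsilon h+\alpha_{n_*}p,\ \epsilon h-\alpha_{n_*}p)$. The $\alpha_{n_*}p$ parts are odd and integrate to zero (so the leading $1/p$ singularity that would match $\mathbb{T}^{0}$ drops out of the principal value), and rescaling $p=|\epsilon|s$ in the remaining integral $\frac1{2\pi}\int_{U_0}$ reduces it to $\int_{\mathbf R}\frac{ds}{\alpha_{n_*}^2s^2+|t_*|^2-h^2}=\frac{\pi}{\alpha_{n_*}\sqrt{|t_*|^2-h^2}}$, which is cut-off independent since $\arctan(\delta/b)\to\tfrac\pi2$ regardless of the radius of $U_0$. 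This produces the scalar $\beta(h)$, and using Lemma \ref{lemma_equivalence} to identify the resulting rank-one operator $\varphi\mapsto V(x)\,\beta(h)I\,c$ with (a constant multiple of) $\mathbb{P}^{Dirac}$ — since $v_{n_*}(\cdot;0)|_\Gamma$ and $v_{m_*}(\cdot;0)|_\Gamma$ are proportional with matching inner-product factors — yields $\mathbb{G}^{0}_\epsilon\to 2\mathbb{T}^{0}+\beta(h)\mathbb{P}^{Dirac}$ (the precise constant being fixed by the normalisation conventions for $v_{n_*},v_{m_*}$). Assembling the three limits gives the claim.

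The main obstacle is this last step: the expansions of Theorem \ref{thm_local_gap_open} carry $\mathcal O(|p|+|\epsilon|)$ errors both in $\lambda_{j,\pm\epsilon}$ and in $u_{j,\pm\epsilon}$, and to pass to the limit one must show these errors remain negligible after division by denominators of size only $\mathcal O(\epsilon)$, i.e.\ uniformly in the rescaled variable $s=p/|\epsilon|$ on $U_0$ (in particular the Lyapunov--Schmidt remainder $(I-T)^{-1}T$ is $\mathcal O(\epsilon)$ there, hence small relative to the leading term, but this must be made quantitative and integrable in $s$). One must also verify that the arbitrary cut-off $U_0$ disappears in the limit, the ``inner'' scale producing exactly $\beta(h)\mathbb{P}^{Dirac}$ and the ``outer'' part of $U_0$ gluing to the principal-value integrals that form $\mathbb{T}^{0}$ and $\mathbb{T}^{\mathrm{far}}$. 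Once these uniform estimates are in place, uniformity in $h\in\overline{\mathcal J}$ is automatic.
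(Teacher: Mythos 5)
Your overall strategy matches the paper's: split the quasi‑momentum integral into regions, treat the $\pm q_*$ semicircles by the residue theorem, and produce $\beta(h)\mathbb{P}^{Dirac}$ by rescaling $p=\epsilon s$ near $p=0$. The differences are organisational rather than conceptual. The paper first splits by bands (the ``evanescent'' sum over $n\neq\mathfrak n_*,\mathfrak n_*+1$, handled by resolvent estimates and dominated convergence, vs.\ the two ``propagating'' bands), and only then subdivides the propagating part by contour pieces; you split directly by momentum region, which works but requires more care with the relabelling $(\lambda_{\mathfrak n_*},\lambda_{\mathfrak n_*+1})\leftrightarrow(\mu_{n_*},\mu_{m_*})$ across $p=0$ that the paper handles explicitly in the two identities pairing the $\mathfrak n_*$ band on negative $p$ with the $\mathfrak n_*+1$ band on positive $p$ to recover a single $p.v.\int v_{n_*}\ldots$. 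Your $2\times2$ matrix‑resolvent form $V\,\mathcal M(\pm\epsilon,p,\epsilon h)^{-1}\,c$ is an equivalent (cleaner) packaging of the paper's explicit kernels with $f_\epsilon(p)$ and $N_{n,\epsilon}(p)$; both lead to the same cancellation of off‑diagonal terms upon summing $\pm\epsilon$.

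The gap you flag at the end is the genuine one, and it is exactly where the paper invests its main technical effort: the expansions in Theorem \ref{thm_local_gap_open} carry $\mathcal O(|p|+|\epsilon|)$ remainders, and with a \emph{fixed} cutoff $U_0=(-\delta,\delta)$ the error contribution
\[
\int_{-\delta}^{\delta}\frac{\mathcal O(|p|+|\epsilon|)}{\sqrt{\alpha_{n_*}^2p^2+|t_*|^2\epsilon^2}}\,dp=\mathcal O(\delta)
\]
does not vanish as $\epsilon\to0$; so ``$\mathcal O(\epsilon)$ hence small relative to the leading term'' is not right uniformly over $U_0$. The paper's remedy is the intermediate scale $|\epsilon|^{1/3}$: on $(-|\epsilon|^{1/3},|\epsilon|^{1/3})$ the leading part is $\mathcal O(\log\epsilon)$ in norm and the remainder $\mathbb J^{(1)}$ is a relative $\mathcal O(\epsilon^{1/3})$ of it (the estimate $\|\mathbb J^{(1)}\|=\|\mathbb J^{(0)}\|\cdot\mathcal O(\epsilon^{1/3})=\mathcal O(\epsilon^{1/3}\log\epsilon)\to 0$), while the rescaled integral still fills out the full real line; outside $(-|\epsilon|^{1/3},|\epsilon|^{1/3})$ one has pointwise convergence with a dominating kernel and a shrinking symmetric excision, so one recovers the full principal values. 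Your outline would need to replace the fixed $\delta$ with such an $\epsilon$‑dependent width to close. Separately, you should pin down the constant in the rank‑one limit: on $\Gamma$ one has $v_{n_*}|_\Gamma\sim v_{m_*}|_\Gamma$, so $V|_\Gamma\,\beta(h)I\,c=\beta(h)\big(v_{n_*}\langle\cdot,\overline{v_{n_*}}\rangle+v_{m_*}\langle\cdot,\overline{v_{m_*}}\rangle\big)=2\beta(h)\mathbb P^{Dirac}$ — this factor of two must be reconciled with the statement (the paper's own computation also passes through this collapse, so the normalisation deserves a careful recheck rather than being left as ``a constant multiple'').
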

\begin{proof}
See Appendix B.
\end{proof}
We then investigate the operator $\mathbb{T}$. 
\begin{proposition} \label{prop_T_fred_kernel}
$\mathbb{T}\in \mathcal{B}(\tilde{H}^{-\frac{1}{2}}(\Gamma),H^{\frac{1}{2}}(\Gamma))$ is a Fredholm operator of zero index. Moreover, the kernel of $\mathbb{T}$ is given by
\begin{equation*}
\ker(\mathbb{T})=\text{span}\Big\{\frac{\partial v_{n_*}(x;0)}{\partial x_1}\Big|_{\Gamma}\Big\}.
\end{equation*}
\end{proposition}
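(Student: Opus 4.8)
The plan is to first identify $\mathbb{T}$ concretely and then compute $\ker\mathbb{T}$ by solving an interface problem. Comparing the definition \eqref{eq-T} with the decomposition \eqref{eq_G_pv_decomposition} of the unperturbed Green's function, one sees that $\mathbb{T}$ is the trace to $\Gamma$, in both variables, of the single-layer potential with kernel
\[
\widetilde{G}(x,y):=G(x,y;\lambda_*)+\frac{i}{2}\,\frac{v_{n_*}(x;0)\overline{v_{n_*}(y;0)}}{|\mu_{n_*}'(0)|}+\frac{i}{2}\,\frac{v_{m_*}(x;0)\overline{v_{m_*}(y;0)}}{|\mu_{m_*}'(0)|},
\]
i.e. $\mathbb{T}\varphi(x)=\int_\Gamma\widetilde{G}(x,y)\varphi(y)\,dy_2$ for $x\in\Gamma$. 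The Fredholm-of-index-zero claim is then the standard one for such operators: $\widetilde{G}(x,y)-\tfrac{1}{2\pi}\log|x-y|$ is more regular than its logarithmic diagonal singularity (off the diagonal $G(\cdot,y;\lambda_*)$ solves an elliptic equation with $L^\infty$ coefficients, and the two correction terms are smooth near $\Gamma$), so $\mathbb{T}$ differs by a compact operator on $\tilde{H}^{-\frac12}(\Gamma)\to H^{\frac12}(\Gamma)$ from the single-layer operator of $-\Delta$ on the arc $\Gamma$, which is a self-adjoint elliptic pseudodifferential operator of order $-1$ and hence Fredholm of index zero; the behaviour where $\Gamma$ meets the $C^2$ boundary $\partial\Omega$ is built into the spaces $\tilde{H}^{-\frac12}(\Gamma),H^{\frac12}(\Gamma)$. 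The remainder of the proof concerns $\ker\mathbb{T}$.

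For the inclusion $\mathrm{span}\{\phi_0\}\subseteq\ker\mathbb{T}$, where $\phi_0:=\partial_{x_1}v_{n_*}(\cdot;0)\big|_\Gamma$, I would fix phases via Lemma \ref{lemma_equivalence} so that $v_{m_*}(\cdot;0)=\mathcal{P}v_{n_*}(\cdot;0)$, which in particular makes $v_{m_*}(\cdot;0)$ and $v_{n_*}(\cdot;0)$ agree on $\Gamma$. Applying the representation formula \eqref{eq-repre-u} to $u=v_{n_*}(\cdot;0)$ (admissible with $a=1,b=0$ since $u-v_{n_*}(\cdot;0)\equiv0$) gives $\int_\Gamma G(x,y;\lambda_*)\phi_0(y)\,dy_2=\tfrac12 v_{n_*}(x;0)$ for $x\in\Omega^{\text{right}}$, hence for $x\in\Gamma$ by continuity of the single-layer potential. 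By Lemma \ref{lemma_energy_flux} and $|\mu_{n_*}'(0)|=|\mu_{m_*}'(0)|=\alpha_{n_*}$ (reflection symmetry, Assumption \ref{assump_reflection}), one has $\int_\Gamma\overline{v_{n_*}(y;0)}\phi_0(y)\,dy_2=\int_\Gamma\overline{v_{m_*}(y;0)}\phi_0(y)\,dy_2=\mathfrak q\bigl(v_{n_*}(\cdot;0),v_{n_*}(\cdot;0)\bigr)=\tfrac{i}{2}\alpha_{n_*}$. Substituting these three evaluations into $\mathbb{T}\phi_0$ collapses it to $(\tfrac12-\tfrac14-\tfrac14)\,v_{n_*}(\cdot;0)\big|_\Gamma=0$. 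That $\phi_0\neq0$ follows because otherwise $v_{n_*}(\cdot;0)$ would be $\mathcal{P}$-even, so $v_{n_*}(\cdot;0)-v_{m_*}(\cdot;0)$ would have vanishing Cauchy data on $\Gamma$ and hence vanish by unique continuation, contradicting the two-dimensionality of the Dirac eigenspace at $(0,\lambda_*)$.

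For the reverse inclusion $\ker\mathbb{T}\subseteq\mathrm{span}\{\phi_0\}$, given $\varphi\in\ker\mathbb{T}$ I would set $u(x):=\int_\Gamma\widetilde{G}(x,y)\varphi(y)\,dy_2$ on $\Omega$; then $(\mathcal{L}-\lambda_*)u=0$ in $\Omega\setminus\Gamma$ with Neumann data on $\partial\Omega$, $u$ is continuous across $\Gamma$ with $u\big|_\Gamma=\mathbb{T}\varphi=0$, the jump of $\partial_{x_1}u$ across $\Gamma$ equals $\varphi$, and $u$ is $\mathcal{P}$-even (using $G(x,y;\lambda_*)=(\mathcal{P}G)(x,y;\lambda_*)$ for $y\in\Gamma$ and the phase normalisation). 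Substituting the one-sided decomposition \eqref{eq_G_right_decomposition} of $G$ and using $\overline{v_{m_*}(\cdot;0)}=\overline{v_{n_*}(\cdot;0)}$ on $\Gamma$, one gets, as $x_1\to+\infty$,
\[
u(x)=a\bigl(v_{m_*}(x;0)-v_{n_*}(x;0)\bigr)+b\,v_{m_*}(x;q_*)+\mathcal{O}(e^{-cx_1}),
\]
with $a=\tfrac{i}{2\alpha_{n_*}}\langle\varphi,\overline{v_{n_*}(\cdot;0)}\rangle$ and $b=-\tfrac{i}{|\mu_{m_*}'(q_*)|}\langle\varphi,\overline{v_{m_*}(\cdot;q_*)}\rangle$; note that the $v_{n_*}(\cdot;0)$- and $v_{m_*}(\cdot;0)$-coefficients are opposite. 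Since $\mathrm{Im}\,\lambda_*=0$, Green's identity on $\Omega^{\text{right}}\cap\{x_1<R\}$ (no contribution from $\Gamma$ because $u\big|_\Gamma=0$, and none from $\partial\Omega$) forces $\mathrm{Im}\int_{\{x_1=R\}}\partial_{x_1}u\,\overline u\,dx_2=0$; letting $R\to\infty$ and invoking Lemma \ref{lemma_energy_flux} — so that $v_{m_*}(\cdot;0)-v_{n_*}(\cdot;0)$ carries zero flux and is flux-orthogonal to $v_{m_*}(\cdot;q_*)$, while $\mathfrak q\bigl(v_{m_*}(\cdot;q_*),v_{m_*}(\cdot;q_*)\bigr)=\tfrac{i}{2}\mu_{m_*}'(q_*)$ with $\mu_{m_*}'(q_*)>0$ (Assumption \ref{assump_dirac_points} and reflection symmetry) — yields $b=0$. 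Hence, writing $\chi:=v_{m_*}(\cdot;0)-v_{n_*}(\cdot;0)$ (a $\mathcal{P}$-odd global solution of $(\mathcal{L}-\lambda_*)\chi=0$ that vanishes on $\Gamma$), the function $W:=u-a\chi$ on $\Omega^{\text{right}}$ decays exponentially, vanishes on $\Gamma$, and solves $(\mathcal{L}-\lambda_*)W=0$ with Neumann data on $\partial\Omega^{\text{right}}\setminus\Gamma$, so its odd reflection across $\Gamma$ is an $L^2(\Omega)$ eigenfunction of $\mathcal{L}$ at $\lambda_*$, hence zero since $\lambda_*$ is not an eigenvalue of $\mathcal{L}$ under Assumption \ref{assump_dirac_points}. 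Therefore $u=a\chi$ on $\Omega^{\text{right}}$ and, by evenness, $u=-a\chi$ on $\Omega^{\text{left}}$; taking the jump of $\partial_{x_1}u$ across $\Gamma$ and using $\partial_{x_1}v_{m_*}(\cdot;0)\big|_\Gamma=-\phi_0$, $\partial_{x_1}v_{n_*}(\cdot;0)\big|_\Gamma=\phi_0$, gives $\varphi=2a\,\partial_{x_1}\chi\big|_\Gamma=-4a\,\phi_0\in\mathrm{span}\{\phi_0\}$ (and $\varphi=0$ when $a=0$). This yields $\ker\mathbb{T}=\mathrm{span}\{\phi_0\}$. I expect the main obstacle to be these last two steps: extracting the precise far-field of a null element so that the energy-flux identity annihilates the $\pm q_*$ Floquet components, and then using odd reflection across $\Gamma$ together with the absence of $L^2$-eigenvalues of the periodic operator $\mathcal{L}$ to collapse the remaining field to a multiple of $\chi$.
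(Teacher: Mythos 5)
Your proof is correct, and it rests on the same core ingredients as the paper's — the representation formula \eqref{eq-repre-u}, the flux identities of Lemma \ref{lemma_energy_flux}, and the combination of odd reflection across $\Gamma$ with the absence of $L^2$-eigenvalues of $\mathcal{L}$ at $\lambda_*$ — but the organization of the reverse inclusion is genuinely different. The paper begins by splitting $\varphi=c\,\phi_0+\tilde\varphi$ with $\langle\tilde\varphi,\overline{v_{n_*}(\cdot;0)}\rangle=0$ and then shows $\tilde\varphi=0$; to kill the $q_*$-propagating contribution, it takes the imaginary part of $\langle\mathbb{T}\varphi,\overline\varphi\rangle=0$ using the expanded form \eqref{eq_G_pv_decomposition} of the Green's function, exactly as in Case 1 of Proposition \ref{prop_charac_to_resonance}, so that the sign-definite residue terms force $\langle\varphi,\overline{v_{m_*}(\cdot;q_*)}\rangle=0$. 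The decaying remainder $v=\int_\Gamma G_0^+(\cdot,x)\varphi\,dx_2$ then vanishes by odd reflection, and applying $\mathcal{L}-\lambda_*$ to $\int_\Gamma G(\cdot,x)\varphi\,dx_2\equiv0$ gives $\varphi=0$ directly, with no Neumann-jump computation needed. You instead work with a general $\varphi\in\ker\mathbb{T}$, build $u=\int_\Gamma\widetilde{G}(\cdot,y)\varphi\,dy_2$, extract its explicit far-field $a\chi+b\,v_{m_*}(\cdot;q_*)$, and replace the boundary imaginary-part trick with a bulk Green's identity on $\Omega^{\text{right}}\cap\{x_1<R\}$ together with the vanishing cross-fluxes of Lemma \ref{lemma_energy_flux} to obtain $b=0$; after subtracting $a\chi$ and using the same odd-reflection/no-eigenvalue step, you read off $\varphi=-4a\,\phi_0$ from the Neumann jump. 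The two routes are equivalent in substance — your flux computation is the analytic form of the paper's algebraic imaginary-part argument — but the paper's version is leaner, since it avoids both the far-field expansion of the modified single-layer potential and the large-$R$ limit of the cross-section flux (where one must check that the quasi-periodic cross-terms between the $p=0$ and $p=q_*$ components do not contribute, which your invocation of Lemma \ref{lemma_energy_flux} handles but which you should state more explicitly). Your argument does yield a small extra dividend that the paper does not spell out: a direct proof that $\phi_0\neq 0$ via unique continuation applied to $\chi=v_{m_*}(\cdot;0)-v_{n_*}(\cdot;0)$.
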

\begin{proof}
See Appendix C.
\end{proof}
By Proposition \ref{prop_T_fred_kernel}, we can show that the limit operator $\mathbb{G}^{\Gamma}_{0}(h):=2\mathbb{T}+\beta(h)\mathbb{P}^{Dirac}$ has the following properties. The proof is the same as \cite[Proposition 4.6]{qiu2023mathematical}. 
\begin{proposition} \label{prop_G_inter_limit_fred_kernel_charac}
For $h\in\mathcal{J}$, $\mathbb{G}^{\Gamma}_{0}(h)$ is a Fredholm operator with index zero, analytical for $h\in\mathcal{J}$, and continuous for $h\in\partial\mathcal{J}$. As a function of $h$, it attains a unique characteristic value $h=0$ in $\mathcal{J}$, whose null multiplicity is one. Moreover, $\mathbb{G}^{\Gamma}_{0}(h)$ is invertible for any $h\in \overline{\mathcal{J}}$ with $h\neq 0$.
\end{proposition}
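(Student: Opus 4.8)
\noindent\emph{Proof plan.} I would follow the argument for Proposition 4.6 of \cite{qiu2023mathematical}, which decomposes into three parts: the Fredholm/analytic bookkeeping, one non-degeneracy fact about $\mathbb{T}$, and the Gohberg--Sigal computation of the characteristic value and its multiplicity. For the first part: since $\mathbb{P}^{Dirac}$ has rank one it is compact, so by Proposition \ref{prop_T_fred_kernel} the operator $\mathbb{G}^{\Gamma}_0(h)=2\mathbb{T}+\beta(h)\mathbb{P}^{Dirac}$ is Fredholm of index zero for every $h$; in particular invertibility is equivalent to triviality of the kernel. The scalar $\beta$ is holomorphic on $\mathcal{J}$ and continuous on $\overline{\mathcal{J}}$ because $|h|<c_0|t_*|$ with $c_0<1$ keeps the radicand in the definition of $\beta$ bounded away from $0$; hence $h\mapsto\mathbb{G}^{\Gamma}_0(h)$ is operator-valued holomorphic on $\mathcal{J}$, continuous on $\overline{\mathcal{J}}$, and $h=0$ is a simple zero of $\beta$ with $\beta'(0)\neq 0$.

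The crux is to prove that $v_{n_*}(\cdot;0)\big|_{\Gamma}\notin\mathrm{Range}(\mathbb{T})$. Because $G(x,y;\lambda_*)=G(y,x;\lambda_*)$, the operator $\mathbb{T}$ is symmetric for the bilinear pairing, $\langle\varphi,\mathbb{T}\psi\rangle=\langle\psi,\mathbb{T}\varphi\rangle$; together with $\ker\mathbb{T}=\mathrm{span}\{e\}$, $e:=\tfrac{\partial v_{n_*}}{\partial x_1}(\cdot;0)\big|_{\Gamma}$ (Proposition \ref{prop_T_fred_kernel}), and the index-zero property, this identifies $\mathrm{Range}(\mathbb{T})=\{g\in H^{\frac12}(\Gamma):\langle e,g\rangle=0\}$. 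Thus the claim is equivalent to $\langle e,v_{n_*}(\cdot;0)|_{\Gamma}\rangle=\int_{\Gamma}\tfrac{\partial v_{n_*}}{\partial x_1}(x;0)\,v_{n_*}(x;0)\,dx_2\neq 0$. Writing $v_{n_*}(x;0)$ in terms of $v_{m_*}(x;0)$ through both the reflection and the conjugation relations of Lemma \ref{lemma_equivalence}, a short computation identifies this integral with a unimodular multiple of $\mathfrak{q}(v_{m_*}(\cdot;0),v_{m_*}(\cdot;0))=\tfrac{i}{2}\mu_{m_*}'(0)$, which is nonzero by Lemma \ref{lemma_energy_flux}. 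I would also record the companion identity $\langle e,\overline{v_{n_*}(\cdot;0)|_{\Gamma}}\rangle=\mathfrak{q}(v_{n_*}(\cdot;0),v_{n_*}(\cdot;0))=\tfrac{i}{2}\alpha_{n_*}\neq 0$, equivalently $\mathbb{P}^{Dirac}e=\tfrac{i}{2}\alpha_{n_*}\,v_{n_*}(\cdot;0)|_{\Gamma}$. This step is where all the structural input is consumed and is the main obstacle; everything else is routine.

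With this in hand the characteristic-value analysis is short. For $h\in\overline{\mathcal{J}}\setminus\{0\}$, so $\beta(h)\neq 0$, suppose $\mathbb{G}^{\Gamma}_0(h)\varphi=0$, i.e. $2\mathbb{T}\varphi=-\beta(h)\langle\varphi,\overline{v_{n_*}(\cdot;0)}\rangle\,v_{n_*}(\cdot;0)|_{\Gamma}$. If $\langle\varphi,\overline{v_{n_*}(\cdot;0)}\rangle=0$ then $\mathbb{T}\varphi=0$, so $\varphi=c\,e$, and pairing again with $\overline{v_{n_*}(\cdot;0)}$ and using the companion identity above forces $c=0$. If $\langle\varphi,\overline{v_{n_*}(\cdot;0)}\rangle\neq0$, then $v_{n_*}(\cdot;0)|_{\Gamma}$ would be a nonzero multiple of $\mathbb{T}\varphi\in\mathrm{Range}(\mathbb{T})$, contradicting the previous step. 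Hence $\mathbb{G}^{\Gamma}_0(h)$ is invertible for every $h\in\overline{\mathcal{J}}\setminus\{0\}$. At $h=0$, $\beta(0)=0$ gives $\mathbb{G}^{\Gamma}_0(0)=2\mathbb{T}$, whose kernel is one-dimensional, so $h=0$ is the unique characteristic value in $\mathcal{J}$.

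Finally, for the null multiplicity at $h=0$ I would use the root-function characterization of Gohberg--Sigal theory \cite{ammari2018mathematical}. The constant root function $\varphi(h)\equiv e$ gives $\mathbb{G}^{\Gamma}_0(h)e=\beta(h)\tfrac{i}{2}\alpha_{n_*}\,v_{n_*}(\cdot;0)|_{\Gamma}$, which vanishes to order exactly one since $\beta'(0)\neq0$; any attempt to improve this by a first-order correction $e+h\psi$ would require $\mathbb{T}\psi$ to be a nonzero multiple of $v_{n_*}(\cdot;0)|_{\Gamma}$, i.e. $v_{n_*}(\cdot;0)|_{\Gamma}\in\mathrm{Range}(\mathbb{T})$, which is again impossible. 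Therefore the null multiplicity equals one, and combining the four steps yields all assertions of the proposition.
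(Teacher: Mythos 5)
Your proposal follows the same route the paper itself uses (it simply cites Proposition 4.6 of \cite{qiu2023mathematical}), and the overall architecture — Fredholm bookkeeping via the rank-one compact perturbation, identification of $\mathrm{Range}(\mathbb{T})$ through the bilinear transpose, the non-degeneracy pairing $\langle e, v_{n_*}(\cdot;0)|_\Gamma\rangle\neq 0$ reduced to $\mathfrak{q}(v_{m_*},v_{m_*})=\tfrac{i}{2}\mu_{m_*}'(0)\neq 0$ via Lemma \ref{lemma_equivalence}, and the Gohberg--Sigal root-function analysis — reconstructs that argument faithfully and correctly. One spot deserves a little more care: you deduce the bilinear symmetry $\langle\varphi,\mathbb{T}\psi\rangle=\langle\psi,\mathbb{T}\varphi\rangle$ solely from $G(x,y;\lambda_*)=G(y,x;\lambda_*)$, but $\mathbb{T}$ differs from the single-layer operator with kernel $G$ by the propagating-mode correction terms $\tfrac{i}{2}\tfrac{v_{n_*}(x;0)\overline{v_{n_*}(y;0)}}{|\mu_{n_*}'(0)|}$ and $\tfrac{i}{2}\tfrac{v_{m_*}(x;0)\overline{v_{m_*}(y;0)}}{|\mu_{m_*}'(0)|}$ (compare \eqref{eq_G_pv_decomposition} with \eqref{eq-T}), so one must also check that these kernels are symmetric in $(x,y)\in\Gamma\times\Gamma$. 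That is true, but it is a consequence of the reflection symmetry and Lemma \ref{lemma_equivalence} — on $\Gamma$ one has $\overline{v_{n_*}(\cdot;0)}|_\Gamma=\tau\,v_{n_*}(\cdot;0)|_\Gamma$ with $|\tau|=1$, and similarly for $m_*$, which makes each correction kernel a unimodular multiple of a symmetric product — rather than of the symmetry of $G$ alone. With that patched in, the range characterization $\mathrm{Range}(\mathbb{T})=\{g:\langle e,g\rangle=0\}$ and everything downstream in your argument goes through.
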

As a consequence, we have
\begin{corollary} \label{prop_G_inter_fred}
For $h\in\mathcal{J}$, $\mathbb{G}_{\epsilon}^{\Gamma}(\lambda_*+\epsilon\cdot h)$ is a Fredholm operator with zero index and is analytic as a function $h$.
\end{corollary}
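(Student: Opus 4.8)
The plan is to obtain the corollary by combining the analyticity of the continued single-layer operators with the stability of the Fredholm property under uniformly small perturbations, using the limit identified in Proposition~\ref{prop_limit_integral_operator}.

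\emph{Analyticity in $h$.} First I would note that $h\mapsto\lambda_*+\epsilon\cdot h$ is affine, hence analytic, and maps $\mathcal{J}$ into $\mathcal{I}_\epsilon$. By Proposition~\ref{prop_G_eps_cont_analyticity}, the operators $\tilde{\mathbb{G}}_{\epsilon}(\lambda)$ and $\tilde{\mathbb{G}}_{-\epsilon}(\lambda)$ are analytic $\mathcal{B}(\tilde{H}^{-\frac{1}{2}}(\Gamma),H^{\frac{1}{2}}(\Gamma))$-valued functions of $\lambda\in\mathcal{I}_\epsilon$; since the trace onto $\Gamma$ is a fixed bounded operator, $\mathbb{G}^{\Gamma}_\epsilon(\lambda)=\big(\tilde{\mathbb{G}}_{\epsilon}(\lambda)+\tilde{\mathbb{G}}_{-\epsilon}(\lambda)\big)\big|_{\Gamma}$ is analytic in $\lambda$, and composing with $h\mapsto\lambda_*+\epsilon h$ gives analyticity of $h\mapsto\mathbb{G}^{\Gamma}_\epsilon(\lambda_*+\epsilon h)$ on $\mathcal{J}$.

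\emph{Fredholmness of index zero.} By Proposition~\ref{prop_G_inter_limit_fred_kernel_charac}, the limit family $\mathbb{G}^{\Gamma}_0(h)=2\mathbb{T}+\beta(h)\mathbb{P}^{Dirac}$ is Fredholm of index zero for every $h\in\mathcal{J}$, is invertible (hence Fredholm of index zero) for $h\in\partial\mathcal{J}$, and depends continuously on $h\in\overline{\mathcal{J}}$. Therefore $\{\mathbb{G}^{\Gamma}_0(h):h\in\overline{\mathcal{J}}\}$ is a compact subset of the open set of index-zero Fredholm operators in $\mathcal{B}(\tilde{H}^{-\frac{1}{2}}(\Gamma),H^{\frac{1}{2}}(\Gamma))$, so there is a $\delta>0$ such that any operator within distance $\delta$ of some $\mathbb{G}^{\Gamma}_0(h)$, $h\in\overline{\mathcal{J}}$, is again Fredholm of index zero. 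By the uniform convergence in Proposition~\ref{prop_limit_integral_operator}, there is $\epsilon_0>0$ so that $\|\mathbb{G}^{\Gamma}_\epsilon(\lambda_*+\epsilon h)-\mathbb{G}^{\Gamma}_0(h)\|<\delta$ for all $h\in\overline{\mathcal{J}}$ and $|\epsilon|<\epsilon_0$; since the Fredholm index is locally constant, $\mathbb{G}^{\Gamma}_\epsilon(\lambda_*+\epsilon h)$ is then Fredholm of index zero for all such $h$ and $\epsilon$, in particular for $h\in\mathcal{J}$.

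\emph{Expected difficulty.} There is no genuine obstacle at this stage: all the substantive analysis is contained in Propositions~\ref{prop_limit_integral_operator}, \ref{prop_T_fred_kernel} and \ref{prop_G_inter_limit_fred_kernel_charac}. The only point requiring care is that we must perturb a whole family indexed by $h$ rather than a single operator; this is precisely why the \emph{uniform} (in $h\in\overline{\mathcal{J}}$) convergence of Proposition~\ref{prop_limit_integral_operator}, together with the compactness of $\overline{\mathcal{J}}$ and the continuity of $h\mapsto\mathbb{G}^{\Gamma}_0(h)$, is used: it upgrades the pointwise stability of the Fredholm index to a statement uniform in $h$, and also supplies the uniform smallness of $\epsilon$ that the subsequent Gohberg--Sigal argument will rely on.
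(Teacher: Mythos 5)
Your proposal is correct and follows essentially the same route as the paper: analyticity of $h\mapsto\mathbb{G}^{\Gamma}_\epsilon(\lambda_*+\epsilon h)$ via Proposition~\ref{prop_G_eps_cont_analyticity} and the definition \eqref{eq_interface_match}, and Fredholmness of index zero by combining the uniform convergence of Proposition~\ref{prop_limit_integral_operator} with Proposition~\ref{prop_G_inter_limit_fred_kernel_charac} and the stability of the Fredholm index under small perturbations. The compactness-of-$\overline{\mathcal{J}}$ argument you spell out to obtain a uniform $\delta$ is a detail the paper leaves implicit, but it does not amount to a different approach.
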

\begin{proof}
By Proposition \ref{prop_limit_integral_operator} and \ref{prop_G_inter_limit_fred_kernel_charac}, and the fact that Fredholm index is stable under small perturbation \cite{qiu2023mathematical}, we conclude that $\mathbb{G}^{\Gamma}_{\epsilon}(\lambda_*+\epsilon\cdot h)$ is a Fredholm operator with zero index for $h\in\mathcal{J}$ and small $\epsilon$. The analyticity of $\mathbb{G}^{\Gamma}_{\epsilon}(\lambda_*+\epsilon\cdot h)$ follows from \eqref{eq_interface_match} and Proposition \ref{prop_G_eps_cont_analyticity}.
\end{proof}

\subsection{Proof of Theorem \ref{thm_main result}}
With properties of the operator $\mathbb{G}^{\Gamma}(\lambda_*+\epsilon\cdot h)$ established in the previous sections, the proof of Theorem \ref{thm_main result} now follows directly from the Gohberg-Sigal theory.

By Proposition \ref{prop_charac_to_resonance}, it's sufficient to show that $\mathbb{G}^{\Gamma}(\lambda_*+\epsilon\cdot h)$ has a characteristic value for $h\in \mathcal{J}$. Since $\mathbb{G}^{\Gamma}_{\epsilon}(\lambda_*+\epsilon\cdot h)\to \mathbb{G}^{\Gamma}_{0}(h)$ (by Proposition \ref{prop_limit_integral_operator}), and $\mathbb{G}^{\Gamma}_{0}(h)$ is invertible for $h\in\partial J$ (by Proposition \ref{prop_limit_integral_operator}), we have
\begin{equation*}
(\mathbb{G}^{\Gamma}_{0}(h))^{-1}\Big(\mathbb{G}^{\Gamma}_{\epsilon}(\lambda_*+\epsilon\cdot h)-\mathbb{G}^{\Gamma}_{0}(h)\Big) \to 0 \,\text{ for }\, h\in\partial J.
\end{equation*}
Note that the convergence is uniform in $h$. As a consequence, the following inequality holds for $h\in\partial \mathcal{J}$ and $\epsilon$ being sufficiently small
\begin{equation*}\label{eq_sec44_1}
\Bigg\|(\mathbb{G}^{\Gamma}_{0}(h))^{-1}\Big(\mathbb{G}^{\Gamma}_{\epsilon}(\lambda_*+\epsilon\cdot h)-\mathbb{G}^{\Gamma}_{0}(h)\Big) \Bigg\|_{\mathcal{B}(\tilde{H}^{-\frac{1}{2}}(\Gamma))}<1.
\end{equation*}
Then the generalized Rouché theorem (see \cite[Theorem 2.9]{qiu2023mathematical}) indicates that, for sufficiently small $\epsilon>0$, $\mathbb{G}^{\Gamma}_{\epsilon}(\lambda_*+\epsilon\cdot h)$ attains a unique characteristic value $\lambda^\star :=\lambda_*+h^\star$ with $h^\star\in \mathcal{J}$. This concludes the proof.

\appendix
\setcounter{secnumdepth}{0}
\section{Appendix}

\subsection{Appendix A: Proof of Lemma \ref{lem_analytic_domain}}
\setcounter{equation}{0}
\setcounter{subsection}{0}
\renewcommand{\theequation}{A.\arabic{equation}}
\renewcommand{\thesubsection}{A.\arabic{subsection}}

\begin{proof}[Proof of Lemma \ref{lem_analytic_domain}]
Step 1: We first show that statement $1$ holds for $\nu=\epsilon^2$ when $\epsilon$ is sufficiently small. 
The key is that $\lambda_{\mathfrak{n}_*,\epsilon}(p)$ is an isolated eigenvalue of $\mathcal{L}_{\epsilon}(p)$ for each $p\in [-\pi,\pi]$, as seen in Figure \ref{fig_perturbed_band_structure}. Since $\{\mathcal{L}_{\epsilon}(p)\}$ is an analytic family indexed by $p$ for fixed $\epsilon\neq 0$, the Kato-Rellich theorem \cite{kato2013perturbation} indicates that $\lambda_{\mathfrak{n}_*,\epsilon}(p)$ is analytic in a neighborhood of $[-\pi,\pi]$. Hence we only need to show the maximal analytic domain of $\lambda_{\mathfrak{n}_*,\epsilon}(p)$ contains $D_{\epsilon,
\nu}$ as claimed. This is achieved following a standard covering argument. For each $p\in [-\pi,\pi]$, $\lambda_{\mathfrak{n}_*,\epsilon}$ is analytic in a neighborhood $B(p,r(p;\epsilon))$ with the radius $r(p;\epsilon)>0$. Moreover, \cite[Chapter VII, Theorem 3.9]{kato2013perturbation} indicates $r(p;\epsilon)$ is estimated by the distance between $\lambda_{\mathfrak{n}_*,\epsilon}$ and its adjacent bands, i.e.
\begin{equation} \label{eq_app_A_radius_distance}
    r(p;\epsilon)\gtrsim d(p;\epsilon):=\min \{|\lambda_{\mathfrak{n}_*,\epsilon}(p)-\lambda_{\mathfrak{n}_*-1,\epsilon}(p)|,|\lambda_{\mathfrak{n}_*,\epsilon}(p)-\lambda_{\mathfrak{n}_*+1,\epsilon}(p)|\}.
\end{equation}
Observe that by Assumption \ref{assump_dirac_points},  $d(p;\epsilon)=O(1)$ for $p$ near $\pm q_*$ (see Figure \ref{fig_perturbed_band_structure}). More precisely,
\begin{equation} \label{eq_app_A_8}
d(p;\epsilon)>d_0>0,\quad \forall p\in (q_*-|\epsilon|^{\frac{1}{3}},q_*+|\epsilon|^{\frac{1}{3}})\cup (-q_*-|\epsilon|^{\frac{1}{3}},-q_*+|\epsilon|^{\frac{1}{3}}),
\end{equation}
for some constant $d_0$ that is independent of $\epsilon$. Thus, by \eqref{eq_app_A_radius_distance},  $B(q_*,|\epsilon|^{\frac{1}{3}})\cup B(-q_*,|\epsilon|^{\frac{1}{3}})$ is contained in the analytic domain of $\lambda_{\mathfrak{n}_*,\epsilon}(p)$. 

We next introduce the following rectangle 
\begin{equation*}
\mathcal{R}_{\nu}:=\{p\in\mathbf{C}:-(1+\nu)\pi<\text{Re}(p)< (1+\nu)\pi, |\Im(p)|<\nu \}. 
\end{equation*}
Note that for $\nu= \epsilon^2$, the width of $R_{\epsilon^2}$ equals $\epsilon^2$. In contrast, $\min_{p\in [-\pi,\pi]}d(p;\epsilon) =O(|\epsilon|)$, which follows from 
Theorem \ref{thm_local_gap_open}. Thus, by the estimate \eqref{eq_app_A_radius_distance}, $\lambda_{\mathfrak{n}_*,\epsilon}(p)$ is analytic inside $R_{\epsilon^2}$ for $\epsilon$ sufficiently small. Therefore, statement $1$ holds for $\epsilon$ sufficiently small and $\nu= \epsilon^2$. 


\medskip

Step 2: We prove that for $\epsilon$ sufficiently small, statement $2$ holds for $\nu= \epsilon^2$. 

Step 2.1: We first note that $\lambda_{\mathfrak{n}_*,\epsilon}(p)$ is analytic in a neighbor of $p=q_*$. For $\epsilon$ small enough,  equation \eqref{eq_perturbed_fold_remainder_2} implies that
\begin{equation*} \label{eq_app_A_9}
\lambda_{\mathfrak{n}_*,\epsilon}^{\prime}(q_*) > \frac{1}{2}\lambda_{\mathfrak{n}_*}^{\prime}(q_*)>0. 
\end{equation*}
By the inverse function theorem for analytical functions, $\lambda_{\mathfrak{n}_*,\epsilon}(\cdot)$ maps the open disc $B(q_*,|\epsilon|^{\frac{1}{3}})$ biholomorphically to an open neighborhood of $\lambda_{\mathfrak{n}_*,\epsilon}(q_*)$, denoted as $U(\lambda_{\mathfrak{n}_*,\epsilon}(q_*))$, which contains an open disc $B(\lambda_{\mathfrak{n}_*,\epsilon}(q_*), c_1|\epsilon|^{\frac{1}{3}})$ for some constant $c_1$ independent of $\epsilon$. In addition, since 
$\lambda_{\mathfrak{n}_*,\epsilon}(p)$ is real-valued for real-valued $p$, $\lambda_{\mathfrak{n}_*,\epsilon}(\cdot)$ maps the upper half-disc $B(q_*,|\epsilon|^{\frac{1}{3}}) \cap \mathbf{C}_+$ (the lower half-disc $B(q_*,|\epsilon|^{\frac{1}{3}}) \cap \mathbf{C}_-$, resp.) to the upper neighborhood $U(\lambda_{\mathfrak{n}_*,\epsilon}(q_*))\cap \mathbf{C}_+$ (the lower half-neighborhood $U(\lambda_{\mathfrak{n}_*,\epsilon}(q_*))\cap \mathbf{C}_-$, resp.). On the other hand, the estimate \eqref{eq_perturbed_fold_eigenvalue_eigenfunction} indicates that $\lambda_{\mathfrak{n}_*,\epsilon}(q_*)-\lambda_*=\mathcal{O}(|\epsilon|)$. 
Then $\mathcal{I}_\epsilon \subset U(\lambda_{\mathfrak{n}_*,\epsilon}(q_*))$ for $\epsilon$ is small enough. 
Therefore, for any $\lambda\in\mathcal{I}_\epsilon$, we conclude that for $\epsilon$ being small enough there exists a unique root $p=q_{+,\epsilon}(\lambda)\in B(q_*,|\epsilon|^{\frac{1}{3}})$ with the estimate $|q_{+,\epsilon}(\lambda)-q_*|=\mathcal{O}(\epsilon)$ and the property $\Im(q_{+,\epsilon}(\lambda))\gtrless 0$ for $\Im(\lambda)\gtrless 0$. 

Next, observe that the reflection symmetry of the system implies the band structure is symmetric about the origin
\begin{equation} \label{eq_sec41_7}
\lambda_{\mathfrak{n}_*,\epsilon}(p)=\lambda_{\mathfrak{n}_*,\epsilon}(-p),\quad p\in[-\pi,\pi].
\end{equation}
The uniqueness of analytic continuation implies that \eqref{eq_sec41_7} holds for all $p\in D_{\epsilon,\epsilon^2}$. Consequently, $q_{-}(\lambda,\epsilon):=-q_{+}(\lambda,\epsilon)$ is also a root of $\lambda_{\mathfrak{n}_*,\epsilon}(p)=\lambda$. It's then straightforward to check that $q_{-,\epsilon}(\lambda)$ satisfies all the desired properties in statement $2$.

Step 2.2: We prove that for $\epsilon$ small enough, $\lambda_{\mathfrak{n}_*,\epsilon}(p)=\lambda$ ($\lambda\in\mathcal{I}_{\epsilon}$) has no root for $p\in D_{\epsilon,\epsilon^2}\backslash (B(q_*,|\epsilon|^{\frac{1}{3}})\cup B(-q_*,|\epsilon|^{\frac{1}{3}})$. By Theorem \ref{thm_local_gap_open}, we have
\begin{equation} \label{eq_sec41_2}
    |\lambda_{\mathfrak{n}_*,\epsilon}(\text{Re}(p))-\lambda_*|\geq |t_*||\epsilon|,\quad \forall p\in D_{\epsilon,\epsilon^2}\backslash (B(q_*,|\epsilon|^{\frac{1}{3}})\cup B(-q_*,|\epsilon|^{\frac{1}{3}}).
\end{equation}
On the other hand, 
\begin{equation} \label{eq_sec41_3}
    |\lambda_{\mathfrak{n}_*,\epsilon}(p)-\lambda_{\mathfrak{n}_*,\epsilon}(\text{Re}(p))|\lesssim |\Im(p)|\lesssim\epsilon^2 ,\quad \forall p\in D_{\epsilon,\epsilon^2}\backslash (B(q_*,|\epsilon|^{\frac{1}{3}})\cup B(-q_*,|\epsilon|^{\frac{1}{3}}).
\end{equation}
As a result, \eqref{eq_sec41_2} and \eqref{eq_sec41_3} imply that for all $\lambda\in\mathcal{I}_{\epsilon}$,
\begin{equation*}
\begin{aligned}
|\lambda_{\mathfrak{n}_*,\epsilon}(p)-\lambda|
&\geq
|\lambda_{\mathfrak{n}_*,\epsilon}(\text{Re}(p))-\lambda_*|-|\lambda-\lambda_*|-|\lambda_{\mathfrak{n}_*,\epsilon}(p)-\lambda_{\mathfrak{n}_*,\epsilon}(\text{Re}(p))| \\
&\geq (1-c_0)|t_*||\epsilon|-\epsilon^2,
\end{aligned}
\end{equation*}
where $c_0<1$ is fixed in Theorem \ref{thm_main result}. Thus for $\epsilon$ small enough, 
\begin{equation*}
|\lambda_{\mathfrak{n}_*,\epsilon}(p)-\lambda|
\geq (1-c_0)|t_*||\epsilon|-\epsilon^2>0 \quad \forall \lambda\in\mathcal{I}_{\epsilon}.
\end{equation*}
Hence $\lambda_{\mathfrak{n}_*,\epsilon}(p)=\lambda$ has no root for $p\in D_{\epsilon,\epsilon^2}\backslash (B(q_*,|\epsilon|^{\frac{1}{3}})\cup B(-q_*,|\epsilon|^{\frac{1}{3}})$. 

In conclusion, we proved that statement 2 holds for $\epsilon$ sufficiently small and $\nu=\epsilon^2$.  

\medskip

Step 3: We prove that statement $3$ holds for $\epsilon$ sufficiently small and properly chosen $\nu=\nu_1(\epsilon)$. Since $ u_{\mathfrak{n}_{*},\epsilon}(x;p)$ is analytic in $p$, so is $\overline{u_{\mathfrak{n}_{*},\epsilon}(x;\overline{p})}$ and $\mathbb{P}_{\mathfrak{n}_*,\epsilon}(p)$. Now we consider the analyticity of  $(\mathcal{L}_{\epsilon}(p)\mathbb{Q}_{\mathfrak{n}_*,\epsilon}(p)-\lambda)^{-1}$. Note that for $p\in [-\pi,\pi]$, we have
\begin{equation*}
\sigma(\mathcal{L}_{\epsilon}(p)\mathbb{Q}_{\mathfrak{n}_*,\epsilon}(p))
=\{0\}\cup \{\lambda_{n,\epsilon}(p)\}_{n\neq \mathfrak{n}_*}.
\end{equation*}
Thus, for $\lambda \in\mathcal{I}_{\epsilon}$ and $p\in [-\pi,\pi]$, it holds that
\begin{equation} \label{eq_app_A_4}
\begin{aligned}
\text{dist}(\lambda,\sigma(\mathcal{L}_{\epsilon}(p)\mathbb{Q}_{\mathfrak{n}_*,\epsilon}(p)))
&\geq \text{dist}(\text{Re}(\lambda),\sigma(\mathcal{L}_{\epsilon}(p)\mathbb{Q}_{\mathfrak{n}_*,\epsilon}(p))) \\
&=\min \{\min_{p\in [-\pi,\pi]} |\text{Re}(\lambda)-\lambda_{\mathfrak{n}_*-1,\epsilon}(p)|,\min_{p\in [-\pi,\pi]} |\text{Re}(\lambda)-\lambda_{\mathfrak{n}_*+1,\epsilon}(p)|\}.
\end{aligned}
\end{equation}
By Assumption \ref{assump_dirac_points}, $\text{Re}(\lambda)\neq \lambda_{\mathfrak{n}_*-1}(p)$ for any $p\in [-\pi,\pi]\backslash \{0\}$. Thus, the perturbation theory implies that  $\epsilon$ small enough, 
\begin{equation*}
\min_{p\in [-\pi,\pi]} |\text{Re}(\lambda)-\lambda_{\mathfrak{n}_*-1,\epsilon}(p)|>0.
\end{equation*}
On the other hand, by equation \eqref{eq_perturbed_dirac_dispersion} in Theorem
\ref{thm_local_gap_open}, we have
\begin{equation*}
\begin{aligned}
\min_{p\in [-\pi,\pi]} |\text{Re}(\lambda)-\lambda_{\mathfrak{n}_*+1,\epsilon}(p)|
&=|\text{Re}(\lambda)-\lambda_{\mathfrak{n}_*+1,\epsilon}(0)| \\
&\gtrsim (1-c_0)|t_*||\epsilon|>0,\quad \text{for any}\, \lambda\in\mathcal{I}_{\epsilon}.
\end{aligned}
\end{equation*}
In conclusion, \eqref{eq_app_A_4} indicates that $\text{dist}(\lambda,\sigma(\mathcal{L}_{\epsilon}(p)\mathbb{Q}_{\mathfrak{n}_*,\epsilon}(p)))>c_2|\epsilon|$ for any $\lambda\in \mathcal{I}_{\epsilon}$ and $p\in [-\pi,\pi]$, where $c_2>0$ is a constant independent of $\epsilon$. Thus the resolvent $(\mathcal{L}_{\epsilon}(p)\mathbb{Q}_{\mathfrak{n}_*,\epsilon}(p)-\lambda)^{-1}$ is well-defined for $\lambda \in\mathcal{I}_{\epsilon}$ and $p\in [-\pi,\pi]$. As a consequence, since $p\mapsto \mathcal{L}_{\epsilon}(p)\mathbb{Q}_{\mathfrak{n}_*,\epsilon}(p)$ is analytic in a complex neighborhood of $[-\pi,\pi]$, denoted as $\mathcal{R}_{\nu}$, with $\nu=\nu_1(\epsilon)>0$, the same holds for its resolvent $(\mathcal{L}_{\epsilon}(p)\mathbb{Q}_{\mathfrak{n}_*,\epsilon}(p)-\lambda)^{-1}$. This completes the proof of statement 3.

\medskip
Step 4. Finally, we conclude that for $\epsilon$ sufficiently small, say $|\epsilon|<\epsilon_0$ for some constant $\epsilon_0>0$, $\nu(\epsilon)=\min\{\nu_1(\epsilon),\epsilon^2\}$ satisfies all the required properties in Lemma \ref{lem_analytic_domain}.
\end{proof}

\subsection{Appendix B: Proof of Proposition \ref{prop_limit_integral_operator}}
\setcounter{equation}{0}
\setcounter{theorem}{0}
\setcounter{subsection}{0}
\renewcommand{\theequation}{B.\arabic{equation}}
\renewcommand{\thetheorem}{B.\arabic{theorem}}
\renewcommand{\thesubsection}{B.\arabic{subsection}}
We write $\mathbb{G}^{\Gamma}_{\epsilon}=\mathbb{G}^{\Gamma,evan}_{\epsilon}+\mathbb{G}^{\Gamma,prop,1}_{\epsilon}+\mathbb{G}^{\Gamma,prop,2}_{\epsilon}$ and $\mathbb{T}=\mathbb{T}^{evan}+\mathbb{T}^{prop}$, where
\begin{equation} \label{eq_B_decom_1}
\begin{aligned}
\mathbb{G}^{\Gamma,evan}_{\epsilon}(\lambda_*+\epsilon h)
&:=\frac{1}{2\pi}\int_{-\pi}^{\pi}
\sum_{n\neq \mathfrak{n_*},\mathfrak{n_*}+1}\mathbb{K}_{n,\epsilon}(p,\lambda_*+\epsilon h)dp 
 +\frac{1}{2\pi}\int_{-\pi}^{\pi}
\sum_{n\neq \mathfrak{n_*},\mathfrak{n_*}+1}\mathbb{K}_{n,-\epsilon}(p,\lambda_*+\epsilon h)dp \\
&=:\tilde{\mathbb{G}}^{evan}_{\epsilon}(\lambda_*+\epsilon h)
+\tilde{\mathbb{G}}^{evan}_{-\epsilon}(\lambda_*+\epsilon h),
\end{aligned}
\end{equation}

\begin{equation} \label{eq_B_decom_2}
\begin{aligned}
\mathbb{G}^{\Gamma,prop,1}_{\epsilon}(\lambda_*+\epsilon h)
&:=\Big(\frac{1}{2\pi}\int_{C_{\epsilon}\backslash (-\epsilon^{\frac{1}{3}},\epsilon^{\frac{1}{3}})}
\mathbb{K}_{\mathfrak{n}_{*},\epsilon}(p,\lambda_*+\epsilon h)dp 
+
\frac{1}{2\pi}\int_{(-\pi,\pi)\backslash (-\epsilon^{\frac{1}{3}},\epsilon^{\frac{1}{3}})}\mathbb{K}_{\mathfrak{n}_{*}+1,\epsilon}(p,\lambda_*+\epsilon h)dp\Big) \\
&\quad +\Big(\frac{1}{2\pi}\int_{C_{\epsilon}\backslash (-\epsilon^{\frac{1}{3}},\epsilon^{\frac{1}{3}})}
\mathbb{K}_{\mathfrak{n}_{*},-\epsilon}(p,\lambda_*+\epsilon h)dp 
+ \frac{1}{2\pi}\int_{(-\pi,\pi)\backslash (-\epsilon^{\frac{1}{3}},\epsilon^{\frac{1}{3}})}\mathbb{K}_{\mathfrak{n}_{*}+1,-\epsilon}(p,\lambda_*+\epsilon h)dp\Big) \\
&=:\tilde{\mathbb{G}}^{prop,1}_{\epsilon}(\lambda_*+\epsilon h)+\tilde{\mathbb{G}}^{prop,1}_{-\epsilon}(\lambda_*+\epsilon h),
\end{aligned}
\end{equation}

\begin{equation} \label{eq_B_decom_3}
\begin{aligned}
\mathbb{G}^{\Gamma,prop,2}_{\epsilon}(\lambda_*+\epsilon h)
&:=\frac{1}{2\pi}
\int_{-\epsilon^{\frac{1}{3}}}^{\epsilon^{\frac{1}{3}}}
\mathbb{K}_{\mathfrak{n}_{*},\epsilon}(p,\lambda_*+\epsilon h)dp
+\frac{1}{2\pi}
\int_{-\epsilon^{\frac{1}{3}}}^{\epsilon^{\frac{1}{3}}}\mathbb{K}_{\mathfrak{n}_{*}+1,\epsilon}(p,\lambda_*+\epsilon h)dp \\
&\quad +\frac{1}{2\pi}\int_{-\epsilon^{\frac{1}{3}}}^{\epsilon^{\frac{1}{3}}}
\mathbb{K}_{\mathfrak{n}_{*},-\epsilon}(p,\lambda_*+\epsilon h)dp
+\frac{1}{2\pi}\int_{-\epsilon^{\frac{1}{3}}}^{\epsilon^{\frac{1}{3}}}\mathbb{K}_{\mathfrak{n}_{*}+1,-\epsilon}(p,\lambda_*+\epsilon h)dp \\
&=:\mathbb{J}_{1, \epsilon}( h)+\mathbb{J}_{2, \epsilon}(h)+\mathbb{J}_{1, -\epsilon}(h)+\mathbb{J}_{2, -\epsilon}(h)
\end{aligned}
\end{equation}
and
\begin{equation*} 
\begin{aligned}
\mathbb{T}^{evan}
&:=\frac{1}{2\pi}\int_{-\pi}^{\pi}\sum_{n\neq n_*,m_*}\mathbb{K}_{n}(p,\lambda_*)dp ,
\end{aligned}
\end{equation*}

\begin{equation*} 
\begin{aligned}
\mathbb{T}^{prop}
&:=\frac{1}{2\pi}p.v.\int_{-\pi}^{\pi}\mathbb{K}_{n_*}(p,\lambda_*)dp
+\frac{1}{2\pi}p.v.\int_{-\pi}^{\pi}\mathbb{K}_{m_*}(p,\lambda_*)dp \\
&\quad -\frac{i}{2}\frac{v_{n_{*}}(x;-q_{*})
\langle \cdot,\overline{v_{n_{*}}(y;-q_{*})}\rangle}{|\mu_{n_{*}}^{'}(-q_{*})|}
-\frac{i}{2}\frac{v_{m_{*}}(x;q_{*})
\langle\cdot ,\overline{v_{m_{*}}(y;q_{*})}\rangle}{|\mu_{m_{*}}^{'}(q_{*})|}.
\end{aligned}
\end{equation*}
Here $\mathbb{K}_{n,\epsilon}$ is the integral operator associated with the kernel $K_{n,\epsilon}$ as defined in \eqref{eq_K_n_eps_operator_def}, and $\mathbb{K}_{n}$ is defined similarly with the kernel replaced by $K_n$ in \eqref{eq_K_n_def}. Then 
\begin{equation*}
\begin{aligned}
\mathbb{G}^{\Gamma}_{\epsilon}(\lambda_*+\epsilon h)-\Big(2\mathbb{T}+\beta(h)\mathbb{P}^{Dirac}\Big)
=&\Big(\tilde{\mathbb{G}}^{evan}_{\epsilon}(\lambda_*+\epsilon h)
-\mathbb{T}^{evan}
\Big)+\Big(\tilde{\mathbb{G}}^{evan}_{-\epsilon}(\lambda_*+\epsilon h)
-\mathbb{T}^{evan}
\Big) \\
&+\Big(\tilde{\mathbb{G}}^{prop,1}_{\epsilon}(\lambda_*+\epsilon h)
-\mathbb{T}^{prop}
\Big)+\Big(\tilde{\mathbb{G}}^{prop,1}_{-\epsilon}(\lambda_*+\epsilon h)
-\mathbb{T}^{prop}
\Big) \\
&+\Big(\mathbb{G}^{\Gamma,prop,2}_{\epsilon}(\lambda_*+\epsilon h)
-\beta(h)\mathbb{P}^{Dirac}
\Big),
\end{aligned}
\end{equation*}
and hence Proposition \ref{prop_limit_integral_operator} is a direct consequence of Lemma \ref{lem_app_B_1}, \ref{lem_app_B_2} and \ref{lem_app_B_3} below. Note that all the convergences therein are uniform in $h\in\overline{\mathcal{J}}$.
\begin{lemma} \label{lem_app_B_1}
\begin{equation} \label{eq_app_B_3}
\lim_{\epsilon\to 0}\Big\|
\tilde{\mathbb{G}}^{evan}_{\pm\epsilon}(\lambda_*+\epsilon h) - \mathbb{T}^{evan}
\Big\|_{\mathcal{B}(\tilde{H}^{-\frac{1}{2}}(\Gamma),H^{\frac{1}{2}}(\Gamma))}
=0.
\end{equation}
\end{lemma}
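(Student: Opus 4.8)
The plan is to establish \eqref{eq_app_B_3} by exploiting the fact that the summands in $\tilde{\mathbb{G}}^{evan}_{\pm\epsilon}$ run only over $n\neq \mathfrak{n}_*,\mathfrak{n}_*+1$, so the denominators $\lambda_*+\epsilon h-\lambda_{n,\pm\epsilon}(p)$ stay uniformly bounded away from zero for $|\epsilon|$ small, $h\in\overline{\mathcal{J}}$ and $p\in[-\pi,\pi]$; consequently the evanescent part of the Green's operator depends analytically (indeed, continuously) on $(\epsilon,\lambda)$ near $(0,\lambda_*)$. Concretely, I would first note that by Assumption \ref{assump_dirac_points}(2), the bands $\lambda_{n,0}(p)=\mu_n(p)$ with $n\notin\{n_*,m_*\}$ satisfy $|\mu_n(p)-\lambda_*|\geq 2\delta_0>0$ uniformly in $p$ and $n$ (using that only finitely many bands come within a fixed distance of $\lambda_*$, and none touches it). By the standard perturbation estimates for self-adjoint analytic families (the same input used in Theorem \ref{thm_perturbed_fold_dispersion} and Lemma \ref{lem_analytic_domain}), $\sup_{p}\sup_{n\neq \mathfrak{n}_*,\mathfrak{n}_*+1}|\lambda_{n,\pm\epsilon}(p)-\mu_n(p)|\to 0$, so for $|\epsilon|$ small the same lower bound $|\lambda_*+\epsilon h-\lambda_{n,\pm\epsilon}(p)|\geq\delta_0$ holds uniformly.

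Next I would control the numerators. The identity $\tilde{\mathbb{G}}^{evan}_{\pm\epsilon}(\lambda)\varphi=\sum_{n\neq \mathfrak{n}_*,\mathfrak{n}_*+1}\frac{1}{2\pi}\int_{-\pi}^\pi\frac{u_{n,\pm\epsilon}(x;p)\langle\varphi,\overline{u_{n,\pm\epsilon}(\cdot;p)}\rangle}{\lambda-\lambda_{n,\pm\epsilon}(p)}\,dp$ can be rewritten, using the spectral resolution of $\mathcal{L}_{\pm\epsilon}(p)$, as an integral of resolvent-type operators $(\mathcal{L}_{\pm\epsilon}(p)\mathbb{Q}-\lambda)^{-1}$ composed with the trace and extension operators $\mathrm{Tr},\mathcal{M}$ of Section 1.3; this is precisely the setting of Lemma \ref{lem_analytic_domain}(3), which tells us the relevant resolvent is analytic in $p$ on a strip and bounded $\mathcal{B}((H^1_{p,b})^*,H^1_{p,b})$. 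Since $\mathrm{Tr}:H^1(Y)\to H^{1/2}(\Gamma)$ and $\mathcal{M}=\mathrm{Tr}^*:\tilde H^{-1/2}(\Gamma)\to (H^1_{p,b})^*$ are bounded, $\tilde{\mathbb{G}}^{evan}_{\pm\epsilon}(\lambda)$ is a bounded operator $\tilde H^{-1/2}(\Gamma)\to H^{1/2}(\Gamma)$ with norm controlled uniformly in $\epsilon,h$. The convergence $\tilde{\mathbb{G}}^{evan}_{\pm\epsilon}(\lambda_*+\epsilon h)\to \mathbb{T}^{evan}$ then follows by writing the difference as
\begin{equation*}
\frac{1}{2\pi}\int_{-\pi}^{\pi}\Big[\mathrm{Tr}\big(\mathcal{L}_{\pm\epsilon}(p)\mathbb{Q}_{\pm\epsilon}-\lambda_*-\epsilon h\big)^{-1}\mathcal{M}-\mathrm{Tr}\big(\mathcal{L}(p)\mathbb{Q}-\lambda_*\big)^{-1}\mathcal{M}\Big]\,dp,
\end{equation*}
estimating the resolvent difference by a second-resolvent identity $R_1-R_2=R_1(\mathcal{L}_2-\mathcal{L}_1+(\lambda_1-\lambda_2))R_2$, and using $\|\mathcal{L}_{\pm\epsilon}(p)-\mathcal{L}(p)\|\lesssim\|n_{\pm\epsilon}-n\|_{L^\infty}\to0$ (the estimate already recorded after \eqref{eq_sec32_6}) together with $|\epsilon h|\to0$. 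The $p$-integral is over the compact set $[-\pi,\pi]$ and the integrand is uniformly small, so the $\mathcal{B}(\tilde H^{-1/2},H^{1/2})$-norm of the difference tends to $0$ uniformly in $h\in\overline{\mathcal{J}}$.

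One technical point deserving care, and the step I expect to be the main obstacle, is the interchange of the infinite sum over $n$ with the limit $\epsilon\to0$: one must show the tail $\sum_{n:\,\lambda_n\text{ large}}$ is uniformly small in $\epsilon$. This is handled by a standard Weyl-law/elliptic-regularity argument — for $\lambda_{n,\pm\epsilon}(p)$ large one has $|\lambda_*+\epsilon h-\lambda_{n,\pm\epsilon}(p)|\gtrsim\lambda_{n,\pm\epsilon}(p)$, and the trace theorem gives $\|\mathrm{Tr}\,u_{n,\pm\epsilon}(\cdot;p)\|_{H^{1/2}(\Gamma)}\lesssim\|u_{n,\pm\epsilon}(\cdot;p)\|_{H^1(Y)}\lesssim(1+\lambda_{n,\pm\epsilon}(p))^{1/2}$, so the tail of the operator series is $O(\sum_n\lambda_n^{-1/2}\cdot\lambda_n^{1/2}\cdot\lambda_n^{-1})$, which converges uniformly since $\lambda_{n,\pm\epsilon}(p)\to\lambda_{n,0}(p)$ grows like $n^{2}$ (or at least is bounded below by a positive power of $n$) uniformly in $\epsilon$ and $p$. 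With the tail dispatched, only finitely many terms remain, each of which converges by the resolvent argument above, and the lemma follows. I would record the tail estimate as a short sublemma or simply cite the analogous bound from \cite{qiu2023mathematical}, since the computation is routine.
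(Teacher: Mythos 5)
The resolvent-based strategy you describe---rewriting $\tilde{\mathbb{G}}^{evan}_{\pm\epsilon}$ as $\mathrm{Tr}\circ(\mathcal{L}_{\pm\epsilon}(p)\mathbb{Q}(p)-\lambda)^{-1}\circ\mathcal{M}$, invoking Lemma \ref{lem_analytic_domain}(3) for analyticity and uniform boundedness, and passing the limit $\epsilon\to 0$ inside the $p$-integral---is exactly how the paper proceeds (the paper splits further into the finitely many bands below $\mathfrak{n}_*$, treated term by term, and the infinitely many above $\mathfrak{n}_*+1$, treated via the reduced resolvent $(\mathcal{L}_\epsilon(p)\mathbb{Q}_{+,\epsilon}(p)-\lambda)^{-1}$, then applies dominated convergence over $[-\pi,\pi]$). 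However, the "fallback" Weyl-law tail estimate that you append, and which you single out as the main obstacle, is both incorrect and unnecessary.

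Each rank-one term $\displaystyle\frac{\mathrm{Tr}(u_{n,\pm\epsilon})\,\langle\cdot,\overline{\mathrm{Tr}(u_{n,\pm\epsilon})}\rangle}{\lambda_*+\epsilon h-\lambda_{n,\pm\epsilon}(p)}$ has $\mathcal{B}(\tilde{H}^{-\frac{1}{2}}(\Gamma),H^{\frac{1}{2}}(\Gamma))$-norm bounded by $\dfrac{\|\mathrm{Tr}\,u_{n,\pm\epsilon}\|_{H^{1/2}(\Gamma)}^2}{|\lambda_*+\epsilon h-\lambda_{n,\pm\epsilon}(p)|}\lesssim\dfrac{\lambda_n}{\lambda_n}=O(1)$, not $O(\lambda_n^{-1})$ as you claim: both the output vector and the dual pairing against $\varphi\in\tilde H^{-1/2}(\Gamma)$ each cost a factor $\|\mathrm{Tr}\,u_{n}\|_{H^{1/2}(\Gamma)}\lesssim\lambda_n^{1/2}$, and the $\lambda_n^{-1/2}$ factor in your product $\lambda_n^{-1/2}\cdot\lambda_n^{1/2}\cdot\lambda_n^{-1}$ has no source. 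Moreover, even an $O(\lambda_n^{-1})$ bound would not close the argument: the unit cell $Y$ is two-dimensional, so Weyl's law gives $\lambda_n\sim n$, and $\sum_n n^{-1}$ diverges. In short, the series does not converge absolutely term by term in $\mathcal{B}(\tilde{H}^{-\frac{1}{2}}(\Gamma),H^{\frac{1}{2}}(\Gamma))$; its convergence is a feature of elliptic regularity of the \emph{full} (projected) resolvent, which is precisely what the reduced-resolvent reformulation encodes. So the resolvent formulation should be your main argument, not a preamble: once the infinite sum over $n>\mathfrak{n}_*+1$ is absorbed into the single operator $(\mathcal{L}_\epsilon(p)\mathbb{Q}_{+,\epsilon}(p)-\lambda_*-\epsilon h)^{-1}\mathbb{Q}_{+,\epsilon}(p)$, there is no sum-vs-limit interchange to justify---only the pointwise-in-$p$ resolvent convergence (via the second resolvent identity or, as the paper does, Kato generalized convergence) together with the uniform bound and dominated convergence over the compact interval $[-\pi,\pi]$.
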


\begin{lemma} \label{lem_app_B_2}
\begin{equation} \label{eq_app_B_4}
\lim_{\epsilon\to 0}\Big\|
\tilde{\mathbb{G}}^{prop,1}_{\pm\epsilon}(\lambda_*+\epsilon h)
-\mathbb{T}^{prop}
\Big\|_{\mathcal{B}(\tilde{H}^{-\frac{1}{2}}(\Gamma),H^{\frac{1}{2}}(\Gamma))}
=0.
\end{equation}
\end{lemma}
\begin{lemma} \label{lem_app_B_3}
\begin{equation} \label{eq_app_B_29}
\lim_{\epsilon\to 0}\Big\|
\mathbb{G}^{\Gamma,prop,2}_{\epsilon}(\lambda_*+\epsilon h)
-\beta(h)\mathbb{P}^{Dirac}
\Big\|_{\mathcal{B}(\tilde{H}^{-\frac{1}{2}}(\Gamma),H^{\frac{1}{2}}(\Gamma))}
=0.
\end{equation}
\end{lemma}

\begin{proof}[Proof of Lemma \ref{lem_app_B_1}]
We prove \eqref{eq_app_B_3} for $\tilde{\mathbb{G}}^{evan}_{\epsilon}(\lambda_*+\epsilon h)$, while the proof of $\tilde{\mathbb{G}}^{evan}_{-\epsilon}(\lambda_*+\epsilon h)$ is the same. We define
\begin{equation*}
\begin{aligned}
&\tilde{\mathbb{G}}_{+,\epsilon}^{evan}(\lambda_*+\epsilon h) 
:=\frac{1}{2\pi}\int_{-\pi}^{\pi}
\sum_{n>\mathfrak{n_*}+1}\mathbb{K}_{n,\epsilon}(p,\lambda_*+\epsilon h)dp, \quad 
\\
&\tilde{\mathbb{G}}_{-,\epsilon}^{evan}(\lambda_*+\epsilon h) 
:=\frac{1}{2\pi}\int_{-\pi}^{\pi}
\sum_{n<\mathfrak{n_*}}\mathbb{K}_{n,\epsilon}(p,\lambda_*+\epsilon h)dp.  
\end{aligned}
\end{equation*}
Correspondingly,
\begin{equation*}
\mathbb{T}_{+} 
:=\frac{1}{2\pi}\int_{-\pi}^{\pi}\sum_{n>\mathfrak{n_*}+1}\mathbb{K}_{n}(p,\lambda_*)dp,\quad
\mathbb{T}_{-} 
:=\frac{1}{2\pi}\int_{-\pi}^{\pi}\sum_{n<\mathfrak{n_*}}\mathbb{K}_{n}(p,\lambda_*)dp. 
\end{equation*}
Note that \eqref{eq_app_B_3} follows from the decompositions $\tilde{\mathbb{G}}^{evan}_{\epsilon}=\tilde{\mathbb{G}}_{+,\epsilon}^{evan}+\tilde{\mathbb{G}}_{-,\epsilon}^{evan}$, $\mathbb{T}=\mathbb{T}_{+}+\mathbb{T}_{-} $, and the following identities
\begin{equation} \label{eq_app_B_6}
\lim_{\epsilon\to 0}\Big\|
\tilde{\mathbb{G}}_{+,\epsilon}^{evan}(\lambda_*+\epsilon h) 
-\mathbb{T}_{+} 
\Big\|_{\mathcal{B}(\tilde{H}^{-\frac{1}{2}}(\Gamma),H^{\frac{1}{2}}(\Gamma))}
=0,
\end{equation}
\begin{equation} \label{eq_app_B_7}
\lim_{\epsilon\to 0}\Big\|
\tilde{\mathbb{G}}_{-,\epsilon}^{evan}(\lambda_*+\epsilon h) 
-\mathbb{T}_{-} 
\Big\|_{\mathcal{B}(\tilde{H}^{-\frac{1}{2}}(\Gamma),H^{\frac{1}{2}}(\Gamma))}
=0.
\end{equation}
We prove \eqref{eq_app_B_6} and \eqref{eq_app_B_7} in the following two steps.

\medskip
Step 1: We first prove \eqref{eq_app_B_6}. Define 
\begin{equation*}
\mathbb{P}_{+,\epsilon}(p)f:=\sum_{n\leq \mathfrak{n}_*+1}(f,u_{n,\epsilon}(x;\overline{p}))_{L^2(Y;n_{\epsilon}(x))}u_{n,\epsilon}(x;p),\quad
\mathbb{Q}_{+,\epsilon}(p):=1-\mathbb{P}_{+,\epsilon}(p).
\end{equation*}
Then $\tilde{\mathbb{G}}^{evan}_{+,\epsilon}(\lambda_*+\epsilon h)$ and $\mathbb{T}_{+}^{evan}$ can be rewritten as follows, which are obtained by expanding the resolvents in Floquet series respectively: 
\begin{equation} \label{eq_app_B_5}
\begin{aligned}
\tilde{\mathbb{G}}^{evan}_{+,\epsilon}(\lambda_*+\epsilon h)
&=\frac{1}{2\pi}\int_{-\pi}^{\pi}\Big(
\text{Tr}\circ(\mathcal{L}_{\epsilon}(p)\mathbb{Q}_{+,\epsilon}(p)-\lambda_*-\epsilon h)^{-1}\mathbb{Q}_{+,\epsilon}(p)\circ(\frac{1}{n_{\epsilon}^2(y)}\mathcal{M})
\Big)dp, \\
\mathbb{T}_{+}^{evan}
&=\frac{1}{2\pi}\int_{-\pi}^{\pi}\Big(
\text{Tr}\circ(\mathcal{L}(p)\mathbb{Q}_{+}(p)-\lambda_*)^{-1}\mathbb{Q}_{+}(p)\circ(\frac{1}{n^2(y)}\mathcal{M})
\Big)dp.
\end{aligned}
\end{equation}
By Lemma \ref{lem_analytic_domain}, 
the map $p\mapsto \text{Tr}\circ(\mathcal{L}_{\epsilon}(p)\mathbb{Q}_{+,\epsilon}(p)-\lambda_*-\epsilon h)^{-1}\mathbb{Q}_{+,\epsilon}(p)\circ (\frac{1}{n_{\epsilon}^2(y)}\mathcal{M})$ is analytic near the real line. We claim that 
\begin{enumerate}[label =(\arabic*)]
    \item $\text{Tr}\circ(\mathcal{L}_{\epsilon}(p)\mathbb{Q}_{+,\epsilon}(p)-\lambda_*-\epsilon h)^{-1}\mathbb{Q}_{+,\epsilon}(p)\circ (\frac{1}{n_{\epsilon}^2(y)}\mathcal{M})\in\mathcal{B}(\tilde{H}^{-\frac{1}{2}}(\Gamma),H^\frac{1}{2}(\Gamma))$ has uniformly bounded operator norm for $p\in [-\pi,\pi]$ and $\epsilon \ll 1$;
    \item $\text{Tr}\circ(\mathcal{L}_{\epsilon}(p)\mathbb{Q}_{+,\epsilon}(p)-\lambda_*-\epsilon h)^{-1}\mathbb{Q}_{+,\epsilon}(p)\circ (\frac{1}{n_{\epsilon}^2(y)}\mathcal{M})$ converges to $\text{Tr}\circ(\mathcal{L}(p)\mathbb{Q}_{+}(p)-\lambda_*)^{-1}\mathbb{Q}_{+}(p)\circ (\frac{1}{n^2(y)}\mathcal{M})$ in operator norm for each $p\in [-\pi,\pi]$. 
\end{enumerate}
Then \eqref{eq_app_B_6} follows from \eqref{eq_app_B_5} and the dominated convergence theorem. The proofs of claim (1)-(2) are presented in Steps 1.1 and 1.2 below, respectively.

\medskip
Step 1.1: Note that 
\begin{equation*}
\sigma(\mathcal{L}_{\epsilon}(p)\mathbb{Q}_{+}(p))=\{0\}\cup \{\lambda_{n}(p)\}_{n>\mathfrak{n}_*+1}.
\end{equation*}
Thus, for $h\in\overline{\mathcal{J}}$, $\lambda_*+\epsilon h\notin \sigma(\mathcal{L}_{\epsilon}(p)\mathbb{Q}_{+}(p))$ and the resolvent $(\mathcal{L}_{\epsilon}(p)\mathbb{Q}_{+}(p)-\lambda_*-\epsilon h)^{-1}$ is well-defined. Moreover,
\begin{equation*}
\|(\mathcal{L}_{\epsilon}(p)\mathbb{Q}_{+,\epsilon}(p)-\lambda_*-\epsilon h)^{-1}\|_{\mathcal{B}((H^{1}_{p,b}(\Omega)^*,H^{1}_{p,b}(\Omega)))}
\leq \frac{1}{dist(\lambda_*+\epsilon h,\sigma(\mathcal{L}_{\epsilon}(p)\mathbb{Q}_{+}(p)))}=\mathcal{O}(1),
\end{equation*}
where the estimate is uniform for $p\in [-\pi,\pi]$. Thus,
\begin{equation*}
\begin{aligned}
\Big\|\text{Tr}\circ(\mathcal{L}_{\epsilon}(p)\mathbb{Q}_{+,\epsilon}(p)&-\lambda_*-\epsilon h)^{-1}\mathbb{Q}_{+,\epsilon}(p)\circ (\frac{1}{n_{\epsilon}^2(y)}\mathcal{M})\Big\|_{\mathcal{B}(\tilde{H}^{-\frac{1}{2}}(\Gamma),H^\frac{1}{2}(\Gamma))} 
=\mathcal{O}(1),
\end{aligned}
\end{equation*}
and claim (1) follows.

\medskip

Step 1.2: Note that $\mathcal{L}_{\epsilon}(p)\mathbb{Q}_{+,\epsilon}(p)$ converges to $\mathcal{L}(p)\mathbb{Q}_{+}(p)$ in the generalized sense as $\epsilon\to 0$ for each $p\in[-\pi,\pi]$. As a consequence, 
\begin{equation} \label{eq_app_B_10}
\lim_{\epsilon\to 0}\Big\|(\mathcal{L}_{\epsilon}(p)\mathbb{Q}_{+,\epsilon}(p)-\lambda_*)^{-1}-(\mathcal{L}(p)\mathbb{Q}_{+}(p)-\lambda_*)^{-1}\Big\|_{\mathcal{B}((H^{1}_{p,b}(\Omega)^*,H^{1}_{p,b}(\Omega)))}=0
\end{equation}
for each $p\in[-\pi,\pi]$. On the other hand, for each $n\leq \mathfrak{n}_*+1$, $\lim_{\epsilon\to 0}\|u_{n,\epsilon}(\cdot;p)-u_{n}(\cdot;p)\|_{H^1_{p,b}}\to 0$. So $\lim_{\epsilon\to 0}\|\mathbb{P}_{+,\epsilon}(p)-\mathbb{P}_{+}(p)\|_{\mathcal{B}((H^{1}_{p,b}(\Omega)^*,H^{1}_{p,b}(\Omega)))}\to 0$. Thus,
\begin{equation} \label{eq_app_B_12}
\begin{aligned}
\lim_{\epsilon\to 0}\|
\mathbb{Q}_{+,\epsilon}(p)
-\mathbb{Q}_{+}(p)\|_{\mathcal{B}((H^{1}_{p,b}(\Omega)^*,H^{1}_{p,b}(\Omega)))}
=\lim_{\epsilon\to 0}\|
\mathbb{P}_{+,\epsilon}(p)
-\mathbb{P}_{+}(p)\|_{\mathcal{B}((H^{1}_{p,b}(\Omega)^*,H^{1}_{p,b}(\Omega)))}=0.
\end{aligned}
\end{equation}
\eqref{eq_app_B_10} and \eqref{eq_app_B_12} imply that
\begin{equation} \label{eq_app_B_25}
\begin{aligned}
&\Big\|(\mathcal{L}_{\epsilon}(p)\mathbb{Q}_{+,\epsilon}(p)-\lambda_*)^{-1}\mathbb{Q}_{+,\epsilon}(p)-(\mathcal{L}(p)\mathbb{Q}_{+}(p)-\lambda_*)^{-1}\mathbb{Q}_{+}(p)\Big\|_{\mathcal{B}((H^{1}_{p,b}(\Omega)^*,H^{1}_{p,b}(\Omega)))} \\
&\leq \Big\|(\mathcal{L}_{\epsilon}(p)\mathbb{Q}_{+,\epsilon}(p)-\lambda_*)^{-1}-(\mathcal{L}(p)\mathbb{Q}_{+}(p)-\lambda_*)^{-1}\Big\|_{\mathcal{B}((H^{1}_{p,b}(\Omega)^*,H^{1}_{p,b}(\Omega)))}\cdot\Big\|\mathbb{Q}_{+,\epsilon}(p)\Big\|_{\mathcal{B}((H^{1}_{p,b}(\Omega)^*,H^{1}_{p,b}(\Omega)))} \\
&\quad +\Big\|(\mathcal{L}(p)\mathbb{Q}_{+}(p)-\lambda_*)^{-1}\Big\|_{\mathcal{B}((H^{1}_{p,b}(\Omega)^*,H^{1}_{p,b}(\Omega)))}\cdot \Big\|
\mathbb{Q}_{+,\epsilon}(p)
-\mathbb{Q}_{+}(p)\Big\|_{\mathcal{B}((H^{1}_{p,b}(\Omega)^*,H^{1}_{p,b}(\Omega)))} \\
&\to 0, \quad \mbox{as} \,\,\, \epsilon \to 0.
\end{aligned}
\end{equation}
Next, note that
\begin{equation*}
\begin{aligned}
\Big\|(\mathcal{L}_{\epsilon}(p)&\mathbb{Q}_{+,\epsilon}(p)-\lambda_*)^{-1}\mathbb{Q}_{+,\epsilon}(p)
-(\mathcal{L}_{\epsilon}(p)\mathbb{Q}_{+,\epsilon}(p)-\lambda_*-\epsilon h)^{-1}\mathbb{Q}_{+,\epsilon}(p)\Big\|_{\mathcal{B}((H^{1}_{p,b}(\Omega)^*,H^{1}_{p,b}(\Omega)))} \\
&=|\epsilon h|\cdot \Big\|(\mathcal{L}_{\epsilon}(p)\mathbb{Q}_{+,\epsilon}(p)-\lambda_*)^{-1}\circ (\mathcal{L}_{\epsilon}(p)\mathbb{Q}_{+,\epsilon}(p)-\lambda_*-\epsilon h)^{-1}\mathbb{Q}_{+,\epsilon}(p)\Big\|_{\mathcal{B}((H^{1}_{p,b}(\Omega)^*,H^{1}_{p,b}(\Omega)))}.
\end{aligned}
\end{equation*}
Since $\|(\mathcal{L}_{\epsilon}(p)\mathbb{Q}_{+,\epsilon}(p)-\lambda_*-\epsilon h)^{-1}\|$ is uniformly bounded (proved in Step 1.1), we have
\begin{equation} \label{eq_app_B_11}
\begin{aligned}
\lim_{\epsilon\to 0}\Big\|(\mathcal{L}_{\epsilon}(p)\mathbb{Q}_{+,\epsilon}(p)-\lambda_*)^{-1}\mathbb{Q}_{+,\epsilon}(p)
-(\mathcal{L}_{\epsilon}(p)\mathbb{Q}_{+,\epsilon}(p)-\lambda_*-\epsilon h)^{-1}\mathbb{Q}_{+,\epsilon}(p)\Big\|_{\mathcal{B}((H^{1}_{p,b}(\Omega)^*,H^{1}_{p,b}(\Omega)))}=0.
\end{aligned}
\end{equation}
Combing \eqref{eq_app_B_11} with \eqref{eq_app_B_25}, we obtain
\begin{equation*}
\lim_{\epsilon\to 0}
\Big\|(\mathcal{L}_{\epsilon}(p)\mathbb{Q}_{+,\epsilon}(p)-\lambda_*-\epsilon h)^{-1}\mathbb{Q}_{+,\epsilon}(p)
-
(\mathcal{L}(p)\mathbb{Q}_{+}(p)-\lambda_*)^{-1}\mathbb{Q}_{+}(p)\Big\|_{\mathcal{B}((H^{1}_{p,b}(\Omega)^*,H^{1}_{p,b}(\Omega)))}=0,
\end{equation*}
whence the point-wise convergence in claim (2) follows. 

\medskip

Step 2: The proof of \eqref{eq_app_B_7} is also based on the dominated convergence theorem. With the projection operator $\mathbb{P}_{n,\epsilon}(p)$ defined in Lemma \ref{lem_analytic_domain}, we write
\begin{equation} \label{eq_app_B_13}
\begin{aligned}
&\tilde{\mathbb{G}}_{-,\epsilon}^{evan}(\lambda_*+\epsilon h)
=\frac{1}{2\pi}\sum_{n=1}^{\mathfrak{n}_*-1}\int_{-\pi}^{\pi}
\frac{\text{Tr}\circ\mathbb{P}_{n,\epsilon}(p)\circ(\frac{1}{n^2_{\epsilon}(y)}\mathcal{M})}{\lambda_*+\epsilon h-\lambda_{n,\epsilon}(p)}dp,\\
&\mathbb{T}_{-}
=\frac{1}{2\pi}\sum_{n=1}^{\mathfrak{n}_*-1}\int_{-\pi}^{\pi}
\frac{\text{Tr}\circ\mathbb{P}_{n}(p)\circ (\frac{1}{n^2(y)}\mathcal{M})}{\lambda_*-\lambda_{n}(p)}dp.
\end{aligned}
\end{equation}
For $1\leq n\leq \mathfrak{n}_*-1$, we have
\begin{equation*}
\Big\|\frac{\text{Tr}\circ\mathbb{P}_{n,\epsilon}(p)\circ (\frac{1}{n^2_{\epsilon}(y)}\mathcal{M})}{\lambda_*+\epsilon h-\lambda_{n}(p)}\Big\|_{\mathcal{B}(\tilde{H}^{-\frac{1}{2}}(\Gamma),H^\frac{1}{2}(\Gamma))} 
\lesssim
\frac{1}{|\lambda_*+\epsilon h-\lambda_{n}(p)|}=\mathcal{O}(1). 
\end{equation*}
On the other hand, since
\begin{equation*}
\lim_{\epsilon\to 0}\|\mathbb{P}_{n,\epsilon}(p)-\mathbb{P}_{n}(p)\|_{\mathcal{B}((H^{1}_{p,b}(\Omega)^*,H^{1}_{p,b}(\Omega)))}=0,\quad
\lim_{\epsilon\to 0}|\lambda_{n,\epsilon}(p)-\lambda_{n}(p)|=0 \quad (\forall p\in[-\pi,\pi]),  
\end{equation*}
$\frac{\text{Tr}\circ\mathbb{P}_{n,\epsilon}(p)\circ (\frac{1}{n^2_{\epsilon}(y)}\mathcal{M})}{\lambda_*+\epsilon\cdot h-\lambda_{n,\epsilon}(p)}$ converges to $\frac{\text{Tr}\circ\mathbb{P}_{n}(p)\circ (\frac{1}{n^2(y)}\mathcal{M})}{\lambda_*-\lambda_{n}(p)}$ in operator norm for each $p\in[-\pi,\pi]$. The dominated convergence theorem implies that for each $n$ with $1\leq n\leq \mathfrak{n}_*-1$,
\begin{equation*}
\Big\|
\int_{-\pi}^{\pi}
\frac{\text{Tr}\circ\mathbb{P}_{n,\epsilon}(p)\circ (\frac{1}{n^2_{\epsilon}(y)}\mathcal{M})}{\lambda_*+\epsilon h-\lambda_{n,\epsilon}(p)}dp
-\int_{-\pi}^{\pi}
\frac{\text{Tr}\circ\mathbb{P}_{n}(p)\circ (\frac{1}{n^2(y)}\mathcal{M})}{\lambda_*-\lambda_{n}(p)}dp
\Big\|_{\mathcal{B}(\tilde{H}^{-\frac{1}{2}}(\Gamma),H^\frac{1}{2}(\Gamma))} \to 0. 
\end{equation*}
Since the summation in \eqref{eq_app_B_13} is finite, the convergence of $\tilde{\mathbb{G}}_{-,\epsilon}^{evan}(\lambda_*+\epsilon h)$ follows.  
\end{proof}

\begin{proof}[Proof of Lemma \ref{lem_app_B_2}]
Let $\gamma_{-,\epsilon}:=\{-q_{*}+\epsilon^{\frac{1}{3}}e^{i\theta}:\pi\geq \theta \geq 0\}$, $\gamma_{+,\epsilon}:=\{q_{*}+\epsilon^{\frac{1}{3}}e^{i\theta}:\pi\leq \theta \leq 2\pi\}$. We decompose the contour $C_\epsilon$ as
$$
C_\epsilon=[-\pi,-q_*-\epsilon^{\frac{1}{3}}]\cup \gamma_{-,\epsilon}\cup[-q_*+\epsilon^{\frac{1}{3}},-\epsilon^{\frac{1}{3}}]\cup [-\epsilon^{\frac{1}{3}},\epsilon^{\frac{1}{3}}]\cup[\epsilon^{\frac{1}{3}},q_*-\epsilon^{\frac{1}{3}}]\cup \gamma_{+,\epsilon}\cup [q_*+\epsilon^{\frac{1}{3}},\pi].
$$
Our strategy is first to decompose the operator $\tilde{\mathbb{G}}^{prop,1}_{\pm\epsilon}(\lambda_*+\epsilon h)$ into four parts according to the decomposition of the contour $C_\epsilon$, and then prove the convergence of each part. More precisely, we shall prove the following convergences in $\mathcal{B}(\tilde{H}^{-\frac{1}{2}}(\Gamma),H^\frac{1}{2}(\Gamma))$: 
\begin{equation} \label{eq_app_B_14}
\begin{aligned}
\Big\|
\int_{\substack{[-\pi,-q_*-\epsilon^{\frac{1}{3}}]\\ \cup [-q_*+\epsilon^{\frac{1}{3}},-\epsilon^{\frac{1}{3}}]}}
\mathbb{K}_{\mathfrak{n}_{*},\pm\epsilon}(p,\lambda_*+\epsilon h)dp
+
\int_{\epsilon^{\frac{1}{3}}}^{\pi}\mathbb{K}_{\mathfrak{n}_{*}+1,\pm\epsilon}(p,\lambda_*+\epsilon h)dp
-
p.v.\int_{-\pi}^{\pi}\mathbb{K}_{n_*}(p,\lambda_*)dp
\Big\|\to 0.
\end{aligned}
\end{equation}

\begin{equation} \label{eq_app_B_15}
\begin{aligned}
\Big\|
\int_{\substack{[\epsilon^{\frac{1}{3}},q_*-\epsilon^{\frac{1}{3}}]\\ \cup [q_*+\epsilon^{\frac{1}{3}},\pi]}}
\mathbb{K}_{\mathfrak{n}_{*},\pm\epsilon}(p,\lambda_*+\epsilon h)dp
+
\int_{-\pi}^{-\epsilon^{\frac{1}{3}}}\mathbb{K}_{\mathfrak{n}_{*}+1,\pm\epsilon}(p,\lambda_*+\epsilon h)dp
-
p.v.\int_{-\pi}^{\pi}\mathbb{K}_{m_{*}}(p,\lambda_*)dp
\Big\|\to 0.
\end{aligned}
\end{equation}

\begin{equation} \label{eq_app_B_16}
\begin{aligned}
\Big\|\frac{1}{2\pi}
\int_{\gamma_{-,\epsilon}}
\mathbb{K}_{\mathfrak{n}_{*},\pm\epsilon}(p,\lambda_*+\epsilon h)dp
+
\frac{i}{2}\frac{v_{n_{*}}(x;-q_{*})
\langle \cdot,\overline{v_{n_{*}}(y;-q_{*})}\rangle}{|\mu_{n_{*}}^{\prime}(-q_{*})|}
\Big\|\to 0.
\end{aligned}
\end{equation}

\begin{equation} \label{eq_app_B_17}
\begin{aligned}
\Big\|\frac{1}{2\pi}
\int_{\gamma_{+,\epsilon}}
\mathbb{K}_{\mathfrak{n}_{*},\pm\epsilon}(p,\lambda_*+\epsilon h)dp
+
\frac{i}{2}\frac{v_{m_{*}}(x;q_{*})
\langle \cdot,\overline{v_{m_{*}}(y;q_{*})}\rangle}{|\mu_{m_{*}}^{\prime}(q_{*})|}
\Big\|\to 0.
\end{aligned}
\end{equation}
Note that the proofs of \eqref{eq_app_B_14} and \eqref{eq_app_B_15} are the same as the proof of \cite[Proposition 4.5]{qiu2023mathematical}. We skip it and refer the reader to \cite{qiu2023mathematical} for details. In what follows, we prove \eqref{eq_app_B_16} for $\mathbb{K}_{\mathfrak{n}_{*},\epsilon}(p,\lambda_*+\epsilon h)$; the proof of the case concerning $\mathbb{K}_{\mathfrak{n}_{*},-\epsilon}(p,\lambda_*+\epsilon h)$, and the proof of \eqref{eq_app_B_17} are similar.

\medskip

Observe that  \eqref{eq_app_B_16} follows from the following identities: 
\begin{equation} \label{eq_app_B_19}
\Big\|\frac{1}{2\pi}
\int_{\gamma_{-,\epsilon}}
\mathbb{K}_{\mathfrak{n}_{*},\epsilon}(p,\lambda_*+\epsilon h)dp
-\frac{1}{2\pi}
\int_{\gamma_{-,\epsilon}}
\mathbb{K}_{\mathfrak{n}_{*},\epsilon}(p,\lambda_*)dp
\Big\|\to 0,
\end{equation}
\begin{equation} \label{eq_app_B_20}
\Big\|\frac{1}{2\pi}
\int_{\gamma_{-,\epsilon}}
\mathbb{K}_{\mathfrak{n}_{*},\epsilon}(p,\lambda_*)dp
-\frac{1}{2\pi}
\int_{\gamma_{-,\epsilon}}
\mathbb{K}_{\mathfrak{n}_{*}}(p,\lambda_*)dp
\Big\|\to 0,
\end{equation}
\begin{equation} \label{eq_app_B_21}
\Big\|\frac{1}{2\pi}
\int_{\gamma_{-,\epsilon}}
\mathbb{K}_{\mathfrak{n}_{*}}(p,\lambda_*)dp
+
\frac{i}{2}\frac{v_{n_{*}}(x;-q_{*})
\langle \cdot,\overline{v_{n_{*}}(y;-q_{*})}\rangle}{|\mu_{n_{*}}^{\prime}(-q_{*})|}
\Big\|\to 0.
\end{equation}
Given that \eqref{eq_app_B_20} follows from the smoothness of $\lambda_{\mathfrak{n}_*,\epsilon}$ and $u_{\mathfrak{n}_*,\epsilon}$ in $\epsilon$ (inherited from the smoothness of $\mathcal{L}_{\epsilon}$ in $\epsilon$ by the standard perturbation theory), and \eqref{eq_app_B_21} from the Residue theorem, we only need to prove \eqref{eq_app_B_19} to conclude the proof. For $p\in \gamma_{-,\epsilon}$, we have $|q-(-q_*)|=|\epsilon|^{\frac{1}{3}}$. Then an application of the inverse function theorem as in the proof of statement 2 in Lemma \ref{lem_analytic_domain} gives that $\big|\lambda_*-\lambda_{\mathfrak{n}_*,\epsilon}(p)\big|\gtrsim |\epsilon|^{\frac{1}{3}}$. Consequently,
\begin{equation*}
\begin{aligned}
\big|\lambda_*+\epsilon h-\lambda_{\mathfrak{n}_*,\epsilon}(p)\big|
\geq
\big|\lambda_{\mathfrak{n}_*,\epsilon}(q_*)-\lambda_{\mathfrak{n}_*,\epsilon}(p)\big|
-
\big|\lambda_*+\epsilon h-\lambda_{\mathfrak{n}_*,\epsilon}(q_*)\big|
\gtrsim \epsilon^{\frac{1}{3}}-\epsilon
\gtrsim \epsilon^{\frac{1}{3}}
\end{aligned}
\end{equation*}
for $p\in \gamma_{-,\epsilon}$, $h\in\overline{\mathcal{J}}$. It follows that
\[
\Big|\frac{1}{\lambda_*+\epsilon h-\lambda_{\mathfrak{n}_*,\epsilon}(p)} - \frac{1}{\lambda_*-\lambda_{\mathfrak{n}_*,\epsilon}(p)}\Big| =\Big|\frac{\epsilon h}{(\lambda_*+\epsilon h-\lambda_{\mathfrak{n}_*,\epsilon}(p))\cdot(\lambda_*-\lambda_{\mathfrak{n}_*,\epsilon}(p))}\Big| 
\lesssim \epsilon^{\frac{1}{3}}|h|. 
\]
On the other hand, for any $\varphi\in \tilde{H}^{-\frac{1}{2}}(\Gamma)$, 
\[
|\langle \varphi(\cdot),\overline{u_{\mathfrak{n}_*,\epsilon}(\cdot ;\overline{p})} \rangle| \lesssim \|\varphi\|_{\tilde{H}^{-\frac{1}{2}}(\Gamma)}. 
\]
Therefore
\begin{equation*}
\Big\|
\int_{\gamma_{-,\epsilon}}
\big(\mathbb{K}_{\mathfrak{n}_{*},\epsilon}(p,\lambda_*+\epsilon h)\varphi\big)dp
-
\int_{\gamma_{-,\epsilon}}
\big(\mathbb{K}_{\mathfrak{n}_{*},\epsilon}(p,\lambda_*)\varphi\big)dp
\Big\|_{H^\frac{1}{2}(\Gamma)}
\lesssim \epsilon^{\frac{1}{3}}|h|\|\varphi\|_{\tilde{H}^{-\frac{1}{2}}(\Gamma)}
\end{equation*}
whence \eqref{eq_app_B_19} follows. 
\end{proof}

\begin{proof}[Proof of Lemma \ref{lem_app_B_3}]
To prove \eqref{eq_app_B_29}, we first note the parity of the perturbed eigenfunction $u_{\mathfrak{n}_{*},\epsilon}(x;p)\sim (\mathcal{P}u_{\mathfrak{n}_{*},\epsilon})(x;-p)$ (similar to Lemma \ref{lemma_equivalence}). In particular, when $x\in\Gamma$,  $u_{\mathfrak{n}_{*},\epsilon}(x;p)\sim u_{\mathfrak{n}_{*},\epsilon}(x;-p)$. Thus $\mathbb{K}_{\mathfrak{n}_{*},\epsilon}(-p,\lambda_*+\epsilon h)=\mathbb{K}_{\mathfrak{n}_{*},\epsilon}(p,\lambda_*+\epsilon h)$, and consequently, 
\begin{equation*}
\begin{aligned}
\mathbb{J}_{1, \epsilon}(h)
&=\frac{1}{\pi}
\int_{0}^{\epsilon^{\frac{1}{3}}}
\mathbb{K}_{\mathfrak{n}_{*},\epsilon}(p,\lambda_*+\epsilon h)dp.
\end{aligned}
\end{equation*}
In light of Theorem \ref{thm_local_gap_open}, we extract the leading-order term of $\mathbb{J}_{1, \epsilon}$ using \eqref{eq_perturbed_dirac_dispersion} and \eqref{eq_perturbed_dirac_eigenfunction_1}. More precisely, we write
$$
\mathbb{J}_{1, \epsilon}(h) = \mathbb{J}_{1, \epsilon}^{(0)}(h) + \mathbb{J}_{1, \epsilon}^{(1)}(h),
$$
where $\mathbb{J}_{1, \epsilon}^{(0)}(h)$ is associated with the leading-order kernel
\begin{equation*}
J_{1, \epsilon}^{(0)}(x,y;\ h)=\frac{1}{\pi}\int_{0}^{\epsilon^{\frac{1}{3}}}
\frac{\big(f_{\epsilon}(p)v_{n_*}(x;0)+v_{m_*}(x;0)\big)\cdot \big(\overline{f_{\epsilon}(p)}\overline{v_{n_*}(y;0)}+\overline{v_{m_*}(y;0)}\big)}{\big(\epsilon h+\sqrt{\alpha_{\mathfrak{n}_*}^2p^2+|t_*|^2\epsilon^2}\big)\cdot N^2_{\mathfrak{n}_*,\epsilon}(p)}dp.
\end{equation*}
Here
\begin{equation*}
f_{\epsilon}(p):=\frac{t_*\cdot \epsilon}{\alpha_{n_*}p+\sqrt{\alpha_{n_*}^2p^2+|t_*|^2\epsilon^2}},\quad
N_{n,\pm\epsilon}(p)=(1+|f_{\epsilon}(p)|^2+\mathcal{O}(p+\epsilon))^\frac{1}{2},\quad 
(n=\mathfrak{n}_*,\mathfrak{n}_*+1).
\end{equation*}
We claim that
\begin{equation} \label{eq_app_B_28}
\lim_{\epsilon\to 0}\mathbb{J}_{1, \epsilon}^{(1)}(h)=0.
\end{equation}
Indeed, by Theorem \ref{thm_local_gap_open}, one can show that
\begin{equation} \label{eq_app_B_24}
\begin{aligned}
\|\mathbb{J}_{1, \epsilon}^{(1)}(h)\|=\|\mathbb{J}_{1, \epsilon}^{(0)}(h)\|\cdot \mathcal{O}(\epsilon^{\frac{1}{3}}). 
\end{aligned}
\end{equation}
On the other hand, using the explicit expression of the kernel of the operator $\mathbb{J}_{1, \epsilon}^{(0)}$, we have 
\begin{equation*}
\begin{aligned}
\|\mathbb{J}_{1, \epsilon}^{(0)}(h)&\|
\leq 
\int_{0}^{\epsilon^{\frac{1}{3}}}
\frac{\|f_{\epsilon}(p)v_{n_*}(x;0)+v_{m_*}(x;0)\|_{H^{\frac{1}{2}}(\Gamma)} \cdot \|\overline{f_{\epsilon}(p)}\overline{v_{n_*}(y;0)}+\overline{v_{m_*}(y;0)}\|_{H^{\frac{1}{2}}(\Gamma)}}{\big(\epsilon h+\sqrt{\alpha_{n_*}^2p^2+|t_*|^2\epsilon^2}\big)(1+|f_{\epsilon}(p)|^2)}dp
 \\
&\lesssim
\int_{0}^{\epsilon^{\frac{1}{3}}}
\frac{|f_{\epsilon}(p)|^2 + |f_{\epsilon}(p)|+1}{(\epsilon h+\sqrt{\alpha_{n_*}^2p^2+|t_*|^2\epsilon^2}) \cdot (1+|f_{\epsilon}(p)|^2)}dp\\
 &\lesssim
\int_{0}^{\epsilon^{\frac{1}{3}}}\Big|\frac{1}{\epsilon h+\sqrt{\alpha_{n_*}^2p^2+|t_*|^2\epsilon^2}}\Big|dp =\frac{|t_*|}{\alpha_*}\int_{0}^{tan^{-1} (\frac{\alpha_*}{|t_*|}\epsilon^{-\frac{2}{3}})}\Big|\frac{\sec^2(\theta)}{h+|t_*|\sec(\theta)}\Big|d\theta
=\mathcal{O}(\log (\epsilon)). 
\end{aligned}
\end{equation*}
Therefore \eqref{eq_app_B_28} follows.

Similarly, one can show that 
$$
\lim_{\epsilon\to 0}\mathbb{J}_{2, \epsilon}^{(1)}(h)= \lim_{\epsilon\to 0}\mathbb{J}_{1, -\epsilon}^{(1)}(h)=\lim_{\epsilon\to 0}\mathbb{J}_{2, -\epsilon}^{(1)}(h)=0.
$$
Then
$$\lim_{\epsilon\to 0}(\mathbb{J}_{1, \epsilon}(h)+\mathbb{J}_{2, \epsilon}(h)+\mathbb{J}_{1, -\epsilon}(h)
+\mathbb{J}_{2, -\epsilon}(h))=  \lim_{\epsilon\to 0}\big(\mathbb{J}_{1, \epsilon}^{(0)}(h)+\mathbb{J}_{2, \epsilon}^{(0)}(h)+\mathbb{J}_{1, -\epsilon}^{(0)}(h)
+\mathbb{J}_{2, -\epsilon}^{(0)}(h)\big).
$$
On the other hand, explicit calculation yields 
\begin{equation*}
\begin{aligned}
J_{1, \epsilon}^{(0)}+J_{2, \epsilon}^{(0)}+J_{1, -\epsilon}^{(0)}
+J_{2, -\epsilon}^{(0)}
=
&
\int_{0}^{\epsilon^{\frac{1}{3}}}
\frac{|f_{\epsilon}(p)|^2v_{n_*}(x;0)\overline{v_{n_*}(y;0)}+v_{m_*}(x;0)\overline{v_{m_*}(y;0)}}{\big(\epsilon h+\sqrt{\alpha_{n_*}^2p^2+|t_*|^2\epsilon^2}\big)(1+|f_{\epsilon}(p)|^2)}dp \\
&+
\int_{0}^{\epsilon^{\frac{1}{3}}}
\frac{v_{n_*}(x;0)\overline{v_{n_*}(y;0)}+|f_{\epsilon}(p)|^2v_{m_*}(x;0)\overline{v_{m_*}(y;0)}}{\big(\epsilon h-\sqrt{\alpha_{n_*}^2p^2+|t_*|^2\epsilon^2}\big)(1+|f_{\epsilon}(p)|^2)}dp.
\end{aligned}
\end{equation*}
Using Lemma \ref{lemma_equivalence}, $v_{n_*}(x;0)\overline{v_{n_*}(y;0)}=v_{m_*}(x;0)\overline{v_{m_*}(y;0)}$. Thus,
\begin{equation*}
\begin{aligned}
&\lim_{\epsilon\to 0}(J_{1, \epsilon}^{(0)}+J_{2, \epsilon}^{(0)}+J_{1, -\epsilon}^{(0)}
+J_{2, -\epsilon}^{(0)}) \\
&=\lim_{\epsilon\to 0}
\Big(\frac{1}{\pi}
\int_{0}^{\epsilon^{\frac{1}{3}}}
\frac{1}{\epsilon h+\sqrt{\alpha_{n_*}^2p^2+|t_*|^2\epsilon^2}}dp +\frac{1}{\pi}
\int_{0}^{\epsilon^{\frac{1}{3}}}
\frac{1}{\epsilon h-\sqrt{\alpha_{n_*}^2p^2+|t_*|^2\epsilon^2}}dp\Big)v_{n_*}(x;0)\overline{v_{n_*}(y;0)} \\
&=-\frac{1}{|t_*|\alpha_*}\frac{h}{\sqrt{1-\frac{h^2}{|t_*|^2}}}v_{n_*}(x;0)\overline{v_{n_*}(y;0)}
=\beta(h)v_{n_*}(x;0)\overline{v_{n_*}(y;0)},
\end{aligned}
\end{equation*}
and consequently,
$$\lim_{\epsilon\to 0}(\mathbb{J}_{1, \epsilon}^{(0)}+\mathbb{J}_{2, \epsilon}^{(0)}+\mathbb{J}_{1, -\epsilon}^{(0)}
+\mathbb{J}_{2, -\epsilon}^{(0)})=\beta(h)\mathbb{P}^{Dirac}. $$
It follows that $\lim_{\epsilon\to 0}(\mathbb{J}_{1, \epsilon}+\mathbb{J}_{2, \epsilon}+\mathbb{J}_{1, -\epsilon}
+\mathbb{J}_{2, -\epsilon})=\beta(h)\mathbb{P}^{Dirac}$. This completes the proof of \eqref{eq_app_B_29}. 
\end{proof}

\subsection{Appendix C: Proof of Proposition \ref{prop_T_fred_kernel}}
\setcounter{equation}{0}
\setcounter{theorem}{0}
\setcounter{subsection}{0}
\renewcommand{\theequation}{C.\arabic{equation}}
\renewcommand{\thetheorem}{C.\arabic{theorem}}
\renewcommand{\thesubsection}{C.\arabic{subsection}}

\begin{proof}[Proposition \ref{prop_T_fred_kernel}]
We show that $\ker(\mathbb{T})=\text{span}\Big\{\frac{\partial v_{n_*}(x;0)}{\partial x_1}\Big|_{\Gamma}\Big\}$. The proof that $\mathbb{T}$ is Fredholm follows exactly as in \cite[Proposition 4.4]{qiu2023mathematical}. 

By \eqref{eq_G_pv_decomposition} and the definition of $
\mathbb{T}$, 
\begin{equation*}
\begin{aligned}
\int_{\Gamma}G(y,x;\lambda_* )\varphi(x)dx_2
=\mathbb{T}(\varphi)(y)-\frac{i}{2}\frac{v_{n_{*}}(y;0)\langle \varphi(x),\overline{v_{n_{*}}(x;0)}\rangle}{|\mu_{n_{*}}^{\prime}(0)|}
-\frac{i}{2}\frac{v_{m_{*}}(y;0)\langle \varphi(x),\overline{v_{m_{*}}(x;0)}\rangle}{|\mu_{m_{*}}^{\prime}(0)|}.
\end{aligned}
\end{equation*}
Formula \eqref{eq-repre-u} indicates that
\begin{equation*}
\begin{aligned}
\frac{1}{2}&v_{n_*}(y;0)
=\int_{\Gamma} G(y,x;\lambda_* )\frac{\partial v_{n_*}(x;0)}{\partial x_1}dx_2 \\
&=\mathbb{T}\Big(\frac{\partial v_{n_*}(x;0)}{\partial x_1}\Big|_{\Gamma}\Big)(y)
-\frac{i}{2}\frac{v_{n_{*}}(y;0)}{|\mu_{n_{*}}^{\prime}(0)|}\int_{\Gamma}\frac{\partial v_{n_*}(x;0)}{\partial x_1}\overline{v_{n_{*}}(x;0)}dx_2
+\frac{i}{2}\frac{v_{m_{*}}(y;0)}{|\mu_{m_{*}}^{\prime}(0)|}\int_{\Gamma}\frac{\partial v_{m_*}(x;0)}{\partial x_1}\overline{v_{m_{*}}(x;0)}dx_2.
\end{aligned}
\end{equation*}
By Lemma \ref{lemma_equivalence}, 
$v_{m_{*}}(x_1, x_2;0)\sim v_{n_{*}}(-x_1, x_2;0)$. Moreover, Assumption \ref{assump_dirac_points} states that $\mu_{n_{*}}^{\prime}(0)>0$. Since the operator $\mathcal{L}$ is reflection symmetric, so are its dispersion curves, thus yielding $\mu_{m_{*}}^{\prime}(0)= -\mu_{n_{*}}^{\prime}(0)$. Therefore, it follows that 
\begin{equation*}
\begin{aligned}
\frac{1}{2}v_{n_*}(y;0)
&=\mathbb{T}\Big(\frac{\partial v_{n_*}(x;0)}{\partial x_1}\Big|_{\Gamma}\Big)(y) 
-i\frac{v_{n_{*}}(y;0)}{\mu_{n_{*}}^{\prime}(0)}\cdot \int_{\Gamma}\frac{\partial v_{n_*}(x;0)}{\partial x_1}\overline{v_{n_{*}}(x;0)}dx_2.
\end{aligned}
\end{equation*}
Then, by Lemma \ref{lemma_energy_flux}, 
\begin{equation*}
\begin{aligned}
\mathbb{T}\Big(\frac{\partial v_{n_*}(x;0)}{\partial x_1}\Big|_{\Gamma}\Big)(y)
=\frac{1}{2}v_{n_*}(y;0) 
+i\frac{v_{n_{*}}(y;0)}{\mu_{n_{*}}^{\prime}(0)}\cdot
\frac{i}{2}\mu_{n_{*}}^{\prime}(0)=0,
\end{aligned}
\end{equation*}
which indicates that $\ker(\mathbb{T})\supset \text{span}\Big\{\frac{\partial v_{n_*}(x;0)}{\partial x_1}\Big|_{\Gamma}\Big\}$. 

We next show that $\ker(\mathbb{T})\subset \text{span}\Big\{\frac{\partial v_{n_*}(x;0)}{\partial x_1}\Big|_{\Gamma}\Big\}$. Using Lemma \ref{lemma_energy_flux}, the following decomposition holds for all $\varphi\in \tilde{H}^{-\frac{1}{2}}(\Gamma)$
\begin{equation*}
\varphi=\frac{\langle\varphi,\overline{v_{n_*}(x;0)}\rangle}{i\mu_{n_{*}}^{\prime}(0)/2}\frac{\partial v_{n_*}(x;0)}{\partial x_1}\Big|_{\Gamma}  +\tilde{\varphi}
\quad\text{with}\quad
\langle \tilde{\varphi},\overline{v_{n_*}(x;0)}\rangle=0.
\end{equation*}
Thus, to prove $\ker(\mathbb{T})\subset \text{span}\Big\{\frac{\partial v_{n_*}(x;0)}{\partial x_1}\Big|_{\Gamma}\Big\}$, it is sufficient to show that $\varphi=0$ if we have $\varphi\in \ker(\mathbb{T})$ and $\langle\varphi,\overline{v_{n_*}(x;0)}\rangle=0$. Indeed, suppose the latter holds. Then  \eqref{eq_G_pv_decomposition}-\eqref{eq_G_right_decomposition} and \eqref{eq-T} imply that
\begin{equation*}
\begin{aligned}
\int_{\Gamma} G_{0}^{+}(y,x;\lambda_* )\varphi(x)dx_2
&=i\frac{v_{m_{*}}(y;q_*)}{|\mu_{m_{*}}^{\prime}(q_*)|}\int_{\Gamma}\varphi(x)\overline{v_{m_{*}}(x;q_*)}dx_2 \\
&\quad +\frac{i}{2}\frac{v_{n_{*}}(y;0)}{|\mu_{n_{*}}^{\prime}(0)|}\int_{\Gamma}\varphi(x)\overline{v_{n_{*}}(x;0)}dx_2
-\frac{i}{2}\frac{v_{m_{*}}(y;0)}{|\mu_{m_{*}}^{\prime}(0)|}\int_{\Gamma}\varphi(x)\overline{v_{m_{*}}(x;0)}dx_2.
\end{aligned}
\end{equation*}
By a similar argument as in Case 1 of the proof of Proposition \ref{prop_charac_to_resonance}, we see that $\varphi\in \ker(\mathbb{T})$ implies that
\begin{equation*}
\langle\varphi,\overline{v_{n_*}(x;-q_*)}\rangle=
\langle\varphi,\overline{v_{m_*}(x;q_*)}\rangle=0.
\end{equation*}
In addition, since $\langle\varphi(\cdot),\overline{v_{n_{*}}(\cdot;0)}\rangle =0$, Lemma \ref{lemma_equivalence} gives $\langle\varphi(\cdot),\overline{v_{m_{*}}(\cdot;0)}\rangle=0$. Thus,
\begin{equation*}
\int_{\Gamma} G_{0}^{+}(y,x;\lambda_* )\varphi(x)dx_2=0,\quad y\in\Gamma.
\end{equation*}
Define $v(y):=\int_{\Gamma} G_{0}^{+}(y,x;\lambda_* )\varphi(x)dx_2$ for $y\in\Omega^{+}$. It is clear that $v\big|_{\Gamma}=0$. We claim that $v(x)\equiv 0$ for $x\in\Omega^{+}$. To see this, we consider the odd extension of $v(x)$ to $x\in \Omega$: 
\begin{equation*}
    \tilde{v}(x_1,x_2)=\left\{
    \begin{aligned}
    &v(x_1,x_2),\quad x_1\geq 0, \\
    &-v(-x_1,x_2),\quad x_1<0.
    \end{aligned}
    \right.
\end{equation*}
Note that both $\tilde{v}$ and $\nabla \tilde{v}$ are continuous across the interface $\Gamma$, and $\tilde{v}(x)$ decays exponentially as $|x_1|\to +\infty$. Thus $\|\tilde{v}\|_{L^2(\Omega)}<\infty$ and $(\mathcal{L}-\lambda_*)\tilde{v}=0$. This implies $\lambda_*$ is an eigenvalue of the periodic operator $\mathcal{L}$, which contradicts the absolute continuity of the spectrum $\sigma(\mathcal{L})$ \cite{kuchment2016overview}. This contradiction proves that $v(x)\equiv 0$. Moreover,  \eqref{eq_G_right_decomposition} and the identity $\langle\varphi(\cdot),\overline{v_{n_{*}}(\cdot;0)}\rangle=\langle\varphi(\cdot),\overline{v_{m_{*}}(\cdot;q_*)}\rangle=0$ imply that 
\begin{equation*}
\int_{\Gamma} G(y,x;\lambda_* )\varphi(x)dx_2=0,\quad y\in\Omega^{+}.
\end{equation*}
Applying the operator $\mathcal{L}-\lambda_*$ to both sides of the above equation,  we get $\varphi=0$.
This concludes the proof of $\ker(\mathbb{T})\subset \text{span}\Big\{\frac{\partial v_{n_*}(x;0)}{\partial x_1}\Big|_{\Gamma}\Big\}$, and hence 
$\ker(\mathbb{T})=\text{span}\Big\{\frac{\partial v_{n_*}(x;0)}{\partial x_1}\Big|_{\Gamma}\Big\}$ follows.
\end{proof}

\medskip
\textbf{Data availability statement}: No new data were created or analyzed during this study. Data sharing is not applicable to this article.

\bibliographystyle{plain}
\bibliography{ref}

\end{document}